   \numberwithin{equation}{section}
\begin{document}
\title{Volume-dependent Field Theories}
\author[Richard Wedeen]{Richard Wedeen}
\address[Richard Wedeen]{Ph.D. from Department of Mathematics,
   University of Texas at Austin}
\email{rwedeen.math@gmail.com}

\begin{abstract}
We develop the axiom system in \cite{KS} to define \emph{volume-dependent field theories (VFTs)}, a class of non-topological quantum field theories whose dependence on the background metric factors through the associated density. We construct a well-defined Lorentzian limit of a Wick-rotated VFT defined on smooth, possibly degenerate Lorentzian bordisms with incoming and outgoing boundary both nonempty. When the VFT is reflection positive, this extends the main theorem in \cite{KS}.
\end{abstract}
\maketitle

\tableofcontents
\section{Introduction}
The axiomatization of topological field theories (\cite{Atiyah:1988aa}, following \cite{Segal1988}) has allowed the development of a fruitful mathematical enterprise that has in turn shed light back onto physics.  It is natural to wonder what a general mathematical framework for non-topological field theories would add to our understanding of the world.

A non-topological quantum field theory depends on a background Lorentzian metric of space-time. To study such theories, \cite{KS} proposed \emph{Wick-rotating} to the larger space of \emph{allowable complex metrics}. 
These metrics form a convex open subset of the space of non-degenerate complex symmetric 2-tensors on space-time. The space of allowable metrics contains the space of Riemannian metrics and its \emph{Shilov boundary}, the holomorphic generalization of the set of extremal points for a complex domain, contains the space of Lorentzian metrics. Wick-rotated quantum field theory can then be defined as a functor out of a bordism category of \emph{manifold germs} equipped with a background allowable complex metric to a suitable category of topological vector spaces such that the vector spaces and linear maps assigned by the functor depend \emph{holomorphically} on the infinite dimensional complex manifold of allowable metrics. One can then attempt to recover the Lorentzian theory by evaluating on manifolds whose allowable metric approaches the Shilov boundary.

In two dimensions, an allowable metric is equivalent to a pair of complex structures together with a complex density. A topological theory has no dependence on the metric and a conformal theory is one that depends only on the pair of complex structures. In this paper, we study field theories which depend solely on the density.

Given any manifold $M$, there is a holomorphic map
\begin{equation}\label{eq:met to dens}
\sqrt{\det}: \Met_\C(M) \raw \Dens_\C(M)
\end{equation}
from the complex manifold of allowable complex metrics on $M$ to the complex manifold of \emph{allowable densities} on $M$ which extends to a map on Shilov boundaries
\begin{equation}
\Met_{\Lor}(M) \raw \Dens_{i\R}(M)
\end{equation}
from possibly degenerate Lorentzian metrics to purely imaginary densities. We define a \emph{volume-dependent field theory (VFT)} to be a field theory whose dependence on the allowable metric factors through (\ref{eq:met to dens}). Equivalently, it is a field theory whose background fields are allowable densities and whose dependence on the allowable density is holomorphic.

To construct the Lorentzian limit of a Wick-rotated field theory, one would like to evaluate the functor on a sequence of bordisms whose allowable metric approaches the boundary and take a limit. In general, there is no clear reason why this limit should be independent of the limiting sequence of metrics. 

However, in one dimension the situation is nicer. In this case, (\ref{eq:met to dens}) is a biholomorphic equivalence -- allowable metrics and densities determine the same data on 1-manifolds. 
In particular, a Riemannian metric is equivalent to a real and positive density and Moser's theorem \cite{moser1965volume} becomes relevant:
\begin{theorem}[Moser]
Let $X$ be a compact, connected $n$-manifold and let $\omega_0, \omega_1$ be real and positive densities of the same total volume. Then there exists a diffeomorphism $\varphi \in \Diff(X)$ such that
\begin{equation}
\varphi^*(\omega_1) = \omega_0
\end{equation}
\end{theorem}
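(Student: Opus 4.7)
The plan is to apply the Moser deformation trick. First I would form the affine path $\omega_t := (1-t)\omega_0 + t\omega_1$ for $t \in [0,1]$; since the space of real positive densities is convex, each $\omega_t$ is a smooth positive density on $X$. The goal is then to construct a time-dependent vector field $X_t$ whose flow $\varphi_t$ satisfies $\varphi_t^*\omega_t = \omega_0$ identically in $t$, so that $\varphi := \varphi_1$ is the desired diffeomorphism.

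Differentiating the identity $\varphi_t^*\omega_t = \omega_0$ in $t$ and applying $(\varphi_t^{-1})^*$ reduces the problem to solving, for each $t$, the linear equation
\[
\mathcal{L}_{X_t}\omega_t \;=\; -\tfrac{d}{dt}\omega_t \;=\; \omega_0 - \omega_1.
\]
Since $\omega_t$ is nowhere vanishing, we may write $\omega_0-\omega_1 = g_t\,\omega_t$ for a smooth function $g_t$, and $\mathcal{L}_{X_t}\omega_t = (\operatorname{div}_{\omega_t} X_t)\,\omega_t$, so the equation becomes the weighted divergence equation
\[
\operatorname{div}_{\omega_t}(X_t) \;=\; g_t,
\]
whose necessary solvability condition $\int_X g_t\,\omega_t = \int_X(\omega_0-\omega_1) = 0$ is precisely the equal-total-volume hypothesis.

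Solvability of this equation on compact connected $X$ is standard. If $X$ is orientable, fix an orientation so that densities correspond to top forms; then $\omega_0-\omega_1$ has vanishing integral and hence is exact by $H^n_{dR}(X) \cong \mathbb{R}$, so one may write $\omega_0-\omega_1 = d\alpha_t$ with $\alpha_t$ depending smoothly on $t$, and take $X_t$ to be the unique vector field with $\iota_{X_t}\omega_t = \alpha_t$; Cartan's formula then yields $\mathcal{L}_{X_t}\omega_t = d\iota_{X_t}\omega_t = d\alpha_t$ as required. If $X$ is non-orientable, one may pass to the orientation double cover $\pi : \tilde X \to X$, solve the problem equivariantly under the $\mathbb{Z}/2$ deck action, and descend the resulting vector field; equivalently, one can invoke Hodge theory for the weighted Laplacian $\Delta_{\omega_t}$ on functions, which on a compact connected manifold surjects onto the mean-zero functions with smooth dependence on parameters.

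Finally, compactness of $X$ ensures the non-autonomous flow $\varphi_t$ generated by the smooth family $X_t$ exists on all of $[0,1]$, and $\varphi := \varphi_1$ satisfies $\varphi^*\omega_1 = \omega_0$. The main place where care is required is the solvability of the divergence equation with smooth $t$-dependence, especially in the non-orientable case; once that point is handled, the remainder of the argument is routine ODE theory on a compact manifold.
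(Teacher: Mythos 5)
Your proof is correct and is exactly the classical Moser deformation argument that the paper invokes (it cites \cite{moser1965volume} rather than reproving the theorem, and then reruns the same argument in complexified form in the proof of Theorem \ref{thm:vol dependence thm}: straight-line path, exactness of $\dot\omega_t$ from vanishing integral plus connectedness, the vector field defined by $\iota_{\xi_t}\omega_t = \alpha$, and Cartan's formula). The only stylistic divergence is in the non-orientable case, where you pass to the orientation double cover or weighted Hodge theory while the paper reduces to the orientable case via a partition of unity identifying densities with top forms locally; both are standard and sound, provided that in the double-cover route you antisymmetrize the primitive $\tilde\alpha$ under the deck involution so that the resulting vector field descends.
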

Together with diffeomorphism invariance of the functor, this implies that 1-dimensional Euclidean field theories depend only on the total volume. Put differently: given any density $\omega_t$ in a path of positive densities with constant volume, Moser's argument shows that there exists an infinitesimal diffeomorphism which can be identified with a real vector field $\xi_t$ whose action by Lie derivative on $\omega_t$ satisfies 
\begin{equation}\label{eq:lie derivative}
\mathcal L_{\xi_t}\omega_t = \dot\omega_t.
\end{equation}
Diffeomorphism invariance then shows that the value of the functor does not change as the density is varied along the path $\omega_t$.

When the metric is allowable, Moser's argument can be complexified to produce a \emph{complex} vector field $\xi_t$ satisfying (\ref{eq:lie derivative}). The same argument together with holomorphicity of the functor then shows that the field theory on a connected bordism depends only on the \emph{total complex volume} valued in the right half plane $\C_{>0}$ of complex numbers with positive real part. This dependence implies that the functor obeys the semigroup law on $\C_{>0}$ which allows one to show that the Lorentzian limit is independent of the sequence of metrics used to construct it.

If the field theory is \emph{reflection positive}, then one can see this explicitly. An allowable germ of a 0-manifold fixed under co-orientation reversal and complex conjugation is assigned a \emph{Hermitian} object in the target category which defines a Hilbert space $\mathcal H$. 
Composition of bordisms respects the semigroup law on $\C_{>0}$ and, together with holomorphicity, reflection positivity implies that an interval of length $s \in \C_{>0}$ is assigned the trace class operator
\begin{equation}
e^{-sH}: \mathcal H \raw \mathcal H
\end{equation}
where $H$ is a self-adjoint, unbounded operator on $\mathcal H$ with discrete spectrum bounded below.
The Lorentzian limit then assigns to a bordism of imaginary length $it$ the unitary operator
\begin{equation}
e^{-itH}: \mathcal H \raw \mathcal H.
\end{equation}

To apply this argument in higher dimensions, \cite{KS} restricts to \emph{real analytic and globally hyperbolic} bordisms. A bordism $X: \Sigma_0 \rightsquigarrow \Sigma_1$ is globally hyperbolic if every maximally extended time-like geodesic travels from $\Sigma_0$ to $\Sigma_1$. This implies that $X \cong \Sigma_0 \times I$ is diffeomorphic to a cylinder and there is a global time function $\tau: X \raw i\R$ whose level surfaces foliate $X$ by Riemannian manifolds -- the Lorentzian metric on $X$ can be expressed as $g = d\tau^2 + h_\tau$, where $h_\tau$ is a Riemannian metric on the fiber above $\tau \in i\R$. Real analyticity of $X, g$ and $\tau$ implies that $g, \tau$ extend holomorphically to a holomorphic thickening $X_\C$ of $X$. Specifically, $\tau$ extends to a holomorphic bundle 
\begin{equation}
\tau_\C: X_\C \raw U
\end{equation}
where $U \subset \C$ is a neighborhood containing the purely imaginary image of $\tau$. There is a holomorphic trivialization of $\tau_\C$ that extends the diffeomorphism $X \cong \Sigma_0 \times I$. This implies that above every curve $\gamma: I \raw U$ in the base satisfying $\gamma'(t) > 0$, there is a unique totally real allowable bordism $X_\gamma \subset X_\C$ in the preimage $\tau_\C^{-1}(\gamma)$.

Adapting the one-dimensional argument above, \cite{KS} show that the functor applied to $X_\gamma$ depends only on the total length of $\gamma$ in $\C_{>0}$ measured using the allowable metric obtained by restricting the complex symmetric 2-tensor $d\tau^2$ to $\gamma \subset U$. This means that the functor obeys a semigroup law on $\C_{>0}$ which allows one to construct a well-defined Lorentzian limit:

\begin{theorem}[\cite{KS} Theorem 5.2]\label{KS 5.2}
A reflection positive quantum field theory induces a functor 
\begin{equation}
\mathscr L: \Bord_{n,n-1}^\omega(\Met_{g.h.}) \raw \Hilb
\end{equation}
out of the category of real analytic bordisms equipped with a real-analytic globally hyperbolic metric to the category of Hilbert spaces and bounded maps which takes bordisms to unitary operators.
\end{theorem}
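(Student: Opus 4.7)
The plan is to execute the higher-dimensional analog of the one-dimensional Moser argument sketched in the introduction. Let $(X,g):\Sigma_0\rightsquigarrow\Sigma_1$ be a real-analytic globally hyperbolic bordism with global time function $\tau:X\to i\R$. Real analyticity gives a holomorphic thickening $\tau_\C:X_\C\to U$ over an open neighborhood $U\subset\C$ of $\tau(X)$, together with a holomorphic trivialization of $\tau_\C$ extending the diffeomorphism $X\cong\Sigma_0\times I$. For every smooth curve $\gamma:I\to U$ with $\gamma'(t)>0$, the trivialization selects a unique totally real allowable bordism $X_\gamma\subset X_\C$ lying above $\gamma$, to which the Wick-rotated functor assigns a morphism in the target category.

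The key step is to prove that this morphism depends only on $s_\gamma := \gamma(1)-\gamma(0) \in \C_{>0}$. Given two admissible curves $\gamma_0,\gamma_1$ with common endpoints, interpolate them by a smooth family $\gamma_r$ of admissible curves. Complexifying Moser's argument in the fibers of $\tau_\C$ produces at each $r$ a complex vector field $\xi_r$ on $X_{\gamma_r}$ whose Lie derivative on the induced allowable density $\omega_r$ satisfies $\mathcal L_{\xi_r}\omega_r = \dot\omega_r$. Diffeomorphism invariance together with holomorphicity of the Wick-rotated functor then implies that the assigned morphism is independent of $r$. Since concatenation of admissible curves realizes composition of bordisms and matches addition of complex lengths, one obtains a holomorphic semigroup homomorphism $\rho:\C_{>0}\to\mathrm{End}(\mathcal Z(\Sigma))$, where $\mathcal Z(\Sigma)$ is the topological vector space attached by the Wick-rotated theory to the germ of a Cauchy slice $\Sigma$.

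Reflection positivity now promotes $\mathcal Z(\Sigma)$ to a Hilbert space $\mathcal H_\Sigma$ on which $\rho$ acts by bounded operators that are trace class for $\Re s>0$. Holomorphic semigroup theory then exhibits $\rho(s)=e^{-sH}$ for a self-adjoint operator $H$ bounded below with discrete spectrum; Stone's theorem yields unitary boundary values $e^{-itH}:\mathcal H_\Sigma\to\mathcal H_\Sigma$, which we define to be $\mathscr L(X)$ when $X$ corresponds to the purely imaginary curve $\gamma_{\mathrm{Lor}}$. For a general bordism the same recipe applies, using the trivialization of $\tau_\C$ to identify $X$ with a cylinder of purely imaginary length; functoriality of $\mathscr L$ under composition and diffeomorphism follows from the corresponding properties of the Wick-rotated functor together with the semigroup law.

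The main obstacle is to ensure that the Lorentzian limit is well-defined and independent of the sequence of admissible curves used to approximate $\gamma_{\mathrm{Lor}}$, and that the resulting operators are genuinely unitary. Here reflection positivity is indispensable: without it one would have only a holomorphic semigroup of bounded operators on $\C_{>0}$, whose strong boundary values on $i\R$ need not exist. Reflection positivity forces the generator to be self-adjoint, converting the holomorphic semigroup into a one-parameter unitary group along the boundary via Stone's theorem, with the limit being independent of the approach path.
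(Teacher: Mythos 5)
Your overall architecture matches the argument this paper attributes to \cite{KS} and sketches in the introduction: restrict to real-analytic globally hyperbolic bordisms, extend the time function to a holomorphic fibration $\tau_\C: X_\C \raw U$ with a holomorphic trivialization, show that the value of the Wick-rotated functor on the totally real bordism $X_\gamma$ depends only on the total complex length of $\gamma$ in $\C_{>0}$, deduce the semigroup law, and use reflection positivity to write the cylinder operators as $e^{-sH}$ with $H$ self-adjoint and bounded below, so that the boundary values $e^{-itH}$ are unitary. The final steps (trace-class semigroup, Stone's theorem, unitarity) are essentially as in the source.

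However, your key step --- independence of the choice of admissible curve --- is carried out with the wrong Moser argument. You complexify Moser ``in the fibers'' to produce a vector field $\xi_r$ with $\mathcal L_{\xi_r}\omega_r = \dot\omega_r$ for the induced allowable \emph{density} $\omega_r$ on $X_{\gamma_r}$. Matching densities by a complexified diffeomorphism only controls the value of the functor when the metric dependence factors through $\sqrt{\det}$; that is precisely the content of Theorem \ref{thm:vol dependence thm} and is the defining property of a VFT. Theorem \ref{KS 5.2} concerns a \emph{general} reflection positive Wick-rotated theory, which depends on the full allowable metric $g = d\tau^2 + h_\tau$ and not merely on its density, so $dZ(\dot g_r) = 0$ does not follow from your $\xi_r$: two metrics with diffeomorphic densities need not be related by a diffeomorphism. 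The argument of \cite{KS} instead applies the \emph{one-dimensional} Moser argument to the base curve $\gamma \subset U$, where the relevant datum is the restriction of $d\tau^2$ to $\gamma$ (a one-dimensional allowable metric, equivalently a density on the interval); the resulting complexified reparametrizing vector field on the base is transported through the holomorphic trivialization of $\tau_\C$ to a deformation of the \emph{entire} metric on $X_\gamma$, which is then killed by diffeomorphism invariance and holomorphicity. This distinction is exactly why the theorem requires real analyticity and global hyperbolicity (to have $\tau_\C$ and its trivialization at all), whereas the volume-dependent case treated in the body of this paper does not.
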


The main goal of this paper is to extend this theorem when the theory is volume-dependent. In this case, Moser's argument can be applied in all dimensions to show:

\begin{reptheorem}{thm:vol dependence thm}
A VFT of general dimension depends only on the total complex volume of a given bordism. 
\end{reptheorem}

This implies the key fact that \emph{on cylinders, a VFT obeys the semigroup law on $\C_{>0}$}. This makes possible the construction of a well-defined Lorentzian limit. Furthermore, Moser's argument applies in the smooth setting so one can discard the real analyticity assumption and enlarge the domain to smooth Lorentzian manifolds. In fact, if one allows limits to degenerate Lorentzian metrics in the Shilov boundary, one can go beyond cylinders and construct Lorentzian limits of general bordisms so long as their incoming and outgoing boundaries are both nonempty. However, the operators assigned to such bordisms may no longer be bounded. Instead, they will be morphisms in a certain category of rigged Hilbert spaces which we denote $\mathcal{NP}^h$. We summarize this in the following.
\begin{reptheorem}{thm:LHilb2}
A reflection positive volume-dependent field theory induces a functor
\begin{equation}
\mathscr L : \Bord_{n,n-1}^{in\wedge out}(\Met_{\Lor} ) \raw \mathcal{NP}^h
\end{equation}
out of the category of smooth, possibly degenerate Lorentzian bordisms with nonempty incoming and nonempty outgoing boundary. It sends cylindrical bordisms to unitary operators and general bordisms to unbounded operators of rigged Hilbert spaces.
\end{reptheorem}

We now give an overview of the paper.
Section \ref{section QFT} develops the definition of field theory presented in \cite{KS}. Given a sheaf $\mathscr F$ on manifolds, we define a bordism category $\Bord_{n,n-1}(\mathscr F)$ whose objects are \emph{cylindrical germs} of closed $n-1$-manifolds equipped with a co-orientation and a germ of a section of $\mathscr F$, and whose morphisms are thickened germs of bordisms also equipped with a section of $\mathscr F$. Disjoint union defines a symmetric monoidal product. For the purposes of defining holomorphicity, we restrict ourselves to sheaves of smooth sections of fiber bundles whose fibers are finite dimensional complex manifolds. Allowable metrics and densities are examples of such sheaves.

In Section \ref{section NP}, we define the codomain category $\mathcal{NP}$ of \emph{nuclear pairs} whose objects are continuous, injective maps with dense image
\begin{equation}
\check E \mono \hat E
\end{equation}
from a \emph{nuclear dual Fr\'echet space} $\check E$ to a \emph{nuclear Fr\'echet space $\hat E$}. The symmetric monoidal structure is given by the topological tensor products on the nuclear spaces $\check E$ and $\hat E$. In \cite{KS}, nuclear pairs arise as the direct and inverse limits of a system of Fr\'echet spaces and nuclear maps assigned to a cylindrical bordism by the field theory. In this paper, we take the category of nuclear pairs as fundamental and impose a condition we term \emph{coherence} on the functor to ensure that the nuclear pair that arises from the direct and inverse limits associated to a cylindrical germ is isomorphic to the nuclear pair the functor assigns to it.

In Section \ref{section field theory} we define an \emph{$\mathscr F$-field theory} to be a coherent symmetric monoidal functor
\begin{equation}
Z: \Bord_{n,n-1}(\mathscr F) \raw \mathcal{NP}
\end{equation}
that depends \emph{holomorphically} on the sections of $\mathscr F$. We also require that the functor sends bordisms to morphisms of nuclear pairs consisting of nuclear maps, which we term \emph{nuclearity}. Coherence and nuclearity give way to an action of germs of diffeomorphisms which we elucidate in Theorem \ref{thm:diffeomorphism extension}.

When the sheaf $\mathscr F$ is equipped with an \emph{antiholomorphic involution}, there is a twisted involution $\tau_{\mathscr F}$ on $\Bord_{n,n-1}(\mathscr F)$ obtained by composing the antiholomorphic involution with the twisted involution of reversing co-orientations. Likewise, there is a twisted involution $\tau_{\mathcal{NP}}$ on $\mathcal{NP}$ of complex conjugation and taking strong duals of nuclear spaces. We say $Z$ is \emph{reflection positive} if it is $(\tau_{\mathscr F}, \tau_{\mathcal{NP}})$-equivariant and if the images of fixed points land in the subcategory $\mathcal{NP}^h$ of \emph{Hermitian nuclear pairs} consisting of $\tau_{\mathcal{NP}}$-fixed points which satisfy a positivity condition. This positivity condition gives Hermitian nuclear pairs the structure of a rigged Hilbert space.

In Section \ref{section VFTs}, we define a VFT to be a $\Met_\C$-field theory
\begin{equation}
Z: \Bord_{n,n-1}(\Met_\C)\raw \mathcal{NP}
\end{equation}
whose dependence on the allowable complex metric factors through (\ref{eq:met to dens}). We use Moser's argument to show total volume dependence and use this to prove that every VFT is naturally isomorphic to one that factors through a nuclear functor 
\begin{equation}\label{eq:V}
V: \Bord_{n,n-1}(\C_{>0}) \raw \mathcal{NP}
\end{equation}
out of the category of bordisms whose connected components are labeled by an element of the semigroup $\C_{>0}$ representing the total volume. 

In Section \ref{section lorentzian limit} we construct a Lorentzian limit for general VFTs
\begin{equation}\label{eq:L}
L: \Bord_{n,n-1}^{in \wedge out}(i\R) \raw \mathcal{NP}
\end{equation}
and define the short-distance topological limit as the restriction of this functor to bordisms with 0 volume which in particular is a topological field theory partially defined on a subcategory of the full bordism category.

In Section \ref{reflection positive vfts}, we study reflection positive VFTs. In this case the functor (\ref{eq:V}) can be refined to a functor
\begin{equation}
V^{Hilb}: \Bord_{n,n-1}(\C_{>0}) \raw \Hilb
\end{equation}
where the codomain is now the category of separable Hilbert spaces and bounded maps.
The value of $V^{Hilb}$ on a closed $n-1$ manifold $\Sigma$ is a Hilbert space $\mathcal H_\Sigma$ and evaluating on a cylindrical bordism gives a family of trace-class operators $V^{Hilb}(\Sigma\times I, s) \in \End(\mathcal H_\Sigma)$ depending holomorphically on the total volume $s \in \C_{>0}$.
Reflection positivity and the semigroup law
imply there exists an unbounded operator $H_\Sigma \in \End(\mathcal H_\Sigma)$ with discrete spectrum bounded below such that
\begin{equation}
V^{Hilb}(\Sigma \times s) = \exp(-sH_\Sigma)
\end{equation}
for all $s \in \C_{>0}$.
We then prove Theorem \ref{thm:LHilb2}, where we recover that the Lorentzian limit (\ref{eq:L}) of a reflection positive VFT assigns unitary operators to cylindrical bordisms. Furthermore, to general bordisms it assigns an unbounded operator of rigged Hilbert spaces satisfying a sub-exponential growth condition specified by the eigenvalues of the Hamiltonians $H_\Sigma$.

In Section \ref{section long distance limit} we define the \emph{long-distance topological limit} of a reflection positive VFT. This is a topological field theory obtained from taking the limit as the real part of the total volume goes to positive infinity. Finally, in Section \ref{section 2d classification} we classify all reflection positive VFTs in dimension 2 up to natural isomorphism.

I thank Esteban Cardenas, Dan Freed, Charlie Reid, Alberto San Miguel Malaney, Will Stewart, and Jackson Van Dyke for helpful conversations.

\section{Geometric Axioms for QFT}\label{section QFT}
In this section we define quantum field theory with background fields valued in a sheaf of complex manifolds following \cite{KS}.

\subsection{The Bordism Category of Germs}\label{section bordism category}
Let $\Man_n$ be the site of $n$-manifolds without boundary and embeddings between them and let $\mathscr F: \Man_n^{op} \raw \Set$ be a sheaf.
In this section we define a symmetric monoidal category $\Bord_{n,n-1}(\mathscr F)$ of bordisms between germs of manifolds with background fields valued in $\mathscr F$.

\subsubsection{Preliminaries on Germs}\label{subsection preliminaries on germs}
Let $k\leq n$ and let $Y$ be a compact $k$-manifold possibly with boundary. We will use $[Y]$ to denote the diffeomorphism type of $Y$ as a manifold with boundary.
Let $\mathcal G_{[Y]}$ be the category whose objects $Y \subset N$ consist of $n$-manifolds $N \in \Man_n$ containing a smoothly embedded manifold of diffeomorphism type $[Y]$ and whose morphisms are embeddings
\begin{equation}\label{eq:embedding}
\begin{tikzcd}[row sep=small, column sep = small]
N \arrow[r, hook] & N' \\
Y \arrow[u, phantom, sloped, "\subset"] \arrow[r,"\sim"] & Y' \arrow[u, phantom, sloped, "\subset"]
\end{tikzcd}
\end{equation}
restricting to a diffeomorphism of manifolds with boundary.
Each object $Y \subset N$ determines a subcategory $\mathfrak U_Y^N \subset \mathcal G_{[Y]}$ of open neighborhoods $Y \subset U \subset N$ directed by inclusion. Set 
\begin{equation}
\mathring U^N_Y := \invlim_{U \in \mathfrak U^N_Y} U
\end{equation}
which is an object in the pro-completion $\Pro(\mathcal G_{[Y]})$. We call $\mathring U^N_Y$ a \emph{manifold germ} of $Y$.

The set of morphisms between germs in the pro-completion is by definition
\begin{equation}\label{eq:diff germ}
\Pro(\mathcal G_{[Y]})(\mathring U^N_Y, \mathring U^{N'}_{Y'})
= \invlim_{V \in \mathfrak U^{N'}_{Y'}} \colim_{U \in \mathfrak U^N_Y} \mathcal G_{[Y]}(U,V)
\end{equation}
We set $\mathring{\mathcal G}_{[Y]}$ to be the full subcategory of $\Pro(\mathcal G_{[Y]})$ generated by all manifold germs $\mathring U^N_Y$.

\begin{lemma}
For all $V \in \mathfrak U^{N'}_{Y'}$, the canonical map
\begin{equation}\label{eq:colim bijection}
\mathring{\mathcal G}_{[Y]}(\mathring U^N_Y,\mathring U^{N'}_{Y'}) \raw \colim_{U \in \mathfrak U^N_Y} \mathcal G_{[Y]}(U,V)
\end{equation}
is a bijection of sets.
\end{lemma}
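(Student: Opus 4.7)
By construction the stated map is the projection to the $V$-factor of the inverse limit
\begin{equation*}
\mathring{\mathcal G}_{[Y]}(\mathring U^N_Y, \mathring U^{N'}_{Y'}) = \invlim_{V' \in \mathfrak U^{N'}_{Y'}} \colim_{U \in \mathfrak U^N_Y} \mathcal G_{[Y]}(U, V').
\end{equation*}
The proof rests on two elementary observations which I would establish first. On the one hand, $\mathfrak U^{N'}_{Y'}$ is cofiltered: any two open neighborhoods $V_1, V_2 \supset Y'$ admit the common further refinement $V_1 \cap V_2 \in \mathfrak U^{N'}_{Y'}$. On the other hand, for $V_1 \subseteq V_2$ the transition map
\begin{equation*}
\colim_U \mathcal G_{[Y]}(U, V_1) \raw \colim_U \mathcal G_{[Y]}(U, V_2),
\end{equation*}
given by postcomposition with the inclusion $V_1 \hookrightarrow V_2$, is injective: if two representatives $\psi_1, \psi_2: U \to V_1$ become equal after postcomposition with inclusion on some shrunken $U' \in \mathfrak U^N_Y$, then since $V_1 \hookrightarrow V_2$ is monic they already agree as maps into $V_1$ on $U'$, hence define the same class in the $V_1$-colimit.

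For surjectivity of the projection, I would pick a representative $\varphi: U \to V$ of an element of $\colim_U \mathcal G_{[Y]}(U, V)$ and build a compatible family $(s_{V'})_{V' \in \mathfrak U^{N'}_{Y'}}$. For each $V'$ set $\tilde U_{V'} := \varphi^{-1}(V' \cap V)$; since $\varphi|_Y$ is a diffeomorphism onto $Y' \subset V' \cap V$, the set $\tilde U_{V'}$ is an open neighborhood of $Y$ in $N$, so $\varphi|_{\tilde U_{V'}}: \tilde U_{V'} \to V'$ represents a class $s_{V'} \in \colim_U \mathcal G_{[Y]}(U, V')$. Compatibility along any transition $V_1 \subseteq V_2$ is a direct comparison of representatives on $\tilde U_{V_1} \subset \tilde U_{V_2}$, where both become equal to the restriction of $\varphi$ followed by inclusion into $V_2$. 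At $V' = V$ one has $\tilde U_V = U$ and $s_V = [\varphi]$, as required.

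For injectivity of the projection, suppose $(s_{V'})$ and $(t_{V'})$ are two compatible families with $s_V = t_V$. Given any $W \in \mathfrak U^{N'}_{Y'}$, cofilteredness supplies $W \cap V \in \mathfrak U^{N'}_{Y'}$, and compatibility forces both $s_{W \cap V}$ and $t_{W \cap V}$ to map to the common element $s_V = t_V$ under the transition map indexed by $W \cap V \subseteq V$. Injectivity of that transition map yields $s_{W \cap V} = t_{W \cap V}$, and transporting along $W \cap V \subseteq W$ then gives $s_W = t_W$. The main obstacle is purely bookkeeping — keeping the directions of the pro-category conventions straight and tracking where each representative lives — rather than any real geometric difficulty; the only geometric input is the obvious fact that preimages of open neighborhoods of $Y'$ under an embedding of germs are open neighborhoods of $Y$.
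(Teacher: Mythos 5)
Your proof is correct and uses essentially the same two ingredients as the paper's: injectivity of the transition maps from monicity of the inclusions $V_1\hookrightarrow V_2$, and the "restrict $\varphi$ to $\varphi^{-1}(V'\cap V)$" construction for surjectivity. The only difference is packaging — the paper shows every transition map in the cofiltered system is a bijection and concludes immediately, whereas you verify bijectivity of the projection to the $V$-term directly — but the underlying argument is the same.
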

\begin{proof}
Let $V \xhookrightarrow{j} V'$ be a morphism in $\mathfrak U^{N'}_{Y'}$. There is a map of sets
\begin{equation}
j_*: \colim_{U \in \mathfrak U^N_Y} \mathcal G_{[Y]}(U,V) \raw \colim_{U \in \mathfrak U^N_Y} \mathcal G_{[Y]}(U,V')
\end{equation}
induced from post-composing an embedding $f: U \mono V$ with $j$. 
If $\mathring f, \mathring f$ are elements of
\begin{equation}
\colim_{U \in \mathfrak U^N_Y}\mathcal G_{[Y]}(U,V)
\end{equation}
satisfying $j_*\mathring f = j_*\mathring f'$, then there exists $U \in \mathfrak U^N_Y$ and embeddings $f,f': U \mono V$ such that $j\circ f = j \circ f'$. Injectivity of $j$ implies $\mathring f = \mathring f'$ which implies injectivity of $j_*$. 
If $\mathring g$ is a germ of an embedding $g: U \mono V'$, let $\tilde U := g^{-1}(V) \subset U$. Then $g|_{\tilde U}: \tilde U \mono V$ has a germ whose image under $j_*$ is $\mathring g$ which implies surjectivity, and therefore bijectivity, of $j_*$. As each map in the inverse system defining $\mathring{\mathcal G}_{[Y]}(\mathring U^N_Y,\mathring U^{N'}_{Y'})$ is a bijection, this proves the claim.
\end{proof}

Given $U \in \mathfrak U^N_Y, V \in \mathfrak U^{N'}_{Y'}$ there is a composition
\begin{equation}\label{eq:germ rep}
\mathcal G_{[Y]}(U,V) \raw \colim_{\tilde U \in \mathfrak U^N_Y} \mathcal G_{[Y]}(\tilde U,V) \xrightarrow{\sim} \mathring{\mathcal G}_{[Y]}(\mathring U^N_Y, \mathring U^{N'}_{Y'})
\end{equation}
of the canonical map to the colimit and the inverse of (\ref{eq:colim bijection}). We will refer to the image of $f: U \mono V$ under this composition as the \emph{germ} of $f$.

\begin{proposition}\label{prop:diffeo rep}
Let $\mathring f \in \mathring{\mathcal G}_{[Y]}(\mathring U^N_Y, \mathring U^{N'}_{Y'})$. There exists $U \in \mathfrak U^N_Y, V \in \mathfrak U^{N'}_{Y'}$ and a diffeomorphism $f: U \xrightarrow{\sim} V$ in $\mathcal G_{[Y]}(U,V)$ whose germ is $\mathring f$.
\end{proposition}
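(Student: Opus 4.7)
The plan is to take any representative of $\mathring f$ and shrink the codomain to the image. Specifically, by the lemma together with (\ref{eq:germ rep}), I can represent $\mathring f$ by some embedding $f : U \hookrightarrow V$ with $U \in \mathfrak U^N_Y$ and $V \in \mathfrak U^{N'}_{Y'}$; this $f$ is an embedding of the ambient $n$-manifolds that restricts to a diffeomorphism $Y \xrightarrow{\sim} Y'$. The upgrade to a diffeomorphism of open neighborhoods will come from replacing $V$ by $V' := f(U)$.

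The key observation to justify the shrink is dimension: since $U$ and $V$ are both $n$-manifolds, the embedding $f$ is an immersion between equidimensional manifolds, hence a local diffeomorphism by the inverse function theorem; combined with injectivity and the homeomorphism-onto-image property, this makes $f$ an open embedding. Therefore $V' = f(U)$ is open in $V$, hence open in $N'$. Since $f$ restricts to a diffeomorphism $Y \xrightarrow{\sim} Y'$, we have $Y' = f(Y) \subset f(U) = V'$, so $V' \in \mathfrak U^{N'}_{Y'}$. The corestriction $f : U \xrightarrow{\sim} V'$ is then a diffeomorphism, and it lies in $\mathcal G_{[Y]}(U,V')$ because its restriction to $Y$ is still the diffeomorphism $Y \xrightarrow{\sim} Y'$.

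It remains to check that this diffeomorphism represents the same germ $\mathring f$. Let $\iota : V' \hookrightarrow V$ denote the inclusion, which is a morphism in $\mathfrak U^{N'}_{Y'}$. Under the canonical map of colimits $\iota_*: \colim_{\tilde U} \mathcal G_{[Y]}(\tilde U, V') \to \colim_{\tilde U} \mathcal G_{[Y]}(\tilde U, V)$ appearing in the inverse system defining $\mathring{\mathcal G}_{[Y]}(\mathring U^N_Y, \mathring U^{N'}_{Y'})$, the class of $f : U \xrightarrow{\sim} V'$ is sent to the class of $\iota \circ f = f : U \hookrightarrow V$, which was our chosen representative of $\mathring f$. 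By compatibility with the inverse limit, the diffeomorphism $f : U \xrightarrow{\sim} V'$ represents $\mathring f$ in $\mathring{\mathcal G}_{[Y]}(\mathring U^N_Y, \mathring U^{N'}_{Y'})$, as required.

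There is no real obstacle here beyond bookkeeping: the only nontrivial input is the inverse function theorem, which is what lets an embedding of equidimensional manifolds be corestricted to a diffeomorphism with its open image. The rest is tracking the colimit/limit bijections already established.
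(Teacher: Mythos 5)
Your proof is correct and follows essentially the same route as the paper's: pick a representative embedding $f: U \hookrightarrow \tilde V$ of the germ and corestrict to its image, which is open since an embedding of equidimensional manifolds is an open embedding. You supply more of the bookkeeping (openness of $f(U)$, membership of $f(U)$ in $\mathfrak U^{N'}_{Y'}$, and compatibility of germs under the inclusion $f(U) \hookrightarrow \tilde V$) than the paper does, but the underlying argument is identical.
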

\begin{proof}
By definition of the colimit and the previous lemma, there exists $U \in \mathfrak U^N_Y, \tilde V \in \mathfrak U^{N'}_{Y'}$, and $f: U \mono \tilde V \in \mathcal G_{[Y]}(U, \tilde V)$ whose germ if $\mathring f$. Then $f$ is a diffeomorphism onto its image $V:= \Ima(f)$ whose germ is $\mathring f$.
\end{proof}

Composition of morphisms in $\mathring{\mathcal G}_{[Y]}$ can be expressed as follows. Let $\mathring f \in \mathring{\mathcal G}_{[Y]}(\mathring U^N_Y, \mathring U^{N'}_{Y'})$ and $\mathring g \in \mathring{\mathcal G}_{[Y]}(\mathring U^{N'}_{Y'}, \mathring U^{N''}_{Y''})$. By Proposition \ref{prop:diffeo rep}, we can choose $U \in \mathfrak U^N_Y, V \in \mathfrak U^{N'}_{Y'}, W \in \mathfrak U^{N''}_{Y''}$ and diffeomorphisms $f: U \xrightarrow{\sim} V$ and $g: V \xrightarrow{\sim} W$ whose germs are $\mathring f, \mathring g$. The composition $\mathring g \circ \mathring f$ is the germ of $g \circ f$.

\begin{corollary}
$\mathring{\mathcal G}_{[Y]}$ is a groupoid.
\end{corollary}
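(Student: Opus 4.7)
The plan is to exhibit a two-sided inverse for an arbitrary morphism in $\mathring{\mathcal G}_{[Y]}$ by representing it via Proposition \ref{prop:diffeo rep} as the germ of an honest diffeomorphism of open neighborhoods and then taking the germ of the set-theoretic inverse.

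Let $\mathring f \in \mathring{\mathcal G}_{[Y]}(\mathring U^N_Y, \mathring U^{N'}_{Y'})$. First I would apply Proposition \ref{prop:diffeo rep} to choose $U \in \mathfrak U^N_Y$, $V \in \mathfrak U^{N'}_{Y'}$, and a diffeomorphism $f \colon U \xrightarrow{\sim} V$ in $\mathcal G_{[Y]}(U,V)$ whose germ is $\mathring f$. Since $f$ is a diffeomorphism of manifolds (with boundary, restricting to a diffeomorphism $Y \xrightarrow{\sim} Y'$), its inverse $f^{-1} \colon V \xrightarrow{\sim} U$ is a morphism in $\mathcal G_{[Y]}(V,U)$. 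I would then let $\mathring{f^{-1}}$ denote the germ of $f^{-1}$ in $\mathring{\mathcal G}_{[Y]}(\mathring U^{N'}_{Y'}, \mathring U^N_Y)$ via the composition \eqref{eq:germ rep}.

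Next I would verify that $\mathring{f^{-1}}$ is a two-sided inverse. Using the explicit description of composition in $\mathring{\mathcal G}_{[Y]}$ recorded immediately after Proposition \ref{prop:diffeo rep}, the representatives $f \colon U \to V$ and $f^{-1} \colon V \to U$ have matching domain/codomain, so $\mathring{f^{-1}} \circ \mathring f$ is the germ of $f^{-1} \circ f = \id_U$, and analogously $\mathring f \circ \mathring{f^{-1}}$ is the germ of $\id_V$. It then remains to identify the germ of the identity embedding $\id_U \colon U \hookrightarrow U$ with the identity morphism of $\mathring U^N_Y$ in $\mathring{\mathcal G}_{[Y]}$. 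This identification is a direct consequence of the fact that $\mathring U^N_Y = \invlim_{U \in \mathfrak U^N_Y} U$ in $\Pro(\mathcal G_{[Y]})$: the identity on an inverse limit object is, under \eqref{eq:germ rep}, exactly the colimit class of the inclusion $U \hookrightarrow U$.

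The argument is essentially formal once the representative $f$ is diffeomorphic onto its image, so I anticipate no serious obstacle. The only point requiring care is bookkeeping with the composition formula \eqref{eq:germ rep} and the canonical bijection \eqref{eq:colim bijection}, to ensure that the germ of the set-theoretic inverse really is the inverse in the pro-category and that the identity morphism is represented by the identity embedding; both are immediate from the explicit description of morphisms in $\Pro(\mathcal G_{[Y]})$ given in \eqref{eq:diff germ}.
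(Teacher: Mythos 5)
Your proposal is correct and follows essentially the same route as the paper: invoke Proposition \ref{prop:diffeo rep} to represent $\mathring f$ by an honest diffeomorphism $f\colon U \xrightarrow{\sim} V$ and take the germ of $f^{-1}$ as the inverse. The paper's proof is just this one line; your additional bookkeeping verifying that the composite germs are the identity morphisms is a harmless elaboration of the same argument.
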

\begin{proof}
Let $\mathring f \in \mathring{\mathcal G}_{[Y]}(\mathring U^N_Y, \mathring U^{N'}_{Y'})$. Proposition \ref{prop:diffeo rep} implies there exists a diffeomorphism $f: U \xrightarrow{\sim} V$ whose germ is $\mathring f$. The inverse $\mathring f^{-1}$ is the germ of $f^{-1}: V \xrightarrow{\sim} U$.
\end{proof}
\begin{remark}
We call an element of (\ref{eq:diff germ}) an \emph{isomorphism of manifold germs}.
\end{remark}
\begin{notation}
Denote $\Diff(\mathring U^N_Y) := \mathring{\mathcal G}_{[Y]}(\mathring U^N_Y,\mathring U^N_Y)$.
\end{notation}

\begin{remark}The sheaf $\mathscr F: \Man_n^{op} \raw \Set$ induces a pre-sheaf $\mathscr F: \mathring{\mathcal G}_{[Y]}^{op} \raw \Set$ which assigns to a manifold germ $\mathring U^N_Y$ the set
\begin{equation}
\mathscr F(\mathring U^N_Y) = \colim_{U \in \mathfrak{U}^N_Y}\mathscr F(U)
\end{equation}
\end{remark}

When $Y:= \Sigma^{n-1}$ is a closed $n-1$-manifold, we set $\mathcal G_{[\Sigma]}' \subset \mathcal G_{[\Sigma]}$ to be the full sub-category of co-orientable inclusions $\Sigma \subset N$ and $\mathring{\mathcal G}_{[\Sigma]}' \subset \mathring{\mathcal G}_{[\Sigma]}$ the corresponding sub-groupoid of co-orientable manifold germs. There is a pre-sheaf $\mathscr P^{c}_{[\Sigma]}: (\mathcal G_{[\Sigma]}')^{op} \raw \Set$ that assigns to a co-orientable inclusion $\Sigma \subset N$ the set of co-orientations which induces a pre-sheaf $\mathring{\mathscr P}^c_{[\Sigma]}: (\mathring{\mathcal G}_{[\Sigma]}')^{op} \raw \Set$ of co-orientations on manifold germs. 

\begin{definition}
Let $\mathring{\mathcal G}^c_{[\Sigma]}$ be the groupoid of pairs $(\mathring U_\Sigma^N, \mathring c)$ of a manifold germ $\mathring U_\Sigma^N \in \mathring{\mathcal G}'_{[\Sigma]}$ and a co-orientation $\mathring c \in \mathring{\mathscr P}^c_{[\Sigma]}(\mathring U_\Sigma^N)$ with morphisms $(\mathring U_\Sigma^N, \mathring c) \rightsquigarrow (\mathring U_{\Sigma'}^{N'}, \mathring c')$ consisting of co-orientation preserving isomorphisms $\mathring f \in \mathring{\mathcal G}_{[\Sigma]}(\mathring U_\Sigma^N, \mathring U_{\Sigma'}^{N'})$ satisfying $\mathring f^*\mathring c' = \mathring c$.
\end{definition}

\begin{remark}
When $Y$ is a $k$-manifold with corners for all $k \leq n$, there will be a corresponding presheaf $\mathscr P^c_{[Y]}: \mathring{\mathcal G}_{[Y]}^{op} \raw \Set$ of co-orientation data needed to glue bordisms between manifolds of higher codimension.
\end{remark}

\begin{definition}\label{def:cylindrical germ}
Let $C_\Sigma := \Sigma \times \R$ be the cylinder and let $\Sigma \subset C_\Sigma$ denote the inclusion of the 0-slice $\Sigma \times \{0\} \subset \Sigma \times \R$. We will call $\mathring \Sigma := \mathring U^{C_\Sigma}_\Sigma \in \mathring{\mathcal G}'_{[\Sigma]}$ the \emph{cylindrical germ of $\Sigma$}. Let $\mathring c_+ \in \mathring{\mathscr P}^c_{[\Sigma]}(\mathring \Sigma)$ be the positive co-orientation.  We will call $\mathring \Sigma^+:= (\mathring \Sigma ,\mathring c_+) \in \mathring{\mathcal G}_{[\Sigma]}^c$ the \emph{positively co-oriented cylindrical germ of $\Sigma$}.
\end{definition}

\begin{proposition}
$\mathring \Sigma^+$ is an initial object in $\mathring{\mathcal G}_{[\Sigma]}^c$.
\end{proposition}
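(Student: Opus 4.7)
The plan is to apply the tubular neighborhood theorem in the co-oriented setting. Given any object $(\mathring U_\Sigma^N, \mathring c) \in \mathring{\mathcal G}_{[\Sigma]}^c$, I would first choose concrete representatives: an open $U \subset N$ containing $\Sigma$ together with an actual co-orientation on $U$ representing the germ $\mathring c$. Co-orientability of $\Sigma \subset U$ is equivalent to triviality of the normal line bundle $\nu_{\Sigma \subset U}$ as an oriented line bundle, which provides a smooth transverse vector field $\xi$ along $\Sigma$ realizing the chosen co-orientation; I would extend $\xi$ to a smooth transverse vector field on a neighborhood of $\Sigma$ in $U$.

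Integrating the flow of $\xi$ (equivalently, invoking the exponential map of any Riemannian metric on $U$ adapted to $\xi$) produces a smooth embedding $\varphi : \Sigma \times (-\epsilon, \epsilon) \hookrightarrow U$ for some $\epsilon > 0$, with $\varphi(\sigma, 0) = \sigma$ and $\partial_t \varphi|_{t = 0} = \xi|_\Sigma$; shrinking $\epsilon$ if necessary ensures $\varphi$ is a genuine embedding. By construction $\varphi$ pulls back $\mathring c$ to the positive co-orientation $\mathring c_+$ on $C_\Sigma$, so the germ of $\varphi$ at $\Sigma \times \{0\}$ defines the desired morphism $\mathring \Sigma^+ \to (\mathring U_\Sigma^N, \mathring c)$ in $\mathring{\mathcal G}_{[\Sigma]}^c$, establishing existence.

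For uniqueness, let $\mathring f, \mathring g : \mathring \Sigma^+ \to (\mathring U_\Sigma^N, \mathring c)$ be two such morphisms. Since $\mathring{\mathcal G}_{[\Sigma]}^c$ is a groupoid, the composite $\mathring h := \mathring g^{-1} \circ \mathring f$ is a co-orientation preserving automorphism of $\mathring \Sigma^+$ that fixes $\Sigma$ setwise. Using Proposition \ref{prop:diffeo rep}, I would represent $\mathring h$ by a concrete diffeomorphism $h : W \to W'$ of open neighborhoods of $\Sigma \times \{0\}$ in $C_\Sigma = \Sigma \times \R$ and argue that, after passing to the germ via the colimit (\ref{eq:germ rep}), $h$ becomes equivalent to the identity, whence $\mathring f = \mathring g$.

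The main obstacle lies in this uniqueness step: existence is a standard tubular neighborhood construction, whereas uniqueness requires a careful germ-theoretic rigidity argument exploiting co-orientation preservation together with the freedom to shrink neighborhoods in (\ref{eq:germ rep}). Concretely, one must rule out putative automorphisms that permute points along the normal direction while fixing $\Sigma$ and preserving $\mathring c_+$, by showing they are absorbed into the equivalence relation defining morphisms of germs.
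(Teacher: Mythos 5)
Your existence argument is essentially the paper's proof: the paper also fixes a Riemannian metric, uses the co-orientation to trivialize the normal bundle, and takes the germ of the exponential map $\exp_\varepsilon: \Sigma\times(-\varepsilon,\varepsilon)\to V$ as the morphism $\mathring\Sigma^+\to(\mathring U^N_\Sigma,\mathring c)$. That half is fine, and it is all the paper does.

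The uniqueness step you flag as the ``main obstacle'' is not merely delicate --- it is false, so the rigidity argument you hope for cannot exist. The morphisms of $\mathring{\mathcal G}^c_{[\Sigma]}$ are germs of embeddings that restrict to \emph{some} diffeomorphism of $\Sigma$, not necessarily the identity, and even fixing $\Sigma$ pointwise there are nontrivial co-orientation preserving automorphisms: the germ of $(x,t)\mapsto(x,2t)$ on $C_\Sigma$ preserves $\mathring c_+$, fixes $\Sigma\times\{0\}$, and does not agree with the identity on any neighborhood of the zero slice, so it is not absorbed by the equivalence relation in (\ref{eq:germ rep}). Likewise $(x,t)\mapsto(\varphi(x),t)$ for any $\varphi\in\Diff(\Sigma)$ gives a nontrivial automorphism. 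Indeed the paper relies on $\Diff(\mathring\Sigma^+)$ being a nontrivial group when it adjoins isomorphisms $(\mathring\Sigma^+,\mathring\sigma)\xrightarrow{\sim}(\mathring\Sigma^+,\mathring\sigma')$ to form $\Bord_{n,n-1}(\mathscr F)$. So ``initial'' here can only mean weakly initial --- there exists a morphism to every object, hence (since $\mathring{\mathcal G}^c_{[\Sigma]}$ is a groupoid) every co-orientable germ is isomorphic to the cylindrical one --- which is exactly what the subsequent constructions use. You should drop the uniqueness step entirely rather than try to repair it.
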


\begin{proof}
Let $(\mathring U^N_\Sigma, \mathring c) \in \mathring{\mathcal G}_{[\Sigma]}^c$.
Choose a Riemannian metric on $N$. The co-orientation $\mathring c$ determines a co-orientation and a trivialization of the normal bundle on $\Sigma \subset N$. For $\varepsilon >0$ small, the exponential map is a co-orientation preserving diffeomorphism
\begin{equation}
\exp_\varepsilon : U_\varepsilon \xrightarrow{\sim} V
\end{equation}
from $U_\varepsilon := \Sigma \times (-\varepsilon, \varepsilon)$ in $\mathfrak U_\Sigma^{C_\Sigma}$ onto a tubular neighborhood $V \in \mathfrak U_\Sigma^N$ which restricts to the identity on $\Sigma \subset C_\Sigma$. The germ of $\exp_\varepsilon \in \mathcal G_{[\Sigma]}(U_\varepsilon,V)$ is an isomorphism from $(\mathring U_\Sigma^{C_\Sigma}, \mathring c_+)$ to $(\mathring U_\Sigma^N, \mathring c)$.
\end{proof}

We now suppose $Y: = X$ is a compact $n$-manifold with boundary. Let $X \subset N$ be an object in $\mathcal G_{[X]}$ and let $p: \partial X \raw \{0,1\}$ be a partition of the boundary. Set $\partial X_i := p^{-1}(i)$. The inclusion $X \subset N$ restricts to inclusions of the boundary components $\partial X_i \subset N$ which determine manifold germs $\mathring U_{\partial X_i}^N$ and we let $\mathring c_i \in \mathring{\mathscr P}^c_{[\partial X_i]}(\mathring U_{\partial X_i}^N)$ be the incoming and outgoing co-orientations for $i=0$ and $i=1$.  

Let $X' \subset N'$ be another object in $\mathcal G_{[X]}$ and $p': \partial X' \raw \{0,1\}$ a partition. An isomorphism $\mathring f \in \mathring{\mathcal G}_{[X]}(\mathring U^N_X, \mathring U^{N'}_{X'})$ restricts to a diffeomorphism $f|_{\partial X}: \partial X \xrightarrow{\sim} \partial X'$ and we say $\mathring f$ \emph{respects partitions} if $(f|_{\partial X})^*p' = p$. Such an isomorphism $\mathring f$ induces isomorphisms $\mathring f_i \in \mathring{\mathcal G}_{[\partial X_i]}(\mathring U^N_{\partial X_i}, \mathring U^{N'}_{\partial X_i'})$.

\begin{definition}
The \emph{groupoid of bordisms of type $[X]$} is the groupoid $\mathring{\mathcal G}^*_{[X]}$ with objects consisting of tuples 
\begin{equation}
\mathcal X := (\mathring U^N_X, p, \mathring\theta_0, \mathring \theta_1)
\end{equation}
where 
\begin{itemize}
\item $\mathring U^N_X \in \mathring{\mathcal G}_{[X]}$ is a manifold germ
\item $p: \partial X \raw \{0,1\}$ is a partition of the boundary
\item $\mathring \theta_i: \mathring \Sigma_i^+ \xrightarrow{\sim} (\mathring U^N_{\partial X_i}, \mathring c_i)$ are co-orientation preserving isomorphisms in $\mathring{\mathcal G}^c_{[\Sigma_i]}$
\end{itemize}
and isomorphisms 
\begin{equation}
(\mathring U^N_X, p, \mathring \theta_0, \mathring \theta_1) \xrightarrow{\sim} (\mathring U^{N'}_{X'}, p', \mathring\theta_0', \mathring\theta_1')
\end{equation}
for each partition-respecting isomorphism $\mathring f \in \mathring{\mathcal G}_{[X]}(\mathring U^N_X, \mathring U^{N'}_{X'})$ satisfying $\mathring f_i \circ \mathring \theta_i = \mathring\theta_i'$. We say $\mathcal X$ is a \emph{bordism from $\mathring \Sigma_0^+$ to $\mathring \Sigma_1^+$} which we denote $\mathcal X: \mathring \Sigma_0^+ \rightsquigarrow \mathring \Sigma_1^+$ and we say $\mathring f$ is a \emph{diffeomorphism of bordisms}.
\end{definition}

\subsubsection{\texorpdfstring{$\mathscr F$}{F}-Bordisms}

Let $\mathcal X := (\mathring U^N_X, p, \mathring\theta_0, \mathring\theta_1)$ be a bordism from $\mathring \Sigma_0^+$ to $\mathring \Sigma_1^+$. For $i=0,1$, there are maps
\begin{equation}\label{eq:r^X_i}
r_i^{\mathcal X}: \mathscr F(\mathring U^N_X) \raw \mathscr F(\mathring U^N_{\partial X_i}) \xrightarrow{\mathring\theta_i^*} \mathscr F(\mathring \Sigma_i)
\end{equation}
which restrict a germ of a background field on $X$ to the boundary components $\partial X_i$ and pull it back to $\mathring \Sigma_i$ along $\mathring \theta_i$.

\begin{definition}
Let $X$ be a compact $n$-manifold with boundary. The \emph{groupoid of $\mathscr F$-bordisms of type $[X]$} is the groupoid $\mathring{\mathcal G}^{\mathscr F}_{[X]}$ with objects consisting of pairs
\begin{equation}
(\mathcal X, \mathring \sigma_X)
\end{equation}
where 
\begin{itemize}
\item $\mathcal X := (\mathring U^N_X, p, \mathring\theta_0, \mathring \theta_1)$ is a bordism of type $[X]$
\item $\mathring \sigma_X \in \mathscr F(\mathring U^N_X)$ is a germ of a section
\end{itemize}
and isomorphisms
\begin{equation}
(\mathcal X, \mathring \sigma_X) \xrightarrow{\sim} (\mathcal X', \mathring \sigma_{X'})
\end{equation}
for each $\mathring f \in \mathring{\mathcal G}^*_{[X]}(\mathcal X, \mathcal X')$ satisfying $\mathring f^* \mathring\sigma_{X'} = \mathring \sigma_X$. We say $(\mathcal X, \mathring \sigma_X)$ is an \emph{$\mathscr F$-bordism from $(\mathring \Sigma_0^+, \mathring \sigma_0)$ to $(\mathring \Sigma_1^+, \mathring \sigma_1)$} where $\mathring\sigma_i := r^{\mathcal X}_i\mathring\sigma_X$. We say $\mathring f$ is a \emph{diffeomorphism of $\mathscr F$-bordisms}.
\end{definition}

\begin{remark}
If $\mathscr F: \Man_n^{op} \raw \Set$ is the trivial sheaf that assigns to every $n$-manifold the singleton set $\{*\}$ then $\mathring{\mathcal G}^{\mathscr F}_{[X]} = \mathring{\mathcal G}^*_{[X]}$.
\end{remark}

\begin{definition}[Composition of $\mathscr F$-bordisms]
Let $\mathring \sigma_i \in \mathscr F(\mathring\Sigma_i)$ for $i=0,1,2$ and let 
\begin{equation}\label{eq:bordisms}
\begin{split}
(\mathcal X, \mathring\sigma_X): (\mathring \Sigma_0^+, \mathring\sigma_0) \rightsquigarrow (\mathring \Sigma_1^+, \mathring\sigma_1) \\
(\mathcal Y, \mathring\sigma_Y): (\mathring \Sigma_1^+, \mathring\sigma_1) \rightsquigarrow (\mathring\Sigma_2^+, \mathring\sigma_2)
\end{split}
\end{equation}
be $\mathscr F$-bordisms where
\begin{equation}
\begin{split}
\mathcal X:= (\mathring U^N_X, p, \mathring\theta_0,\mathring\theta_1): \mathring \Sigma_0^+ \rightsquigarrow \mathring \Sigma_1^+\\
\mathcal Y:= (\mathring U^M_Y, q, \mathring\eta_1, \mathring\eta_2): \mathring \Sigma_1^+ \rightsquigarrow \mathring\Sigma_2^+
\end{split}
\end{equation}
are bordisms. We assume there are sections $\sigma_N \in \mathscr F(N)$ and $\sigma_M \in \mathscr F(M)$ whose restrictions to $\mathring U^N_X$ and $\mathring U^M_Y$ are $\mathring \sigma_X$ and $\mathring\sigma_Y$; if not, we replace $N,M$ by open neighborhoods containing $X,Y$ that admit such sections.

Let
\begin{equation}\label{eq:emb reps}
\begin{aligned}[c]
\theta_0: U_0 \mono N\\
\theta_1: U_1 \mono N
\end{aligned}
\quad\quad
\begin{aligned}[c]
\eta_1: U_1 \mono M\\
\eta_2: U_2 \mono M
\end{aligned}
\end{equation}
be embeddings of neighborhoods $\Sigma_i \subset U_i \subset C_{\Sigma_i}$ restricting to $\mathring \theta_0, \mathring\theta_1, \mathring\eta_1,\mathring\eta_2$. Set 
\begin{equation}
\begin{split}
\tilde X := X \cup \Ima(\theta_1)\\
\tilde Y := \Ima(\eta_1) \cup Y
\end{split}
\end{equation}
and 
\begin{equation}
\begin{split}
V:= \Ima(\theta_0)\cup X \cup \Ima(\theta_1)\\
W:= \Ima(\eta_1) \cup Y \cup \Ima(\eta_2).
\end{split}
\end{equation}
Let
\begin{equation}
L:= V \cup_{U_1} W
\end{equation}
be the smooth manifold obtained from gluing $V,W$ along $U_1$ with $\theta_1, \eta_1$. The sections $\sigma_N, \sigma_M$ restrict to sections $\sigma_V, \sigma_W$ on $V,W$ which agree on the overlap inside $L$. By the sheaf property, they glue uniquely to a section $\sigma_L \in \mathscr F(L)$. The smooth manifold 
\begin{equation}
Y \circ X := \tilde X \cup_{U_1} \tilde Y
\end{equation}
is a submanifold of $L$ with boundary isomorphic to $p^{-1}(0) \sqcup q^{-1}(1)$. The section $\sigma_L$ restricts to a germ $\mathring\sigma_{Y\circ X} \in \mathscr F(\mathring U^L_{Y\circ X})$. Set $r: \partial(Y\circ X) \raw \{0,1\}$ to be the partition that sends $p^{-1}(0)$ to 0 and $q^{-1}(1)$ to 1. Regard $\mathring\theta_0, \mathring\eta_2$ as germs of embeddings into $L$. We say the bordism
\begin{equation}
\mathcal Y \circ \mathcal X := (\mathring U^L_{Y\circ X}, r, \mathring\theta_0, \mathring\eta_2): \mathring\Sigma_0^+ \rightsquigarrow \mathring\Sigma_2^+
\end{equation}
is \emph{a composition of $\mathcal X$ and $\mathcal Y$} and the $\mathscr F$-bordism
\begin{equation}
(\mathcal Y\circ \mathcal X, \mathring\sigma_{Y\circ X}): (\mathring\Sigma_0^+,\mathring\sigma_0) \rightsquigarrow (\mathring\Sigma_2^+,\mathring\sigma_2)
\end{equation}
is \emph{a composition of $(\mathcal X,\mathring\sigma_X)$ and $(\mathcal Y, \mathring\sigma_Y)$}.
\end{definition}

\begin{proposition}
Composition is well-defined on diffeomorphism classes of $\mathscr F$-bordisms. 
\end{proposition}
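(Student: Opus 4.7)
The plan is to prove well-definedness in two steps: first, show that the construction $\mathcal Y \circ \mathcal X$ and the associated glued section germ $\mathring\sigma_{Y \circ X}$ are independent (up to canonical isomorphism in $\mathring{\mathcal G}^{\mathscr F}_{[Y\circ X]}$) of all auxiliary choices made in the definition; second, show that replacing $(\mathcal X,\mathring\sigma_X)$ and $(\mathcal Y,\mathring\sigma_Y)$ by isomorphic $\mathscr F$-bordisms yields a composition isomorphic to the original one.

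For the first step, the auxiliary data consist of the global extensions $\sigma_N \in \mathscr F(N)$ and $\sigma_M \in \mathscr F(M)$ of the germs $\mathring\sigma_X,\mathring\sigma_Y$, and the embedding representatives $\theta_0,\theta_1,\eta_1,\eta_2$ of the germs $\mathring\theta_0,\mathring\theta_1,\mathring\eta_1,\mathring\eta_2$ (as produced by Proposition \ref{prop:diffeo rep}). Suppose two sets of choices are given. Using the colimit description of germs in \eqref{eq:diff germ}, I would pass to a common smaller neighborhood $U_1'' \subset U_1$ on which the representatives $\theta_1$ and $\theta_1'$ (respectively $\eta_1,\eta_1'$) agree, and replace $N,M$ by neighborhoods small enough that $\sigma_N = \sigma_N'$ on $\Ima(\theta_1|_{U_1''})$ and $\sigma_M = \sigma_M'$ on $\Ima(\eta_1|_{U_1''})$. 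Performing the gluing with these shrunken data produces a smooth manifold $L''$ which canonically embeds as an open submanifold of both $L$ and $L'$, carrying the same submanifold $Y \circ X$; by the sheaf property applied twice, the glued sections $\sigma_L,\sigma_{L'}$ agree on $L''$, so they determine the same element of $\mathscr F(\mathring U^{L}_{Y\circ X}) = \colim_{U}\mathscr F(U)$. Hence the isomorphism class of $(\mathcal Y \circ \mathcal X,\mathring\sigma_{Y\circ X})$ in $\mathring{\mathcal G}^{\mathscr F}_{[Y\circ X]}$ does not depend on these choices.

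For the second step, suppose $\mathring f: (\mathcal X,\mathring\sigma_X)\xrightarrow{\sim}(\mathcal X',\mathring\sigma_{X'})$ and $\mathring g: (\mathcal Y,\mathring\sigma_Y)\xrightarrow{\sim}(\mathcal Y',\mathring\sigma_{Y'})$ are isomorphisms of $\mathscr F$-bordisms. Represent them by genuine embeddings $f: V \hookrightarrow V'$ and $g: W \hookrightarrow W'$ (Proposition \ref{prop:diffeo rep}), which pull back $\sigma_{V'}$ and $\sigma_{W'}$ to $\sigma_V$ and $\sigma_W$ upon further shrinking. The compatibility conditions $\mathring f_1 \circ \mathring\theta_1 = \mathring\theta_1'$ and $\mathring g_1 \circ \mathring\eta_1 = \mathring\eta_1'$ mean that, after restricting all data to a small enough common neighborhood of $\Sigma_1$ in $C_{\Sigma_1}$, one has $f\circ\theta_1 = \theta_1'$ and $g\circ\eta_1 = \eta_1'$ as honest embeddings. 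This equality is exactly the coherence needed to glue $f$ and $g$ along $U_1$ to a diffeomorphism $f \cup g : L \xrightarrow{\sim} L'$ which restricts to a partition-preserving diffeomorphism $Y \circ X \xrightarrow{\sim} Y' \circ X'$ intertwining $\mathring\theta_0$ with $\mathring\theta_0'$ and $\mathring\eta_2$ with $\mathring\eta_2'$. The sheaf property then forces $(f\cup g)^* \sigma_{L'} = \sigma_L$, so the induced germ diffeomorphism is an isomorphism $(\mathcal Y \circ \mathcal X,\mathring\sigma_{Y\circ X})\xrightarrow{\sim}(\mathcal Y'\circ\mathcal X',\mathring\sigma_{Y'\circ X'})$ in $\mathring{\mathcal G}^{\mathscr F}_{[Y\circ X]}$.

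The main obstacle is purely bookkeeping: every assertion requires passing to a sufficiently small common neighborhood on which several germs can be represented simultaneously, and one must check that these shrinkings can be performed compatibly so that the pieces glue as claimed. This is handled uniformly by repeated appeal to the colimit formula \eqref{eq:colim bijection} for morphism sets in $\mathring{\mathcal G}_{[Y]}$, which guarantees that any finite collection of germ-level identities can be lifted to identities of genuine embeddings on a single small enough neighborhood, and by the sheaf axiom, which in turn guarantees the canonicity of the glued section.
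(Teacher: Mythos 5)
Your proposal is correct and follows essentially the same route as the paper: for independence of the auxiliary choices you shrink to a common neighborhood of $\Sigma_1$ on which the representatives of $\mathring\theta_1,\mathring\eta_1$ (and the extensions of the section germs) literally agree, then invoke the sheaf property — the paper packages this as an explicit diffeomorphism of the two glued manifolds rel $X$ and $Y$, whereas you exhibit a common open sub-gluing, but these are the same argument. Your second step, gluing the representing diffeomorphisms $f$ and $g$ along the (shrunken) overlap, is exactly the "similar" argument the paper omits, and the compatibility $f\circ\theta_1=\theta_1'$, $g\circ\eta_1=\eta_1'$ on a common neighborhood is indeed what makes the glued map well defined.
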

\begin{proof}
Let $(\mathcal X, \mathring\sigma_X), (\mathcal Y,\mathring\sigma_Y)$ be fixed $\mathscr F$-bordisms with notation as above. We first show that the compositions obtained from different sets of representatives (\ref{eq:emb reps}) are diffeomorphic. Let $\theta_i,\eta_j$ be as above and let
\begin{equation}
\begin{aligned}[c]
\theta_0': U_0' \mono N\\
\theta_1': U_1' \mono N
\end{aligned}
\quad\quad
\begin{aligned}[c]
\eta_1': U_1' \mono M\\
\eta_2': U_2' \mono M
\end{aligned}
\end{equation}
be another set of embeddings restricting to $\mathring \theta_0, \mathring\theta_1,\mathring\eta_1,\mathring\eta_2$. Let $\tilde X, \tilde Y, V, W, L$ be as before, and set
\begin{equation}
\begin{split}
\tilde X':= X \cup \Ima(\theta_1')\\
\tilde Y':= \Ima(\eta_1')\cup Y
\end{split}
\end{equation}
\begin{equation}
\begin{split}
V' := \Ima(\theta_0') \cup X \cup \Ima(\theta_1')\\
W' := \Ima(\eta_1')\cup Y \cup \Ima(\eta_2').
\end{split}
\end{equation}
\begin{equation}
L' := V' \cup_{U_1'} W'
\end{equation}
There is a diffeomorphism
\begin{equation}\label{eq:diffeo1}
\begin{split}
(Y\circ X)'&:= \tilde X' \cup_{U_1'} \tilde Y'\\
 &\cong \tilde X \cup_{U_1} \tilde Y\\
&=: Y\circ X
\end{split}
\end{equation}
that commutes with the embeddings of $X$ and $Y$.
As $\theta_0, \theta_0'$ and $\eta_2,\eta_2'$ define the same germs, they restrict to the same embeddings
\begin{equation}\label{eq:common emb}
\theta_0'': \tilde U_0 \mono N
\quad\text{and}\quad
\eta_2'': \tilde U_2 \mono M
\end{equation}
on small enough open neighborhoods $\Sigma_0 \subset \tilde U_0 \subset U_0\cap U_0'$ and $\Sigma_2 \subset \tilde U_2 \subset U_2\cap U_2'$. Set
\begin{equation}
\begin{split}
\tilde V:= \Ima(\theta_0'')\cup X \cup \Ima(\theta_1)\\
\tilde W:= \Ima(\eta_1)\cup  Y \cup \Ima(\eta_2'')
\end{split}
\quad\text{and}\quad
\begin{split}
\tilde V':= \Ima(\theta_0'')\cup X \cup \Ima(\theta_1')\\
\tilde W':= \Ima(\eta_1')\cup  Y \cup \Ima(\eta_2'')
\end{split}
\end{equation}
The diffeomorphism (\ref{eq:diffeo1}) extends to a diffeomorphism 
\begin{equation}\label{eq:comp diffeo}
\tilde V' \cup_{U_1'} \tilde W' \cong \tilde V \cup_{U_1} \tilde W
\end{equation}
which commutes with the embeddings (\ref{eq:common emb}) and defines a diffeomorphism of bordisms 
\begin{equation}
\mathcal Y \circ \mathcal X \cong \mathcal (Y \circ \mathcal X)'
\end{equation}
where
\begin{equation}
\begin{split}
\mathcal Y \circ \mathcal X := (\mathring U^L_{Y\circ X}, r, \mathring\theta_0,\mathring \eta_2)\\
(\mathcal Y \circ \mathcal X)' := (\mathring U^{L'}_{(Y\circ X)'}, r, \mathring\theta_0, \mathring\eta_2)
\end{split}
\end{equation}
The section $\sigma_{L'}$ pulls back to $\sigma_L$ along (\ref{eq:comp diffeo}) which gives a diffeomorphism 
\begin{equation}
(\mathcal Y \circ \mathcal X, \mathring \sigma_{Y\circ X}) \cong \mathcal ((\mathcal Y \circ \mathcal X)', \mathring \sigma_{(Y\circ X)'})
\end{equation}
of $\mathscr F$-bordisms.

The proof that if
\begin{equation}
(\mathcal X,\mathring\sigma_X) \cong (\mathcal X', \mathring \sigma_{X'}) \quad \text{and} \quad
(\mathcal Y, \mathring\sigma_Y) \cong (\mathcal Y', \mathring \sigma_{Y'})
\end{equation} 
are diffeomorphic $\mathscr F$-bordisms, then
\begin{equation}
(\mathcal Y \circ \mathcal X, \mathring \sigma_{Y\circ X}) \cong (\mathcal Y' \circ \mathcal X', \mathring\sigma_{Y'\circ X'})
\end{equation} 
is similar and we omit it. 
\end{proof}

\begin{definition}
We define the symmetric monoidal bordism semicategory $\Bord_{n,n-1}^{s}(\mathscr F)$ of germs with background fields valued in $\mathscr F$ as follows. An object is a pair $(\mathring \Sigma^+, \mathring \sigma)$ consisting of a positively co-oriented cylindrical germ of a closed $n-1$-manifold $\Sigma$ and a germ of a section $\mathring \sigma \in \mathscr F(\mathring \Sigma)$. Morphisms are $\mathscr F$-bordisms up to diffeomorphism. Composition of morphisms is composition of bordisms up to diffeomorphism. The symmetric monoidal product is given by disjoint union with the empty manifold germ acting as the unit.
\end{definition}

\begin{notation}
Denote $\Diff(\mathring \Sigma^+) := \mathring{\mathcal G}^c_{[\Sigma]}(\mathring\Sigma^+, \mathring\Sigma^+)$.
\end{notation}

\begin{definition}
The bordism category $\Bord_{n,n-1}(\mathscr F)$ is the category containing the bordism semicategory with the addition of isomorphisms $\mathring f: (\mathring \Sigma^+, \mathring \sigma) \xrightarrow{\sim} (\mathring \Sigma^+, \mathring \sigma')$ for each $\mathring f \in \Diff(\mathring \Sigma^+)$ satisfying $\mathring f^*\mathring \sigma' = \mathring \sigma$. 
\end{definition}

\begin{proposition}\label{prop:sheaf induced functor}
A morphism of sheaves $\varphi: \mathscr F \raw \mathscr G$ induces a symmetric monoidal functor
\begin{equation}\label{eq:induced sheaf functor}
\varphi_*: \Bord_{n,n-1}(\mathscr F) \raw \Bord_{n,n-1}(\mathscr G)
\end{equation}
\end{proposition}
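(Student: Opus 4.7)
The plan is to define $\varphi_*$ on objects and morphisms in the obvious way using the sheaf morphism at the level of germs, and then check well-definedness, functoriality, and monoidality via naturality and the sheaf property. First, for any manifold germ $\mathring U^N_Y$, the morphism $\varphi$ induces a map
\begin{equation}
\varphi_{\mathring U^N_Y}: \mathscr F(\mathring U^N_Y) = \colim_{U \in \mathfrak U^N_Y} \mathscr F(U) \raw \colim_{U \in \mathfrak U^N_Y} \mathscr G(U) = \mathscr G(\mathring U^N_Y)
\end{equation}
by applying $\varphi_U$ to each term of the directed system; naturality of $\varphi$ ensures the colimiting diagrams commute. Moreover, for any germ $\mathring h$ of an embedding, $\mathring h^* \varphi_{\mathring U^{N'}_{Y'}} = \varphi_{\mathring U^N_Y} \mathring h^*$ by the same naturality, passed through the representation in (\ref{eq:germ rep}).

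I would define $\varphi_*$ to send an object $(\mathring \Sigma^+, \mathring \sigma)$ to $(\mathring \Sigma^+, \varphi_{\mathring \Sigma}(\mathring \sigma))$, an $\mathscr F$-bordism $(\mathcal X, \mathring \sigma_X)$ to $(\mathcal X, \varphi_{\mathring U^N_X}(\mathring \sigma_X))$, and an extra isomorphism $\mathring f: (\mathring \Sigma^+, \mathring \sigma) \xrightarrow{\sim} (\mathring \Sigma^+, \mathring \sigma')$ to $\mathring f$ viewed as $(\mathring \Sigma^+, \varphi(\mathring \sigma)) \xrightarrow{\sim} (\mathring \Sigma^+, \varphi(\mathring \sigma'))$; the defining compatibility $\mathring f^* \varphi(\mathring \sigma') = \varphi(\mathring \sigma)$ follows from $\mathring f^*\mathring \sigma' = \mathring \sigma$ together with naturality. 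Well-definedness on diffeomorphism classes of $\mathscr F$-bordisms follows the same way: any diffeomorphism $\mathring f$ of $\mathscr F$-bordisms satisfies $\mathring f^*\mathring \sigma_{X'} = \mathring \sigma_X$, hence also $\mathring f^* \varphi(\mathring \sigma_{X'}) = \varphi(\mathring \sigma_X)$, so $\mathring f$ is equally a diffeomorphism of $\mathscr G$-bordisms.

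For functoriality, the only non-trivial point is preservation of composition of bordisms. Given the composition data for $(\mathcal X, \mathring \sigma_X)$ and $(\mathcal Y, \mathring \sigma_Y)$ with $\sigma_V \in \mathscr F(V)$ and $\sigma_W \in \mathscr F(W)$ agreeing on the overlap and glued via the sheaf property to $\sigma_L \in \mathscr F(L)$, I would apply $\varphi_L$ and use that $\varphi$ commutes with restriction to see that $\varphi_V(\sigma_V)$ and $\varphi_W(\sigma_W)$ agree on the overlap inside $L$ and glue uniquely to $\varphi_L(\sigma_L)$. Restricting to the germ $\mathring U^L_{Y \circ X}$ then shows that the section attached to the composite bordism under $\varphi_*$ equals the composite of the sections attached by $\varphi_*$ individually. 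Preservation of identities and of the extra isomorphisms under composition is immediate from the definitions.

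Finally, for the symmetric monoidal structure: the disjoint union of manifold germs gives $\mathscr F(\mathring U_1 \sqcup \mathring U_2) \cong \mathscr F(\mathring U_1) \times \mathscr F(\mathring U_2)$ via the sheaf property, and $\varphi$ respects this product decomposition. Thus $\varphi_*$ commutes with $\sqcup$ on objects and on bordisms, fixes the empty germ, and the braiding is preserved trivially since $\varphi_*$ acts as the identity on the underlying geometric data. I do not expect any serious obstacle here; the whole proof is bookkeeping around the naturality of $\varphi$ and the sheaf gluing axiom, and the only place one needs to be mildly careful is in verifying that the gluing used to define composition commutes with application of $\varphi$.
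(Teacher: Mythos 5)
Your proposal is correct and follows essentially the same route as the paper, which simply extends $\varphi$ to a morphism of presheaves on germs and lets it act on objects and $\mathscr F$-bordisms while fixing the underlying geometric data, citing naturality for the action on diffeomorphisms. The additional bookkeeping you supply (compatibility of $\varphi$ with the colimits defining germs, with gluing in the composition of bordisms, and with disjoint union) is exactly the verification the paper leaves implicit.
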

\begin{proof}
The sheaf morphism $\varphi$ extends to a morphism of presheaves on $\Pro(\Man_n)$. We use this to define $\varphi_*$ which acts on objects and morphisms by 
\begin{equation}
\begin{split}
(\mathring \Sigma^+, \mathring\sigma) &\mapsto (\mathring \Sigma^+, \varphi(\mathring\sigma))\\
(\mathcal X, \mathring \sigma_X) &\mapsto (\mathcal X, \varphi(\mathring\sigma_X))\\
\mathring f &\mapsto \mathring f
\end{split}
\end{equation}
where the action on diffeomorphisms is justified by functoriality of sheaf morphisms.
\end{proof}


\subsection{The Category of Nuclear Pairs}\label{section NP}
Let $\mathcal{NF}$ and $\mathcal{NDF}$ denote the symmetric monoidal categories of nuclear Fr\'echet and nuclear dual Fr\'echet spaces. For a review of these categories and nuclear spaces in general we refer the reader to Appendix \ref{appendix Nuclear Spaces} which we will peruse throughout this section.
\begin{definition}
   A \emph{nuclear pair} is a continuous dense injection $\check E \xhookrightarrow{\iota_E} \hat E$ with $\check E \in \mathcal{NDF}, \hat E \in \mathcal{NF}$. A morphism of nuclear pairs is a commutative diagram
   \begin{equation}\label{eq:NP Morphism}
   \begin{tikzcd}[row sep=small, column sep = small]
      \check E \arrow[d, "\check f"'] \arrow[r, hook, "\iota_E"] &\hat E \arrow[d, "\hat f"]\\
      \check F \arrow[r,hook, "\iota_F"] &\hat F
   \end{tikzcd}
   \end{equation}
   The category of such pairs will be denoted $\mathcal{NP}$.
\end{definition}

By Proposition \ref{prop:NDF2DF}, the map $\check E \xhookrightarrow{\iota_E} \hat E$ is nuclear.

\begin{definition}\label{def:nuclear}
We will say a morphism of nuclear pairs is \emph{nuclear} if there exists a factorization
\begin{equation}\label{eq:nuclear morphism}
   \begin{tikzcd}
      \check E \arrow[d, "\check f"'] \arrow[r, hook, "\iota_E"] &\hat E \arrow[d, "\hat f"] \arrow[dl]\\
      \check F \arrow[r,hook, "\iota_F"] &\hat F
   \end{tikzcd}
   \end{equation}
through a continuous map $\hat E \raw \check F$.
\end{definition}

\begin{lemma}\label{lem:nuclear implies nuclear}
The maps $\check f$ and $\hat f$ in (\ref{eq:nuclear morphism}) are nuclear.
\end{lemma}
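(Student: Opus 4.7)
The plan is to exploit the two-sided ideal property of nuclear operators: the composition of a nuclear operator with any continuous linear map (on either side) is again nuclear. This is a standard fact reviewed in the appendix on nuclear spaces, and once combined with Proposition \ref{prop:NDF2DF}, the result is immediate.

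First, let $g: \hat E \raw \check F$ denote the continuous linear map whose existence is posited in Definition \ref{def:nuclear}. Commutativity of the diagram (\ref{eq:nuclear morphism}) yields the two factorizations
\begin{equation}
\check f = g \circ \iota_E, \qquad \hat f = \iota_F \circ g.
\end{equation}
By Proposition \ref{prop:NDF2DF}, both inclusions $\iota_E: \check E \mono \hat E$ and $\iota_F: \check F \mono \hat F$ are nuclear maps.

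For $\check f$, post-composing the nuclear map $\iota_E$ with the continuous map $g$ produces a nuclear map, so $\check f$ is nuclear. For $\hat f$, pre-composing the nuclear map $\iota_F$ with the continuous map $g$ produces a nuclear map, so $\hat f$ is nuclear.

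There is no genuine obstacle: the argument is a one-line application of the ideal property, and the only thing to verify is that this ideal property is available in the form stated. I would just cite the appropriate appendix result (the standard fact that nuclear operators form a two-sided operator ideal among continuous linear maps of locally convex spaces) rather than reproving it here.
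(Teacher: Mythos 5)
Your argument is correct and is exactly the paper's proof: the factorizations $\check f = g\circ\iota_E$ and $\hat f = \iota_F\circ g$, nuclearity of $\iota_E,\iota_F$ via Proposition \ref{prop:NDF2DF}, and the ideal property, which is available in the paper as Proposition \ref{prop:nuclear composition}. Nothing further is needed.
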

\begin{proof}
Both maps are a composition of a nuclear map with a continuous map. Apply Proposition \ref{prop:nuclear composition}.
\end{proof}

\begin{lemma}
$\mathcal{NP}$ is a symmetric monoidal category.
\end{lemma}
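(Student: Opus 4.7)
The plan is to define the tensor product of nuclear pairs componentwise using the symmetric monoidal structures already present on $\mathcal{NF}$ and $\mathcal{NDF}$. Given two nuclear pairs $(\check E \xhookrightarrow{\iota_E} \hat E)$ and $(\check F \xhookrightarrow{\iota_F} \hat F)$, I would set their tensor product to be
\begin{equation*}
(\check E \otimes \check F \xrightarrow{\iota_E \otimes \iota_F} \hat E \otimes \hat F),
\end{equation*}
where the tensor products on the left and right are the completed topological tensor products in $\mathcal{NDF}$ and $\mathcal{NF}$ respectively, and the induced map arises by functoriality. The unit is $(\C \xrightarrow{\mathrm{id}} \C)$. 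On morphisms, given $(\check f, \hat f)$ and $(\check g, \hat g)$, I would set their tensor product to be $(\check f \otimes \check g, \hat f \otimes \hat g)$; the commutativity of the defining square of a morphism of nuclear pairs follows from naturality of the tensor products in each of $\mathcal{NF}$ and $\mathcal{NDF}$.

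The key verification is that $\iota_E \otimes \iota_F$ is again continuous, dense, and injective. Continuity is automatic from the universal property of the projective tensor product. Density is also routine: the algebraic tensor product of the dense subspaces $\iota_E(\check E) \subset \hat E$ and $\iota_F(\check F) \subset \hat F$ spans a dense subspace of $\hat E \otimes \hat F$, since elementary tensors are dense in the completed tensor product. The main obstacle is injectivity. Here I would invoke the standard fact, to be cited from Appendix \ref{appendix Nuclear Spaces}, that for nuclear spaces the completed projective and injective tensor products coincide, and that the injective tensor product of two continuous injections of nuclear spaces remains injective. Without nuclearity this can fail in general, so this is precisely the point where the nuclearity hypothesis is used.

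The associator, left and right unitors, and symmetry isomorphism on $\mathcal{NP}$ are then defined componentwise from the corresponding structure isomorphisms of $\mathcal{NF}$ and $\mathcal{NDF}$. Each such componentwise pair genuinely defines a morphism of nuclear pairs because the structure isomorphisms in the two categories are built from the same underlying algebraic isomorphism of tensor products, so they commute with the nuclear inclusions $\iota$. The pentagon, triangle, and hexagon axioms then hold in $\mathcal{NP}$ because they hold simultaneously in $\mathcal{NF}$ and $\mathcal{NDF}$: a diagram of pairs commutes if and only if its two component diagrams commute, and both of those are instances of the coherence axioms in the respective component categories.
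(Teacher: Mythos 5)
Your proposal is correct and follows essentially the same route as the paper: the monoidal product is defined componentwise via the completed tensor products in $\mathcal{NDF}$ and $\mathcal{NF}$, the coherence axioms are inherited from those categories, and the key point — that $\iota_E \widehat\otimes \iota_F$ remains injective with dense image — is handled exactly as in the paper, using density for the projective tensor product, injectivity for the injective tensor product, and nuclearity to identify the two.
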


\begin{proof}
The monoidal product is given by
\begin{equation}
[\check E \xhookrightarrow{\iota_E} \hat E] \otimes [\check F \xhookrightarrow{\iota_F} \hat F] := [\check E \widehat\otimes \check F \xhookrightarrow{\iota_E \widehat\otimes \iota_F} \hat E \widehat\otimes \hat F].
\end{equation}
where $\iota_E \widehat\otimes \iota_F$ is the map from Corollary \ref{cor:completed tensor maps} which by Proposition \ref{prop:injective and dense maps} is injective with dense image. We omit the verification of the axioms for a symmetric monoidal category, which are implied by Proposition \ref{prop:N(D)F sym monoidal}.
\end{proof}

\begin{lemma}
The transpose
\begin{equation}
\hat E^* \xrightarrow{\iota_E^*} \check E^*
\end{equation}
is a nuclear pair and there is a canonical isomorphism
\begin{equation}\label{eq:reflexivity}
\begin{tikzcd}
\check E \arrow[d, "\cong"rot] \arrow[r,hook,"\iota_E"] &\hat E \arrow[d, "\cong"rot] \\
(\check E^{*})^* \arrow[r,hook,"(\iota_E^{*})^*"] & (\hat E^{*})^*
\end{tikzcd}
\end{equation}
of nuclear pairs.
\end{lemma}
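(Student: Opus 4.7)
The plan is to verify the two assertions in turn, relying on the reflexivity of nuclear (D)F spaces presumably established in Appendix \ref{appendix Nuclear Spaces}.

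First, to show that $\iota_E^*: \hat E^* \to \check E^*$ is a nuclear pair, I would cite the appendix fact that the strong dual of a nuclear Fr\'echet space lies in $\mathcal{NDF}$, and the strong dual of a nuclear dual Fr\'echet space lies in $\mathcal{NF}$, so the source and target are in the right categories. Continuity of $\iota_E^*$ (for the strong topologies) is a standard property of the transpose. Injectivity follows from the density of $\iota_E$: if $\varphi \in \hat E^*$ satisfies $\iota_E^*\varphi = \varphi \circ \iota_E = 0$, then $\varphi$ vanishes on the dense subspace $\iota_E(\check E) \subset \hat E$, and continuity forces $\varphi = 0$. The dense image statement is the dual assertion, and this is the one genuinely technical point: by reflexivity, it suffices (via Hahn--Banach in the reflexive setting) to show that any $x \in (\check E^*)^* \cong \check E$ annihilating the image of $\iota_E^*$ must vanish. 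For such $x$, we have $\langle \iota_E x, \varphi \rangle = \langle x, \iota_E^* \varphi \rangle = 0$ for every $\varphi \in \hat E^*$, so $\iota_E x = 0$, and injectivity of $\iota_E$ yields $x = 0$.

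For the reflexivity isomorphism, the two vertical arrows in (\ref{eq:reflexivity}) are the canonical biduality maps, which are isomorphisms of topological vector spaces by reflexivity of nuclear Fr\'echet and nuclear dual Fr\'echet spaces (appendix). Commutativity of the square is the standard compatibility of the biduality maps with transposes: unwinding definitions, for $x \in \check E$ and $\varphi \in \hat E^*$, both $(\iota_E^*)^* (\mathrm{ev}_x)$ and $\mathrm{ev}_{\iota_E x}$ send $\varphi$ to $\varphi(\iota_E x)$, so the diagram commutes on the nose and provides the desired isomorphism in $\mathcal{NP}$.

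The only substantive obstacle is the density of the image of $\iota_E^*$; everything else is either a direct citation to the appendix or a routine unwinding of the definition of the transpose. The argument for density is precisely the place where both the reflexivity of the spaces and the injectivity of the original map $\iota_E$ are used, which explains why the definition of a nuclear pair requires both a dense image and an injection.
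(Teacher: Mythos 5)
Your proposal is correct and follows essentially the same route as the paper: continuity of the transpose, the injectivity/density duality (the paper cites its Proposition on ``$\Ima(u)$ dense iff $u^*$ injective'' together with reflexivity, which is exactly the Hahn--Banach argument you spell out), and reflexivity of nuclear (dual) Fr\'echet spaces to get the commuting biduality square. The paper additionally remarks that $\iota_E^*$ is a nuclear map, but since nuclearity is not part of the definition of a nuclear pair (and is automatic for any continuous map from an $\mathcal{NDF}$ space to an $\mathcal{NF}$ space), omitting this is not a gap.
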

\begin{proof}
By Proposition \ref{prop: transpose continuous}, the transpose $\hat E^* \xrightarrow{\iota_E^*} \check E^*$ is continuous. By Corollary \ref{cor:NF reflexive} and Proposition \ref{prop:dual NF}, $\check E$ and $\hat E$ are reflexive spaces which implies the isomorphism (\ref{eq:reflexivity}). Together with Proposition \ref{prop: Imu dense u^* iff injective}, we see that $\iota_E$ is injective and dense if and only if $\iota_E^*$ is injective and dense. Finally, Proposition \ref{prop:nuclear transpose} implies that $\hat E^* \xhookrightarrow{\iota_E^*} \check E^*$ is nuclear.
\end{proof}

Let $\delta_{\mathcal{NP}}: \mathcal{NP} \raw \mathcal{NP}^{op}$ be the functor that sends $\check E \xhookrightarrow{\iota_E} \hat E$ to $\hat E^* \xhookrightarrow{\iota_E^*} \check E^*$ and a morphism (\ref{eq:NP Morphism}) to the morphism
\begin{equation}
\begin{tikzcd}[row sep=small, column sep = small]
\hat F^* \arrow[d, "\hat f^*"'] \arrow[r, hook, "\iota_F^*"] &\check F^* \arrow[d, "\check f^*"]\\
\hat E^* \arrow[r,hook, "\iota_E^*"] &\check E^*
\end{tikzcd}
\end{equation}
The isomorphism (\ref{eq:reflexivity}) is functorial -- thus $\delta_{\mathcal{NP}}^2$ is naturally isomorphic to the identity functor which makes $\delta_{\mathcal{NP}}$ a twisted involution. For an overview of involutions on categories we refer the reader to Appendix B of \cite{Freed_2021}.

There is another functor $\alpha_{\mathcal{NP}}: \mathcal{NP} \raw \mathcal{NP}$ which sends a nuclear pair $\check E \xhookrightarrow{\iota} \hat E$ to its complex conjugate
$\overline{\check E} \xhookrightarrow{\bar\iota} \overline{\hat E}$ and a morphism (\ref{eq:NP Morphism}) to the morphism
\begin{equation}
\begin{tikzcd}[row sep=small, column sep = small]
\overline{\check E} \arrow[d, "\overline{\check f}"'] \arrow[r, hook, "\overline{ \iota_E}"] & \overline{\hat E} \arrow[d, "\overline{\hat f}"]\\
\overline{\check F} \arrow[r,hook, "\overline{\iota_F}"] & \overline{\hat F}
\end{tikzcd}
\end{equation}
The canonical natural isomorphism of $\alpha_{\mathcal{NP}}^2$ with the identity functor makes $\alpha_{\mathcal{NP}}$ an involution on $\mathcal{NP}$.

The involutions $\delta_{\mathcal{NP}}$ and $\alpha_{\mathcal{NP}}$ commute which implies the composition 
\begin{equation}
\tau_{\mathcal{NP}} := \delta_{\mathcal{NP}} \circ \alpha_{\mathcal{NP}}
\end{equation}
is a twisted involution on $\mathcal{NP}$. 
A fixed point of $\tau_{\mathcal{NP}}$ is a nuclear pair $\check E \xhookrightarrow{\iota} \hat E$ equipped with an isomorphism
\begin{equation}\label{eq:fixed point}
\begin{tikzcd}
\check E \arrow[d, "\cong"rot, "\check \theta"] \arrow[r, hook, "\iota"]
& \hat E \arrow[d, "\cong"rot, "\hat \theta"]\\
\overline{\hat E}^* \arrow[r, hook, "\overline{\iota}^*"] 
&\overline{\check E}^*
\end{tikzcd}
\end{equation}
which we will denote $\theta$.

Given a fixed point (\ref{eq:fixed point}), Proposition \ref{prop:Hom Isos} implies the composition
\begin{equation}
\hat\theta \circ \iota: \check E \xhookrightarrow{} \overline{\check E}^*
\end{equation}
is equivalent to a continuous map
\begin{equation}
\check E \widehat\otimes \overline{\check E} \raw \C
\end{equation}
and injectivity of $\iota$ implies that the corresponding sesquilinear form
\begin{equation}\label{eq:sesquilinear form}
\langle, \rangle_{\iota,\theta} : \check E \times \overline{\check E} \raw \C
\end{equation}
is non-degenerate.

\begin{definition}\label{def:hermitian np}
A $\tau_{\mathcal{NP}}$-fixed point $(\check E \xhookrightarrow{\iota} \hat E, \theta)$ will be called \emph{Hermitian} if $\langle, \rangle_{\iota,\theta}$ is a positive definite inner product. We will call a nuclear pair $\check E \xhookrightarrow{\iota} \hat E$ \emph{Hermitian} if there exists an isomorphism $\theta$ that makes $(\iota,\theta)$ a Hermitian fixed point.
\end{definition}

The domain $\check E$ of a Hermitian nuclear pair is a pre-Hilbert space. Let $E^{Hilb}$ be its Hilbert space completion. 

\begin{lemma}\label{lem:Hermitian Hilbert}
A Hermitian nuclear pair $\check E \xhookrightarrow{\iota} \hat E$ admits a factorization
\begin{equation}\label{eq:E Hilb factorization}
\check E \mono E^{Hilb} \mono \hat E
\end{equation}
into a composition of nuclear inclusions with dense image.
\end{lemma}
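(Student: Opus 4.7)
My plan is to exhibit the intermediate Hilbert factorization $\check E\xrightarrow{j}E^{Hilb}\xrightarrow{\psi}\hat E$ explicitly using the Hermitian structure, and then derive nuclearity of each arrow from nuclearity of $\check E$ and of its dual $\hat E^{*}$.

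First I would observe that, by Proposition \ref{prop:Hom Isos}, the non-degenerate positive-definite sesquilinear form $\langle,\rangle_{\iota,\theta}$ arises from a continuous linear map $\check E\widehat\otimes\overline{\check E}\to\C$ and is therefore jointly continuous. Hence the Hilbert seminorm $\|x\|:=\sqrt{\langle x,\bar x\rangle_{\iota,\theta}}$ is continuous on $\check E$, so the canonical dense injection $j\colon\check E\hookrightarrow E^{Hilb}$ into the Hilbert completion is continuous. Extending the form by continuity to a jointly continuous pairing $E^{Hilb}\times\overline{\check E}\to\C$ yields a continuous map $E^{Hilb}\to\overline{\check E}^{*}$, and composing with $\hat\theta^{-1}$ produces a continuous map $\psi\colon E^{Hilb}\to\hat E$ satisfying $\psi\circ j=\iota$. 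Injectivity of $\psi$ follows from density of $\overline{\check E}$ in $\overline{E^{Hilb}}$, which makes the restriction $\overline{E^{Hilb}}^{*}\to\overline{\check E}^{*}$ injective; and $\psi$ has dense image because $\psi(j(\check E))=\iota(\check E)$ is already dense in $\hat E$.

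The main point will be nuclearity. For $j$ I would invoke nuclearity of $\check E$: the continuous Hilbert seminorm $\|\cdot\|$ is dominated by a continuous seminorm $q$ for which the induced map $\check E_{q}\to E^{Hilb}$ between Banach completions is nuclear, so $j$ factors as the canonical continuous map $\check E\to\check E_{q}$ followed by this nuclear map and is itself nuclear by Proposition \ref{prop:nuclear composition}. The hard part is that this direct argument does not apply to $\psi$, whose domain $E^{Hilb}$ is merely Banach rather than nuclear. To handle it, I would pass to the transpose $\psi^{*}\colon\hat E^{*}\to (E^{Hilb})^{*}$: now $\hat E^{*}$ is nuclear (the strong dual of a nuclear Fr\'echet space) and $(E^{Hilb})^{*}$ is Banach, so the same argument produces a nuclear factorization of $\psi^{*}$. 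Since $\hat E$ and $E^{Hilb}$ are both reflexive, applying Proposition \ref{prop:nuclear transpose} to $\psi^{*}$ then yields nuclearity of $\psi$, completing the factorization as a composition of nuclear dense inclusions.
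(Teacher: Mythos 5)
Your proof is correct and follows essentially the same route as the paper: both take $E^{Hilb}$ to be the Hilbert completion, get continuity of $\check E \hookrightarrow E^{Hilb}$ from joint continuity of $\langle,\rangle_{\iota,\theta}$, deduce its nuclearity from the fact that it is a continuous map from a nuclear space to a Banach space, and obtain the second map via the pairing and the isomorphism $\hat\theta$. The only cosmetic difference is in the second map's nuclearity: the paper identifies $E^{Hilb}\to\hat E$ directly as the conjugate transpose of the (nuclear) first map and applies Proposition \ref{prop:nuclear transpose} once, whereas you apply the nuclear-into-Banach argument to $\psi^*$ on $\hat E^*$ and then double-dualize using reflexivity --- both are valid.
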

\begin{proof}
The first map is the injection of $\check E$ into its Hilbert space completion and is thus dense. It is continuous because the sesquilinear form (\ref{eq:sesquilinear form}) is jointly continuous. It is therefore a continuous map from a nuclear space to a Banach space and Proposition \ref{prop:nuclear space char} implies that it is nuclear. By Proposition \ref{prop:nuclear transpose}, its conjugate transpose is a nuclear map.  It is equipped with an isomorphism
\begin{equation}
\begin{tikzcd}
\overline{E^{Hilb}}^* \arrow[r,hook] \arrow[d, "\cong"rot, "\theta^{Hilb}", no head] & \overline{\check E}^* \arrow[d, "\cong"rot, "\hat \theta", no head]\\
E^{Hilb} \arrow[r,hook] & \hat E
\end{tikzcd}
\end{equation}
where $\theta^{Hilb}$ is the isomorphism induced by the Hilbert space inner product. The bottom inclusion is the second map in (\ref{eq:E Hilb factorization}). That the composition is $\iota$ is a consequence of the definition of the sesquilinear form (\ref{eq:sesquilinear form}).
\end{proof}
The proof of Lemma \ref{lem:Hermitian Hilbert} implies the isomorphism (\ref{eq:fixed point}) of a Hermitian nuclear pair extends to an isomorphism of Hilbert spaces
\begin{equation}\label{eq:hermitian hilbert iso}
\begin{tikzcd}
\check E \arrow[r,hook] \arrow[d, "\cong"rot, "\check \theta"] &E^{Hilb} \arrow[r,hook, "\iota"] \arrow[d, "\cong"rot, "\theta^{Hilb}", dashed] &\hat E \arrow[d, "\cong"rot, "\hat \theta"]\\
\overline{\hat E}^* \arrow[r,hook, "\bar{\iota}^*"] &\overline{E^{Hilb}}^* \arrow[r,hook] &\overline{\check E}^*
\end{tikzcd}
\end{equation}

Let $\mathcal{NP}^{h}$ denote the full subcategory of $\mathcal{NP}$ whose objects are Hermitian nuclear pairs.

\begin{definition}\label{def:bounded nph morphism}
A morphism in $\mathcal{NP}^h$
\begin{equation}\label{eq:nph diagram}
\begin{tikzcd}
\check E \arrow[r,hook] \arrow[d, "\check f"]& E^{Hilb} \arrow[r,hook] & \hat E  \arrow[d, "\hat f"]\\
\check F \arrow[r,hook]& F^{Hilb} \arrow[r,hook] &\hat E
\end{tikzcd}
\end{equation}
will be called $\emph{bounded (resp. unitary, ...)}$ if there exists a bounded (resp. unitary, ...) map 
\begin{equation}
f^{Hilb}: E^{Hilb} \raw F^{Hilb}
\end{equation}
making the diagram (\ref{eq:nph diagram}) commute. If there is no such bounded map, we will say it is \emph{unbounded with domain $\check E$} and we will say that the unbounded operator $f^{Hilb}$ is \emph{induced} from the morphism (\ref{eq:nph diagram}).
\end{definition}

Let $\mathcal {NP}^{h,nuc} \subset \mathcal{NP}^{h}$ be the subcategory with the same objects but with morphisms restricted to those that are nuclear or the identity.
\begin{lemma}
Every morphism in $\mathcal{NP}^{h,nuc}$ 
\begin{equation}\label{eq:npf morphism}
\begin{tikzcd}
\check E \arrow[r,hook] \arrow[d] &E^{Hilb} \arrow[r,hook, "i"]  &\hat E \arrow[d] \arrow[lld, "f"']\\
\check F \arrow[r,hook, "j"'] & F^{Hilb} \arrow[r,hook] & \hat F
\end{tikzcd}
\end{equation}
is trace-class.
\end{lemma}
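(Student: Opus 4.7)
The plan is to use the nuclearity hypothesis to produce a factorization through a continuous map $f: \hat E \raw \check F$, then sandwich this between the two nuclear inclusions supplied by Lemma \ref{lem:Hermitian Hilbert}, and finally invoke the standard equivalence between nuclearity and trace-class for operators between Hilbert spaces.

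First I would unpack what is given. By Definition \ref{def:nuclear}, the morphism comes with a continuous map $f: \hat E \raw \check F$ through which both $\check f$ and $\hat f$ factor. Applying Lemma \ref{lem:Hermitian Hilbert} to the source and target Hermitian nuclear pairs yields the factorizations
\begin{equation*}
\check E \mono E^{Hilb} \xhookrightarrow{\,i\,} \hat E, \qquad \check F \xhookrightarrow{\,j\,} F^{Hilb} \mono \hat F,
\end{equation*}
in which $i$ and $j$ are both nuclear. Consider the composite
\begin{equation*}
g := j \circ f \circ i : E^{Hilb} \raw F^{Hilb}.
\end{equation*}
Chasing diagram (\ref{eq:npf morphism}) on the dense subspace $\check E \subset E^{Hilb}$ shows that $g$ extends $\check f: \check E \raw \check F \mono F^{Hilb}$, so by density $g$ is the unique bounded operator rendering (\ref{eq:npf morphism}) commutative. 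In particular $g$ agrees with the induced operator $f^{Hilb}$ of Definition \ref{def:bounded nph morphism}, and the morphism is in fact bounded.

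Second, I would apply Proposition \ref{prop:nuclear composition} twice to the composition $g = j \circ f \circ i$. Since $i$ is nuclear and $f$ is continuous, $f \circ i$ is nuclear; composing with the nuclear map $j$ again yields a nuclear map. Therefore $f^{Hilb} : E^{Hilb} \raw F^{Hilb}$ is nuclear as a map between Banach spaces.

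Finally, I would appeal to the classical fact (recorded in Appendix \ref{appendix Nuclear Spaces}) that a nuclear operator between Hilbert spaces is trace-class; indeed, on Hilbert spaces the nuclear operators are exactly the trace-class operators. This completes the argument. The only nontrivial point is the identification of the composite $j \circ f \circ i$ with the operator $f^{Hilb}$ of Definition \ref{def:bounded nph morphism}, which relies on the density statements in Lemma \ref{lem:Hermitian Hilbert}; everything else is a direct chain of previously established nuclearity facts.
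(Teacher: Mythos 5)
Your proof is correct and follows essentially the same route as the paper: both define the induced Hilbert space operator as $j \circ f \circ i$, apply Lemma \ref{lem:Hermitian Hilbert} to get nuclearity of $i$ and $j$, and conclude via Proposition \ref{prop:nuclear composition} together with the identification of nuclear operators between Hilbert spaces with trace-class operators. Your additional density check that $j\circ f\circ i$ agrees with the induced operator $f^{Hilb}$ is a harmless elaboration of what the paper simply takes as the definition.
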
  
\begin{proof}
By Lemma \ref{lem:Hermitian Hilbert}, the map $f^{Hilb} := j \circ f \circ i$ is a composition of continuous and nuclear maps. Thus Proposition \ref{prop:nuclear composition} implies that $f^{Hilb}$ is a trace-class operator.
\end{proof}

Let $\Hilb$ denote the symmetric monoidal category of separable Hilbert spaces equipped with the Hilbert-Schmidt tensor product. Let
\begin{equation}\label{eq:NP^h --> Hilb}
\mathcal{NP}^{h,nuc} \raw \Hilb
\end{equation}
be the functor that sends a Hermitian nuclear pair (\ref{eq:E Hilb factorization}) to $E^{Hilb}$ and a nuclear morphism (\ref{eq:npf morphism}) to $f^{Hilb}$.
Let $\tau_{\Hilb}: \Hilb^{op} \raw \Hilb$ denote the twisted involution that sends a Hilbert space to its conjugate dual and a bounded linear map to its conjugate transpose.

\begin{proposition}
The functor (\ref{eq:NP^h --> Hilb})
is symmetric monoidal and $(\tau_{\mathcal{NP}}, \tau_{\Hilb})$-equivariant.
\end{proposition}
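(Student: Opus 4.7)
The plan is to verify the symmetric monoidal and equivariant structures separately, both relying on identifying the Hilbert space completion built into $\mathcal{NP}^{h}$ with the input to $\Hilb$.

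For the monoidal structure, on objects I would produce a canonical isometry
\begin{equation*}
(\check E \widehat\otimes \check F)^{Hilb} \xrightarrow{\sim} E^{Hilb} \otimes F^{Hilb}
\end{equation*}
where the right hand side is the Hilbert-Schmidt completion. The key step is to show that the tensor product fixed-point isomorphism $\theta_E \otimes \theta_F$ on the tensored nuclear pair $\check E \widehat\otimes \check F \mono \hat E \widehat\otimes \hat F$ induces via (\ref{eq:sesquilinear form}) a sesquilinear form whose restriction to the algebraic tensor product $\check E \otimes \check F$ is the tensor of the component inner products. Granting this, both sides above arise as completions of $\check E \otimes \check F$ in the same norm and are canonically identified. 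Compatibility on morphisms is a density argument: for nuclear morphisms $\check f, \check g$, both $(\check f \otimes \check g)^{Hilb}$ and $\check f^{Hilb} \otimes \check g^{Hilb}$ restrict to $\check f \otimes \check g$ on $\check E \otimes \check F$ and extend by continuity. The unit and symmetry coherences are inherited from $\mathcal{NP}$.

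For equivariance, the diagram (\ref{eq:hermitian hilbert iso}) supplies on each Hermitian nuclear pair a canonical Hilbert space isomorphism $\theta^{Hilb}: E^{Hilb} \xrightarrow{\sim} \overline{E^{Hilb}}^*$ given by the inner product. I would take this as the component at $E$ of the required natural isomorphism identifying $F(\tau_{\mathcal{NP}}(E))$ with $\tau_{\Hilb}(F(E))$, and then verify naturality on a nuclear morphism $\check f: E \to E'$. Concretely, this amounts to showing that the transpose-conjugate $\tau_{\mathcal{NP}}(\check f)$ induces on Hilbert completions the adjoint $(\check f^{Hilb})^*$, which follows because $\theta^{Hilb}$ was defined precisely as the inner product pairing and the Hilbert-space adjoint is characterized by that same inner product.

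The main obstacle is the inner product identification in the tensor-product step. A priori the topology on $\check E \widehat\otimes \check F$ in $\mathcal{NDF}$ is much finer than the Hilbert-Schmidt topology, and the fixed-point structure $\theta_E \otimes \theta_F$ is defined abstractly via the universal property of the completed tensor product together with the pairings $\hat E \widehat\otimes \overline{\hat E} \to \C$ and $\hat F \widehat\otimes \overline{\hat F} \to \C$ coming from $\theta_E$ and $\theta_F$. Unwinding this definition to see that on elementary tensors the induced sesquilinear form factors as a product of the component forms is where the substantive work lies; after that, the remaining verifications are routine diagram chases.
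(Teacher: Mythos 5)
Your proposal is correct and follows essentially the same route as the paper: identify the inner product induced by $\theta_E \otimes \theta_F$ on elementary tensors as the product of the component inner products, conclude the Hilbert completion is the Hilbert--Schmidt tensor product, and derive equivariance from the functoriality of the isomorphism (\ref{eq:hermitian hilbert iso}). The elementary-tensor factorization you flag as the substantive step is exactly the identity the paper states (and does not further unwind) in its proof.
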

\begin{proof}
Let $\check E_1 \mono E_1^{Hilb} \mono \hat E_1$ and $\check E_2 \mono E_2^{Hilb} \mono \hat E_2$ be Hermitian nuclear pairs and let $\langle, \rangle_i$ be the Hilbert space inner product on $E_i^{Hilb}$.
Their monoidal product is the nuclear pair
\begin{equation}
 \check E_1 \widehat\otimes \check E_2 \xhookrightarrow{\iota_1\widehat\otimes\iota_2} \hat E_1 \widehat\otimes \hat E_2
\end{equation}
which is equivalent, under the isomorphism (\ref{eq:fixed point}), to a map
\begin{equation}
(\check E_1 \widehat\otimes \check E_2)\widehat\otimes \overline{(\check E_1 \widehat\otimes \check E_2)} \raw \C.
\end{equation}
This defines an inner product
$\langle, \rangle_{\iota_1 \hat\otimes \iota_2}$ on $\check E_1 \widehat\otimes \check E_2$ which satisfies
\begin{equation}
\langle x_1 \otimes x_2, y_1\otimes y_2 \rangle_{\iota_1 \hat\otimes \iota_2} = \langle x_1, y_1 \rangle_{\iota_1} \cdot \langle x_2, y_2 \rangle_{\iota_2}
\end{equation}
for $x_i,y_i \in \check E_i$.
Its Hilbert space completion is $E_1^{Hilb} \widehat\otimes_{HS} E_2^{Hilb}$, where $\widehat\otimes_{HS}$ is the Hilbert-Schmidt tensor product, and thus the functor is symmetric monoidal. The isomorphism (\ref{eq:hermitian hilbert iso}) is functorial and implies equivariance.
\end{proof}


\subsection{Field Theory}\label{section field theory}
In this section we give a functorial definition of field theory with background fields in a sheaf $\mathscr F$ of complex manifolds following \cite{KS}. 
\subsubsection{Sheaves of Complex Manifolds}
Let $\Fib_n$ denote the category of smooth fiber bundles
\begin{equation}
\begin{tikzcd}
E \arrow[d]\\
M
\end{tikzcd}
\end{equation}
on $n$-manifolds $M \in \Man_n$ admitting local trivializations
\begin{equation}
\begin{tikzcd}[row sep = small, column sep = small]
X \times U \arrow[rr, "\sim"] \arrow[rd] & & E|_U \arrow[ld]\\
& U &
\end{tikzcd}
\end{equation}
whose restriction to a fiber $X$ is a biholomorphism of finite dimensional complex manifolds.
We let morphisms consist of maps of smooth bundles covering embeddings of $n$-manifolds
\begin{equation}\label{eq:Fib morphism}
\begin{tikzcd}
E \arrow[r]\arrow[d] & F \arrow[d] \\
M \arrow[r,hook] & N
\end{tikzcd}
\end{equation}
We will say a morphism (\ref{eq:Fib morphism}) is \emph{holomorphic} if it is fiberwise holomorphic, and \emph{antiholomorphic} if it is fiberwise antiholomorphic.
Henceforth we will restrict ourselves to sheaves $\mathscr F$ that can be written as a composition
\begin{equation}\label{eq:sheaf of complex manifolds}
\Man_n^{op} \xrightarrow{\mathfrak X^{op}} \Fib_n^{op} \xrightarrow{C^\infty} \Set
\end{equation}
where the functor $C^\infty: \Fib_n^{op} \raw \Set$ is the contravariant functor of smooth global sections and
where $\mathfrak X: \Man_n \raw \Fib_n$ is a functor that sends embeddings $M \mono N$ of $n$-manifolds to pullback squares
\begin{equation}\label{eq:pullback square condition}
\begin{tikzcd}
\mathfrak X(M) \arrow[r]\arrow[d] \arrow[dr, phantom," " {pullback=black}, very near start, color=black] & \mathfrak X(N) \arrow[d] \\
M \arrow[r,hook] & N.
\end{tikzcd}
\end{equation}
which are in particular holomorphic morphisms in $\Fib_n$. We will refer to such sheaves as \emph{sheaves of complex manifolds}

\begin{remark}
The condition (\ref{eq:pullback square condition}) means that $\mathfrak X(M)$ has a fixed complex manifold as fiber for every $M$.
\end{remark}

\begin{definition}\label{def:holo def 1}
Let $\mathscr F$ be a sheaf of complex manifolds, $S$ a finite dimensional real (resp. complex) manifold, and $U \in \Man_n$. A map of sets 
\begin{equation}
f: S \raw \mathscr F(U)
\end{equation}
defines a section $\sigma_f$ of the pullback
\begin{equation}
\begin{tikzcd}
pr_2^* \mathfrak X(U) \arrow[d] \arrow[r] & \mathfrak X(U) \arrow[d] \\
S \times U \arrow[u, bend left = 50, "\sigma_f"] \arrow[r, "pr_2"] & U
\end{tikzcd}
\end{equation}
We say $f$ is \emph{smooth} (resp. \emph{holomorphic, antiholomorphic}) if $\sigma_f$ is smooth (resp. holomorphic, antiholomorphic) in $S$.
\end{definition}

\begin{definition}\label{def:holo def 2}
Let $S$ be a finite dimensional real (resp. complex) manifold and $M \in \Pro(\Man_n)$. We say a map $S \raw \mathscr F(\mathring M)$ of sets is \emph{smooth} (resp. \emph{holomorphic, antiholomorphic}) if it admits a factorization through a smooth (resp. holomorphic, antiholomorphic) map
\begin{equation}
S \xrightarrow{f} \mathscr F(U) \raw \mathscr F(M)
\end{equation}
for some morphism $M \mono U$ with $U \in \Man_n$.
\end{definition}

\begin{definition}\label{def: holo def 3}
Let $M,N \in \Pro(\Man_n)$. A map of sets
\begin{equation}
\mathscr F(M) \raw \mathscr F(N)
\end{equation}
is \emph{holomorphic} if the composition
\begin{equation}
S \xrightarrow{f} \mathscr F(M) \raw \mathscr F(N)
\end{equation}
is holomorphic for all holomorphic maps $f$ where $S$ is a finite dimensional complex manifold.
\end{definition}

\begin{definition}\label{def:holo sheaf morphism}
Let $\mathscr F$, $\mathscr G$ 
be sheaves of complex manifolds. A morphism of sheaves $\varphi: \mathscr F \raw \mathscr G$ is \emph{holomorphic} (resp. \emph{antiholomorphic}) if for every manifold $M \in \Man_n$ and holomorphic map $S \raw \mathscr F(M)$, the composition
\begin{equation}
S \raw \mathscr F(M) \raw \mathscr G(M)
\end{equation}
is holomorphic (resp. antiholomorphic).
\end{definition}

\begin{definition}\label{def:holo right inverse}
A morphism of sheaves of complex manifolds $\varphi:\mathscr F \raw \mathscr G$ \emph{admits a holomorphic right inverse} if for every $M \in \Man_n$, the map $\varphi(M): \mathscr F(M) \raw \mathscr G(M)$ has a holomorphic right inverse.
\end{definition}

\begin{remark}\label{rem:right inverse surj}
A morphism of sheaves that admits a holomorphic right inverse is surjective.
\end{remark}

\begin{definition}\label{def:holomorphic nt def}
Let $\mathfrak X, \mathfrak Y: \Man_n \raw \Fib_n$ be functors satisfying condition (\ref{eq:pullback square condition}). We will say a natural transformation $\psi: \mathfrak X \implies \mathfrak Y$ is \emph{holomorphic} (resp. \emph{antiholomorphic}) if it defines maps covering the identity embedding
\begin{equation}\label{eq:X => Y}
\begin{tikzcd}
\mathfrak X(M) \arrow[rd] \arrow[rr, "\psi(M)"] & & \mathfrak Y(M) \arrow[ld]\\ 
& M &
\end{tikzcd}
\end{equation}
that are holomorphic (resp. antiholomorphic). We will say $\psi$ is \emph{submersive} if the maps (\ref{eq:X => Y}) are fiberwise surjective submersions.
\end{definition}
Finally, we record the following proposition which follows from applying the definitions.
\begin{proposition}\label{prop:induced sheaf morphism}
A holomorphic (resp. antiholomorphic) natural transformation $\mathfrak X \implies \mathfrak Y$ induces a holomorphic (resp. antiholomorphic) morphism of sheaves of smooth sections $\mathscr F \raw \mathscr G$.
\end{proposition}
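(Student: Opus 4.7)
The plan is to unwind Definitions \ref{def:holo def 1}, \ref{def:holo def 2}, and \ref{def:holo sheaf morphism} and observe that the hypothesized fiberwise holomorphicity of $\psi$ is precisely what is needed for the induced morphism on sheaves of smooth sections to satisfy the sheaf-level holomorphicity criterion.

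First I would make the induced morphism explicit. Since $\psi(M): \mathfrak X(M) \raw \mathfrak Y(M)$ covers the identity on $M$ and is smooth (being a morphism in $\Fib_n$), post-composition with $\psi(M)$ sends smooth sections of $\mathfrak X(M)$ to smooth sections of $\mathfrak Y(M)$. This assignment is natural in $M \in \Man_n$ by naturality of $\psi$ together with the pullback-square condition \eqref{eq:pullback square condition}, and therefore defines the morphism of sheaves $\varphi: \mathscr F \raw \mathscr G$.

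Next I would verify holomorphicity in the sense of Definition \ref{def:holo sheaf morphism}. Fix $M \in \Man_n$, a finite dimensional complex manifold $S$, and a holomorphic map $f: S \raw \mathscr F(M)$. By Definition \ref{def:holo def 1}, $f$ corresponds to a section $\sigma_f$ of $pr_2^*\mathfrak X(M) \raw S \times M$ which is holomorphic in the $S$-direction, and the composition $\varphi \circ f$ corresponds to the section $\sigma_{\varphi \circ f}$ of $pr_2^*\mathfrak Y(M) \raw S \times M$ given by
\begin{equation}
\sigma_{\varphi \circ f}(s,m) = \psi(M)\bigl(\sigma_f(s,m)\bigr).
\end{equation}
Fixing $m \in M$, the map $s \mapsto \sigma_{\varphi \circ f}(s,m)$ factors as $s \mapsto \sigma_f(s,m) \in \mathfrak X(M)_m$ followed by the fiber restriction $\psi(M)|_{\mathfrak X(M)_m}: \mathfrak X(M)_m \raw \mathfrak Y(M)_m$. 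The first factor is holomorphic by hypothesis on $f$ and the second is holomorphic by Definition \ref{def:holomorphic nt def}, so their composition is a holomorphic map between finite dimensional complex manifolds. Hence $\sigma_{\varphi \circ f}$ is holomorphic in $S$, which by Definitions \ref{def:holo def 1} and \ref{def:holo def 2} gives that $\varphi \circ f: S \raw \mathscr G(M)$ is holomorphic, as required. The antiholomorphic case is identical with ``holomorphic'' replaced by ``antiholomorphic'' for the factor $\psi(M)|_{\mathfrak X(M)_m}$, using that a holomorphic map followed by an antiholomorphic map is antiholomorphic.

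There is no real obstacle here: the proposition is a formal consequence of the definitions. The only point worth care is the observation that ``holomorphic in $S$'' for a section of $pr_2^*\mathfrak X(M) \raw S \times M$ reduces, fiberwise over $M$, to ordinary holomorphicity of maps between finite dimensional complex manifolds, at which point the composition argument is automatic.
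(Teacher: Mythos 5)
Your proof is correct and takes the same route as the paper, which simply asserts that the proposition ``follows from applying the definitions'' and omits the details; your write-up is exactly the expected unwinding, with the key observation being that post-composition with the fiberwise (anti)holomorphic maps $\psi(M)$ preserves holomorphicity of the section $\sigma_f$ in the $S$-direction.
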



\subsubsection{Nuclearity}
\begin{definition}
A symmetric monoidal functor out of the bordism semicategory 
\begin{equation}
Z^s: \Bord_{n,n-1}^s(\mathscr F) \raw \mathcal{NP}
\end{equation}
is \emph{nuclear} if it sends $\mathscr F$-bordisms to nuclear morphisms of nuclear pairs (cf. Definition \ref{def:nuclear}). We say a symmetric monoidal functor of categories
\begin{equation}
Z: \Bord_{n,n-1}(\mathscr F) \raw \mathcal{NP}
\end{equation} is \emph{nuclear} if its restriction to $\Bord_{n,n-1}^s(\mathscr F)$ is nuclear.
\end{definition}


\subsubsection{Holomorphicity}
Let $Z: \Bord_{n,n-1}(\mathscr F) \raw \mathcal{NP}$ be a nuclear functor.
For each cylindrical germ $\mathring \Sigma$, $Z$ determines families of nuclear dual Fr\'echet and nuclear Fr\'echet spaces 
\begin{equation}\label{eq:N(D)F families}
\begin{split}
\check\pi_{\mathring\Sigma}: \check{\mathscr E}_{\mathring\Sigma} \raw \mathscr F(\mathring\Sigma)\\
\hat\pi_{\mathring\Sigma}: \hat{\mathscr E}_{\mathring\Sigma} \raw \mathscr F(\mathring \Sigma)
\end{split}
\end{equation}
related by a map
\begin{equation}\label{eq:nuclear pair family}
\begin{tikzcd}[row sep = small, column sep = tiny]
\check{\mathscr E}_{\mathring\Sigma} \arrow[rd, "\check \pi_{\mathring\Sigma}"'] \arrow[hook, rr] && \hat{\mathscr E}_{\mathring\Sigma} \arrow[ld, "\hat \pi_{\mathring\Sigma}"]\\
& \mathscr F(\mathring \Sigma) &
\end{tikzcd}
\end{equation}
whose restriction to a fiber $\check E_{\mathring\sigma} \mono \hat E_{\mathring\sigma}$ over $\mathring\sigma \in \mathscr F(\mathring\Sigma)$ is the nuclear pair assigned to $(\mathring \Sigma^+, \mathring \sigma)$.

For each bordism $\mathcal X:= (\mathring U^N_X, p, \mathring\theta_0,\mathring\theta_1): \mathring\Sigma_0^+ \rightsquigarrow \mathring\Sigma_1^+$, there is a correspondence of sets
\begin{equation}\label{eq:correspondence}
\begin{tikzcd}[row sep = small, column sep = small]
& \mathscr F(\mathring U^N_X) \arrow[ld, "r^{\mathcal X}_0"'] \arrow[rd, "r^{\mathcal X}_1"] &\\
\mathscr F(\mathring \Sigma_0) & & \mathscr F(\mathring \Sigma_1)
\end{tikzcd}
\end{equation}
where $r^{\mathcal X}_i$ is the map (\ref{eq:r^X_i}). Applying $Z$ gives a family of maps
\begin{equation}\label{eq:hom family}
\begin{tikzcd}[row sep = small, column sep = tiny]
\underline{\hat{\mathscr E}}_{\mathring \Sigma_0} \arrow[rd] \arrow[rr] && \underline{\check{\mathscr E}}_{\mathring \Sigma_1} \arrow[ld]\\
& \mathscr F(\mathring U^N_X) &
\end{tikzcd}
\end{equation}
where $\underline{\hat{\mathscr E}}_{\mathring \Sigma_0}:= (r^{\mathcal X}_0)^*\hat{\mathscr E}_{\mathring\Sigma_0}$ and $\underline{\check{\mathscr E}}_{\mathring \Sigma_1}:= (r^{\mathcal X}_1)^*\check{\mathscr E}_{\mathring\Sigma_1}$ are obtained by pulling back $\hat \pi_{\mathring \Sigma_0}$ and $\check \pi_{\mathring \Sigma_1}$ along $r^{\mathcal X}_i$. 

\begin{definition}\label{def:holo def}
$Z$ is \emph{holomorphic} if 
\begin{enumerate}
   \item for every cylindrical germ $\mathring\Sigma$ and holomorphic map $f: S \raw \mathscr F(\mathring\Sigma)$ from a finite dimensional complex manifold $S$ (cf. Definition \ref{def:holo def 2}), the pullback of (\ref{eq:nuclear pair family}) along $f$
   \begin{equation}\label{eq:pullback np}
   \begin{tikzcd}[row sep = small, column sep = tiny]
   f^*\check{\mathscr E}_{\mathring\Sigma} \arrow[rd] \arrow[hook, rr] && f^*\hat{\mathscr E}_{\mathring\Sigma} \arrow[ld]\\
   & S &
   \end{tikzcd}
   \end{equation}
   is a map of holomorphic vector bundles (Definition \ref{def:holomorphic bundle}).

   \item for every bordism $\mathcal X:= (\mathring U^N_X,p,\mathring\theta_0,\mathring\theta_1): \mathring \Sigma_0^+ \rightsquigarrow \mathring \Sigma_1^+$ and holomorphic map $f: S \raw \mathscr F(\mathring U^N_X)$, the pullback of (\ref{eq:hom family}) along $f$ is a holomorphic map of vector bundles
   \begin{equation}\label{eq:pullback hom}
   \begin{tikzcd}[row sep = small, column sep = tiny]
   f_0^*\hat{\mathscr E}_{\mathring\Sigma_0} \arrow[rd] \arrow[rr] && f_1^*\check{\mathscr E}_{\mathring\Sigma_1} \arrow[ld]\\
   & S &
   \end{tikzcd}
   \end{equation}
   where we have set $f_i := r^{\mathcal X}_i\circ f$.
\end{enumerate}
\end{definition}
\begin{remark}
The family of maps (\ref{eq:hom family}) determined by $Z$ is equivalent to a section $Z_\mathcal X$ of the family of nuclear dual Fr\'echet spaces
\begin{equation}\label{eq:hom family2}
\begin{tikzcd}
\underline{\hat{\mathscr E}}_{\mathring \Sigma_0}^*\widehat\otimes\underline{\check{\mathscr E}}_{\mathring \Sigma_1} \arrow[d]\\
\mathscr F(\mathring U^N_X) \arrow[u, bend right, "Z_\mathcal X"']
\end{tikzcd}
\end{equation}
where we have used Proposition \ref{prop:Hom Isos} to identify the fiber $\Hom(\hat E_{\mathring \sigma_0}, \check E_{\mathring \sigma_1}) \cong \hat E_{\mathring \sigma_0}^*\widehat\otimes\check E_{\mathring \sigma_1}$ over $\mathring \sigma$ in the preimage 
\begin{equation}\label{eq:preimage}
\mathscr F(\mathring U^N_X; \mathring\sigma_0,\mathring\sigma_1) := (r_0^{\mathcal X}, r_1^{\mathcal X})^{-1}(\mathring\sigma_0,\mathring\sigma_1)
\end{equation} of the map
\begin{equation}
(r^{\mathcal X}_0, r^{\mathcal X}_1): \mathscr F(\mathring U^N_X) \raw \mathscr F(\mathring \Sigma_0) \times \mathscr F(\mathring \Sigma_1).
\end{equation}
Thus the second condition in Definition \ref{def:holo def} is equivalent to requiring that the pullback of $(\ref{eq:hom family2})$
\begin{equation}
\begin{tikzcd}
f^*\left(\underline{\hat{\mathscr E}}_{\mathring \Sigma_0}^*\widehat\otimes\underline{\check{\mathscr E}}_{\mathring \Sigma_1}\right) \arrow[d]\\
S \arrow[u, bend right, "f^* Z_\mathcal X"']
\end{tikzcd}
\end{equation} 
is a holomorphic section of a holomorphic vector bundle for all holomorphic maps $f: S \raw \mathscr F(\mathring U^N_X)$.
\end{remark}

The section (\ref{eq:hom family2}) restricted to $\mathscr F(\mathring U^N_X; \mathring\sigma_0, \mathring\sigma_1)$ determines a map
\begin{equation}\label{eq:Z_X}
Z_{\mathcal X}: \mathscr F(\mathring U^N_X; \mathring\sigma_0,\mathring\sigma_1) \raw \hat E_{\mathring \sigma_0}^* \widehat\otimes \check E_{\mathring\sigma_1}
\end{equation}
which is holomorphic in the sense that if $f: S \raw \mathscr F(\mathring U^N_X;\mathring\sigma_0,\mathring\sigma_1)$ is holomorphic then $Z_{\mathcal X} \circ f: S \raw \hat E_{\mathring \sigma_0}^* \widehat\otimes \check E_{\mathring\sigma_1}$ is holomorphic.

The group $\Diff(\mathcal X) := \mathring{\mathcal G}^{\mathscr F}_{[X]}(\mathcal X, \mathcal X)$ of self-diffeomorphisms of a bordism $\mathcal X := (\mathring U^N_X,p,\mathring\theta_0,\mathring\theta_1)$ acts on $\mathscr F(\mathring U^N_X)$ by pulling back germs of sections and the maps $r_i^{\mathcal X}: \mathscr F(\mathring U^N_X) \raw \mathscr F(\mathring \Sigma_i)$ are invariant under this action. Thus $\Diff(\mathcal X)$ restricts to an action on $\mathscr F(\mathring U^N_X; \mathring\sigma_0,\mathring\sigma_1)$. Since the field theory is defined on diffeomorphism classes of bordisms, the map (\ref{eq:Z_X}) is invariant under the action of $\Diff(\mathcal X)$.


\subsubsection{Coherence}
We will make use of the following lemmas which we simply state; they are consequences of $\mathscr F$ being a sheaf of smooth sections of a fiber bundle.  

\begin{lemma}\label{lem:cylindrical germ rep}
Let $\mathring\sigma \in \mathscr F(\mathring \Sigma)$. Then there exists a section $\sigma \in \mathscr F(C_\Sigma)$ whose restriction to $\mathring \Sigma$ is $\mathring \sigma$.
\end{lemma}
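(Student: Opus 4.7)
The plan is to realize $\mathring\sigma$ by an actual section on a representing open neighborhood of $\Sigma$ and then transport it to the entire cylinder by pulling back along a self-embedding of $C_\Sigma$ that is the identity near $\Sigma$.

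Concretely, by definition of $\mathscr F(\mathring\Sigma) = \colim_{U \in \mathfrak U_\Sigma^{C_\Sigma}} \mathscr F(U)$, the germ $\mathring\sigma$ is represented by some section $\sigma_U \in \mathscr F(U)$ on an open neighborhood $\Sigma \subset U \subset C_\Sigma$. Since $\Sigma$ is compact, there exists $\varepsilon > 0$ with $\Sigma \times (-\varepsilon,\varepsilon) \subset U$. Fix $0 < \delta < \varepsilon$ and, by a standard bump function construction, choose a smooth increasing diffeomorphism $h: \R \xrightarrow{\sim} (-\varepsilon,\varepsilon)$ that agrees with the identity on $(-\delta,\delta)$. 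Setting $\varphi := \mathrm{id}_\Sigma \times h$ gives an embedding $\varphi: C_\Sigma \hookrightarrow U$ onto $\Sigma \times (-\varepsilon,\varepsilon)$ whose restriction to $W := \Sigma \times (-\delta,\delta)$ is the identity; in particular the germ of $\varphi$ along $\Sigma$ is the identity germ.

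Because $\mathscr F = C^\infty \circ \mathfrak X^{op}$ and $\mathfrak X$ carries embeddings in $\Man_n$ to pullback squares in $\Fib_n$, functoriality supplies a pullback map $\varphi^*: \mathscr F(U) \raw \mathscr F(C_\Sigma)$. I would then define $\sigma := \varphi^*\sigma_U \in \mathscr F(C_\Sigma)$. Since $\varphi|_W = \mathrm{id}_W$, the naturality of the pullback square gives $\sigma|_W = \sigma_U|_W$, so $\sigma$ and $\sigma_U$ determine the same class in the colimit $\mathscr F(\mathring\Sigma)$, namely $\mathring\sigma$.

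There is essentially no substantive obstacle here: the only nontrivial geometric step is constructing the diffeomorphism $h$ of $\R$ onto $(-\varepsilon,\varepsilon)$ that is the identity on a small subinterval, which is a routine bump function argument; everything else is just functoriality of $\mathscr F$ together with compactness of $\Sigma$. The slightly delicate bookkeeping point is checking that $\varphi$ being an \emph{identity germ} (not merely restricting to the identity on $\Sigma$) is enough to identify the germs of $\sigma$ and $\sigma_U$, but this is immediate from the colimit description above.
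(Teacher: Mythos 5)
Your argument is correct and complete: representing the germ by a section on a tube $\Sigma\times(-\varepsilon,\varepsilon)$, pulling back along $\mathrm{id}_\Sigma\times h$ for a diffeomorphism $h:\R\xrightarrow{\sim}(-\varepsilon,\varepsilon)$ fixing $(-\delta,\delta)$ pointwise, and checking germ equality via $\varphi\circ\iota_W=\iota_{W\subset U}$ is exactly the standard way to prove this, and the paper itself states the lemma without proof, so there is nothing to compare against. One small remark: your proof never actually uses the fiber-bundle structure that the paper cites as the reason the lemma holds --- only that $\mathscr F$ is a presheaf on $\Man_n$ (functoriality under embeddings) together with compactness of $\Sigma$ --- which makes your argument, if anything, more general; this is also the ``right'' proof in the sense that a cutoff/extension argument in the fibers would fail for general complex-manifold fibers, whereas the reparametrization trick sidesteps that entirely.
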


\begin{lemma}\label{lem:cylindrical diff rep}
Let $\mathring f \in \Diff(\mathring \Sigma)$. Then there exists a diffeomorphism $f \in \Diff(C_\Sigma)$ whose restriction to $\mathring \Sigma$ is $\mathring f$.
\end{lemma}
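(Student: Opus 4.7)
By Proposition \ref{prop:diffeo rep}, pick a diffeomorphism $f: U \xrightarrow{\sim} V$ of open neighborhoods of $\Sigma$ in $C_\Sigma = \Sigma \times \R$ representing $\mathring f$. The plan is to first reduce to the case where $f$ pointwise fixes $\Sigma$ and preserves co-orientation, then exhibit $f$ as the time-$1$ map of an isotopy, and finally extend globally by cutting off the generating time-dependent vector field.

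Since morphisms in $\mathcal G_{[\Sigma]}$ restrict to diffeomorphisms of $\Sigma$, the map $\varphi := f|_\Sigma$ is a diffeomorphism of $\Sigma$, and $\tilde\varphi(x,t) := (\varphi(x), t)$ lies in $\Diff(C_\Sigma)$. Replacing $f$ by $\tilde\varphi^{-1}\circ f$ and, if necessary, composing with the global diffeomorphism $(x,t)\mapsto(x,-t)$, we may assume $f|_\Sigma = \mathrm{id}_\Sigma$ and that $f$ preserves co-orientation. Writing $f(x,t) = (\alpha(x,t),\beta(x,t))$ in product coordinates, this means $\alpha(x,0)=x$, $\beta(x,0)=0$, and $\partial_t\beta(x,0)>0$.

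The scaling homotopy $f_s(x,t) := (\alpha(x,st), \beta(x,st)/s)$ for $s\in(0,1]$ extends smoothly to $s=0$ via $f_0(x,t) := (x,\partial_t\beta(x,0)\cdot t)$, using that $\beta(x,u)/u$ is smooth in $u$ at $u = 0$. Concatenating with the linear path from $\mathrm{id}$ to $f_0$ and reparametrizing produces a smooth isotopy $\{f_s\}_{s\in[0,1]}$, defined on $\Sigma\times(-\epsilon,\epsilon)$ for some $\epsilon>0$, with $f_0=\mathrm{id}$, $f_1=f$, and $f_s|_\Sigma=\mathrm{id}$ throughout. Form the generating time-dependent vector field $X_s := (\partial_s f_s)\circ f_s^{-1}$ on $f_s(\Sigma\times(-\epsilon,\epsilon))$.

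Choose a cutoff $\rho\in C^\infty(C_\Sigma;[0,1])$ equal to $1$ on $\Sigma\times[-\eta,\eta]$ and supported in $\Sigma\times(-\epsilon/2,\epsilon/2)$. By compactness of $\Sigma\times[0,1]$ and the fact that $f_s|_\Sigma=\mathrm{id}$, one may take $\eta$ small enough that $\operatorname{supp}\rho\subset f_s(\Sigma\times(-\epsilon,\epsilon))$ for every $s$, and simultaneously fix $\delta>0$ with $f_s(\Sigma\times[-\delta,\delta])\subset\Sigma\times[-\eta,\eta]$ for all $s\in[0,1]$. Then $\tilde X_s := \rho\cdot X_s$ extends by zero to a time-dependent vector field on $C_\Sigma$ with support compact in the $\R$-direction, so its time-$1$ flow $F_1$ is a global diffeomorphism of $C_\Sigma$. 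Along any orbit $s\mapsto f_s(p)$ with $p\in\Sigma\times[-\delta,\delta]$ we have $\rho\equiv 1$, so by uniqueness of ODE solutions $F_s$ and $f_s$ agree on $\Sigma\times(-\delta,\delta)$. Post-composing $F_1$ with $\tilde\varphi$ (and possibly with $(x,t)\mapsto(x,-t)$) yields an element of $\Diff(C_\Sigma)$ whose germ at $\Sigma$ is $\mathring f$. The main delicate step is the uniform containment of orbits in the last part, and this is precisely what necessitates reducing to $f|_\Sigma=\mathrm{id}$ at the outset.
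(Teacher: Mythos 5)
The paper does not actually prove this lemma: it is one of two statements introduced with ``which we simply state,'' attributed to $\mathscr F$ being a sheaf of smooth sections (and, for this one, implicitly to standard isotopy-extension/collar arguments). Your argument supplies exactly the missing standard proof -- normalize so that the representative fixes $\Sigma$ pointwise and preserves co-orientation, use the Alexander-type scaling $f_s(x,t)=(\alpha(x,st),\beta(x,st)/s)$ (legitimate at $s=0$ by Hadamard's lemma since $\beta(x,0)=0$) to produce an isotopy from the identity, and then globalize by cutting off the generating time-dependent vector field, whose compactly supported flow restricts to $f_s$ on a small collar by uniqueness of ODE solutions. This is correct, and the delicate points (uniform $\epsilon$ from compactness of $\Sigma$, uniform $\delta$ so orbits stay in $\{\rho=1\}$) are handled. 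Two small repairs: (i) when $\Sigma$ is disconnected, $f$ may preserve co-orientation on some components and reverse it on others, so the single global flip $(x,t)\mapsto(x,-t)$ does not suffice; you must instead flip $t\mapsto -t$ only on those components $\Sigma_i\times\R$ where $\partial_t\beta(\cdot,0)<0$ (this is still a global diffeomorphism of $C_\Sigma$, and the normalization $\partial_t\beta>0$ everywhere is genuinely needed for $f_0$ to be a diffeomorphism); (ii) the condition $\operatorname{supp}\rho\subset f_s(\Sigma\times(-\epsilon,\epsilon))$ for all $s$ is arranged by shrinking the support of $\rho$ into the uniform neighborhood $\Sigma\times(-\eta',\eta')\subset\bigcap_s f_s(\Sigma\times(-\epsilon,\epsilon))$ given by the tube lemma, not by shrinking the set $\{\rho=1\}$; as written the roles of $\eta$ and the support are slightly conflated. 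With these adjustments the proof is complete.
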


Let $(\mathring \Sigma^+, \mathring\sigma)$ be an object in $\Bord_{n,n-1}^s(\mathscr F)$. By Lemma \ref{lem:cylindrical germ rep}, there is a section $\sigma \in \mathscr F(C_\Sigma)$ whose germ at $\mathring \Sigma$ is $\mathring \sigma$.  

Denote the slice at time $t$ by $\Sigma_t:= \Sigma \times \{t\} \subset C_\Sigma$ and let $\mathring \Sigma_t, \mathring \Sigma_t^+$ its manifold germ without and with the positive co-orientation. Let $X_{a,b} := \Sigma \times [a,b] \subset C_\Sigma$ for $a < b \in \R$, $\mathring U^{C_\Sigma}_{X_{a,b}}$ its manifold germ, $\mathring \sigma_{X_{a,b}}$ the restriction of $\sigma$ to $\mathring U^{C_\Sigma}_{X_{a,b}}$, and $p$ be a partition of $\partial X_{a,b}$ that designates $\Sigma_a$ incoming and $\Sigma_b$ outgoing.
Let 
\begin{equation}\label{eq:translation}
\begin{split}
\theta_t: \Sigma \times \R &\xrightarrow{\sim} \Sigma \times \R\\
(x,s) &\mapsto (x,s+t)
\end{split}
\end{equation}
be translation by $t \in \R$ whose germ at $\mathring \Sigma$ is a co-orientation preserving isomorphism of germs 
$\mathring\theta_t \in \mathring{\mathcal G}^c_{[\Sigma]}(\mathring \Sigma^+, \mathring \Sigma_t^+)$.
Then the tuple
\begin{equation}
\mathcal X_{a,b} := (\mathring U^{C_\Sigma}_{X_{a,b}}, p, \mathring\theta_a, \mathring\theta_b)
\end{equation}
is a cylindrical bordism from $\mathring\Sigma^+$ to itself. If we set $\mathring \sigma_t$ to be the pullback by $\mathring \theta_t$ of the restriction of $\sigma$ to $\mathring \Sigma_t$ then 
\begin{equation}\label{eq:cylindrical F bordism}
(\mathcal X_{a,b},\mathring \sigma_{X_{a,b}})
\end{equation}
is an $\mathscr F$-bordism from $(\mathring \Sigma^+, \mathring \sigma_a)$ to $(\mathring\Sigma^+, \mathring \sigma_b)$. 

Let $Z: \Bord_{n,n-1}^s(\mathscr F) \raw\mathcal{NP}$ be a nuclear symmetric monoidal functor. We will denote by
\begin{equation}
\check E_{\mathring \sigma} \mono \hat E_{\mathring\sigma}
\end{equation}
the nuclear assigned by $Z$ to the object $(\mathring \Sigma^+, \mathring\sigma)$

Applying $Z$ to the $\mathscr F$-bordism (\ref{eq:cylindrical F bordism}) gives a nuclear morphism of nuclear pairs
\begin{equation}
\begin{tikzcd}
\check E_{\mathring \sigma_a} \arrow[d, "\check Z_{a,b}"'] \arrow[r, hook, "\iota_a"] 
&\hat E_{\mathring \sigma_a} \arrow[d, "\hat Z_{a,b}"] \arrow[dl]\\
\check E_{\mathring\sigma_b} \arrow[r,hook, "\iota_b"] 
&\hat E_{\mathring\sigma_b}
\end{tikzcd}
\end{equation}
where $\check Z_{a,b}, \hat Z_{a,b}$ are nuclear by Lemma \ref{lem:nuclear implies nuclear}.

\begin{definition}\label{def:cylindrical systems}
The \emph{direct cylindral system of $\sigma$} is the direct system $\mathcal C_\sigma := \{\check E_{\mathring\sigma_s} \hspace{.1cm} | \hspace{.1cm} s < 0\}$ whose morphisms consist of the maps 
\begin{equation}
\check Z_{s,s'}: \check E_{\mathring\sigma_s} \raw \check E_{\mathring\sigma_{s'}}
\end{equation}
for all $s < s' < 0$.
The \emph{inverse cylindrical system of $\sigma$} is the inverse system $\mathcal D_\sigma:= \{\hat E_{\mathring\sigma_t} \hspace{.1cm} | \hspace{.1cm} t > 0\}$ whose morphisms are 
\begin{equation}
\hat Z_{t,t'}: \hat E_{\mathring\sigma_t} \raw \hat E_{\mathring\sigma_{t'}}
\end{equation}
for all $0 < t < t'$.
\end{definition}
We will denote the direct and inverse limits of the cylindrical systems associated to $\sigma$ by
\begin{equation}\label{eq:C and D}
\check E_{\mathcal C_\sigma} := \colim_{\mathcal C_\sigma} \check E_s
\quad \text{and} \quad
\hat E_{\mathcal D_{\sigma}} := \invlim_{\mathcal D_\sigma} \hat E_t 
\end{equation}
The universal properties of direct and inverse limits implies that (\ref{eq:C and D}) are isomorphic to the direct and inverse limits along any countable final and cofinal subsequence in $\mathcal C_\sigma, \mathcal D_\sigma$; it follows that we can apply Proposition \ref{prop:inverse limit nuclear} and Corollary \ref{cor:direct limit nuclear} to conclude they are in $\mathcal{NDF}$ and $\mathcal{NF}$ respectively. The universal properties also imply the existence of a continuous map 
\begin{equation}\label{eq:cts UP map}
\check E_{\mathcal C_\sigma} \raw \hat E_{\mathcal D_{\sigma}}
\end{equation}
which is nuclear by Proposition \ref{prop:NDF2DF}.

The universal properties of the direct and inverse limits further imply the existence of unique maps making the diagrams
\begin{equation}
\begin{tikzcd}
\check E_{\mathring\sigma_s} \arrow[r] \arrow[rd, " \check Z_{s,0}"'] & \check E_{\mathcal C_\sigma} \arrow[d, dashed, "\exists !"]\\
 & \check E_{\mathring\sigma}
\end{tikzcd}
\quad\text{and}\quad
\begin{tikzcd}
\hat E_{\mathcal D_\sigma}  \arrow[r] & \hat E_{\mathring\sigma_t} \\
 \hat E_{\mathring\sigma}\arrow[ru, "\hat Z_{0,t}"'] \arrow[u, dashed, "\exists !"] &
\end{tikzcd}
\end{equation}
commute for all $s<0 < t$; here the horizontal maps $\check E_{\mathring \sigma_s} \raw \check E_{\mathcal C_\sigma}$ and $\hat E_{\mathcal D_{\sigma}} \raw \hat E_{\mathring \sigma_t}$ are the canonical maps to and from the direct and inverse limits.
  
\begin{definition}
We say $Z$ is \emph{coherent} if the above maps are isomorphisms that make the square
\begin{equation}
\begin{tikzcd}
\check E_{\mathcal C_\sigma} \arrow[r] \arrow[d, "\cong" rot] & \hat E_{\mathcal D_\sigma} \\
\check E_{\mathring\sigma} \arrow[r,hook] & \hat E_{\mathring\sigma} \arrow[u, "\cong" rot]
\end{tikzcd}
\end{equation}
commute; here the top map is (\ref{eq:cts UP map}).
\end{definition}

\begin{remark}
The above definition is independent of which section we choose to represent the germ $\mathring\sigma$, for if $\sigma' \in \mathscr F(C_\Sigma)$ with germ $\mathring\sigma$ at $\mathring\Sigma$, there are final and cofinal subsequences on which $\mathcal C_\sigma, \mathcal C_{\sigma'}$ and $\mathcal D_\sigma, \mathcal D_{\sigma'}$ agree which implies canonical isomorphisms
\begin{equation}
\begin{tikzcd}
\check E_{\mathcal C_\sigma} \arrow[r] \arrow[d, "\cong" rot, no head] & \hat E_{\mathcal D_\sigma} \\
\check E_{\mathcal C_{\sigma'}} \arrow[r] & \hat E_{\mathcal D_{\sigma'}} \arrow[u, "\cong" rot, no head]
\end{tikzcd}
\end{equation}
\end{remark}


\subsubsection{The Action of Diffeomorphisms}\label{subsection: action of diffeos}

There is an action
\begin{equation}\label{eq:action}
\begin{split}
\Diff(\mathring U^N_Y) \times \mathscr F(\mathring U^N_Y) &\raw \mathscr F(\mathring U^N_Y)\\
(\mathring f,\mathring \sigma)\quad\quad &\mapsto (\mathring f^{-1})^*\mathring\sigma
\end{split}
\end{equation}
of germs of diffeomorphisms on germs of sections.

We will now see that a consequence of coherence is that the action (\ref{eq:action}) of $\Diff(\mathring \Sigma)$ on $\mathscr F(\mathring \Sigma)$ lifts to an action on the family of nuclear pairs (\ref{eq:nuclear pair family}). We first describe the action of the subgroup $\Diff(\mathring \Sigma^+)$ of co-orientation preserving germs of diffeomorphisms.

Let $\mathring f \in \Diff(\mathring \Sigma^+)$; by Lemma \ref{lem:cylindrical diff rep} there exists $f \in \Diff(C_\Sigma)$ whose germ at $\Sigma$ is $\mathring f$. Let $\mathring \sigma \in \mathscr F(\mathring \Sigma)$ and let $\sigma \in \mathscr F(C_\Sigma)$ whose germ at $\Sigma$ is $\mathring \sigma$. 
The action of $\mathring f$ on $\mathring\sigma$ gives the germ $\mathring\sigma' := (\mathring f^{-1})^*\mathring\sigma$. The section $\sigma':= (f^{-1})^*\sigma \in \mathscr F(C_\Sigma)$ has germ $\mathring\sigma'$ at the 0-slice. As before, we set $\mathring \sigma_t, \mathring\sigma'_t \in \mathscr F(\mathring \Sigma)$ to be the pullback by $\mathring \theta_t$ of the restriction of $\sigma, \sigma'$ to $\mathring \Sigma_t$.

Let $m,M: \R \raw \R$ be the continuous monotone increasing functions defined by
\begin{equation}
\begin{split}
m(t) := \min_{x \in \Sigma_t} pr_2(f(x))\\
M(t) := \max_{x\in\Sigma_t} pr_2(f(x))
\end{split}
\end{equation}
where $pr_2: \Sigma \times \R \raw \R$ is the projection onto the second factor. They satisfy $m(0) = M(0) = 0$ and $m(t) \leq M(t)$ for all $t\in \R$. In particular, we can find monotone sequences $\{s_i\}, \{s_i'\} \subset \R_{<0}$ and $\{t_i\}, \{t_i'\} \subset \R_{>0}$ converging to 0 such that 
\begin{equation}
M(s_i) < s_i'< m(s_{i+1}) \quad \text{and} \quad M(t_{i+1})< t_i' < m(t_{i}).
\end{equation}

Set 
\begin{equation}
\begin{split}
X_{s_i,s_i'} &:= f(\Sigma \times (s_i, \infty)) \cap \Sigma \times (-\infty, s_i')\\
Y_{s_i',s_{i+1}} &:= \Sigma \times (s_i', \infty) \cap f(\Sigma \times (-\infty, s_{i+1}))
\end{split}
\quad\quad
\begin{split}
X_{t_{i+1},t_i'} &:= f(\Sigma \times (t_{i+1}, \infty)) \cap \Sigma \times (-\infty, t_i')\\
Y_{t_i',t_{i}} &:= \Sigma \times (t_i', \infty) \cap f(\Sigma \times (-\infty, t_{i}))
\end{split}
\end{equation}
and let $p_X: \partial X_{a',b} \raw\{0,1\}$, $p_Y: \partial Y_{a,b'} \raw \{0,1\}$ be partitions that designate $\Sigma_{a'}$,$f(\Sigma_a)$ as incoming and $f(\Sigma_b),\Sigma_{b'}$ as outgoing. If we define
\begin{equation}
\begin{split}
\mathcal X_{s_i,s_i'} := (\mathring U^{C_\Sigma}_{X_{s_i,s_i'}}, p_X, \mathring\theta_{s_i}, \mathring f^{-1} \circ \mathring\theta_{s_i'})\\
\mathcal Y_{s_i',s_{i+1}} := (\mathring U^{C_\Sigma}_{Y_{s_i',s_{i+1}}}, p_Y, \mathring f^{-1} \circ\mathring\theta_{s_i'}, \mathring\theta_{s_{i+1}})\\
\end{split}
\quad\quad
\begin{split}
\mathcal X_{t_{i+1},t_i'} := (\mathring U^{C_\Sigma}_{X_{t_{i+1},t_i'}}, p_X, \mathring\theta_{t_{i+1}}, \mathring f^{-1} \circ\mathring\theta_{t_i'})\\
\mathcal Y_{t_i',t_{i}} := (\mathring U^{C_\Sigma}_{Y_{t_i',t_{i}}}, p_Y, \mathring f^{-1} \circ\mathring\theta_{t_i'}, \mathring\theta_{t_{i}})\\
\end{split}
\end{equation}
and set $\mathring \sigma_X, \mathring\sigma_Y$ to be the restrictions of $\sigma$ to $\mathring U^{C_\Sigma}_{X_{a',b}}$ and $\mathring U^{C_\Sigma}_{Y_{a,b'}}$ then
\begin{equation}\label{eq: interweaving bordisms diff}
\begin{split}
(\mathcal X_{s_i,s_i'}, \mathring \sigma_X): (\mathring \Sigma^+, \mathring\sigma_{s_i}) \rightsquigarrow (\mathring \Sigma^+, \mathring \sigma'_{s_i'})\\
(\mathcal Y_{s_i',s_{i+1}}, \mathring \sigma_Y): (\mathring \Sigma^+, \mathring\sigma'_{s_i'}) \rightsquigarrow (\mathring \Sigma^+, \mathring\sigma_{s_{i+1}})
\end{split}
\quad\quad
\begin{split}
(\mathcal X_{t_{i+1},t_i'}, \mathring \sigma_X): (\mathring \Sigma^+, \mathring\sigma_{t_{i+1}}) \rightsquigarrow (\mathring \Sigma^+, \mathring \sigma'_{t_i'})\\
(\mathcal Y_{t_i',t_{i}}, \mathring \sigma_Y): (\mathring \Sigma^+, \mathring\sigma'_{t_i'}) \rightsquigarrow (\mathring \Sigma^+, \mathring\sigma_{t_{i}})
\end{split}
\end{equation}
are $\mathscr F$-bordisms.

Applying a nuclear symmetric monoidal functor 
\begin{equation}
Z^s: \Bord_{n,n-1}^s(\mathscr F) \raw \mathcal{NP}
\end{equation}
to the above gives the direct system
\begin{equation}\label{eq:interwoven direct}
\check E_{\mathring \sigma_{s_1}} \xrightarrow{Z^s(\mathcal X_{s_1,s_1'}, \mathring \sigma_X)} \check E_{\mathring \sigma'_{s_1'}} \xrightarrow{Z^s(\mathcal Y_{s_1',s_{2}}, \mathring \sigma_Y)} \check E_{\mathring\sigma_{s_{2}}} \raw \cdots
\end{equation}
and the inverse system
\begin{equation}\label{eq:interwoven inverse}
\cdots \hat E_{\mathring \sigma_{t_2}} \xrightarrow{Z^s(\mathcal X_{t_2,t_1'}, \mathring \sigma_X)} \hat E_{\mathring \sigma'_{t_1'}} \xrightarrow{Z^s(\mathcal Y_{t_1',t_{1}}, \mathring \sigma_Y)} \hat E_{\mathring\sigma_{t_{1}}}
\end{equation}

The unprimed alternating terms give final and cofinal subsequences $\tilde{\mathcal C}_{\sigma} := \{\check E_{\sigma_{s_i}}\}$ and $\tilde{\mathcal D}_{\sigma} := \{\hat E_{\sigma_{t_i}}\}$ of the direct and inverse systems $\mathcal C_\sigma$ and $\mathcal D_\sigma$. Likewise, the primed alternating terms give final and cofinal subsequences $\tilde{\mathcal C}_{\sigma'}$ and $\tilde{\mathcal D}_{\sigma'}$ of $\mathcal C_{\sigma'}$ and $\mathcal D_{\sigma'}$. Set
\begin{equation}
\begin{split}
\check E_{\tilde{\mathcal C}_{\sigma}} := \colim_{\tilde{\mathcal C}_{\sigma}} \check E_{\sigma_{s_i}}\\
\check E_{\tilde{\mathcal C}_{\sigma'}} := \colim_{\tilde{\mathcal C}_{\sigma'}} \check E_{\sigma'_{s_i'}}
\end{split}
\quad\quad
\begin{split}
\hat E_{\tilde{\mathcal D}_{\sigma}} := \invlim_{\tilde{\mathcal D}_{\sigma}} \hat E_{\sigma_{t_i}}\\
\hat E_{\tilde{\mathcal D}_{\sigma'}} := \invlim_{\tilde{\mathcal D}_{\sigma'}} \hat E_{\sigma'_{t'_i}}
\end{split}
\end{equation}

If $Z^s$ is coherent, we have the following sequence of isomorphisms

\begin{equation}\label{eq:iso chain}
\begin{tikzcd}
\check E_{\mathring\sigma} \arrow[r,"\cong"] \arrow[d, hook] 
& \check E_{\mathcal C_{\sigma}} \arrow[d] \arrow[r,"\cong"] 
& \check E_{\tilde{\mathcal C}_{\sigma}} \arrow[d] \arrow[r,"\cong"] 
& \check E_{\tilde{\mathcal C}_{\sigma'}} \arrow[d] \arrow[r,"\cong"] 
& \check E_{\mathcal C_{\sigma'}} \arrow[d] \arrow[r,"\cong"]
& \check E_{\mathring\sigma'} \arrow[d,hook]\\
\hat E_{\mathring\sigma} \arrow[r,"\cong"]
&\hat E_{\mathcal D_{\sigma}} \arrow[r,"\cong"] 
& \hat E_{\tilde{\mathcal D}_{\sigma}} \arrow[r,"\cong"] 
& \hat E_{\tilde{\mathcal D}_{\sigma'}} \arrow[r,"\cong"] 
& \hat E_{\mathcal D_{\sigma'}} \arrow[r,"\cong"]
& \hat E_{\mathring \sigma'}
\end{tikzcd}
\end{equation}
where the left- and right-most isomorphisms are from coherence and the rest are canonical from the universal properties of the direct and inverse limits. The middle isomorphism is from the interwoven systems (\ref{eq:interwoven direct}) and (\ref{eq:interwoven inverse}). 

\begin{remark}
The composition (\ref{eq:iso chain}) is an isomorphism of nuclear pairs that is independent of the choice of $f,\sigma$ representing $\mathring f$, $\mathring\sigma$.
\end{remark}

We summarize this in the following.
\begin{theorem}\label{thm:diffeomorphism extension}
Let $Z^s: \Bord_{n,n-1}^s(\mathscr F) \raw \mathcal{NP}$ be a nuclear and coherent symmetric monoidal functor of semicategories. Then $Z^s$ extends canonically to a nuclear and coherent symmetric monoidal functor of categories
\begin{equation}
Z: \Bord_{n,n-1}(\mathscr F) \raw \mathcal{NP}
\end{equation}
\end{theorem}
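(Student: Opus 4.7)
The plan is to extend $Z^s$ by assigning to each diffeomorphism isomorphism $\mathring f : (\mathring\Sigma^+,\mathring\sigma) \xrightarrow{\sim} (\mathring\Sigma^+,\mathring\sigma')$ the composite isomorphism (\ref{eq:iso chain}) constructed immediately above the theorem, and by keeping $Z$ equal to $Z^s$ on all other generators. The verification then breaks into four steps: (i) the composite (\ref{eq:iso chain}) is a well-defined isomorphism of nuclear pairs; (ii) the assignment is independent of the lifts $f \in \Diff(C_\Sigma)$ of $\mathring f$, $\sigma \in \mathscr F(C_\Sigma)$ of $\mathring\sigma$, and of the interleaving sequences $\{s_i, s_i', t_i, t_i'\}$; (iii) functoriality, including compatibility of $Z(\mathring f)$ with pre- and post-composing any bordism morphism by a diffeomorphism; (iv) inheritance of nuclearity, coherence, and the symmetric monoidal structure.

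For (i), note that the interwoven direct system (\ref{eq:interwoven direct}) contains both $\tilde{\mathcal C}_\sigma$ and $\tilde{\mathcal C}_{\sigma'}$ as final subsequences, so the canonical maps from its colimit identify $\check E_{\tilde{\mathcal C}_\sigma}$ with $\check E_{\tilde{\mathcal C}_{\sigma'}}$; the analogous observation for (\ref{eq:interwoven inverse}) identifies $\hat E_{\tilde{\mathcal D}_\sigma}$ with $\hat E_{\tilde{\mathcal D}_{\sigma'}}$. Commutativity of the two rows of (\ref{eq:iso chain}) follows because $Z^s$ sends each $\mathscr F$-bordism in (\ref{eq: interweaving bordisms diff}) to a morphism of nuclear pairs, so the inclusion natural transformation passes through the colimit and inverse limit. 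For (ii), any two interleavings admit a common refinement on which the induced maps agree by construction; two lifts of $\mathring f$ or $\mathring\sigma$ coincide on a cylindrical neighborhood of $\Sigma_0$, so shrinking the interleaving into that neighborhood removes the dependence. For functoriality in $\Diff(\mathring\Sigma^+)$, given $\mathring g : (\mathring\Sigma^+,\mathring\sigma')\xrightarrow{\sim}(\mathring\Sigma^+,\mathring\sigma'')$ one produces a simultaneous interleaving whose odd chunks witness $\mathring f$ and whose even chunks witness $\mathring g$, and the semicategorical composition law for $Z^s$ yields $Z(\mathring g \circ \mathring f) = Z(\mathring g) \circ Z(\mathring f)$; the identity of $\Diff(\mathring\Sigma^+)$ maps to the identity via a trivial interleaving with $\sigma = \sigma'$.

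The step I expect to be the main obstacle is the remaining compatibility in (iii): for a bordism morphism $(\mathcal X, \mathring\sigma_X) : (\mathring\Sigma^+, \mathring\sigma') \rightsquigarrow (\mathring\Sigma_1^+, \mathring\sigma_1)$ one must show that gluing $(\mathcal X, \mathring\sigma_X)$ onto a cylindrical chunk from the interleaving of $\mathring f$ yields, up to diffeomorphism of $\mathscr F$-bordisms, the bordism whose incoming collar $\mathring\theta_0$ is replaced by $\mathring\theta_0 \circ \mathring f^{-1}$ and whose germ of section is pulled back accordingly. This requires carefully matching the partitions $p$, collar isomorphisms $\mathring\theta_0$, and co-orientations $\mathring c_0$ through the gluing construction, invoking Proposition \ref{prop:diffeo rep} to select a compatible global diffeomorphism representative on a neighborhood containing $X$ and applying the sheaf property to identify the resulting germs of sections. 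Once this is established, (iv) is routine: nuclearity transfers because composing nuclear maps with isomorphisms remains nuclear by Proposition \ref{prop:nuclear composition}; coherence is preserved because the outer isomorphisms of (\ref{eq:iso chain}) are precisely the coherence isomorphisms of $\mathring\sigma$ and $\mathring\sigma'$, so the interwoven identification respects them; the symmetric monoidal structure extends because disjoint unions of cylindrical chunks assemble into interleavings of the monoidal product and completed tensor products commute with countable colimits in $\mathcal{NDF}$ and countable inverse limits in $\mathcal{NF}$. Finally, canonicity of the extension follows because coherence together with agreement with $Z^s$ on the cylindrical chunks of (\ref{eq: interweaving bordisms diff}) forces $Z(\mathring f)$ to coincide with the composite (\ref{eq:iso chain}).
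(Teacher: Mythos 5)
Your proposal follows essentially the same route as the paper: the paper's proof of Theorem \ref{thm:diffeomorphism extension} is precisely the construction preceding it, in which $Z(\mathring f)$ is defined as the composite (\ref{eq:iso chain}) obtained from the interwoven systems (\ref{eq:interwoven direct}), (\ref{eq:interwoven inverse}) together with coherence, and independence of the choices of $f$ and $\sigma$ is asserted in a remark. You are in fact more explicit than the paper about the remaining verifications (independence of the interleaving, functoriality under composition of diffeomorphisms, compatibility with gluing onto bordisms, and preservation of nuclearity, coherence, and the monoidal structure), and your sketches of these steps are sound.
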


We now describe the action of reversing the co-orientation.
Let $\mathring \delta \in \Diff(\mathring \Sigma)$ be the germ at $\Sigma \subset C_\Sigma$ of the co-orientation reversing diffeomorphism
\begin{equation}
\begin{split}
\delta: \Sigma \times \R &\xrightarrow{\sim} \Sigma \times \R\\
(x,t) & \mapsto (x,-t)
\end{split}
\end{equation}
We use this to define the twisted involution
\begin{equation}
\delta_{\mathscr F}: \Bord_{n,n-1}(\mathscr F) \raw \Bord_{n,n-1}(\mathscr F)^{op}
\end{equation}
acting on objects and morphisms by
\begin{equation}
\begin{split}
(\mathring \Sigma^+, \mathring \sigma) &\mapsto (\mathring \Sigma^+, \mathring\delta^*\mathring\sigma)\\
(\mathcal X, \mathring\sigma_X) &\mapsto (\mathcal X^*, \mathring\sigma_X)
\end{split}
\end{equation}
where $\mathcal X := (\mathring U^N_X,p,\mathring\theta_0, \mathring\theta_1)$ is a bordism and $\mathcal X^* := (\mathring U^N_X, 1-p, \mathring \delta^*\mathring\theta_1, \mathring\delta^*\mathring\theta_0)$ is $\mathcal X$ with incoming and outgoing boundaries reversed.
\begin{remark}
Although we defined the $\delta_{\mathscr F}$ on $\mathscr F$-bordisms, it descends to diffeomorphism classes of bordisms since $\mathring f_i \circ \mathring \delta^* \mathring\theta_i = \mathring\delta^*(\mathring f_i \circ \mathring\theta_i)$ for all diffeomorphisms $\mathring f$ of $\mathscr F$-bordisms.
\end{remark}

\begin{proposition}\label{prop:co-orientation dual}
Let $Z: \Bord_{n,n-1}(\mathscr F) \raw \mathcal{NP}$ be a nuclear and coherent symmetric monoidal functor. Then $Z$ is $(\delta_{\mathscr F},\delta_{\mathcal{NP}})$-equivariant.
\end{proposition}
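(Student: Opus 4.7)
The plan is to construct a natural isomorphism $\eta : Z \circ \delta_{\mathscr F} \Rightarrow \delta_{\mathcal{NP}} \circ Z$ by adapting the interweaving argument used in the proof of Theorem~\ref{thm:diffeomorphism extension} to the co-orientation reversing diffeomorphism $\delta$. At the level of objects, for each $(\mathring\Sigma^+, \mathring\sigma)$ I would choose a representative $\sigma \in \mathscr F(C_\Sigma)$ and build an isomorphism of nuclear pairs
\begin{equation*}
\eta_{\mathring\sigma} : \bigl(\check E_{\delta^*\mathring\sigma} \mono \hat E_{\delta^*\mathring\sigma}\bigr) \xrightarrow{\sim} \bigl(\hat E_{\mathring\sigma}^* \mono \check E_{\mathring\sigma}^*\bigr).
\end{equation*}
Using coherence applied to $\delta^*\sigma$, the left-hand pair is identified with $\colim_{s<0} \check E_{(\delta^*\sigma)_s} \mono \invlim_{t>0} \hat E_{(\delta^*\sigma)_t}$. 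Using coherence applied to $\sigma$ together with the commutation of the transpose with inverse limits (Proposition~\ref{prop:inverse limit nuclear}, Corollary~\ref{cor:direct limit nuclear}), the right-hand pair becomes $\colim_{t>0} \hat E_{\sigma_t}^* \mono \invlim_{s<0} \check E_{\sigma_s}^*$.

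The key construction is then to match these two presentations using the reparametrization $s \mapsto -s$ on indexing sets. The individual nuclear spaces are not equal, but bordisms of the form $\mathcal X_{-t,t}$ for $t>0$ — which by $\delta$ are diffeomorphic (with incoming/outgoing swapped) to cylindrical bordisms of $\delta^*\sigma$ spanning $[-t,t]$ — give, via $Z$ and nuclearity (Lemma~\ref{lem:nuclear implies nuclear}), a compatible family of nuclear morphisms. Transposing these, and interweaving them with the morphisms of $\mathcal C_{\delta^*\sigma}$ and $\mathcal D_{\delta^*\sigma}$ on a cofinal subsequence in the spirit of the chain~(\ref{eq:iso chain}), produces the required isomorphism of direct and inverse limits. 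Independence of the choice of $\sigma$ follows exactly as in the remark following~(\ref{eq:iso chain}), since any two sections agree on a cofinal/final subsequence.

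For naturality on morphisms, given an $\mathscr F$-bordism $(\mathcal X,\mathring\sigma_X):(\mathring\Sigma_0^+,\mathring\sigma_0)\rightsquigarrow(\mathring\Sigma_1^+,\mathring\sigma_1)$, one must show that the square relating $Z(\mathcal X^*, \mathring\sigma_X)$ and $\delta_{\mathcal{NP}}\bigl(Z(\mathcal X,\mathring\sigma_X)\bigr)$ via $\eta_{\mathring\sigma_0}, \eta_{\mathring\sigma_1}$ commutes. By attaching collared cylindrical neighborhoods and invoking functoriality together with coherence, this reduces to the cylindrical case already treated in step one. The main obstacle is the interweaving construction itself: unlike Theorem~\ref{thm:diffeomorphism extension}, where the diffeomorphism preserved the co-orientation and the interweaving stayed on a single side of $\Sigma$, here one must use bordisms spanning both sides of the central slice, so the matching with the dual cylindrical systems must carefully use the transpose of nuclear morphisms and the reflexivity isomorphism~(\ref{eq:reflexivity}) to identify transposed direct systems with inverse systems of duals.
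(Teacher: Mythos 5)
The paper itself does not prove Proposition \ref{prop:co-orientation dual}; it explicitly defers to the appendix to Section 3 of \cite{KS}, so there is no in-paper argument to compare yours against. Judged on its own terms, your outline has the right overall shape (coherence, cylindrical systems, interweaving, duality of limits), but it contains a genuine gap at its central step.

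The gap is in how you propose to produce the comparison maps between the system for $\delta^*\sigma$ and the dual of the system for $\sigma$. You write that the bordisms $\mathcal X_{-t,t}$ are ``by $\delta$ diffeomorphic (with incoming/outgoing swapped) to cylindrical bordisms of $\delta^*\sigma$'' and that applying $Z$ to these gives the compatible family you need. But a map that swaps incoming and outgoing boundaries is not a diffeomorphism of bordisms in $\mathring{\mathcal G}^*_{[X]}$ --- diffeomorphisms of bordisms are required to respect partitions --- so $Z$ is not a priori constant on such a pair. The assertion that $Z$ takes the same (or transposed) value on a bordism and on its image under $\delta_{\mathscr F}$ is precisely the equivariance statement you are trying to prove, so as written the step is circular. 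The standard non-circular route is to read the cylinder $\Sigma\times[s,t]$ with \emph{both} boundary components incoming (resp.\ outgoing) as a bordism to (resp.\ from) the monoidal unit $\emptyset$; symmetric monoidality then turns these into evaluation maps $\check E_{\mathring\delta^*\mathring\sigma_{s}}\widehat\otimes \check E_{\mathring\sigma_{t}}\raw\C$ and coevaluation maps, and the composition law for bordisms yields the Zorro identities, which in the coherent limit produce the isomorphism $\check E_{\mathring\delta^*\mathring\sigma}\cong \hat E_{\mathring\sigma}^*$. Your proposal never introduces these pairings, and without them there is no candidate for the arrows in your interweaving. A secondary issue: you cite Proposition \ref{prop:inverse limit nuclear} and Corollary \ref{cor:direct limit nuclear} for ``the commutation of the transpose with inverse limits,'' but those results only assert that the limits lie in $\mathcal{NF}$ and $\mathcal{NDF}$; the identification $(\invlim_t \hat E_{\mathring\sigma_t})^*\cong\colim_t \hat E_{\mathring\sigma_t}^*$ is a separate duality statement (valid for reduced countable limits of nuclear Fr\'echet spaces, with density of images supplied by coherence) that needs to be stated and justified rather than attributed to those propositions.
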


We refer the reader to the appendix to Section 3 of \cite{KS} for a proof.


\subsubsection{Definition of Field Theory}

\begin{definition}\label{def:Field Theory Def}
An \emph{$\mathscr F$-field theory} is a symmetric monoidal functor 
\begin{equation}\label{eq:semi field theory}
\Bord_{n,n-1}^s(\mathscr F) \raw \mathcal{NP}
\end{equation}
that is nuclear, holomorphic, and coherent.
\end{definition}

\begin{remark}
By Theorem \ref{thm:diffeomorphism extension}, an $\mathscr F$-field theory (\ref{eq:semi field theory}) extends canonically to a symmetric monoidal functor out of the full $\mathscr F$-bordism category
\begin{equation}
\Bord_{n,n-1}(\mathscr F) \raw \mathcal{NP}
\end{equation}
This extension will also be called an $\mathscr F$-field theory.
\end{remark}

We record the following proposition for use in the next section.

\begin{proposition}\label{prop:F to G field theory}
Let $\varphi: \mathscr F \raw \mathscr G$ be a holomorphic morphism between sheaves of complex manifolds admitting a holomorphic right inverse and let $Z_{\mathscr F} : \Bord_{n,n-1}(\mathscr F) \raw \mathcal{NP}$ be an $\mathscr F$-field theory that admits a factorization
\begin{equation}
\begin{tikzcd}
\Bord_{n,n-1}(\mathscr F) \arrow[r, "Z_{\mathscr F}"] \arrow[d, "\varphi_*"] &  \mathcal{NP} \\
\Bord_{n,n-1}(\mathscr G) \arrow[ru, "Z_{\mathscr G}"'] &
\end{tikzcd}
\end{equation}
where $\varphi_*$ is the functor (\ref{eq:induced sheaf functor}) induced from the sheaf morphism. Then $Z_{\mathscr G}$ is a $\mathscr G$-field theory.
\end{proposition}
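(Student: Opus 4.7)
The plan is to leverage the holomorphic right inverse $s$ of $\varphi$ postulated by Definition \ref{def:holo right inverse} to transfer each defining property of $Z_{\mathscr F}$ to $Z_{\mathscr G}$ along the factorization $Z_{\mathscr F} = Z_{\mathscr G} \circ \varphi_*$. Because $\varphi$ is surjective (Remark \ref{rem:right inverse surj}), the induced functor $\varphi_*$ is surjective on both objects and morphisms of the bordism category and is symmetric monoidal by Proposition \ref{prop:sheaf induced functor}; thus $Z_{\mathscr G}$ inherits a symmetric monoidal structure from $Z_{\mathscr F}$. Nuclearity is immediate: given a $\mathscr G$-bordism $(\mathcal X, \mathring\tau_X)$, pick any lift $\mathring\sigma_X \in \mathscr F(\mathring U_X^N)$ of $\mathring\tau_X$ under $\varphi$, and observe that $Z_{\mathscr G}(\mathcal X,\mathring\tau_X) = Z_{\mathscr F}(\mathcal X,\mathring\sigma_X)$ is nuclear since $Z_{\mathscr F}$ is.

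For coherence, fix $(\mathring\Sigma^+,\mathring\tau)$, lift it to $(\mathring\Sigma^+,\mathring\sigma)$, extend $\mathring\sigma$ to a section $\sigma \in \mathscr F(C_\Sigma)$ via Lemma \ref{lem:cylindrical germ rep}, and set $\tau := \varphi(\sigma) \in \mathscr G(C_\Sigma)$, which restricts to $\mathring\tau$ by naturality of $\varphi$. Each cylindrical $\mathscr F$-bordism $(\mathcal X_{a,b},\mathring\sigma_{X_{a,b}})$ built from $\sigma$ maps under $\varphi_*$ to its counterpart $(\mathcal X_{a,b},\mathring\tau_{X_{a,b}})$ built from $\tau$, so the direct and inverse systems $\mathcal C_\tau, \mathcal D_\tau$ for $Z_{\mathscr G}$ coincide term-by-term with $\mathcal C_\sigma, \mathcal D_\sigma$ for $Z_{\mathscr F}$. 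The coherence isomorphisms of $Z_{\mathscr F}$ at $(\mathring\Sigma^+,\mathring\sigma)$ are then also the required coherence isomorphisms for $Z_{\mathscr G}$ at $(\mathring\Sigma^+,\mathring\tau)$, using the identifications $\check E_{\mathring\sigma} = \check E_{\mathring\tau}$ and $\hat E_{\mathring\sigma} = \hat E_{\mathring\tau}$ forced by the factorization.

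The main step is holomorphicity. Given a holomorphic map $f: S \to \mathscr G(\mathring U)$ (where $\mathring U$ is $\mathring\Sigma$ in the first condition and $\mathring U_X^N$ in the second of Definition \ref{def:holo def}), use Definition \ref{def:holo def 2} to factor $f$ through a holomorphic map $f_0: S \to \mathscr G(U)$ for some $U \in \Man_n$. Composing with $s(U): \mathscr G(U) \to \mathscr F(U)$, which is holomorphic by Definition \ref{def:holo right inverse}, and then restricting produces a holomorphic lift $\tilde f: S \to \mathscr F(\mathring U)$ satisfying $\varphi \circ \tilde f = f$, by naturality of $\varphi$ together with $\varphi(U)\circ s(U) = \mathrm{id}$. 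The factorization $Z_{\mathscr F} = Z_{\mathscr G} \circ \varphi_*$ implies that pulling back the nuclear pair family (\ref{eq:nuclear pair family}), respectively the hom family (\ref{eq:hom family}), for $Z_{\mathscr G}$ along $\varphi$ recovers the corresponding family for $Z_{\mathscr F}$. Hence
\[
f^*(\text{family for } Z_{\mathscr G}) \;=\; (\varphi\circ\tilde f)^*(\text{family for } Z_{\mathscr G}) \;=\; \tilde f^*(\text{family for } Z_{\mathscr F}),
\]
which is a holomorphic vector bundle (respectively a holomorphic map of such) by holomorphicity of $Z_{\mathscr F}$ applied to the holomorphic map $\tilde f$. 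The main obstacle will be bookkeeping: ensuring that lifts constructed on open sets $U \in \Man_n$ behave well after passing to germs, and that the pullback identifications of the holomorphic families are natural. In each case this reduces to naturality of $\varphi$ combined with the locally defined holomorphic right inverses $s(U)$.
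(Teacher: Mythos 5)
Your proposal is correct and follows essentially the same route as the paper: lift along the surjection $\varphi$ to get nuclearity and coherence from $Z_{\mathscr F}$, and use the holomorphic right inverse $\psi$ to lift holomorphic maps $f\colon S\to\mathscr G(\mathring U)$ so that the pullback of the $\mathscr G$-families agrees with the pullback of the $\mathscr F$-families along $\psi\circ f$. The only cosmetic difference is that the paper applies the right inverse directly at the germ level rather than factoring through an open set $U$ first.
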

\begin{proof}
Let $(\mathcal X, \mathring \tau_X)$ be a $\mathscr G$-bordism in $\Bord_{n,n-1}(\mathscr G)$. The sheaf morphism $\varphi$ is surjective (cf. Remark \ref{rem:right inverse surj}) which implies there exists $\mathring \sigma_X \in \mathscr F(\mathring U^N_X)$ satisfying $\varphi(\mathring\sigma_X) = \mathring\tau_X$. Thus $Z_{\mathscr G}(\mathcal X, \mathring \tau_X) = Z_{\mathscr F}(\mathcal X, \mathring\sigma_X)$ is a nuclear morphism of nuclear pairs, which shows that $Z_{\mathscr G}$ is nuclear.

Let $\sigma \in \mathscr G(C_\Sigma)$ with germ $\mathring\sigma \in \mathscr G(\mathring \Sigma)$ and let 
\begin{equation}
\psi(\mathring \Sigma): \mathscr G(\mathring \Sigma) \raw \mathscr F(\mathring \Sigma)
\end{equation}
be a holomorphic right inverse to $\varphi(\mathring\Sigma): \mathscr F(\mathring \Sigma) \raw \mathscr G(\mathring \Sigma)$.
Surjectivity of $\varphi$ implies there exists $\tau \in \mathscr F(C_\Sigma)$ with germ $\mathring\tau \in \mathscr F(\mathring\Sigma)$ satisfying $\varphi(\tau)= \sigma$ and $\varphi(\mathring\tau) = \mathring\sigma$. The direct and inverse cylindrical systems obtained from applying $Z$ to the systems of bordisms given by $\sigma, \tau$ are identical, which proves coherence of $Z_\mathscr{G}$.

Let
\begin{equation}\label{eq:G family}
\begin{tikzcd}[row sep = small, column sep = tiny]
\check{\mathscr E}_{\mathring \Sigma} \arrow[rr,hook] \arrow[rd] & & \hat{\mathscr E}_{\mathring \Sigma} \arrow[ld]\\
& \mathscr G(\mathring\Sigma) &
\end{tikzcd}
\end{equation}
be the family of nuclear pairs assigned to $\mathring\Sigma$ by $Z_{\mathscr G}$
and let $f: S \raw \mathscr G(\mathring\Sigma)$ be a holomorphic map. The pullback of (\ref{eq:G family}) along $\varphi(\mathring\Sigma): \mathscr F(\mathring\Sigma) \raw \mathscr G(\mathring \Sigma)$ is the holomorphic bundle of nuclear pairs assigned to $\mathring \Sigma$ by $Z_{\mathscr F}$, which in turn pulls back along $\psi(\mathring \Sigma) \circ f$ to a holomorphic bundle of nuclear pairs on $S$. We can identify this with the pullback of (\ref{eq:G family}) along $f$. A similar argument shows that the bundle of maps assigned to a bordism $\mathcal X: \mathring \Sigma_0 \rightsquigarrow \mathring \Sigma_1$ by $Z_{\mathscr G}$
\begin{equation}\label{eq:G family2}
\begin{tikzcd}[row sep = small, column sep = tiny]
\hat{\mathscr E}_{\mathring \Sigma_0} \arrow[rr] \arrow[rd] & & \check{\mathscr E}_{\mathring \Sigma_1} \arrow[ld]\\
& \mathscr G(\mathcal X) &
\end{tikzcd}
\end{equation}
pulls back along a holomorphic map $f: S \raw \mathscr G(\mathcal X)$ to a holomorphic map of holomorphic bundles.
\end{proof}


\subsubsection{Reflection Positivity}
We now restrict to sheaves of complex manifolds (\ref{eq:sheaf of complex manifolds}) such that $\mathfrak X: \Man_n \raw \Fib_n$ is equipped with an anti-holomorphic involution, i.e. an involutive natural isomorphism $\alpha: \mathfrak X \implies \mathfrak X$ such that $\alpha(U)$ for $U \in \Man_n$ is a bundle map
\begin{equation}
\begin{tikzcd}[row sep = small, column sep = small]
\mathfrak X(U) \arrow[rd] \arrow[rr, "\alpha(U)"] & & \mathfrak X(U) \arrow[ld]\\
& U &
\end{tikzcd}
\end{equation}
which is fiberwise an antiholomorphic involution.

By Proposition \ref{prop:induced sheaf morphism}, this induces an antiholomorphic involution of sheaves $\alpha_*: \mathscr F \raw \mathscr F$. This in turn induces an involution (\ref{eq:induced sheaf functor}) on $\Bord_{n,n-1}(\mathscr F)$ which we denote
\begin{equation}
\alpha_{\mathscr F}: \Bord_{n,n-1}(\mathscr F) \raw \Bord_{n,n-1}(\mathscr F)
\end{equation}

The involutions $\alpha_{\mathscr F}$ and $\delta_{\mathscr F}$ commute and we denote their composition by $\tau_{\mathscr F} := \alpha_{\mathscr F}\circ\delta_{\mathscr F}$. If $Z$ is $(\alpha_{\mathscr F},\alpha_{\mathcal{NP}})$-equivariant, then Proposition \ref{prop:co-orientation dual} implies that $Z$ is $(\tau_{\mathscr F}, \tau_{\mathcal{NP}})$-equivariant. In particular, $Z$ induces a functor
\begin{equation}
Z^\tau: \Bord_{n,n-1}(\mathscr F)^{\tau_{\mathscr F}} \raw \mathcal{NP}^{\tau_{\mathcal{NP}}}
\end{equation}
between fixed point categories. We say $Z$ is \emph{Hermitian} if $Z^\tau$ sends $\tau_{\mathscr F}$ fixed points to Hermitian nuclear pairs (cf. Definition \ref{def:hermitian np}).

\begin{definition}
An $\mathscr F$-field theory is \emph{reflection positive} if $Z$ is $(\alpha_{\mathscr F},\alpha_{\mathcal{NP}})$-equivariant and Hermitian.
\end{definition}


\section{Volume-dependent Field Theories}\label{section VFTs}

\subsection{Definitions and Properties}\label{section VFT defs and props}
We first recall the definition of an allowable complex metric. Let $V$ be a real vector space of dimension $n$ and let $g \in \Sym^2(V^*) \otimes \C$ be a complex symmetric 2-tensor on $V$ inducing a non-degenerate quadratic form on $V\otimes \C$.  Let $e_1,...,e_n$ be a basis of $V\otimes \C$ that is orthonormal with respect to $g$. Nondegeneracy implies an isomorphism
\begin{equation}
T_g : V\otimes \C \cong V^* \otimes \C
\end{equation} and we let $e^1,...,e^n$ be the basis of $V^* \otimes \C$ defined by $e^i := T_g(e_i)$. Set 
\begin{equation}
\vol_g := e^1 \wedge \cdots \wedge e^n
\end{equation}
which we view as a complex density in
\begin{equation}
|\Lambda^n(V^*)|_\C := (\Lambda^n(V^*) \otimes \C)\otimes \mathfrak o(V)
\end{equation}
where $\mathfrak o(V)$ is the orientation line of $V$. Tensoring by the orientation line implies that $\vol_g$ is independent of the choice of orthonormal basis.
Contraction with $\vol_g$ gives a map
\begin{equation}
\begin{split}
\ast_g: \Lambda^k V^* \otimes \C &\raw \Lambda^{n-k} V^* \otimes \C \otimes \mathfrak o(V)\\
\alpha &\mapsto \iota_{(\Lambda^k T_g^{-1})\alpha} \vol_g
\end{split}
\end{equation}
which restricts to the ordinary Hodge star operator on $\Lambda^\bullet V$ when $g$ is real. Define the quadratic form
\begin{equation}
\begin{split}
Q_g : \Lambda^\bullet V^* \otimes \C &\raw |\Lambda^n V^*|_\C\\
\alpha & \mapsto \alpha \wedge *\alpha
\end{split}
\end{equation}
which is also independent of the choice of orthonormal basis.

\begin{definition}[allowable metric on a vector space]
A nondegenerate complex symmetric 2-tensor $g \in \Sym^2(V^*)\otimes \C$ is \emph{allowable} if the real part of the quadratic form $Q_g$ is positive definite. We denote the set of allowable 2-tensors by $\met_\C(V)$.
\end{definition}

\begin{definition}[allowable density on a vector space]
A complex density in $|\Lambda^n V^*|_\C$ will be called an \emph{allowable density on $V$} if its real part is positive. We will denote by $\dens_\C(V)$ the set of allowable densities.
\end{definition} 

We now recall an equivalent characterization of allowable metrics whose proof can be found in \cite{KS}.

\begin{proposition}[\cite{KS} Theorem 2.2]\label{prop:KS theorem 2.2}
A complex quadratic form $g \in \Sym^2(V^*)\otimes \C$ is allowable if and only if there exists a basis $e_1,...,e_n$ of $V$ in which $g$ can be expressed as
\begin{equation}
g = \sum_{i=1}^n \lambda_i e^i \otimes e^i
\end{equation}
where $(e^i)$ is the coordinate dual basis and $\lambda_i \in \C\backslash \R_{\leq 0}$ are nonzero complex numbers not on the negative real axis satisfying
\begin{equation}\label{eq:metc arg condition}
\sum_{i=1}^n |\arg(\lambda_i)| < \pi.
\end{equation}
\end{proposition}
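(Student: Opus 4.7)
The plan is to prove the two directions separately: sufficiency by direct computation in the hypothesized diagonal basis, and necessity by bootstrapping the low-degree allowability conditions into a simultaneous real-congruence diagonalization of $g$. For sufficiency, I would fix the basis $(e_i)$ in which $g = \sum_i \lambda_i\, e^i \otimes e^i$ and note that the principal square root of $\det g = \prod_i \lambda_i$ is well-defined with positive real part, since $|\arg \prod_i \lambda_i| \le \sum_i |\arg \lambda_i| < \pi$; hence $\vol_g = \sqrt{\det g}\, e^1 \wedge \cdots \wedge e^n$ unambiguously. A direct calculation shows that in the product basis $\{e^I\}$ of $\Lambda^\bullet V^* \otimes \C$, the quadratic form $Q_g$ is diagonal with $Q_g(e^I) = (\sqrt{\det g}/\lambda_I)\, e^1 \wedge \cdots \wedge e^n$, where $\lambda_I := \prod_{i \in I} \lambda_i$. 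Rewriting the coefficient as $\sqrt{\prod_{j \notin I}\lambda_j \big/ \prod_{i \in I}\lambda_i}$, its argument has absolute value $\bigl|\sum_{j \notin I}\arg \lambda_j - \sum_{i \in I}\arg \lambda_i\bigr| \le \sum_i |\arg \lambda_i| < \pi$, so the principal root has strictly positive real part. Thus $\operatorname{Re} Q_g(\alpha) > 0$ for every nonzero real $\alpha = \sum_I c_I e^I$, which is the allowability of $g$.

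For necessity, suppose $g$ is allowable. Evaluating on $\Lambda^0 = \C$ gives $Q_g(1) = \vol_g$ with positive real part, so $\phi := \arg \vol_g \in (-\pi/2, \pi/2)$ is well-defined. Evaluating on real 1-forms $\beta \in V^*$, the identity $Q_g(\beta) = g^{-1}(\beta, \beta)\, \vol_g$ yields $\operatorname{Re}(e^{i\phi} g^{-1}(\beta, \beta)) > 0$, so $e^{i\phi} g^{-1}$ is a complex symmetric bilinear form on $V^*$ whose real part is positive definite. I then transfer this to $g$ via the elementary matrix identity: for $M = A + iB$ with $A, B$ real symmetric and $A > 0$, a direct computation gives $M^{-1} = (A + BA^{-1}B)^{-1} - iA^{-1}B(A + BA^{-1}B)^{-1}$, whose real part $(A + BA^{-1}B)^{-1}$ is positive definite. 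Applied to $M = e^{i\phi} g^{-1}$ with inverse $e^{-i\phi} g$, this shows $\operatorname{Re}(e^{-i\phi} g)$ is positive definite on $V$.

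Writing $e^{-i\phi} g = A + iB$ with $A > 0$ real symmetric, the standard simultaneous diagonalization of two real symmetric forms (one positive definite) by a real congruence yields a real basis $(f_i)$ of $V$ with $A(f_i, f_j) = \delta_{ij}$ and $B(f_i, f_j) = d_i \delta_{ij}$ for real $d_i$. In this basis $g = \sum_i \lambda_i\, f^i \otimes f^i$ with $\lambda_i := e^{i\phi}(1 + i d_i)$, and $\arg \lambda_i = \phi + \arctan d_i \in (-\pi, \pi)$, so $\lambda_i \in \C \setminus \R_{\le 0}$. For the argument-sum bound, I feed the diagonal form back through the sufficiency calculation: for each $I$, allowability of $g$ evaluated on the real basis element $f^I$ gives $\operatorname{Re}(\sqrt{\det g}/\lambda_I) > 0$, which rearranges to $\bigl|\sum_{j \notin I}\arg \lambda_j - \sum_{i \in I}\arg \lambda_i\bigr| < \pi$; maximizing over $I$ (attained at $I = \{i : \arg \lambda_i < 0\}$) produces $\sum_i |\arg \lambda_i| < \pi$.

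The principal obstacle is the inversion step in the necessity direction: the $\Lambda^1$-allowability condition naturally produces positive definiteness of $\operatorname{Re}(e^{i\phi} g^{-1})$, but the simultaneous diagonalization argument requires the same property for $e^{-i\phi} g$ itself, and the matrix identity above is what bridges the two. A secondary point of care is consistency between the orientation-determined branch of $\sqrt{\det g}$ implicit in $\vol_g$ and the principal square root used at the end of the necessity argument; both must coincide so that the final extraction of the inequality from reapplying the sufficiency computation in the diagonal basis is genuinely equivalent to the original allowability hypothesis rather than a strictly weaker consequence.
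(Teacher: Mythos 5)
The paper does not prove this proposition; it is quoted as Theorem~2.2 of \cite{KS} and the reader is referred there for the proof, so your argument has to stand on its own. Its overall skeleton is the standard one, and most of it is correct: the sufficiency computation of $Q_g$ on the diagonal basis $\{e^I\}$ is right (including the vanishing of cross terms $e^I\wedge *e^J$ for $I\neq J$), and in the necessity direction the passage from $\operatorname{Re}(e^{i\phi}g^{-1})>0$ (degree-one allowability) to $\operatorname{Re}(e^{-i\phi}g)>0$ via the identity $(A+iB)^{-1}=(A+BA^{-1}B)^{-1}-iA^{-1}B(A+BA^{-1}B)^{-1}$ is a clean and correct bridge, after which the simultaneous real congruence diagonalization and the conclusion $\lambda_i=e^{i\phi}(1+id_i)\in\C\setminus\R_{\leq 0}$ go through.

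The genuine gap is the final step of necessity, which you flag as a ``secondary point of care'' but do not close, and as written it fails. For a fixed $I$, allowability gives $\operatorname{Re}(c/\lambda_I)>0$, where $c$ is the coefficient of $\vol_g$ relative to $|f^1\wedge\cdots\wedge f^n|$. To ``rearrange'' this into $\bigl|\sum_{j\notin I}\arg\lambda_j-\sum_{i\in I}\arg\lambda_i\bigr|<\pi$ you need two things you have not established: that $c$ equals the product of \emph{principal} square roots $\prod_i\lambda_i^{1/2}$ (rather than its negative -- the orientation twist only fixes $\vol_g$ up to the choice of complex orthonormal basis, and $c$ is a priori only a square root of $\det g$), and, more seriously, that the argument of $\prod_{j\notin I}\lambda_j/\prod_{i\in I}\lambda_i$ taken in $(-\pi,\pi]$ equals the corresponding sum of the $\arg\lambda_i$ as real numbers rather than merely modulo $2\pi$. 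A single condition at $I=N:=\{i:\arg\lambda_i<0\}$ therefore only bounds $\sum_i|\arg\lambda_i|$ modulo a $2\pi$-ambiguity and cannot rule out, say, $\sum_i|\arg\lambda_i|\in(3\pi,5\pi)$. The fix is to use the conditions for \emph{all} $I$ together with a discrete continuity argument: set $\theta_I:=\arg c-\sum_{i\in I}\arg\lambda_i$ and pass from $\emptyset$ to $N$ (and to $P$) by adding one index at a time. Each step changes $\theta_I$ by $|\arg\lambda_i|<\pi$, and each $\theta_I$ must lie in $(-\pi/2,\pi/2)+2\pi\Z$, so no step can jump between distinct translates of $(-\pi/2,\pi/2)$; hence $\theta_P,\theta_N\in(-\pi/2,\pi/2)$ genuinely, and $\sum_i|\arg\lambda_i|=\theta_N-\theta_P<\pi$. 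This same induction also resolves the branch ambiguity in $c$, so both of your flagged obstacles are discharged at once -- but the argument needs to be made, not just acknowledged.
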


The condition (\ref{eq:metc arg condition}) is open which implies that allowable metrics $\met_\C(V)$ form an open subset of $\Sym^2(V^*) \otimes \C$ and therefore a complex manifold. Likewise, the allowable densities $\dens_\C(V)$ are an open subset of $|\Lambda^n V^*|_\C$ and also form a complex manifold.
If $g$ is allowable, applying $Q_g$ to $1 \in \Lambda^0 V^*$ shows that $\vol_g \in \dens_\C(V)$.

\begin{proposition}\label{prop:root det vector props}
The map
\begin{equation}\label{eq:root det vector}
\begin{split}
\sqrt{\det} : \met_\C(V) &\raw \dens_\C(V)\\
g &\mapsto \vol_g
\end{split}
\end{equation}
is holomorphic, equivariant with respect to the action of $\GL(V)$ on $met_\C(V)$ and $\dens_\C(V)$, and has a holomorphic right inverse.
\end{proposition}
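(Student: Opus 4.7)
The plan is to unravel the intrinsic construction of $\vol_g$ in a chosen coordinate system and verify the three properties separately, using Proposition \ref{prop:KS theorem 2.2} to control the argument of $\det(g)$ throughout.

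First I would fix a real basis $f_1,\ldots,f_n$ of $V$, identifying $\Sym^2(V^*)\otimes\C$ with complex symmetric $n\times n$ matrices and $|\Lambda^n V^*|_\C$ with $\C$ via $f^1\wedge\cdots\wedge f^n\otimes o_f$, where $o_f$ is the orientation induced by $f_i$. In this coordinate system I would check that $\vol_g$ corresponds to $\sqrt{\det(g_{ij})}$ computed with the principal branch of the square root. To justify the branch, I apply Proposition \ref{prop:KS theorem 2.2}: in a $g$-diagonalizing basis $g=\sum\lambda_i(e^i)^2$ we have $\det g = \prod\lambda_i$ up to a positive factor, so
\begin{equation}
|\arg(\det g)| \leq \sum_{i=1}^n |\arg\lambda_i| < \pi,
\end{equation}
placing $\det g$ in the holomorphic domain $\C\setminus\R_{\leq 0}$ of the principal square root. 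Since $g\mapsto\det(g_{ij})$ is polynomial and the principal square root is holomorphic on that domain, holomorphicity of $\sqrt{\det}$ follows. One should also verify independence of the chosen basis for $f_i$: a change of basis by $B\in\GL(V,\R)$ rescales $\det([g])$ by $(\det B)^{-2}$ and rescales the identification of $|\Lambda^n V^*|_\C$ with $\C$ by $|\det B|$, and the two effects cancel under the square root because the sign is absorbed into the orientation line.

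For $\GL(V)$-equivariance I would argue by naturality: the recipe that defines $\vol_g$ — choose a $g$-orthonormal basis of $V\otimes\C$, wedge its coordinate dual, tensor with the orientation line — is manifestly functorial under linear change of coordinates. Concretely, for $A\in\GL(V)$, if $e_1,\ldots,e_n$ is $g$-orthonormal then $A^{-1}e_1,\ldots,A^{-1}e_n$ is $A^*g$-orthonormal, and the corresponding dual wedge together with the $A^*$-action on $\mathfrak o(V)$ reproduces exactly the pullback action on $|\Lambda^n V^*|_\C$. This gives $\vol_{A^*g} = A^*\vol_g$, i.e.\ equivariance.

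For the holomorphic right inverse, I would fix once and for all a real positive-definite inner product $g_0\in\met_\C(V)$ (the one making $f_1,\ldots,f_n$ orthonormal) and define
\begin{equation}
\psi:\dens_\C(V)\raw\met_\C(V),\qquad \psi(\omega) := c(\omega)^{2/n}\, g_0,
\end{equation}
where $c(\omega)\in\C$ is the unique scalar with $\omega = c(\omega)\vol_{g_0}$, and $c^{2/n}$ is the principal-branch fractional power. Since $\vol_{g_0}$ is real and positive, allowability of $\omega$ forces $\Re c(\omega)>0$, so $c(\omega)\notin\R_{\leq 0}$ and the power is holomorphic in $\omega$. The resulting metric $c^{2/n}g_0$ is allowable by Proposition \ref{prop:KS theorem 2.2} applied in the $g_0$-orthonormal basis, where the argument condition collapses to $n\cdot|\arg c^{2/n}| = 2|\arg c|<\pi$, i.e.\ $\Re c>0$. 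A direct computation in the same basis gives $\vol_{\psi(\omega)} = c(\omega)\vol_{g_0}=\omega$, proving $\sqrt{\det}\circ\psi=\id$.

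There is no real obstacle here beyond careful bookkeeping of branches of the square root and the fractional power; the argument condition in Proposition \ref{prop:KS theorem 2.2} is precisely engineered so that $\det g$ never crosses the negative real axis, making all the necessary holomorphic branches globally well-defined on $\met_\C(V)$ and $\dens_\C(V)$.
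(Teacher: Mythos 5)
Your proof is correct and follows essentially the same route as the paper's: diagonalize $g$ via Proposition \ref{prop:KS theorem 2.2} to control $\arg(\det g)$ and get a global holomorphic branch of the square root, check equivariance by tracking how $\det$ and the orientation-twisted density line transform under $\GL(V)$, and build the right inverse by rescaling a fixed Euclidean metric by $c^{2/n}$. Your branch bookkeeping (e.g.\ $2|\arg c|<\pi$ for allowability of $c^{2/n}g_0$) is, if anything, slightly more explicit than the paper's.
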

\begin{proof}
Let $g \in \met_\C(V)$ be fixed. By Proposition \ref{prop:KS theorem 2.2}, there exists a basis $(e_i)$ of $V$ such that $g = \sum_i \lambda_i e^i \otimes e^i$ with $\sum_i |\arg(\lambda_i)| < \pi$. Let
\begin{equation}
z^{1/2} : \C \backslash \R_{\leq 0} \raw \C_{>0}
\end{equation}
be the holomorphic square root defined on the branch away from the negative real axis. Then $(\lambda_i^{1/2} e^i)$ is a basis of $V^* \otimes \C$ which is orthonormal with respect to $g$.
The map (\ref{eq:root det vector}) applied to $g$ then becomes
\begin{equation}
\sum_i \lambda_i e^i \otimes e^i \mapsto \prod_i \lambda_i^{1/2} \cdot |e^1\wedge \cdots \wedge e^n|
\end{equation}
where $|e^1 \wedge\cdots \wedge e^n|$ is a positive density in $\dens_\R(V) := \Lambda^n(V^*) \otimes \mathfrak o(V)$.

The map (\ref{eq:root det vector}) can be expressed in terms of any other basis $(f_i)$ of $V$ as
\begin{equation}\label{eq:root det gen basis}
\sum_{i,j} g_{ij}f^i \otimes f^j \mapsto \det(g_{ij})^{1/2} \cdot |f^1 \wedge \cdots \wedge f^n|
\end{equation}
with
\begin{equation}\label{eq:root det equiv}
\det(g_{ij})^{1/2} = |\det(A)|\cdot\prod_i \lambda_i^{1/2}
\end{equation}
an element of $\C \backslash \R_{\leq 0}$,
where $A \in \GL(V)$ is the unique invertible linear operator satisfying $Ae_i = f_i$. Equation (\ref{eq:root det gen basis}) shows holomorphicity and (\ref{eq:root det equiv}) shows equivariance.

Let $h \in \met_\R(V) \subset \met_\C(V)$ be a metric of Euclidean signature and let $\vol_h \in \dens_\R(V)$ be the associated positive real density on $V$. Any element $\omega \in \dens_\C(V)$ can be expressed as 
\begin{equation}
\omega = c\cdot \vol_h
\end{equation}
for $c \in\C_{>0}$. Then the map
\begin{equation}
\begin{split}
\dens_\C(V) &\raw \met_\C(V)\\
c \cdot \vol_h &\mapsto c^{2/n}\cdot h
\end{split}
\end{equation}
is a holomorphic right inverse to $\sqrt{\det}$.
\end{proof}

\begin{remark}
The \emph{Shilov boundary} of an open subset $U \subset \mathbb A$ of a finite dimensional complex affine space is the smallest compact subset $K$ of the closure $\overline U$ such that every holomorphic function defined on a neighborhood of $\overline U$ attains its maximum modulus on $K$ when restricted to $\overline U$. The set $\met_{\Lor}(V)$ of possibly degenerate metrics of Lorentzian signature is a subset of the Shilov boundary of $\met_\C(V)$. Likewise, the set $\dens_{i\R}$ of purely imaginary densities on $V$ can be identified with the Shilov boundary of $\dens_\C(V)$.
The map (\ref{eq:root det vector}) extends to a $\GL(V)$-equivariant map 
\begin{equation}\label{eq:lor root det vector}
\sqrt{\det}: \met_{\Lor}(V) \raw \dens_{i\R}(V)
\end{equation}
For a more detailed discussion of the Shilov boundary we refer the reader to \cite{KS}.
\end{remark}

Let $M$ be a smooth $n$-manifold, let $\mathcal B(M) \raw M$ be the principal $\GL_n(\R)$ bundle of bases on $M$, and let $V$ be a real vector space of dimension $n$.  The bundle of allowable metrics on $M$ is defined to be the associated bundle
\begin{equation}
\met_\C(M) := \mathcal B(M) \times_{\GL_n} \met_\C(V)
\end{equation}
Similarly, we define the bundle of allowable densities on $M$ to be the associated bundle
\begin{equation}
\dens_\C(M) := \mathcal B(M) \times_{\GL_n} \dens_\C(V).
\end{equation}

Equivariance of the map (\ref{eq:root det vector}) gives a map between associated bundles
\begin{equation}\label{eq:root det bundle}
\sqrt{\det}: \met_\C(M) \raw \dens_\C(M)
\end{equation}
which we also showed is holomorphic on the fibers. 

We denote by
\begin{equation}
\Met_\C: \Man_n^{op} \xrightarrow{\met_\C} \Fib_n^{op} \xrightarrow{C^\infty} \Set
\end{equation}
\begin{equation}
\Dens_\C: \Man_n^{op} \xrightarrow{\dens_\C} \Fib_n^{op} \xrightarrow{C^\infty} \Set
\end{equation}
the sheaves of allowable metrics and densities on the site of smooth $n$-manifolds and embeddings between them which, particular, are sheaves of complex manifolds. 

\begin{lemma}\label{lem: metc to densc}
The morphism of sheaves
\begin{equation}
\sqrt{\det}: \Met_\C \raw \Dens_\C
\end{equation}
whose value on $M\in \Man_n$ is induced by (\ref{eq:root det bundle}), admits a holomorphic right inverse (cf. Definition \ref{def:holo right inverse}).
\end{lemma}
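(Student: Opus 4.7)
The plan is to mimic the fiberwise construction from Proposition \ref{prop:root det vector props}, globalized by choosing a reference Riemannian metric on $M$. Fix $M \in \Man_n$. By a standard partition-of-unity argument, $M$ admits a smooth Riemannian metric $h$; its associated density $\vol_h$ is a nowhere-vanishing section of $\Dens_\R(M) \subset \Dens_\C(M)$. Since $\vol_h$ trivializes the complex density line bundle, every $\omega \in \Dens_\C(M)$ can be written uniquely as $\omega = f_\omega \cdot \vol_h$ for a smooth function $f_\omega: M \to \C_{>0}$ (positive real part follows from the definition of allowable density). On $\C_{>0}$ the principal branch of $z^{2/n}$ is holomorphic, so $f_\omega^{2/n}$ is well-defined and smooth.

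Define the candidate right inverse on $M$-sections by
\begin{equation}
\psi_M: \Dens_\C(M) \raw \Met_\C(M), \qquad \omega = f_\omega \cdot \vol_h \;\mapsto\; f_\omega^{2/n}\cdot h.
\end{equation}
To check $\sqrt{\det}\circ\psi_M = \id$, observe that at each point of $M$ the computation reduces to the vector space identity $\sqrt{\det}(c^{2/n} h_x) = c\cdot \vol_{h_x}$ which was already verified in the proof of Proposition \ref{prop:root det vector props}. Note that this $\psi_M$ depends on the auxiliary choice of $h$, but the statement requires only existence of a right inverse, so this is harmless.

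The remaining point is holomorphicity in the sense of Definition \ref{def:holo sheaf morphism}: given a finite dimensional complex manifold $S$ and a holomorphic map $F: S \raw \Dens_\C(M)$, we must show $\psi_M \circ F: S \raw \Met_\C(M)$ is holomorphic. By Definition \ref{def:holo def 1}, $F$ corresponds to a section $\sigma_F$ of $pr_2^* \dens_\C(M) \raw S \times M$ which is holomorphic in the $S$-direction. Trivializing this line bundle by $\vol_h$, $\sigma_F$ becomes a function $f: S \times M \raw \C_{>0}$ that is holomorphic in $S$ and smooth in $M$. Applying the principal branch of $z^{2/n}$, which is itself a holomorphic function, preserves holomorphy in $S$, so $(s,x) \mapsto f(s,x)^{2/n}\cdot h_x$ is a section of $pr_2^*\met_\C(M)$ holomorphic in $S$. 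This is precisely $\sigma_{\psi_M \circ F}$, proving holomorphicity.

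There is no real obstacle here; the only mild subtlety is ensuring the principal branch of $z^{2/n}$ is well-defined on the image, which is automatic because allowable densities have positive real part and $\C_{>0}$ is simply connected and avoids the origin. All other steps are direct applications of the fiberwise formula already established in Proposition \ref{prop:root det vector props}.
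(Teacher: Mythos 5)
Your construction is exactly the paper's: fix a Riemannian metric $h$ on $M$, write $\omega = f\cdot\vol_h$ with $f$ valued in $\C_{>0}$, and send $\omega$ to $f^{2/n}\cdot h$, with fiberwise correctness and holomorphicity coming from Proposition \ref{prop:root det vector props}. Your additional spelling-out of holomorphicity via Definitions \ref{def:holo def 1} and \ref{def:holo sheaf morphism} is a correct elaboration of what the paper leaves implicit.
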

\begin{proof}
Let $M \in \Man_n$ and let $h \in \Met_\R(M)$ be a Riemannian metric. If $\omega \in \Dens_\C(M)$ then it can be expressed as
\begin{equation}
\omega = f\cdot |\vol_h|
\end{equation}
where $f \in C^\infty(M;\C_{>0})$ and $|\vol_h|\in \Dens_\R(M)$ is the positive real density on $M$ associated to $h$. Then Proposition \ref{prop:root det vector props} implies the map
\begin{equation}
\begin{split}
\Dens_\C(M) &\raw \Met_\C(M)\\
f \cdot|\vol_g| &\mapsto f^{2/n}\cdot h
\end{split}
\end{equation}
is a holomorphic right inverse to $(\sqrt{\det})(M)$.
\end{proof}

Likewise, we set
\begin{equation}
\Met_{\Lor}: \Man_n^{op} \xrightarrow{\met_{\Lor}} \Fib_n^{op} \xrightarrow{C^\infty} \Set
\end{equation}
\begin{equation}
\Dens_{i\R}: \Man_n^{op} \xrightarrow{\dens_{i\R}} \Fib_n^{op} \xrightarrow{C^\infty} \Set
\end{equation}
to be the sheaves of possibly degenerate Lorentzian metrics and purely imaginary densities. As in Lemma \ref{lem: metc to densc}, the map (\ref{eq:lor root det vector}) induces a morphism of sheaves
\begin{equation}\label{eq: metlor to densir}
\sqrt{\det}: \Met_{\Lor} \raw \Dens_{i\R}
\end{equation}
which we record for later use.

\begin{definition}
A \emph{Wick-rotated quantum field theory} is a $\Met_\C$-field theory.
\end{definition}

\begin{definition}
A \emph{volume-dependent field theory (VFT)} is a $\Met_\C$-field theory admitting a factorization
\begin{equation}\label{eq:vft def}
\begin{tikzcd}
\Bord_{n,n-1}(\Met_\C) \arrow[d, "\sqrt{\det}_*"'] \arrow[r] & \mathcal{NP}\\
\Bord_{n,n-1}(\Dens_\C) \arrow[ru, "Z"'] &
\end{tikzcd}
\end{equation}
(cf. Definition \ref{def:Field Theory Def}).
\end{definition}

\begin{lemma}
The functor $Z$ in (\ref{eq:vft def}) is a $\Dens_\C$-field theory.
\end{lemma}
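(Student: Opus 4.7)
The plan is to invoke Proposition \ref{prop:F to G field theory} with $\mathscr F := \Met_\C$, $\mathscr G := \Dens_\C$, and $\varphi := \sqrt{\det}$. The VFT hypothesis gives precisely the required factorization of the $\Met_\C$-field theory through $\sqrt{\det}_*$, so it suffices to verify the two hypotheses on $\varphi$: that it is a holomorphic morphism of sheaves of complex manifolds, and that it admits a holomorphic right inverse.

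First I would check that $\sqrt{\det}: \Met_\C \raw \Dens_\C$ is a holomorphic morphism of sheaves. This morphism is, by construction, induced by the fiberwise map (\ref{eq:root det vector}) applied to the associated bundles $\met_\C(M) = \mathcal B(M)\times_{\GL_n}\met_\C(V)$ and $\dens_\C(M) = \mathcal B(M) \times_{\GL_n}\dens_\C(V)$. Proposition \ref{prop:root det vector props} gives that (\ref{eq:root det vector}) is both $\GL(V)$-equivariant and holomorphic on the vector-space fiber, and the equivariance is exactly what is needed for it to descend to a bundle map $\met_\C(M) \raw \dens_\C(M)$ covering the identity on $M$. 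This bundle map is fiberwise holomorphic, so it defines a holomorphic natural transformation $\met_\C \implies \dens_\C$ in the sense of Definition \ref{def:holomorphic nt def}. Proposition \ref{prop:induced sheaf morphism} then promotes this to a holomorphic morphism of the associated sheaves of smooth sections, which is exactly the sheaf morphism $\sqrt{\det}: \Met_\C \raw \Dens_\C$.

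Second, the existence of a holomorphic right inverse is already Lemma \ref{lem: metc to densc}. So both hypotheses of Proposition \ref{prop:F to G field theory} are satisfied, and Proposition \ref{prop:F to G field theory} concludes that $Z$ is a $\Dens_\C$-field theory (nuclear, holomorphic, and coherent). There is no real obstacle; the statement is essentially a bookkeeping corollary of the general transfer result, once one has assembled the holomorphicity of $\sqrt{\det}$ at the vector-space level (Proposition \ref{prop:root det vector props}) and the sheaf-level right inverse (Lemma \ref{lem: metc to densc}). The only subtlety worth flagging explicitly in the proof is checking that the fiberwise holomorphicity of (\ref{eq:root det vector}) is indeed the correct notion to plug into Definition \ref{def:holomorphic nt def}, but this is immediate because $\met_\C(V)$ and $\dens_\C(V)$ are the complex-manifold fibers of the bundles in question.
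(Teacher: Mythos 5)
Your proof is correct and takes essentially the same route as the paper, which simply cites Proposition \ref{prop:F to G field theory}; the hypothesis checks you spell out (fiberwise holomorphicity and equivariance from Proposition \ref{prop:root det vector props}, promotion to a holomorphic sheaf morphism via Proposition \ref{prop:induced sheaf morphism}, and the right inverse from Lemma \ref{lem: metc to densc}) are exactly the facts the paper establishes in the surrounding text before invoking that proposition.
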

\begin{proof}
This follows from an application of Proposition \ref{prop:F to G field theory}.
\end{proof}

Let $\check E_i \mono \hat E_i$ be the nuclear pair assigned to $(\mathring \Sigma_i^+,\mathring\omega_i) \in \Bord_{n,n-1}(\Dens_\C)$ by $Z$ for $i=0,1$. 
Let $\mathcal X := (\mathring U^N_X, p, \mathring \theta_0, \mathring\theta_1)$ be a bordism from $\mathring\Sigma_0^+$ to $\mathring\Sigma_1^+$ with $X$ connected. Recall there is a correspondence of background fields (\ref{eq:correspondence}). As in (\ref{eq:preimage}), given $\mathring\omega_i \in \Dens_\C(\mathring\Sigma_i)$ we set 
\begin{equation}
\Dens_\C(\mathring U^N_X; \mathring\omega_0, \mathring\omega_1) := (r_0^{\mathcal X}, r_1^{\mathcal X})^{-1}(\mathring\omega_0,\mathring\omega_1)
\end{equation}
to be the set of allowable densities on $\mathring U^N_X$ whose restriction to $\mathring U^N_{\partial X_i}$ pulled back along $\mathring\theta_i$ to $\mathring \Sigma_i$ is $\mathring\omega_i$.

The field theory $Z$ determines a holomorphic map
\begin{equation}\label{eq:Z_X dens}
Z_{\mathcal X}: \Dens_\C(\mathring U^N_X;\mathring\omega_0,\mathring\omega_1) \raw \Hom(\hat E_0, \check E_1)
\end{equation}
In fact, this map only depends on the total volume of the density.

\begin{theorem}\label{thm:vol dependence thm}
There is a factorization
\begin{equation}
\begin{tikzcd}
\Dens_\C(\mathring U^N_X;\mathring\omega_0,\mathring\omega_1) \arrow[r, "Z_{\mathcal X}"] \arrow[d, "\int_X"'] &  \Hom(\hat E_0, \check E_1)\\
\C_{>0} \arrow[ru, "V_{X}"'] &
\end{tikzcd}
\end{equation}
through a holomorphic map $V_X$, where $\int_X$ is the map that integrates a density over $X$.
\end{theorem}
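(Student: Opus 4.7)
The plan is to implement the complexified Moser argument sketched in the introduction. For the factorization it suffices to show $Z_\mathcal X(\omega_0) = Z_\mathcal X(\omega_1)$ for any two $\omega_0,\omega_1 \in \Dens_\C(\mathring U^N_X;\mathring\omega_0,\mathring\omega_1)$ of equal total volume. Fiberwise $\Dens_\C$ is convex (each fiber is the positive-real-part half-plane in a trivialization), so the straight-line path $\omega_t := (1-t)\omega_0 + t\omega_1$ lies entirely in $\Dens_\C(\mathring U^N_X;\mathring\omega_0,\mathring\omega_1)$. Because $\omega_0$ and $\omega_1$ restrict to the same boundary germs, I may shrink representatives so that $\omega_0 = \omega_1$ on a neighborhood of $\partial X$ in $X$; then $\dot\omega_t = \omega_1 - \omega_0$ is compactly supported in $X^\circ$ and $\int_X\dot\omega_t = 0$.

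The key step is to produce a smooth $t$-family of complex vector fields $\xi_t$, compactly supported in $X^\circ$, satisfying $\mathcal L_{\xi_t}\omega_t = \dot\omega_t$. Connectedness of $X$ gives $H^n_c(X^\circ;\C) \cong \C$ with integration as isomorphism, so one can solve $d\beta_t = \dot\omega_t$ for a compactly supported complex $(n-1)$-form $\beta_t$ with smooth $t$-dependence via a fixed Riemannian metric and its Hodge parametrix. Define $\xi_t$ by $\iota_{\xi_t}\omega_t = \beta_t$, which has a unique smooth solution because $\omega_t$ is a pointwise nonzero complex top form (its real part is positive), so that contraction $v \mapsto \iota_v\omega_t$ gives a bundle isomorphism $T_\C X \to \Lambda^{n-1}T^*_\C X$. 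Decompose $\xi_t = \eta_t + i\zeta_t$ into compactly supported real vector fields on $X^\circ$; each integrates to a family of diffeomorphisms of $\mathring U^N_X$ fixing a neighborhood of $\partial X$, and hence defines an automorphism of the bordism $\mathcal X$ in the sense of Theorem \ref{thm:diffeomorphism extension}.

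Diffeomorphism invariance of $Z$ applied to the flows of $\eta_t$ and $\zeta_t$ then gives that the real directional derivatives of $Z_\mathcal X$ at $\omega_t$ along $\mathcal L_{\eta_t}\omega_t$ and $\mathcal L_{\zeta_t}\omega_t$ vanish. Holomorphicity of $Z_\mathcal X$ (Definition \ref{def:holo def}) forces its complex directional derivative to be $\C$-linear, so the derivative along $i\mathcal L_{\zeta_t}\omega_t$ vanishes as well. Since $\dot\omega_t = \mathcal L_{\xi_t}\omega_t = \mathcal L_{\eta_t}\omega_t + i\mathcal L_{\zeta_t}\omega_t$,
\begin{equation*}
\tfrac{d}{dt}Z_\mathcal X(\omega_t) = 0,
\end{equation*}
and integrating from $0$ to $1$ proves $Z_\mathcal X(\omega_0) = Z_\mathcal X(\omega_1)$. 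Hence $Z_\mathcal X$ factors uniquely as $V_X \circ \int_X$ for some $V_X:\C_{>0} \to \Hom(\hat E_0,\check E_1)$.

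For the holomorphicity of $V_X$, fix $s_* \in \C_{>0}$, a density $\omega_* \in \Dens_\C(\mathring U^N_X;\mathring\omega_0,\mathring\omega_1)$ with $\int_X \omega_* = s_*$, and a smooth complex density $\rho$ compactly supported in $X^\circ$ with $\int_X \rho = 1$. For $s$ in a neighborhood of $s_*$ the map $s \mapsto \omega_* + (s-s_*)\rho$ is a holomorphic section of $\int_X$ valued in $\Dens_\C(\mathring U^N_X;\mathring\omega_0,\mathring\omega_1)$ (allowability being an open condition), and $V_X$ agrees near $s_*$ with its composition with $Z_\mathcal X$, which is holomorphic by Definition \ref{def:holo def}; since $s_*$ is arbitrary, $V_X$ is holomorphic on all of $\C_{>0}$. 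The principal obstacle in the plan is the compactly supported Moser step: choosing $\beta_t$ jointly smoothly in $t$ while keeping it supported in $X^\circ$, and checking that the flows of $\eta_t$ and $\zeta_t$ remain defined on the entire parameter interval so as to yield genuine automorphisms of the bordism $\mathcal X$ to which Theorem \ref{thm:diffeomorphism extension} can be applied.
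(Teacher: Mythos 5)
Your proposal is correct and follows essentially the same route as the paper: the complexified Moser argument along the straight-line path $\omega_t$, exactness of $\dot\omega_t = \omega_1-\omega_0$ rel the boundary collar, contraction with the nowhere-vanishing $\omega_t$ to produce a complex vector field with $\mathcal L_{\xi_t}\omega_t=\dot\omega_t$, diffeomorphism invariance plus holomorphicity to kill $\tfrac{d}{dt}Z_{\mathcal X}(\omega_t)$, and a holomorphic section of $\int_X$ to get holomorphicity of $V_X$. The obstacle you flag at the end is not one: since $\dot\omega_t$ is $t$-independent the primitive $\beta$ may be chosen once and for all, and compactly supported vector fields have complete flows (indeed the paper sidesteps flows entirely by working infinitesimally with the derivative of the orbit map of $\Diff(U;V_0\cup V_1)$ and its complexification).
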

\begin{proof}
Let $\mathring\nu_0, \mathring\nu_1 \in \Dens_\C(\mathring U^N_X; \mathring\omega_0, \mathring \omega_1)$ be densities on $\mathring U^N_X$ whose restrictions to $X$ have the same total volume $v:= \int_X \mathring\nu_0 = \int_X \mathring\nu_1$. 
For $t\in [0,1]$, let
\begin{equation}\label{eq:straight line path}
\mathring\nu_t := (1-t)\mathring\nu_0 + t\mathring\nu_1
\end{equation}
be the straight-line path in $\Dens_\C(\mathring U^N_X;\mathring\omega_0,\mathring\omega_1)$. It satisfies $\int_X \mathring \nu_t = v$ for all $t$.

Let $\tilde U \in \mathfrak U^N_X$ be a neighborhood of $X$ in $N$ and let $\tilde \nu_i \in \Dens_\C(\tilde U)$ be densities restricting to $\mathring \nu_i$ on $\mathring U^N_X$. Since $\tilde \nu_0, \tilde \nu_1$ restrict to the same germ at $\mathring U^N_{\partial X_i}$, there exist neighborhoods $V_i \in \mathfrak U^N_{\partial X_i}$ for $i=0,1$ such that $\tilde \nu_0|_{V_i} = \tilde \nu_1|_{V_i}$. The density $\tilde \nu_t:= (1-t)\tilde \nu_0 + t\tilde \nu_1 \in \Dens_\C(\tilde U)$ has germ $\mathring \nu_t$ and the restriction of $\tilde \nu_t$ to $V_i$ is constant for all $t$.
Set 
\begin{equation}
\begin{split}
\omega_i := \tilde \nu_t|_{V_i}\\
U:= X \cup V_0 \cup V_1\\
\nu_t:= \tilde \nu_t|_{U}.
\end{split}
\end{equation} 
Connectedness of $X$ implies $U$ is also connected.

Let $\Omega^n_U(\C_{>0}; \omega_0,\omega_1)$ be the set of $n$-forms valued in $\C_{>0}$ that restrict to $\omega_i$ on $V_i$.
We assume that $U$ is orientable and $\nu_t \in \Omega^n_{U}(\C_{>0}; \omega_0, \omega_1)$. The unorientable case reduces to the orientable case by using a partition of unity and identifying the density bundle with top dimensional forms on local neighborhoods. The $n$-form $\dot\nu_t := \frac{d}{dt}\nu_t = \nu_1 - \nu_0$ satisfies $\int_U\dot\nu_t = 0$. By connectedness of $U$, we conclude that $\dot\nu_t$ is exact, i.e.
\begin{equation}
\dot\nu_t = \nu_1 - \nu_0 = d\eta
\end{equation}
for some $\eta \in \Omega^{n-1}_U(\C)$ constant on $V_0 \cup V_1$; without loss of generality we assume that $\eta$ vanishes on this set. Since $\nu_t$ is a nowhere vanishing $n$-form, there exists a smooth complexified vector field $\xi_t \in C^\infty(U;TU\otimes \C)$ satisfying $\iota_{\xi_t} \nu_t = \eta$ with $\xi_t|_{V_i} = 0$. Applying the exterior derivative gives
\begin{equation}
\mathcal L_{\xi_t}\nu_t = \nu_1-\nu_0
\end{equation}
where $\mathcal L_{\xi_t}$ denotes the Lie derivative with respect to $\xi_t$. 

Let
\begin{equation}
Z_U: \Omega^n_U(\C_{>0};\omega_0,\omega_1) \raw \Hom(\hat E_0, \check E_1)
\end{equation}
be the composition of (\ref{eq:Z_X dens}) with the holomorphic map restricting $n$-forms on $U$ to $\mathring U^N_X$.

Let $\Diff(U; V_0 \cup V_1)$ be the group of diffeomorphisms of $U$ that restrict to the identity on $V_0 \cup V_1$ and let $\Lie(\Diff(U;V_0 \cup V_1))$ denote the Lie algebra of vector fields on $U$ that vanish on $V_0 \cup V_1$.
For fixed $t\in [0,1]$, let
\begin{equation}
\begin{split}
\alpha_{\nu_t}: \Diff(U; V_0 \cup V_1) &\raw \Omega^n_U(\C_{>0}; \omega_0, \omega_1)\\
f &\mapsto (f^{-1})^*\nu_t
\end{split}
\end{equation}
be the orbit map.
The derivative at the identity gives a map
\begin{equation}
\begin{split}
d\alpha_{\nu_t}: \Lie(\Diff(U;V_0\cup V_1)) &\raw T_{\nu_t}\Omega^n_U(\C_{>0};\omega_0,\omega_1)\\
\xi &\mapsto \mathcal L_{\xi}\nu_t
\end{split}
\end{equation}
where
\begin{equation}
T_{\nu_t}\Omega^n_U(\C_{>0};\omega_0,\omega_1) = \{\alpha \in \Omega^n_U(\C) \st \alpha|_{V_i} = 0\}
\end{equation}
is the tangent space at $\nu_t$ to the space of allowable densities on $U$ that restrict to $\omega_i$ on $V_i$.

By diffeomorphism invariance, the composition $Z_U \circ \alpha_{\nu_t}$ is constant. Thus the composition
\begin{equation}
\Lie(\Diff(U)) \xrightarrow{d\alpha_{\nu_t}} T_{\nu_t}\Omega^n_U(\C_{>0};\omega_0,\omega_1) \xrightarrow{d Z_{U}} \Hom(\hat E_0, \check E_1)
\end{equation}
is 0. 
Holomorphicity of $Z_U$ implies that the complexification
\begin{equation}
\Lie(\Diff(U))\otimes \C \xrightarrow{d\alpha_{\nu_t}} T_{\nu_t}\Omega^n_U(\C_{>0};\omega_0,\omega_1) \xrightarrow{d Z_{U}} \Hom(\hat E_0, \check E_1)
\end{equation}
is also 0.
The form $\dot\nu_t = \mathcal L_{\xi_t}\nu_t$ is in the image of the complexified map $d\alpha_{\nu_t}$ which implies $dZ_U(\dot\nu_t) = 0$.  Thus $Z_U(\nu_t)$ is constant for all $t \in [0,1]$ and therefore $Z_{\mathcal X}(\mathring\nu_t)$ is constant for all $t \in [0,1]$.

To see that $V_X$ is holomorphic, consider the holomorphic map
\begin{equation}
\begin{split}
m_\nu: \C_{>0} &\raw \Dens_\C(\mathring U^N_X;\mathring\omega_0,\mathring\omega_1)\\
\lambda &\mapsto \lambda \mathring\nu
\end{split}
\end{equation}
for some density $\mathring\nu$ satisfying $\int_X \mathring \nu = 1$. Then the composition
\begin{equation}
\C_{>0} \xrightarrow{m_\nu} \Dens_\C(\mathring U^N_X;\mathring\omega_0,\mathring\omega_1) \xrightarrow{\int_X} \C_{>0}
\end{equation}
is the identity. By holomorphicity of $Z$, the map
\begin{equation}
V_X: \C_{>0} \xrightarrow{m_\nu} \Dens_\C(\mathring U^N_X; \mathring\omega_0, \mathring\omega_1) \xrightarrow{Z_{\mathcal X}} \Hom(\hat E_0, \check E_1)
\end{equation}
is holomorphic.
\end{proof}

\begin{definition}\label{def:signed volume curve}
Let $\Sigma$ be a connected closed $n-1$-manifold and let $\omega \in \Dens_\C(C_\Sigma)$ be a density. The \emph{signed volume curve of $\omega$} is the curve $\gamma_\omega: \R \raw \C$ given by
\begin{equation}
\gamma_\omega(t) := 
\begin{cases}
-\int_{\Sigma \times [t,0]} \omega & t < 0\\
0 & t = 0\\
\int_{\Sigma \times [0,t]} \omega & t > 0
\end{cases}
\end{equation}
It is smooth and satisfies $\gamma'(t) \in \C_{>0}$ for all $t\in \R$. When $\Sigma := \bigsqcup_i \Sigma^i$ has multiple connected components, we define the signed volume curve $\gamma_\omega: \R \raw \C^{\#\pi_0(\Sigma)}$ to be the curve whose $i$th component is the signed volume curve $\gamma_{\omega_i}: \R \raw \C$ of the restriction $\omega|_{\Sigma^i}$ to the $i$th connected component.
\end{definition}

\begin{proposition}\label{prop:canonical NP isos}
Let $Z: \Bord_{n,n-1}(\Dens_\C) \raw \mathcal{NP}$ be a volume-dependent field theory, let $\mathring\omega, \mathring\omega' \in \Dens_\C(\mathring\Sigma)$, and let $\check E_{\mathring\omega} \mono \hat E_{\mathring\omega}$ and $\check E_{\mathring\omega'} \mono \hat E_{\mathring\omega'}$ be the nuclear pairs assigned to them by $Z$. Then there exists a canonical isomorphism
\begin{equation}\label{eq:canonical density germ isos}
\begin{tikzcd}
\check E_{\mathring\omega} \arrow[d,"\cong" rot, "\check \alpha"] \arrow[r,hook] & \hat E_{\mathring\omega} \arrow[d, "\cong" rot, "\hat \alpha"] \\
\check E_{\mathring\omega'} \arrow[r,hook] & \hat E_{\mathring\omega'}
\end{tikzcd}
\end{equation}
of nuclear pairs.
\end{proposition}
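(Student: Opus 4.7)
The strategy is to interweave the cylindrical systems of two representative densities, with Theorem \ref{thm:vol dependence thm} playing the role that diffeomorphism invariance played in Section \ref{subsection: action of diffeos}. First reduce to the case that $\Sigma$ is connected by the monoidal structure. Pick representatives $\omega,\omega' \in \Dens_\C(C_\Sigma)$ of $\mathring\omega,\mathring\omega'$, and choose interleaved sequences $\cdots < s_{i+1} < s_i' < s_i < 0$ and $0 < t_i < t_i' < t_{i+1} < \cdots$ converging to $0$. For each $i$, construct allowable densities $\eta_i^X$ on $\Sigma \times [s_i, s_i']$ and $\eta_i^Y$ on $\Sigma \times [s_i', s_{i+1}]$ agreeing with $\omega$ at the $\omega$-end and with $\omega'$ at the $\omega'$-end. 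A convex combination $\chi\omega + (1-\chi)\omega'$ with a suitable cutoff $\chi$ stays in $\Dens_\C$ by the same convexity argument used in the proof of Theorem \ref{thm:vol dependence thm}, and a compactly supported modification in the interior allows me to prescribe the total volumes $v_i^X := \int \eta_i^X$ and $v_i^Y := \int \eta_i^Y$ subject to the recursive constraints
\[
v_i^X + v_i^Y = \int_{\Sigma \times [s_i, s_{i+1}]} \omega, \qquad v_i^Y + v_{i+1}^X = \int_{\Sigma \times [s_i', s_{i+1}']} \omega',
\]
which are solvable by choosing $v_1^X \in \C_{>0}$ with sufficiently large real part and iterating, since the right-hand-side differences tend to $0$.

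Applying $Z$ to the resulting $\Dens_\C$-bordisms produces an interwoven direct system
\[
\check E_{\mathring\omega_{s_1}} \to \check E_{\mathring\omega'_{s_1'}} \to \check E_{\mathring\omega_{s_2}} \to \check E_{\mathring\omega'_{s_2'}} \to \cdots
\]
and an analogous interwoven inverse system on the positive side. By the volume-matching conditions and Theorem \ref{thm:vol dependence thm}, the composition $\check E_{\mathring\omega_{s_i}} \to \check E_{\mathring\omega'_{s_i'}} \to \check E_{\mathring\omega_{s_{i+1}}}$ coincides with the transition $\check Z_{s_i, s_{i+1}}$ of $\mathcal C_\omega$, while the shifted composition $\check E_{\mathring\omega'_{s_i'}} \to \check E_{\mathring\omega_{s_{i+1}}} \to \check E_{\mathring\omega'_{s_{i+1}'}}$ coincides with $\check Z_{s_i', s_{i+1}'}$ of $\mathcal C_{\omega'}$. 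Hence the unprimed and primed subsequences are cofinal subsystems of $\mathcal C_\omega$ and $\mathcal C_{\omega'}$ respectively; coherence of $Z$ and the universal property of the colimit assemble into a chain of isomorphisms exactly parallel to (\ref{eq:iso chain}), giving $\check E_{\mathring\omega} \cong \check E_{\mathring\omega'}$ and, by the analogous argument on the inverse side, $\hat E_{\mathring\omega} \cong \hat E_{\mathring\omega'}$. These fit into a single isomorphism of nuclear pairs because each building-block bordism yields a morphism in $\mathcal{NP}$. Independence from the choices (representatives, sequences, interpolations) follows by once more interweaving any two choices of systems and applying Theorem \ref{thm:vol dependence thm} to see that the intermediate squares commute.

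The principal obstacle is the density-theoretic construction of the $\eta_i^X, \eta_i^Y$: they must simultaneously match the prescribed boundary germs, remain in $\Dens_\C$ throughout, and realize the prescribed total volumes. The boundary-germ and allowability conditions contribute a fixed baseline volume which can be made arbitrarily small by shrinking the neighborhoods where $\eta_i^X$ agrees with $\omega$ and $\omega'$, and the remainder of the prescribed volume is supplied by an interior bump. Verifying that this can be arranged while keeping the result allowable at every point is routine but requires the same care as in the proof of Theorem \ref{thm:vol dependence thm}.
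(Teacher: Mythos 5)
Your overall strategy coincides with the paper's: interleave the two cylindrical systems, use Theorem \ref{thm:vol dependence thm} to identify the two-step compositions with the original transition maps, and then apply coherence and the universal properties of the (co)limits. However, there is a genuine gap in how you set up the interweaving, namely the claim that the volume constraints are ``solvable by choosing $v_1^X \in \C_{>0}$ with sufficiently large real part and iterating.'' Write $a_i := \gamma_\omega(s_{i+1})-\gamma_\omega(s_i)$ and $b_i := \gamma_{\omega'}(s_{i+1}')-\gamma_{\omega'}(s_i')$ for the right-hand sides of your two constraints. The recursion has the closed form
\[
v_i^X \;=\; \bigl(\gamma_{\omega'}(s_i')-\gamma_\omega(s_i)\bigr) + c,
\qquad
v_i^Y \;=\; \bigl(\gamma_\omega(s_{i+1})-\gamma_{\omega'}(s_i')\bigr) - c,
\]
where $c := v_1^X - (\gamma_{\omega'}(s_1')-\gamma_\omega(s_1))$. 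Since $v_i^X+v_i^Y=a_i\to 0$ while both real parts must remain positive, necessarily $\Re(v_i^X)\to 0$ and $\Re(v_i^Y)\to 0$, which forces $\Re(c)=0$; in particular, taking $\Re(v_1^X)$ large makes $\Re(v_1^Y)=\Re(a_1)-\Re(v_1^X)$ negative and the construction already fails at the first step. With $\Re(c)=0$, positivity of all the $\Re(v_i^X)$ and $\Re(v_i^Y)$ is precisely the requirement
\[
\Re(\gamma_\omega(s_i)) \;<\; \Re(\gamma_{\omega'}(s_i')) \;<\; \Re(\gamma_\omega(s_{i+1})),
\]
i.e.\ the sequences must be chosen so that the \emph{real parts of the signed volume curves} interleave, not merely the time parameters. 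This is exactly condition (\ref{eq:nested volumes}) in the paper's proof; it is achievable because $\Re(\gamma_\omega)$ and $\Re(\gamma_{\omega'})$ are continuous, strictly increasing, and vanish at $0$, but it is an extra requirement you must impose when selecting $\{s_i\},\{s_i'\}$ (and likewise on the positive side), after which the intermediate volumes are essentially forced rather than freely adjustable. For the same reason your closing remark about supplying ``the remainder of the prescribed volume'' by an interior bump is misleading: the prescribed volumes themselves shrink to zero, so there is no slack to absorb a large initial choice. Once the interleaving is arranged at the level of volumes, the rest of your argument (cofinality, coherence, the chain of isomorphisms parallel to (\ref{eq:iso chain}), and independence of choices via a further interweaving) goes through as in the paper.
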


\begin{proof}
If $\Sigma$ is empty, the isomorphism is the identity map on the trivial nuclear pair $\C \mono \C$. If not, let $\omega, \omega' \in \Dens_\C(C_\Sigma)$ be allowable densities whose germs at $\mathring \Sigma$ are $\mathring\omega, \mathring\omega'$ and let $\gamma_\omega, \gamma_{\omega'}$ be their respective signed volume curves. The real parts $\Re(\gamma_\omega)$, $\Re(\gamma_{\omega'})$ are continuous monotone increasing functions vanishing at 0.
Thus there exist monotone real sequences $\{s_i\}_{i=0}^\infty,\{s_i'\}_{i=0}^\infty \subset \R_{<0}$ increasing to 0 and $\{t_i\}_{i=0}^\infty,\{t_i'\}_{i=0}^\infty \subset \R_{>0}$ decreasing to 0 such that 
\begin{equation}\label{eq:nested volumes}
\Re(\gamma_\omega(s_{i})) < \Re(\gamma_{\omega'}(s_{i}')) < \Re(\gamma_\omega(s_{i+1}))
\quad\text{,} \quad
\Re(\gamma_\omega(t_{i+1})) < \Re(\gamma_{\omega'}(t_i')) < \Re(\gamma_\omega(t_{i}))
\end{equation}
for all $i \geq 0$. 
Let
\begin{equation}
\begin{split}
X_i  := \Sigma \times [s_{i}, s_{i+1}]\\
X_i' := \Sigma \times [\tilde s_i,\tilde s_{i+1}]
\end{split}
\quad\quad
\begin{split}
Y_i  := \Sigma \times [t_{i+1}, t_{i}]\\
Y_i' := \Sigma \times [t_{i+1}', t_i']
\end{split}
\end{equation}
which are compact manifolds smoothly embedded in $C_\Sigma$ whose boundaries we co-orient positively. In general for $W := \Sigma \times [a,b] \subset C_\Sigma$, we set $\mathring \omega_W, \mathring\omega_W'$ to be the restrictions of $\omega, \omega'$ to $\mathring U^{C_\Sigma}_W$ and $p$ to be the partition on $\partial W$ that designates $\Sigma \times \{a\}$ incoming and $\Sigma \times \{b\}$ outgoing.
Let $\mathring\theta_t$ be the germ of the translation map (\ref{eq:translation}) and set $\mathring\omega_t := \mathring\theta_t^*\omega$ and $\mathring\omega_t' :=\mathring\theta_t^*\omega'$.
Define
\begin{equation}
\begin{split}
\mathcal X_{i} := (\mathring U^{C_\Sigma}_{X_{i}}, p,  \mathring\theta_{s_{i}}, \mathring\theta_{s_{i+1}})\\
\mathcal X_i' := (\mathring U^{C_\Sigma}_{X_i'}, p,  \mathring\theta_{s_i'}, \mathring\theta_{s_{i+1}'})
\end{split}
\quad\quad\quad
\begin{split}
\mathcal Y_i := (\mathring U^{C_\Sigma}_{Y_{i}}, p, \mathring\theta_{t_{i+1}}, \mathring\theta_{t_{i}})\\
\mathcal Y_{i}' := (\mathring U^{C_\Sigma}_{Y_{i}'}, p, \mathring\theta_{t_{i+1}'}, \mathring\theta_{t_{i}}')
\end{split}
\end{equation}
from which we form the $\Dens_\C$-bordisms
\begin{equation}
\begin{split}
(\mathcal X_{i}, \mathring\omega_{X_{i}}): (\mathring\Sigma^+,\mathring\omega_{s_{i}}) &\rightsquigarrow (\mathring\Sigma^+,\mathring\omega_{s_{i+1}})\\
(\mathcal X_{i}', \mathring\omega_{X_{i}'}'): (\mathring\Sigma^+,\mathring\omega'_{s_{i}'}) &\rightsquigarrow (\mathring\Sigma^+,\mathring\omega'_{s_{i+1}'})
\end{split}
\quad\quad\quad
\begin{split}
(\mathcal Y_{i}, \mathring\omega_{Y_{i}}): (\mathring\Sigma^+,\mathring\omega_{t_{i+1}}) &\rightsquigarrow (\mathring\Sigma^+,\mathring\omega_{t_{i}})\\
(\mathcal Y_{i}', \mathring\omega_{Y_{i}'}'): (\mathring\Sigma^+,\mathring\omega'_{t_{i+1}'}) &\rightsquigarrow (\mathring\Sigma^+,\mathring\omega'_{t_{i}'})
\end{split}
\end{equation}
Choose cylindrical $\Dens_\C$-bordisms 
\begin{equation}
\begin{split}
\mathcal A_{i}: (\mathring \Sigma^+,\mathring\omega_{s_{i}}) &\rightsquigarrow (\mathring \Sigma^+,\mathring\omega'_{s_i'})\\
 \mathcal A_{i}': (\mathring \Sigma^+,\mathring\omega'_{s_{i}'}) &\rightsquigarrow (\mathring \Sigma^+,\mathring\omega_{s_{i+1}})
\end{split}
\quad\quad
\begin{split}
\mathcal B_{i}: (\mathring \Sigma^+,\mathring\omega'_{t_{i}'}) &\rightsquigarrow (\mathring \Sigma^+,\mathring\omega_{t_{i}})\\
\mathcal B_{i}': (\mathring \Sigma^+,\mathring\omega_{t_{i+1}}) &\rightsquigarrow (\mathring \Sigma^+,\mathring\omega'_{t_{i}'})
\end{split}
\end{equation}
with volumes 
\begin{equation}
\begin{split}
\gamma_{\omega'}(\tilde s_{i}') - \gamma_{\omega}(s_{i})\\
\gamma_\omega(s_{i+1}) - \gamma_{\omega'}(s_{i}')
\end{split}
\quad\quad\quad
\begin{split}
\gamma_\omega(t_{i}) - \gamma_{\omega'}(s_{i}')\\
\gamma_{\omega'}(t_{i}') - \gamma_{\omega}(t_{i+1})
\end{split}
\end{equation}
respectively.

By Theorem \ref{thm:vol dependence thm}, 
\begin{equation}
\begin{split}
Z(\mathcal X_{i}, \mathring\omega_{X_{i}})= Z(\mathcal A_{i}') \circ Z(\mathcal A_{i})\\
Z(\mathcal X_{i}', \mathring\omega_{X_{i}'}')= Z(\mathcal A_{i+1}) \circ Z(\mathcal A_{i}')
\end{split}
\quad\quad\quad
\begin{split}
Z(\mathcal Y_{i}, \mathring\omega_{Y_{i}})=Z(\mathcal B_{i}) \circ Z(\mathcal B_{i}')\\
Z(\mathcal Y_{i}', \mathring\omega_{Y_{i}'}')=Z(\mathcal B_{i}') \circ Z(\mathcal B_{i+1})\\
\end{split}
\end{equation}
There are direct and inverse systems
\begin{equation}\label{eq:AB system}
\check E_{\mathring\omega_{s_0}} \xrightarrow{\check Z(\mathcal A_0)} \check E_{\mathring\omega'_{s_0'}} \xrightarrow{\check Z(\mathcal A_0')} \check E_{\mathring\omega_{s_1}} \raw \cdots \raw \hat E_{\mathring\omega_{t_1}} \xrightarrow{\hat Z(\mathcal B_0')} \hat E_{\mathring\omega'_{t_0'}} \xrightarrow{\hat Z(\mathcal B_0)} \hat E_{\mathring\omega_{t_0}}
\end{equation}
Let $\mathcal C_\omega, \mathcal D_\omega$ be the direct and inverse cylindrical systems of $\omega$ and similarly let $\mathcal C_{\omega'}, \mathcal D_{\omega'}$ be the cylindrical systems of $\omega'$ (cf. Definition \ref{def:cylindrical systems}).
Let $\tilde{\mathcal C}_\omega, \tilde{\mathcal D}_\omega$ be the direct and inverse systems consisting of the unprimed terms in (\ref{eq:AB system}) and $\tilde{\mathcal C}_{\omega'}, \tilde{\mathcal D}_{\omega'}$ the direct and inverse systems consisting of the primed terms. Then $\tilde{\mathcal C}_\omega \subset \mathcal C_\omega$, $\tilde{\mathcal C}_{\omega'} \subset \mathcal C_{\omega'}$ are cofinal subsystems and $\tilde{\mathcal D}_{\omega} \subset \mathcal D_{\omega}, \tilde{\mathcal D}_{\omega'} \subset \mathcal D_{\omega'}$ are final subsystems.

There is a sequence of isomorphisms
\begin{equation}\label{eq:iso chain 2}
\begin{tikzcd}[column sep = small]
\check E_{\mathring\omega} \arrow[d,hook] \arrow[r, "\cong"]
   & \colim_{\mathcal C_\omega} \check E_{\mathring\omega_s} \arrow[r, "\cong"] \arrow[d,hook]
   & \colim_{\tilde{\mathcal C}_{\omega}} \check E_{\mathring \omega_{s_{2i}}} \arrow[r,"\cong"] \arrow[d,hook]
   & \colim_{\tilde{\mathcal C}_{\omega'}} \check E_{\mathring \omega_{s_{2i+1}}'} \arrow[r,"\cong"] \arrow[d,hook]
   & \colim_{\mathcal C_{\omega'}} \arrow[r,"\cong"] \check E_{\mathring \omega_s'} \arrow[d,hook]
   & \check E_{\mathring\omega'} \arrow[d,hook]\\
\hat E_{\mathring\omega} \arrow[r, "\cong"]
   & \invlim_{\mathcal D_\omega} \hat E_{\mathring \omega_t} \arrow[r, "\cong"]
   & \invlim_{\tilde{\mathcal D}_{\omega}} \hat E_{\mathring \omega_{t_{2i}}} \arrow[r,"\cong"]
   & \invlim_{\tilde{\mathcal D}_{\omega'}} \hat E_{\mathring \omega_{t_{2i+1}}'} \arrow[r,"\cong"]
   & \invlim_{\mathcal D_{\omega'}} \hat E_{\mathring \omega_t'} \arrow[r,"\cong"]
   & \hat E_{\mathring\omega'} 
\end{tikzcd}
\end{equation}
where the first and last are implied by coherence, the second and fourth are canonical isomorphisms implied by the universal properties of limits and colimits applied to final and cofinal subsequences, and the middle isomorphism results from applying the universal properties to (\ref{eq:AB system}). The composition of these isomorphisms gives what was desired.
\end{proof}

\begin{remark}
We can interpret this proposition as giving a canonical trivialization of the bundle (\ref{eq:nuclear pair family}) for a VFT.
\end{remark}

\begin{proposition}\label{prop:canonical iso is functorial 0}
Let $\mathring f \in \Diff(\mathring \Sigma^+)$ be a germ of a diffeomorphism satisfying $\mathring \omega = \mathring f^*\mathring\omega'$. Then $Z(\mathring f)$ is the canonical isomorphism (\ref{eq:canonical density germ isos}).
\end{proposition}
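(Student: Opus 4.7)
The plan is to show that both $Z(\mathring f)$ and the canonical isomorphism are obtained from the same chain of identifications, once we make compatible choices of representatives. Both the construction of $Z(\mathring f)$ (via Theorem \ref{thm:diffeomorphism extension}, chain (\ref{eq:iso chain})) and that of the canonical isomorphism (via Proposition \ref{prop:canonical NP isos}, chain (\ref{eq:iso chain 2})) are composed of (i) coherence isomorphisms at the ends, (ii) canonical identifications of cofinal subsequences, and (iii) a middle isomorphism coming from an interwoven system of cylindrical $\Dens_\C$-bordisms. The outer pieces agree tautologically; everything reduces to comparing the middle pieces, and by Theorem \ref{thm:vol dependence thm} this comparison is controlled entirely by the total complex volumes of the interwoven bordisms.

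\textbf{Setup.} Choose $f \in \Diff(C_\Sigma)$ with germ $\mathring f$; by modifying $f$ outside a compact neighborhood of $\Sigma$ (which does not affect its germ), arrange that $f$ equals the identity outside some compact $K \subset C_\Sigma$. Pick $\omega' \in \Dens_\C(C_\Sigma)$ representing $\mathring\omega'$, and set $\omega := f^*\omega'$; by hypothesis this represents $\mathring\omega$. The compact-support arrangement gives that $\omega - \omega'$ is supported in $K$ and
\begin{equation*}
\int_{C_\Sigma}(\omega - \omega') = \int_{C_\Sigma}(f^*\omega' - \omega') = 0,
\end{equation*}
since $f$ is a self-diffeomorphism of $C_\Sigma$. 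This identity is what will allow the regularization of semi-infinite volume integrals at $-\infty$ and $+\infty$.

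\textbf{Volume matching.} The interwoven bordisms $(\mathcal X_{s_i,s_i'}, \mathring\omega_X)$ and $(\mathcal Y_{s_i',s_{i+1}}, \mathring\omega_Y)$ appearing in (\ref{eq:iso chain}) are cylindrical $\Dens_\C$-bordisms between the same objects as the abstract bordisms $\mathcal A_i$ and $\mathcal A_i'$ in (\ref{eq:iso chain 2}). Using the change-of-variables formula $\int_A f^*\omega' = \int_{f(A)}\omega'$ applied to the defining intersections, splitting the resulting integrals as differences of semi-infinite volumes, and cancelling the tails at $-\infty$ via $\int_{C_\Sigma}(\omega-\omega')=0$, one computes
\begin{equation*}
\int_{X_{s_i,s_i'}}\omega \;=\; \gamma_{\omega'}(s_i') - \gamma_\omega(s_i), \qquad \int_{Y_{s_i',s_{i+1}}}\omega \;=\; \gamma_\omega(s_{i+1}) - \gamma_{\omega'}(s_i'),
\end{equation*}
which are precisely the volumes prescribed for $\mathcal A_i$ and $\mathcal A_i'$ in the proof of Proposition \ref{prop:canonical NP isos}. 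The analogous identities hold for the positive-parameter bordisms $(\mathcal X_{t_{i+1},t_i'},\mathring\omega_X)$ and $(\mathcal Y_{t_i',t_i},\mathring\omega_Y)$, matching the volumes of $\mathcal B_i, \mathcal B_i'$.

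\textbf{Conclusion.} By Theorem \ref{thm:vol dependence thm}, two cylindrical bordisms with the same boundary objects and the same complex volume are sent by $Z$ to the same morphism of nuclear pairs. Consequently, $(\mathcal X_{s_i,s_i'},\mathring\omega_X)$ and $(\mathcal Y_{s_i',s_{i+1}},\mathring\omega_Y)$ are valid choices for the abstract bordisms $\mathcal A_i, \mathcal A_i'$ in Proposition \ref{prop:canonical NP isos}'s construction, and likewise on the positive side; with these choices the interleaving conditions (\ref{eq:nested volumes}) are satisfied automatically. The chain (\ref{eq:iso chain 2}) then becomes literally the chain (\ref{eq:iso chain}), proving $Z(\mathring f)$ equals the canonical isomorphism (\ref{eq:canonical density germ isos}).

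\textbf{Main obstacle.} The technical core is the volume-matching calculation above. Its two delicate ingredients are (a) controlling the integrals over the semi-infinite regions appearing after change of variables, which is handled by the compact-support modification of $f$ together with the cancellation $\int_{C_\Sigma}(\omega-\omega')=0$, and (b) identifying the result of the change-of-variables computation with the signed volume curves $\gamma_\omega, \gamma_{\omega'}$ as defined in Definition \ref{def:signed volume curve}.
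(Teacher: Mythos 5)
Your overall strategy is the same as the paper's: reduce the claim to the observation that the chain (\ref{eq:iso chain}) defining $Z(\mathring f)$ and the chain (\ref{eq:iso chain 2}) defining the canonical isomorphism have identical outer pieces, and that by Theorem \ref{thm:vol dependence thm} the interwoven bordisms built from $f$ are admissible choices for the $\mathcal A_i, \mathcal A_i', \mathcal B_i, \mathcal B_i'$ of Proposition \ref{prop:canonical NP isos} because their total volumes are $\gamma_{\omega'}(s_i')-\gamma_\omega(s_i)$, $\gamma_\omega(s_{i+1})-\gamma_{\omega'}(s_i')$, etc. The paper simply asserts these volumes; your instinct to verify them is sound, but your verification contains a genuine flaw.

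The flaw is the opening move: one cannot in general modify $f$ to equal the identity outside a compact set without changing its germ at $\Sigma_0$. If $f$ were the identity outside $\Sigma\times[-R,R]$, then $(x,s)\mapsto pr_1(f(x,sR'))$ with $R'>R$ would be a homotopy from $f|_{\Sigma_0}$ to $\mathrm{id}_\Sigma$, so the modification is impossible whenever $\mathring f$ restricts to a diffeomorphism of $\Sigma$ not homotopic to the identity (e.g.\ a reflection of $S^1$). Moreover, even granting compact support, $\int_{C_\Sigma}(\omega-\omega')=0$ does not annihilate the one-sided tails $\int_{\Sigma\times(-\infty,s]}(\omega-\omega')$, so the cancellation you invoke is not actually available. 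Fortunately none of this is needed: since $\mathring f$ is co-orientation preserving and fixes $\Sigma_0$, any representative $f$ preserves $\Sigma\times(-\infty,0)$ and $\Sigma\times(0,\infty)$, so every region in sight decomposes into compact pieces anchored at the $0$-slice rather than at $\pm\infty$. Concretely, the negative-side bordism from $(\mathring\Sigma^+,\mathring\omega_{s_i})$ to $(\mathring\Sigma^+,\mathring\omega'_{s_i'})$ (with boundary parametrizations $\mathring\theta_{s_i}$ and $\mathring f^{-1}\circ\mathring\theta_{s_i'}$ and density the restriction of $\omega=f^*\omega'$) is the region between $\Sigma_{s_i}$ and $f^{-1}(\Sigma_{s_i'})$, i.e.\ $\Sigma\times[s_i,0]$ with $f^{-1}(\Sigma\times[s_i',0])$ removed; the change of variables $\int_{f^{-1}(\Sigma\times[s_i',0])}\omega=\int_{\Sigma\times[s_i',0]}(f^{-1})^*\omega=\int_{\Sigma\times[s_i',0]}\omega'$ then gives the volume $\gamma_{\omega'}(s_i')-\gamma_\omega(s_i)$ with no semi-infinite integrals and no hypotheses on $f$ beyond those already in force. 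With that replacement your argument closes and coincides with the paper's proof.
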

\begin{proof}
Let $f \in \Diff(C_\Sigma)$ be a diffeomorphism of the cylinder whose germ at $\mathring \Sigma$ is $\mathring f$ and let $\omega, \omega' \in \Dens_\C(C_\Sigma)$ be allowable densities restricting to $\mathring\omega,\mathring\omega'$ on $\mathring \Sigma$. Let $\gamma_\omega,\gamma_{\omega'}$ be the signed volume curves of $\omega,\omega'$.
As in Subsection \ref{subsection: action of diffeos}, let 
\begin{equation}
\begin{split}
m(t) := \min_{x \in \Sigma_t} pr_2(f(x))\\
M(t) := \max_{x\in\Sigma_t} pr_2(f(x)).
\end{split}
\end{equation}
There exist monotone increasing sequences $\{s_i\}, \{s_i'\} \subset \R_{<0}$ and monotone decreasing sequences $\{t_i\}, \{t_i'\} \subset \R_{>0}$ converging to 0 satisfying
\begin{equation}
M(s_i) < s_i'< m(s_{i+1}) \quad \text{and} \quad M(t_{i+1})< t_i' < m(t_{i})
\end{equation}
and $\Dens_\C$-bordisms
\begin{equation}
\begin{split}
(\mathcal X_{s_i,s_i'}, \mathring \omega_X): (\mathring \Sigma^+, \mathring\omega_{s_i}) \rightsquigarrow (\mathring \Sigma^+, \mathring \omega'_{s_i'})\\
(\mathcal Y_{s_i',s_{i+1}}, \mathring \omega_Y): (\mathring \Sigma^+, \mathring\omega'_{s_i'}) \rightsquigarrow (\mathring \Sigma^+, \mathring\omega_{s_{i+1}})
\end{split}
\quad\quad
\begin{split}
(\mathcal X_{t_{i+1},t_i'}, \mathring \omega_X): (\mathring \Sigma^+, \mathring\omega_{t_{i+1}}) \rightsquigarrow (\mathring \Sigma^+, \mathring \omega'_{t_i'})\\
(\mathcal Y_{t_i',t_{i}}, \mathring \omega_Y): (\mathring \Sigma^+, \mathring\omega'_{t_i'}) \rightsquigarrow (\mathring \Sigma^+, \mathring\omega_{t_{i}})
\end{split}
\end{equation}
which have volumes
\begin{equation}
\begin{split}
\gamma_{\omega'}(s_i') - \gamma_{\omega}(s_i)\\
\gamma_\omega(s_{i+1}) - \gamma_{\omega'}(s_i')
\end{split}
\quad\quad
\begin{split}
\gamma_{\omega'}(t_i') - \gamma_{\omega}(t_{i+1})\\
\gamma_\omega(t_{i}) - \gamma_{\omega'}(t_i').
\end{split}
\end{equation}
respectively. Applying the VFT $Z$ gives the interwoven system (\ref{eq:AB system}) which shows that the sequences of isomorphisms (\ref{eq:iso chain}) and (\ref{eq:iso chain 2}) are the same.
\end{proof}

\begin{proposition}\label{prop:canonical iso is functorial}
Let $\mathcal X:= (\mathring U^N_X, p, \mathring\theta_0,\mathring\theta_1)$ be a bordism from $\mathring\Sigma_0^+$ to $\mathring\Sigma_1^+$ and let
\begin{equation}
\begin{split}
(\mathcal X, \mathring\omega_X): (\mathring\Sigma_0,\mathring\omega_0) \rightsquigarrow (\mathring\Sigma_1,\mathring\omega_1)\\
(\mathcal X, \mathring\omega_X'): (\mathring\Sigma_0, \mathring\omega_0') \rightsquigarrow (\mathring\Sigma_1, \mathring\omega_1')
\end{split}
\end{equation}
be $\Dens_\C$-bordisms satisfying $\int_X \mathring\omega_X = \int_X \mathring\omega_X'$. Then the square
\begin{equation}
\begin{tikzcd}
\hat E_{\mathring\omega_0} \arrow[d, "\cong"rot, "\hat\alpha_0"] \arrow[r, "Z(\mathcal X{,} \mathring\omega_X)"] & \check E_{\mathring\omega_1} \arrow[d, "\cong"rot, "\check\alpha_1"] \\
\hat E_{\mathring\omega_0'} \arrow[r, "Z(\mathcal X{,}\mathring\omega_X')"'] & \check E_{\mathring\omega_1'}
\end{tikzcd}
\end{equation}
commutes,
where $\hat\alpha_0, \check\alpha_1$ are the canonical isomorphisms 
implied by Proposition \ref{prop:canonical NP isos}.
\end{proposition}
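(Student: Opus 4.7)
The strategy is to reduce to an application of Theorem~\ref{thm:vol dependence thm} on composite bordisms obtained by gluing matching cylindrical collars to both ends of $\mathcal X$, and then to pass to a limit compatibly with the construction of the canonical isomorphisms in Proposition~\ref{prop:canonical NP isos}. Extend $\mathring\omega_X, \mathring\omega_X'$ to densities $\omega_X, \omega_X' \in \Dens_\C(U)$ on a common neighborhood $U \in \mathfrak U^N_X$, and use $\mathring\theta_i$ to further extend the boundary germs to representatives $\omega_i, \omega_i' \in \Dens_\C(C_{\Sigma_i})$. Following the proof of Proposition~\ref{prop:canonical NP isos}, pick interweaving sequences $\{t_k\}, \{t_k'\} \subset \R_{>0}$ at the incoming boundary (adapted to $\omega_0, \omega_0'$) and $\{s_k\}, \{s_k'\} \subset \R_{<0}$ at the outgoing boundary (adapted to $\omega_1, \omega_1'$), and form the associated cylindrical $\Dens_\C$-bordisms $\mathcal A_k, \mathcal A_k'$ at the incoming end and $\mathcal B_k, \mathcal B_k'$ at the outgoing end, with volumes prescribed by the signed volume curves.

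For each $k$, the composite $\Dens_\C$-bordisms
\begin{equation*}
\mathcal B_k \circ (\mathcal X, \mathring\omega_X) \circ \mathcal A_k
\quad\text{and}\quad
\mathcal B_k' \circ (\mathcal X, \mathring\omega_X') \circ \mathcal A_k'
\end{equation*}
share source and target (namely slice germs of $\omega_i$ at the unprimed times) and have diffeomorphic underlying smooth bordisms---both are $\mathcal X$ with cylindrical collars appended. The hypothesis $\int_X \mathring\omega_X = \int_X \mathring\omega_X'$, together with the prescribed cylindrical volumes, forces their total volumes to coincide, so Theorem~\ref{thm:vol dependence thm} yields
\begin{equation*}
Z(\mathcal B_k) \circ Z(\mathcal X, \mathring\omega_X) \circ Z(\mathcal A_k) = Z(\mathcal B_k') \circ Z(\mathcal X, \mathring\omega_X') \circ Z(\mathcal A_k').
\end{equation*}

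Passing to the limit in $k$ using coherence, the accumulated compositions of $Z(\mathcal A_k), Z(\mathcal A_k')$ on the incoming side assemble, via the interwoven inverse system on $\Sigma_0$, into the isomorphism chain~(\ref{eq:iso chain 2}) whose total composition is $\hat\alpha_0$; analogously, the $Z(\mathcal B_k), Z(\mathcal B_k')$ on the outgoing side build $\check\alpha_1$. The universal properties of the direct and inverse limits then force the desired identity $\check\alpha_1 \circ Z(\mathcal X, \mathring\omega_X) = Z(\mathcal X, \mathring\omega_X') \circ \hat\alpha_0$. The principal technical point is the volume bookkeeping: the interweaving sequences at the two boundaries must be chosen consistently so that the four cylindrical volumes combine correctly on each side of the equality, which is possible by the freedom in selecting cylindrical bordisms with prescribed volumes in $\C_{>0}$ subject only to the analogues of the nesting conditions~(\ref{eq:nested volumes}), exactly as in Proposition~\ref{prop:canonical NP isos}. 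The remainder is a diagram chase in the interwoven system.
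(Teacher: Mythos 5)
Your proposal is correct and follows essentially the same route as the paper's proof: interweave the cylindrical systems of $\omega_0,\omega_0'$ at the incoming end and of $\omega_1,\omega_1'$ at the outgoing end by collar bordisms of prescribed volumes, use Theorem \ref{thm:vol dependence thm} to equate the resulting composites with $\mathcal X$ in the middle, and pass to the limits and colimits via coherence and universal properties to identify the boundary maps with $\hat\alpha_0,\check\alpha_1$. The only point the paper addresses that you elide is the degenerate case of an empty boundary component, where the corresponding canonical isomorphism is the identity on $\C \hookrightarrow \C$ and the argument becomes one-sided or reduces directly to Theorem \ref{thm:vol dependence thm}.
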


\begin{proof}
If $\Sigma_0, \Sigma_1$ are both empty then the result is implied by Theorem \ref{thm:vol dependence thm}. Suppose $\Sigma_0$ is nonempty and let $\omega_N, \omega_N' \in \Dens_\C(N)$ be densities on $N$ whose restrictions to $\mathring U^N_X$ are $\mathring\omega_X, \mathring\omega_X'$. Let $\theta_i: C_{\Sigma_i} \mono N$ be an embedding onto a cylindrical neighborhood of $\partial X_i \subset N$ whose germ is $\mathring\theta_i$. Without loss of generality we assume that $N = X \cup \Ima(\theta_0) \cup \Ima(\theta_1)$. Set $\omega_i := \theta_i^*\omega_N$ and $\omega_i' := \theta_i^*\omega_N'$ which are densities on the cylinder $C_{\Sigma_i}$ and let $\mathring \omega_{i,t} := \mathring\theta_t^* \omega_i$ , $\mathring\omega_{i,t}' := \mathring\theta_t^*\omega_i'$. 

As in the previous proof, there exist sequences $\{t_i\} \subset \R_{>0}$ and $\{s_i\} \subset \R_{<0}$ converging to 0 and $\Dens_\C$-bordisms
\begin{equation}
\begin{split}
\mathcal B_{2i}: (\mathring \Sigma^+,\mathring\omega'_{0,t_{2i+1}}) &\rightsquigarrow (\mathring \Sigma^+,\mathring\omega_{0,t_{2i}})\\
\mathcal B_{2i+1}: (\mathring \Sigma^+,\mathring\omega_{0,t_{2i+2}}) &\rightsquigarrow (\mathring \Sigma^+,\mathring\omega'_{0,t_{2i+1}})
\end{split}
\quad\quad
\begin{split}
\mathcal A_{2i}: (\mathring \Sigma^+,\mathring\omega_{1,s_{2i}}) &\rightsquigarrow (\mathring \Sigma^+,\mathring\omega'_{1,s_{2i+1}})\\
 \mathcal A_{2i+1}: (\mathring \Sigma^+,\mathring\omega'_{1,s_{2i+1}}) &\rightsquigarrow (\mathring \Sigma^+,\mathring\omega_{1,s_{2i+2}})
\end{split}
\end{equation}
forming a system of maps
\begin{equation}
\begin{tikzcd}
\cdots \hat E_{\mathring \omega_{0,t_2}} 
   \arrow[r,"\hat Z(\mathcal B_1)"] 
& \hat E_{\mathring\omega_{0,t_1}'} 
   \arrow[r, "\hat Z(\mathcal B_0)"]
& \hat E_{\mathring\omega_{0,t_0}}
   \arrow[rr, "Z(\mathcal X)"]
& & \check E_{\mathring\omega_{1,s_0}}
   \arrow[r, "\check Z(\mathcal A_0)"]
& \check E_{\mathring \omega_{1,s_1}'}
   \arrow[r, "\check Z(\mathcal A_1)"]
& \check E_{\mathring \omega_{1,s_2}} \cdots
\end{tikzcd} 
\end{equation}
such that the even terms are final and cofinal subsequences of $\mathcal D_{\omega_0}, \mathcal C_{\omega_1}$, and the odd terms are final and cofinal subsequences of $\mathcal D_{\omega_0'}$ and $\mathcal C_{\omega_1'}$. Here $\mathcal X: (\mathring\Sigma_0^+, \mathring\omega_{0,t_0}) \rightsquigarrow (\mathring\Sigma_1^+, \mathring\omega_{1,s_0})$ is any $\Dens_\C$ bordism of type $X$ with volume $\int_{X_{t_0,s_0}} \mathring\omega_X$, where 
\begin{equation}
X_{t_0,s_0} = X \backslash \left( \theta_0([0, t_0]) \cup \theta_1([s_0, 0])\right)
\end{equation}
Taking limits and colimits of the odd and even subsequences and applying coherence and the universal properties gives the result.
\end{proof}

\begin{definition}\label{def:semigroup bord}
Let $(S,+)$ be a commutative semigroup. Define $\Bord^s_{n,n-1}(S)$ to be the semicategory with objects consisting of closed $n-1$-manifolds and morphisms consisting of pairs $(X,s)$ of a bordism $X$ with $m:= \#\pi_0(X)$ connected components each labeled by an element of $S$ which we represent by the tuple $s:= (s_1,...,s_m) \in S^m$. Composition of morphisms is defined by
\begin{equation}
(X,s) \circ (Y,t) := (X\circ Y, s\oplus t)
\end{equation}
where $X \circ Y$ is composition of ordinary bordisms and $s\oplus t \in S^{\#\pi_0(X\circ Y)}$ is the tuple obtained from composing labels whenever they are part of the same connected component in $X \circ Y$. The symmetric monoidal product is disjoint union. We define $\Bord_{n,n-1}(S)$ to be the category obtained by adding identity morphisms to $\Bord^s_{n,n-1}(S)$.
\end{definition}

\begin{remark}
$\Bord_{n,n-1}(S) = \Bord_{n,n-1}^s(S)$ if and only if $S$ is a monoid, i.e. has a unit.
\end{remark}

We will now set $S:= (\C_{>0},+)$ to be the semigroup of complex numbers with positive real part.

\begin{definition}\label{def:holo functor def}
Let $F : \Bord_{n,n-1}(\C_{>0}) \raw \mathcal{NP}$ be a nuclear symmetric monoidal functor and let $\check E_\Sigma \mono \hat E_\Sigma$ be the nuclear pair assigned to a closed $n-1$-manifold $\Sigma$. We say $F$ is \emph{holomorphic} if for every bordism $X: \Sigma_0 \rightsquigarrow \Sigma_1$, the map 
\begin{equation}\label{eq:Z_X holo}
\begin{split}
F_X: (\C_{>0})^{\#\pi_0(X)} &\raw \Hom(\hat E_{\Sigma_0}, \check E_{\Sigma_1})\\
s & \mapsto Z(X,s)
\end{split}
\end{equation}
is holomorphic.
\end{definition}

\begin{construction}\label{V construction}
For each closed $n-1$-manifold $\Sigma$, pick a positive $n-1$-dimensional density $\nu_\Sigma$ of total volume 1 on $\Sigma$ and let $\omega_\Sigma := \nu_\Sigma \wedge dt \in \Dens_\R(C_\Sigma)$. We note that its germ $\mathring\omega_\Sigma$ at $\mathring \Sigma$ is time symmetric, i.e. $\tau_{\Dens_\C}(\mathring \Sigma^+, \mathring\omega_\Sigma) = (\mathring \Sigma^+, \mathring\omega_\Sigma)$ is a $\tau_{\Dens_\C}$-fixed point in $\Bord_{n,n-1}^s(\Dens_\C)$.

Let $\check E_\Sigma \mono \hat E_\Sigma$ be the nuclear pair assigned to $\mathring\omega_\Sigma$ by $Z$.
Define a functor
\begin{equation}
V: \Bord_{n,n-1}(\C_{>0}) \raw \mathcal{NP}^{h,nuc}
\end{equation}
that sends an object $\Sigma$ to the nuclear pair $\check E_\Sigma \mono \hat E_\Sigma$ and a bordism $(X,s): \Sigma_0 \rightsquigarrow \Sigma_1$ to the nuclear morphism
\begin{equation}\label{eq:Z bar morphism}
\begin{tikzcd}
\check E_{\Sigma_0} \arrow[d, "\check{V}(X{,}s)"'] \arrow[r,hook] & \hat E_{\Sigma_0} \arrow[d, "\hat V(x{,}s)"] \arrow[ld] \\
\check E_{\Sigma_1} \arrow[r,hook] & \hat E_{\Sigma_1}
\end{tikzcd}
\end{equation}
defined by
\begin{equation}
\check V(X,s) := \check Z(\mathcal X, \mathring\omega_X) \quad \text{and} \quad \hat V(X,s) := \hat Z(\mathcal X, \mathring\omega_X)
\end{equation}
where 
\begin{equation}
(\mathcal X, \mathring\omega_X): (\mathring\Sigma_0, \mathring\omega_{\Sigma_0}) \rightsquigarrow (\mathring\Sigma_1,\mathring\omega_{\Sigma_1})
\end{equation}
is any $\Dens_\C$-bordism of type $X$ whose total volume on each connected component of $X$ is given by $s \in (\C_{>0})^{\#\pi_0(X)}$. 
\end{construction}

\begin{remark}
$V$ is well defined because by Theorem \ref{thm:vol dependence thm} any choice of $\Dens_\C$-bordism $(\mathcal X, \mathring\omega_X)$ of total volume $s$ gives the same map (\ref{eq:Z bar morphism}).
\end{remark}

\begin{remark}\label{rem:V is holo}
By Theorem \ref{thm:vol dependence thm}, $V$ is holomorphic (cf. Definition \ref{def:holo functor def}).
\end{remark}

\begin{lemma}\label{lem:int holo}
Let $\mathcal X := (\mathring U^N_X, p, \mathring \theta_0, \mathring \theta_1)$ be a bordism from $\mathring \Sigma_0$ to $\mathring\Sigma_1$. Then the integration map
\begin{equation}\label{eq:intX map}
\int_X : \Dens_\C(\mathring U^N_X) \raw \C_{>0}^{\#\pi_0(X)}
\end{equation}
is holomorphic.
\end{lemma}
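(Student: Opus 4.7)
The plan is to chase the definitions of holomorphicity down to a holomorphic local-coordinate expression for a density and then invoke the classical fact that integration against a compactly supported smooth density preserves holomorphic parameter dependence.

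First, unpacking Definitions \ref{def:holo def 1}--\ref{def: holo def 3}, it suffices to show that for every finite-dimensional complex manifold $S$ and every holomorphic map $f: S \raw \Dens_\C(\mathring U^N_X)$, the composition $\int_X \circ f: S \raw \C_{>0}^{\#\pi_0(X)}$ is holomorphic as a map of complex manifolds. By Definition \ref{def:holo def 2}, any such $f$ factors as $S \xrightarrow{\tilde f} \Dens_\C(U) \raw \Dens_\C(\mathring U^N_X)$ for some $U \in \mathfrak U^N_X$, and since $\int_X$ depends only on the restriction of a density to $X \subset U$, I may work with $\tilde f$ directly. By Definition \ref{def:holo def 1}, $\tilde f$ corresponds to a smooth section $\sigma_{\tilde f}$ of $pr_2^*\dens_\C(U) \raw S\times U$ whose restriction to each $\{s\}\times U$ is an allowable density $\omega_s$ on $U$ and which is holomorphic in $S$ fiberwise.

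The pullback-square condition (\ref{eq:pullback square condition}) forces $\dens_\C(U)\raw U$ to have a fixed complex manifold fiber admitting holomorphic local trivializations. Consequently, on any coordinate chart $V_\alpha \subset U$ the section takes the form $\omega_s|_{V_\alpha} = h_\alpha(s,x)\, |dx^1 \wedge \cdots \wedge dx^n|$ with $h_\alpha: S\times V_\alpha \raw \C_{>0}$ smooth in $x$ and holomorphic in $s$. For each connected component $X_i \subset X$, compactness furnishes a finite cover $\{V_\alpha\}$ of $X_i$ by such charts together with a subordinate smooth partition of unity $\{\rho_\alpha\}$, so
\begin{equation*}
\int_{X_i} \omega_s = \sum_\alpha \int_{X_i\cap V_\alpha} \rho_\alpha(x)\, h_\alpha(s,x)\, dx^1\cdots dx^n.
\end{equation*}

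Holomorphicity in $s$ of each summand follows from Morera's theorem combined with Fubini: for any closed triangle $T$ in a holomorphic coordinate disc $D \subset S$, one has $\oint_{\partial T} \int_{X_i\cap V_\alpha} \rho_\alpha h_\alpha\, dx\, ds = \int_{X_i\cap V_\alpha}\rho_\alpha \oint_{\partial T} h_\alpha\, ds\, dx = 0$ since $h_\alpha(\cdot,x)$ is holomorphic. Summing over $\alpha$ and tupling over the components $X_i$ of $X$ gives the claim. The argument is almost entirely routine bookkeeping; the only point requiring care is extracting the local-coordinate description of a holomorphic section of $pr_2^*\dens_\C(U)$, which is exactly what condition (\ref{eq:pullback square condition}) enables.
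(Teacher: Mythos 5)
Your proof is correct, but it takes a genuinely different route from the paper's. The paper's argument is essentially two lines: it factors $\int_X$ through the restriction $\Dens_\C(\mathring U^N_X) \raw \Dens_\C(X)$ (holomorphic by definition), observes that integration over components extends to a continuous \emph{complex-linear} map on the full vector space $C^\infty(\Lambda^n(T^*X)\otimes\mathfrak o(X)\otimes \C)$ of complex densities, and notes that $\Dens_\C(X)$ is an open subset of that complex vector space — so holomorphicity is automatic, since a continuous complex-linear map is its own differential. You instead verify the definition directly: pull back along an arbitrary holomorphic $f: S \raw \Dens_\C(\mathring U^N_X)$, represent it on a neighborhood $U$, write the density in local coordinates as $h_\alpha(s,x)\,|dx|$ with $h_\alpha$ holomorphic in $s$, and conclude via a partition of unity, compactness, and Morera plus Fubini. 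Both work; the paper's proof buys brevity and avoids coordinates entirely by exploiting linearity of integration, while yours is more elementary and self-contained in that it never needs to know that continuous linear maps of topological vector spaces are holomorphic in the paper's sense. Two small points of hygiene in your write-up: when $\dim_\C S > 1$ the Morera argument on a single triangle in a coordinate disc only gives holomorphicity along one complex direction, so you should invoke separate holomorphicity in each coordinate together with continuity (Osgood/Hartogs) to conclude; and you implicitly use that the value of the integral lies in $\C_{>0}$ rather than just $\C$, which follows from pointwise positivity of the real part of an allowable density but deserves a sentence.
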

\begin{proof}
The map (\ref{eq:intX map}) is a composition
\begin{equation}
\Dens_\C(\mathring U^N_X) \raw \Dens_\C(X) \xrightarrow{\int_X'} \C_{>0}^{\#\pi_0(X)}
\end{equation}
where the first map is the restriction map and the second is the map that integrates a density on each connected component of $X$. That the restriction map is holomorphic follows from the definitions (cf. Definition \ref{def: holo def 3}). 

There is an extension of $\int_X'$ to a complex linear map
\begin{equation}
\int_X': C^\infty(\Lambda^n(T^*X) \otimes \mathfrak o(X) \otimes\C) \raw \C^{\#\pi_0(X)}
\end{equation}
out of the vector space of complex top dimensional forms twisted by the orientation line bundle on $X$. As $\Dens_\C(X)$ is an open subset of this complex vector space, this shows holomorphicity. 
\end{proof}

Define a functor
\begin{equation}\label{eq:int functor}
\int: \Bord_{n,n-1}(\Dens_\C) \raw \Bord_{n,n-1}(\C_{>0})
\end{equation}
that sends an object $(\mathring \Sigma, \mathring\omega)$ to $\Sigma$, isomorphisms to the identity morphism, and a $\Dens_\C$-bordism $(\mathcal X, \mathring\omega_X): (\mathring\Sigma_0, \mathring\omega_0) \rightsquigarrow (\mathring\Sigma_1,\mathring\omega_1)$ to $(X,s): \Sigma_0 \rightsquigarrow \Sigma_1$ where $s := (\int_{X_j} \mathring\omega_X) \in (\C_{>0})^{\#\pi_0(X)}$ is the tuple of total volumes on each connected component $X_j$ of $X$.

\begin{proposition}\label{prop:proposition: Z' field theory}
The composition
\begin{equation}
Z': \Bord_{n,n-1}(\Dens_\C) \xrightarrow{\int} \Bord_{n,n-1}(\C_{>0}) \xrightarrow{V} \mathcal{NP}^{h,nuc}
\end{equation}
is a $\Dens_\C$-field theory.
\end{proposition}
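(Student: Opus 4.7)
The plan is to verify each of the four defining conditions of Definition \ref{def:Field Theory Def} in turn, exploiting the factorization $Z' = V \circ \int$ and reducing the nontrivial points (holomorphicity and coherence) to results already established for $V$, $\int_X$, and $Z$ itself. Symmetric monoidality is immediate since both $\int$ and $V$ are symmetric monoidal, and nuclearity is immediate because $V$ lands in $\mathcal{NP}^{h,nuc}$, every morphism of which is nuclear.

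For holomorphicity (Definition \ref{def:holo def}), condition (1) is essentially trivial: by construction $Z'$ assigns to every $(\mathring\Sigma^+,\mathring\omega)$ the same nuclear pair $\check E_\Sigma\mono \hat E_\Sigma$, so the family (\ref{eq:nuclear pair family}) over $\Dens_\C(\mathring\Sigma)$ is the constant bundle and its pullback along any holomorphic $S\raw \Dens_\C(\mathring\Sigma)$ is the trivial holomorphic bundle. For condition (2), I would observe that on a bordism $\mathcal X$ the map $Z'_{\mathcal X}$ factors as
\begin{equation*}
\Dens_\C(\mathring U^N_X;\mathring\omega_0,\mathring\omega_1)\xrightarrow{\int_X}(\C_{>0})^{\#\pi_0(X)}\xrightarrow{V_X}\Hom(\hat E_{\Sigma_0},\check E_{\Sigma_1}),
\end{equation*}
where $\int_X$ is holomorphic by Lemma \ref{lem:int holo} and $V_X$ is holomorphic by Remark \ref{rem:V is holo}; the family (\ref{eq:hom family}) is thus a composition of holomorphic bundle maps pulled back from $\C_{>0}^{\#\pi_0(X)}$.

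For coherence, the key idea is that Propositions \ref{prop:canonical NP isos} and \ref{prop:canonical iso is functorial} together assemble into a natural isomorphism $\alpha\colon Z\Rightarrow Z'$: at an object $(\mathring\Sigma^+,\mathring\omega)$ one takes $\alpha_{(\mathring\Sigma^+,\mathring\omega)}$ to be the canonical isomorphism of nuclear pairs from $\check E_{\mathring\omega}\mono\hat E_{\mathring\omega}$ to $\check E_\Sigma\mono\hat E_\Sigma = Z'(\mathring\Sigma^+,\mathring\omega)$, and naturality on $\Dens_\C$-bordisms is exactly Proposition \ref{prop:canonical iso is functorial} (together with Proposition \ref{prop:canonical iso is functorial 0} for the germ-diffeomorphism isomorphisms added to pass from the semicategory to the category). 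Fixing a representative $\omega\in\Dens_\C(C_\Sigma)$ of the germ $\mathring\omega$, the transformation $\alpha$ then induces isomorphisms of directed and inverse systems $\mathcal C^Z_\omega\xrightarrow{\sim}\mathcal C^{Z'}_\omega$ and $\mathcal D^Z_\omega\xrightarrow{\sim}\mathcal D^{Z'}_\omega$, hence isomorphisms on the colimit and limit. Composing these with the coherence isomorphisms of $Z$ and with $\alpha_{(\mathring\Sigma^+,\mathring\omega)}$ itself yields the required commuting coherence square for $Z'$.

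The main obstacle is the coherence step: one must check that the canonical isomorphisms of Proposition \ref{prop:canonical NP isos} are genuinely compatible with every cylindrical system map $\check V(X_{s,s'},\int_{\Sigma\times[s,s']}\omega)$ used in $\mathcal C^{Z'}_\omega$, i.e.\ that for each $s<s'<0$ the square relating $\check Z(\mathcal X_{s,s'},\mathring\omega_{X_{s,s'}})$ to its $Z'$-counterpart commutes. This is precisely the content of Proposition \ref{prop:canonical iso is functorial} applied to the pair of $\Dens_\C$-bordisms $(\mathcal X_{s,s'},\mathring\omega_{X_{s,s'}})$ and any bordism $(\mathcal X_{s,s'},\mathring\omega_X')$ of equal total volume connecting $(\mathring\Sigma^+,\mathring\omega_\Sigma)$ to itself, the latter being by definition the realization of $V(X_{s,s'},\int_{\Sigma\times[s,s']}\omega)$. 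Once this identification is in place, the universal properties of colim and lim carry coherence from $Z$ to $Z'$ automatically.
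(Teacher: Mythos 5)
Your proof is correct and, for nuclearity and holomorphicity, is the same argument as the paper's: nuclearity by construction, condition (1) of holomorphicity because the bundle of nuclear pairs is constant, and condition (2) by factoring $Z'_{\mathcal X}$ through $\int_X$ (Lemma \ref{lem:int holo}) and $V_X$ (holomorphic by Theorem \ref{thm:vol dependence thm}). The only place you diverge is coherence, and there the difference is organizational rather than substantive. The paper compares the $Z'$-cylindrical systems of $\omega$ directly with those of the reference density $\omega_\Sigma$ — both are systems of $\check V$- and $\hat V$-maps on the fixed pair $\check E_\Sigma \mono \hat E_\Sigma$ — identifies them by the interweaving argument of Proposition \ref{prop:canonical NP isos}, and then invokes coherence of $Z$ only at $\omega_\Sigma$. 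You instead assemble the natural isomorphism $Z \Rightarrow Z'$ first (legitimately, since Propositions \ref{prop:canonical NP isos}, \ref{prop:canonical iso is functorial 0} and \ref{prop:canonical iso is functorial} all precede this proposition, so there is no circularity with Proposition \ref{prop:Z nat iso Z'}), obtain a levelwise isomorphism $\mathcal C^Z_\omega \cong \mathcal C^{Z'}_\omega$ and $\mathcal D^Z_\omega \cong \mathcal D^{Z'}_\omega$, and transport coherence of $Z$ at $\omega$ itself across it. Your route requires the extra (routine) check that the composite isomorphism $\check E_{\mathcal C^{Z'}_\omega} \to \check E_\Sigma$ agrees with the map induced by the cocone $\{\check V(-\gamma_\omega(s))\}$, which is again an application of Proposition \ref{prop:canonical iso is functorial} to the tail bordisms $\mathcal X_{s,0}$; you identify this as the main obstacle and resolve it correctly. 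The paper's version is marginally leaner because it never needs coherence of $Z$ away from the reference density, but both arguments reduce to the same two propositions and the universal properties of (co)limits.
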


\begin{proof}
We check the following criteria:
\begin{enumerate}
\item \emph{$Z'$ is nuclear}: by construction.

\item \emph{$Z'$ is holomorphic}: The bundle (\ref{eq:pullback np}) is trivial with fiber $V(\Sigma) = \check E_\Sigma \mono \hat E_\Sigma$ and in particular it is holomorphic.

Let $\mathcal X:= (\mathring U^N_X, p, \mathring\theta_0, \mathring\theta_1)$ be a bordism from $\mathring \Sigma_0$ to $\mathring\Sigma_1$. By Lemma \ref{lem:int holo}, the composition
\begin{equation}
\Dens_\C(\mathring U^N_X) \xrightarrow{\int_X} \C_{>0}^{\#\pi_0(X)} \xrightarrow{V_X} \Hom(\hat E_{\Sigma_0}, \check E_{\Sigma_1})
\end{equation}
is holomorphic. This is equivalent to a holomorphic section of the trivial bundle
\begin{equation}
\begin{tikzcd}
\underline{\Hom(\hat E_{\Sigma_0}, \check E_{\Sigma_1})} \arrow[d]\\
\Dens_\C(\mathring U^N_X) \arrow[u, bend right, "Z'_{\mathcal X}"']
\end{tikzcd}
\end{equation}
which is (\ref{eq:hom family2}) for the field theory $Z'$ and shows holomorphicity.

\item \emph{$Z'$ is coherent}: 
Let $(\mathring\Sigma, \mathring\omega)$ be an object in $\Bord_{n,n-1}(\Dens_\C)$ and let $\omega \in \Dens_\C(C_\Sigma)$ be a density with germ $\mathring \omega$ at $\mathring\Sigma$. Let $\gamma_\omega: \R \raw \C^{\#\pi_0(\Sigma)}$ be the signed volume curve of $\omega$ (cf. Definition \ref{def:signed volume curve}).

The direct and inverse cylindrical systems $\mathcal C_\omega, \mathcal D_\omega$ (cf. Definition \ref{def:cylindrical systems}) are given by the systems of maps
\begin{equation}
\begin{split}
\check E_\Sigma \xrightarrow{\check V(\gamma_\omega(s')-\gamma_\omega(s))} \check E_\Sigma\\
\hat E_\Sigma \xrightarrow{\hat V(\gamma_\omega(t') - \gamma_\omega(t))} \hat E_\Sigma
\end{split}
\end{equation}
and the systems $\mathcal C_{\omega_\Sigma}, \mathcal D_{\omega_\Sigma}$ are given by
\begin{equation}
\begin{split}
\check E_\Sigma \xrightarrow{\check V(s'-s)} \check E_\Sigma\\
\hat E_\Sigma \xrightarrow{\hat V(t'-t)} \hat E_\Sigma
\end{split}
\end{equation}
for all $s < s' < 0 < t < t'$.

By universal property, there are unique maps making the diagrams
\begin{equation}
\begin{tikzcd}
\check E_\Sigma \arrow[r] \arrow[rd, "\check V(-\gamma(s))"'] & \check E_{\mathcal C_\omega} \arrow[d, dashed, "\check \varphi"]\\
& \check E_\Sigma
\end{tikzcd}
\quad \quad
\begin{tikzcd}
\hat E_{\mathcal D_\omega} \arrow[r] & \hat E_\Sigma \\
\hat E_\Sigma \arrow[u, dashed, "\hat \varphi"] \arrow[ur, "\hat V(\gamma(s))"']&
\end{tikzcd}
\end{equation}
commute. Coherence follows from noticing that $\check \varphi, \hat \varphi$ have a factorization into isomorphisms
\begin{equation}
\begin{tikzcd}
\check E_{\mathcal C_\omega} \arrow[dd,bend right, "\check \varphi"'] \arrow[r] \arrow[d, no head, "\cong"rot] & \hat E_{\mathcal D_\omega}\arrow[d, no head, "\cong"rot]\\
\check E_{\mathcal C_{\omega_\Sigma}} \arrow[r] \arrow[d, no head, "\cong"rot] & \hat E_{\mathcal D_{\omega_\Sigma}} \arrow[d, no head, "\cong"rot]\\
\check E_\Sigma  \arrow[r,hook] & \hat E_\Sigma \arrow[uu, bend right, "\hat\varphi"']
\end{tikzcd}
\end{equation}
where the top isomorphism square is from Proposition \ref{prop:canonical NP isos} and the bottom isomorphism square is from coherence of $Z$.
\end{enumerate}
\end{proof}

\begin{proposition}\label{prop:Z nat iso Z'}
$Z$ is naturally isomorphic to $Z'$.
\end{proposition}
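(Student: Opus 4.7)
The plan is to define the natural isomorphism $\eta\colon Z \Rightarrow Z'$ componentwise from the canonical isomorphisms of nuclear pairs furnished by Proposition~\ref{prop:canonical NP isos}. For each object $(\mathring\Sigma^+, \mathring\omega)$ in $\Bord_{n,n-1}(\Dens_\C)$, set $\eta_{(\mathring\Sigma^+,\mathring\omega)} := (\check\alpha, \hat\alpha)$ to be the canonical isomorphism from the nuclear pair $\check E_{\mathring\omega} \mono \hat E_{\mathring\omega}$ that $Z$ assigns to $(\mathring\Sigma^+,\mathring\omega)$ to the nuclear pair $\check E_\Sigma \mono \hat E_\Sigma$ that $Z$ assigns to the fixed density $(\mathring\Sigma^+,\mathring\omega_\Sigma)$ chosen in Construction~\ref{V construction}. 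These components are isomorphisms by definition; the remaining task is to verify naturality with respect to the two types of morphisms in $\Bord_{n,n-1}(\Dens_\C)$.

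First I would handle $\Dens_\C$-bordisms $(\mathcal X, \mathring\omega_X)\colon (\mathring\Sigma_0^+, \mathring\omega_0) \rightsquigarrow (\mathring\Sigma_1^+, \mathring\omega_1)$. Setting $s := \int_X \mathring\omega_X$, the definition of $V$ in Construction~\ref{V construction} (combined with the well-definedness guaranteed by Theorem~\ref{thm:vol dependence thm}) gives $Z'(\mathcal X, \mathring\omega_X) = V(X, s) = Z(\mathcal X, \mathring\omega_X')$ for any $\Dens_\C$-bordism $(\mathcal X, \mathring\omega_X')\colon (\mathring\Sigma_0^+, \mathring\omega_{\Sigma_0}) \rightsquigarrow (\mathring\Sigma_1^+, \mathring\omega_{\Sigma_1})$ on the same underlying bordism with total volume $s$ on each connected component. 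The required naturality square for $\eta$ is then precisely the conclusion of Proposition~\ref{prop:canonical iso is functorial} applied to the pair $(\mathcal X, \mathring\omega_X)$ and $(\mathcal X, \mathring\omega_X')$, whose matching total volumes are exactly the hypothesis needed.

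Next I would verify naturality on isomorphisms $\mathring f \in \Diff(\mathring\Sigma^+)$ satisfying $\mathring f^*\mathring\omega' = \mathring\omega$. Here $Z'(\mathring f) = V(\mathrm{id}_\Sigma) = \mathrm{id}$ because the functor $\int$ sends isomorphisms to identities, while Proposition~\ref{prop:canonical iso is functorial 0} identifies $Z(\mathring f)$ with the canonical isomorphism from $\check E_{\mathring\omega} \mono \hat E_{\mathring\omega}$ to $\check E_{\mathring\omega'} \mono \hat E_{\mathring\omega'}$. Naturality then reduces to the transitivity statement: the canonical isomorphism $\mathring\omega \rightsquigarrow \mathring\omega'$ followed by the canonical isomorphism $\mathring\omega' \rightsquigarrow \mathring\omega_\Sigma$ equals the canonical isomorphism $\mathring\omega \rightsquigarrow \mathring\omega_\Sigma$. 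This follows from the universal properties of the direct and inverse limits applied to a single interwoven system incorporating all three germs, obtained by refining the nested-volume conditions (\ref{eq:nested volumes}) to include a third density.

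The substantive content is already carried by Propositions~\ref{prop:canonical NP isos}, \ref{prop:canonical iso is functorial 0}, and \ref{prop:canonical iso is functorial}. The main subtlety I expect is the transitivity step for the isomorphism case; it is conceptually routine but requires carefully choosing interwoven sequences compatible across three densities so that the comparison maps between direct (respectively inverse) limits compose correctly. No new technical ingredient beyond the existing machinery should be required.
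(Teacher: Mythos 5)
Your proposal is correct and follows the same route as the paper: the components of the natural isomorphism are the canonical isomorphisms of Proposition~\ref{prop:canonical NP isos}, with naturality on bordisms given by Proposition~\ref{prop:canonical iso is functorial} and on diffeomorphism isomorphisms by Proposition~\ref{prop:canonical iso is functorial 0}. Your explicit flagging of the transitivity (cocycle) property of the canonical isomorphisms, which the paper leaves implicit, is a correct and worthwhile refinement, and your proposed three-density interwoven-system argument is the right way to establish it.
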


\begin{proof}
For any object $(\mathring\Sigma, \mathring \omega)$ in $\Bord_{n,n-1}^s(\Dens_\C)$, let $\check E_{\mathring\omega} \mono \hat E_{\mathring\omega}$ denote the nuclear pair assigned to it by $Z$. Define $\alpha(\mathring\Sigma, \mathring\omega)$ to be the canonical isomorphism 
\begin{equation}
\begin{tikzcd}
\check E_{\mathring\omega} \arrow[r,hook] \arrow[d, "\cong"rot] &\hat E_{\mathring\omega} \arrow[d, "\cong" rot]\\
\check E_\Sigma \arrow[r,hook] &\hat E_\Sigma
\end{tikzcd}
\end{equation}
given by Proposition \ref{prop:canonical NP isos}. Then Propositions \ref{prop:canonical iso is functorial 0} and \ref{prop:canonical iso is functorial} show that this defines a natural isomorphism $\alpha: Z \implies Z'$.
\end{proof}


\subsection{The Lorentzian Limit}\label{section lorentzian limit}

Let $V: \Bord_{n,n-1}(\C_{>0}) \raw \mathcal{NP}$ be the functor from Construction \ref{V construction}.
Let $X: \Sigma_0 \rightsquigarrow \Sigma_1$ be a bordism of volume $\tau \in (\C_{>0})^{\#\pi_0(X)}$ between closed $n-1$-manifolds $\Sigma_0$ and $\Sigma_1$.
 The theory $V$ assigns to $(X,\tau)$ a morphism of nuclear pairs
\begin{equation}
\begin{tikzcd}
\check E_{\Sigma_0} \arrow[d, "\check V_X(\tau)"'] \arrow[r, hook, "\iota_{\Sigma_0}"] 
&\hat E_{\Sigma_0} \arrow[d, "\hat V_X(\tau)"] \arrow[dl]\\
\check E_{\Sigma_1} \arrow[r,hook, "\iota_{\Sigma_1}"] 
&\hat E_{\Sigma_1}
\end{tikzcd}
\end{equation}

\begin{notation}
Given bordisms $X: \Sigma_0 \rightsquigarrow \Sigma_1$ and $Y: \Sigma_1 \rightsquigarrow \Sigma_2$, let
\begin{equation}
\oplus_{Y\circ X}: (\C_{>0})^{\#\pi_0(Y)} \times (\C_{>0})^{\#\pi_0(X)} \raw (\C_{>0})^{\#\pi_0(Y\circ X)}
\end{equation}
be the function defined by the composition
\begin{equation}
(Y\circ X, s \oplus_{Y\circ X} t) := (Y,s) \circ (X,t)
\end{equation}
of bordisms in $\Bord_{n,n-1}(\C_{\geq 0})$ (cf. Definition \ref{def:semigroup bord}). When the context is clear we will omit the subscript and simply write $s \oplus t$.
\end{notation}

\begin{theorem}\label{thm:UP for L}
Let $X: \Sigma_0 \rightsquigarrow \Sigma_1$ be a bordism and let $\zeta \in \R^{\#\pi_0(X)}$. Then there exist unique maps 
\begin{equation}
\check L(X,i\zeta): \check E_{\Sigma_0} \raw \check E_{\Sigma_1}\quad\quad \text{and} \quad\quad
\hat L(X,i\zeta): \hat E_{\Sigma_0} \raw \hat E_{\Sigma_1}
\end{equation}
such that for all $\tau^{(i)} = (\tau^{(i)}_j) \in (\C_{>0})^{\#\pi_0(\Sigma_i)}$, the diagrams
\begin{equation}\label{eq:L check L hat}
\begin{tikzcd}[column sep = large]
\check E_{\Sigma_0} \arrow[r, "\check V_{\Sigma_0 \times I}(\tau^{(0)})"] \arrow[rd, "\check V_X(i\zeta \oplus \tau^{(0)})"'] & \check E_{\Sigma_0} \arrow[d, "\check{L}(X {,} i\zeta)"]\\
 & \check E_{\Sigma_1}
\end{tikzcd}
\quad\text{and}\quad
\begin{tikzcd}[column sep = large]
\hat E_{\Sigma_0} \arrow[d, "\hat L(X{,}i\zeta)"'] \arrow[rd, "\hat V_X(i\zeta \oplus \tau^{(1)})"]\\
 \hat E_{\Sigma_1} \arrow[r, "\hat V_{\Sigma_1 \times I}(\tau^{(1)})"'] & \hat E_{\Sigma_1}
\end{tikzcd}
\end{equation}
commute when 
\begin{equation}
\Sigma_0 \neq \emptyset \quad\quad\text{and}\quad\quad \Sigma_1 \neq \emptyset
\end{equation}
respectively.
\end{theorem}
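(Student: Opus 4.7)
The plan is to construct $\check L(X, i\zeta)$ using the universal property of a direct limit and $\hat L(X, i\zeta)$ using the universal property of an inverse limit. By coherence of the VFT (as verified for $V$ in Proposition \ref{prop:proposition: Z' field theory}, applied componentwise on $\Sigma_0$ together with the symmetric monoidal structure), $\check E_{\Sigma_0}$ is the colimit $\colim_{s \in \R_{<0}^{\#\pi_0(\Sigma_0)}} \check E_{\Sigma_0}$ with transition $\check V_{\Sigma_0 \times I}(s' - s)$ from the $s$-copy to the $s'$-copy (for $s < s'$ componentwise) and canonical map $\check V_{\Sigma_0 \times I}(-s)$ into the colimit. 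Dually, $\hat E_{\Sigma_1}$ is the inverse limit $\invlim_{t \in \R_{>0}^{\#\pi_0(\Sigma_1)}} \hat E_{\Sigma_1}$ with projection $\hat V_{\Sigma_1 \times I}(t)$ onto the $t$-copy.

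For $\check L(X, i\zeta)$ assume $\Sigma_0 \neq \emptyset$ and define
\[ g_s := \check V_X(i\zeta \oplus (-s)) : \check E_{\Sigma_0} \to \check E_{\Sigma_1} \]
for $s \in \R_{<0}^{\#\pi_0(\Sigma_0)}$; this is well-defined because $\oplus$ produces entries with strictly positive real part on every connected component of $X$ (assuming each such component meets $\Sigma_0$, which is the only case in which the theorem's defining equation is well-posed). For $s \leq s'$ componentwise, the composition law in $\Bord_{n,n-1}(\C_{>0})$ gives
\[ (X, i\zeta \oplus (-s)) = (X, i\zeta \oplus (-s')) \circ (\Sigma_0 \times I, s' - s), \]
and applying the functor $V$ yields $g_{s'} \circ \check V_{\Sigma_0 \times I}(s' - s) = g_s$. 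The family $\{g_s\}$ is thus compatible, so the universal property of the colimit produces a unique continuous map $\check L(X, i\zeta): \check E_{\Sigma_0} \to \check E_{\Sigma_1}$ satisfying $\check L(X, i\zeta) \circ \check V_{\Sigma_0 \times I}(\tau) = \check V_X(i\zeta \oplus \tau)$ for every $\tau \in \R_{>0}^{\#\pi_0(\Sigma_0)}$.

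To upgrade from the real locus to all $\tau^{(0)} \in (\C_{>0})^{\#\pi_0(\Sigma_0)}$, observe that both sides of the desired equality are holomorphic in $\tau^{(0)}$ by Remark \ref{rem:V is holo}; since $(\C_{>0})^{\#\pi_0(\Sigma_0)}$ is a connected complex manifold and the identity holds on a totally real subset, analytic continuation gives the full equality. Uniqueness is immediate from the universal property, since any map satisfying the equation induces the same compatible family $\{g_s\}$ and therefore coincides with $\check L(X, i\zeta)$. The construction of $\hat L(X, i\zeta)$ when $\Sigma_1 \neq \emptyset$ is entirely dual: set $h_t := \hat V_X(i\zeta \oplus t)$ for $t \in \R_{>0}^{\#\pi_0(\Sigma_1)}$; the decomposition $(X, i\zeta \oplus t) = (\Sigma_1 \times I, t' - t) \circ (X, i\zeta \oplus t')$ for $t \leq t'$ shows $\hat V_{\Sigma_1 \times I}(t' - t) \circ h_{t'} = h_t$, and the universal property of the inverse limit yields the unique $\hat L(X, i\zeta)$ with the required factorization first on the real locus and then on all of $(\C_{>0})^{\#\pi_0(\Sigma_1)}$ via holomorphicity.

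The main (though mild) obstacle is organizing the coherence hypothesis into the correct multi-parameter colimit/limit presentation and checking that the identities in $\Bord_{n,n-1}(\C_{>0})$ translate cleanly across the $\oplus$ bookkeeping when components of $\Sigma_0$ (resp. $\Sigma_1$) are grouped into connected components of $X$. Once this is in place, the argument reduces to functoriality of $V$, the semigroup law on $\C_{>0}$, and analytic continuation on a connected domain.
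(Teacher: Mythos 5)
Your proof is correct and follows essentially the same route as the paper: realize $\check E_{\Sigma_0}$ as a colimit and $\hat E_{\Sigma_1}$ as an inverse limit via coherence, and obtain $\check L(X,i\zeta)$ and $\hat L(X,i\zeta)$ from the universal properties applied to the compatible families $\check V_X(i\zeta\oplus(-s))$ and $\hat V_X(i\zeta\oplus t)$. The only divergence is the final step: where you pass from real to complex $\tau^{(i)}$ by analytic continuation, the paper instead precomposes the defining triangle at a fixed $s<0$ with $\check V_{\Sigma_0\times I}(\tau^{(0)}+\gamma^{(0)}(s))$ (choosing $s$ so this label lies in $(\C_{>0})^{\#\pi_0(\Sigma_0)}$) and invokes the semigroup law, which yields the complex case directly and avoids the identity theorem for vector-valued holomorphic maps; both arguments are valid.
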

\begin{proof}
We assume that both $\Sigma_0$ and $\Sigma_1$ are nonempty.
For $i=0,1$ let 
\begin{equation}\label{eq:gamma i curve}
\gamma^{(i)}: \R \raw \C^{\#\pi_0(\Sigma_i)}
\end{equation}
be curves satisfying $\frac{d}{dt}\gamma^{(i)}(t) \in (\C_{>0})^{\#\pi_0(\Sigma_i)}$ and $\gamma^{(i)}(0) = 0$.

Let $\mathcal C_{\gamma^{(0)}} := \{\check E_s := \check E_{\Sigma_0} \st s < 0\}$ be the direct system consisting of maps
\begin{equation}
\check E_s \xrightarrow{\check V_{\Sigma_0 \times I}(\gamma^{(0)}(s') - \gamma^{(0)}(s))} \check E_{s'}
\end{equation}
for all $s < s' < 0$. Similarly, let $\mathcal D_{\gamma^{(1)}} := \{\hat E_t := \hat E_{\Sigma_1} \st t > 0\}$ be the inverse system consisting of maps
\begin{equation}
\hat E_t \xrightarrow{\hat V_{\Sigma_1 \times I}(\gamma^{(1)}(t') - \gamma^{(1)}(t))} \hat E_{t'}
\end{equation}
for all $0 < t < t'$.

As in the proof of coherence in Proposition \ref{prop:proposition: Z' field theory}, there are canonical isomorphisms
\begin{equation}\label{eq:colim gamma0}
\check E_{\Sigma_0} \cong \colim_{\mathcal C_{\gamma^{(0)}}} \check E_s
\end{equation}
\begin{equation}\label{eq:lim gamma1}
\hat E_{\Sigma_1} \cong \invlim_{\mathcal D_{\gamma^{(1)}}} \hat E_t.
\end{equation} 

The maps
\begin{equation}
\check E_s \xrightarrow{\check V_{X}(i\zeta \oplus (-\gamma^{(0)}(s)))} \check E_{\Sigma_1}
\end{equation}
form a cocone to $\mathcal C_{\gamma^{(0)}}$. By the universal property of the colimit and the identification (\ref{eq:colim gamma0}), there exists a unique map 
\begin{equation}
\check L(X, i\zeta):\check E_{\Sigma_0} \raw \check E_{\Sigma_1}
\end{equation}
making the diagrams
\begin{equation}\label{eq:L check}
\begin{tikzcd}[column sep = huge]
\check E_s \arrow[r, "\check V_{\Sigma_0 \times I}(-\gamma^{(0)}(s))"] \arrow[rd, "\check V_X(i\zeta \oplus (-\gamma^{(0)}(s)))"'] & \check E_{\Sigma_0} \arrow[d, dashed, "\check{L}(X {,} i\zeta)"]\\
 & \check E_{\Sigma_1}
\end{tikzcd}
\end{equation}
commute for all $s < 0$.

Similarly, the maps 
\begin{equation}
\hat E_{\Sigma_0} \xrightarrow{\hat V_{X}(i\zeta \oplus \gamma^{(1)}(t))} \hat E_{t}
\end{equation}
form a cone to $\mathcal D_{\gamma^{(1)}}.$ By the universal property of the limit and the identification (\ref{eq:lim gamma1}) there exists a unique map 
\begin{equation}
\hat L(X, i\zeta) : \hat E_{\Sigma_0} \raw \hat E_{\Sigma_1}
\end{equation}
making the diagram
\begin{equation}\label{eq:L hat}
\begin{tikzcd}[column sep = huge]
\hat E_{\Sigma_0} \arrow[d, dashed, "\hat L(X{,}i\zeta)"'] \arrow[rd, "\hat V_X(i\zeta \oplus \gamma^{(1)}(t))"]\\
 \hat E_{\Sigma_1} \arrow[r, "\hat V_{\Sigma_1\times I}(\gamma^{(1)}(t))"'] & \hat E_{t}
\end{tikzcd}
\end{equation}
commute for all $t>0$.

For all $\tau^{(0)} \in (\C_{>0})^{\#\pi_0(\Sigma_0)}$, continuity of $\gamma^{(0)}$ implies that there exists $s < 0$ such that $\tau^{(0)} + \gamma^{(0)}(s) \in (\C_{>0})^{\#\pi_0(\Sigma_0)}$. 
Precomposing diagram (\ref{eq:L check}) by the map
\begin{equation}
\check V_{\Sigma_0\times I}(\tau^{(0)} +\gamma^{(0)}(s)): \check E_{\Sigma_0} \raw \check E_s
\end{equation} 
shows that the left hand diagram in (\ref{eq:L check L hat}) commutes. Likewise, for each $\tau^{(1)} \in (\C_{>0})^{\#\pi_0(\Sigma_1)}$ continuity of $\gamma^{(1)}$ implies that there exists $t>0$ such that $\tau^{(1)} - \gamma^{(1)}(t)\in (\C_{>0})^{\#\pi_0(\Sigma_0)}$. Postcomposing diagram (\ref{eq:L hat}) by 
\begin{equation}
\hat V_{\Sigma_1\times I}(\tau^{(1)} - \gamma^{(1)}(t)): \hat E_{t} \raw \hat E_{\Sigma_1}
\end{equation}
shows that the right hand diagram in (\ref{eq:L check L hat}) commutes.
\end{proof}
\begin{remark}
The theorem in particular implies that the constructions of $\check L(X,i\zeta),\hat L(X,i\zeta)$ are independent of the choice of right-moving curve $\gamma$.
\end{remark}

\begin{corollary}
Let $\Sigma$ be a closed $n-1$-manifold. Then
$\check L(\Sigma \times I,0) = id_{\check E_\Sigma}$ and $\hat L(\Sigma \times I,0) = id_{\hat E_{\Sigma}}$.
\end{corollary}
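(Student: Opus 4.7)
The plan is to deduce this directly from the uniqueness clause of Theorem \ref{thm:UP for L}. Fix a closed $n-1$-manifold $\Sigma$ and consider $X := \Sigma \times I$ with $\zeta = 0 \in \R^{\#\pi_0(\Sigma)}$. I would begin by choosing right-moving curves $\gamma^{(0)}, \gamma^{(1)}: \R \to \C^{\#\pi_0(\Sigma)}$ with $\gamma^{(i)}(0) = 0$ and $(\gamma^{(i)})' \in (\C_{>0})^{\#\pi_0(\Sigma)}$, exactly as in the proof of Theorem \ref{thm:UP for L}, and set up the corresponding systems $\mathcal{C}_{\gamma^{(0)}} = \{\check E_s\}_{s<0}$ and $\mathcal{D}_{\gamma^{(1)}} = \{\hat E_t\}_{t>0}$ with their canonical identifications $\check E_\Sigma \cong \colim_{\mathcal{C}_{\gamma^{(0)}}} \check E_s$ and $\hat E_\Sigma \cong \invlim_{\mathcal{D}_{\gamma^{(1)}}} \hat E_t$.

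For $\check L$, note that the self-composition $(\Sigma \times I) \circ (\Sigma \times I) = \Sigma \times I$ preserves the number of connected components, so the operation $\oplus_{(\Sigma \times I)\circ(\Sigma\times I)}$ of Definition \ref{def:semigroup bord} is just componentwise addition in $\C^{\#\pi_0(\Sigma)}$. In particular $0 \oplus (-\gamma^{(0)}(s)) = -\gamma^{(0)}(s)$, so the defining cocone in the left-hand diagram of (\ref{eq:L check L hat}) reduces to
\begin{equation}
\check V_X\bigl(0 \oplus (-\gamma^{(0)}(s))\bigr) = \check V_{\Sigma \times I}(-\gamma^{(0)}(s)): \check E_s \to \check E_\Sigma,
\end{equation}
which is precisely the canonical cocone presenting $\check E_\Sigma$ as $\colim_{\mathcal{C}_{\gamma^{(0)}}} \check E_s$. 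The identity morphism $id_{\check E_\Sigma}$ manifestly completes this cocone, so by the uniqueness clause of Theorem \ref{thm:UP for L} we conclude $\check L(\Sigma \times I, 0) = id_{\check E_\Sigma}$.

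The argument for $\hat L$ is formally dual: the right-hand diagram of (\ref{eq:L check L hat}), with $X = \Sigma \times I$ and $\zeta = 0$, exhibits the defining cone as $\hat V_{\Sigma\times I}(\gamma^{(1)}(t)): \hat E_\Sigma \to \hat E_t$, which is the canonical cone realizing $\hat E_\Sigma$ as $\invlim_{\mathcal{D}_{\gamma^{(1)}}} \hat E_t$. Again the identity satisfies the cone condition, and uniqueness forces $\hat L(\Sigma \times I, 0) = id_{\hat E_\Sigma}$. There is no genuine obstacle: the only thing to verify beyond invoking uniqueness is the reduction of $\oplus$ to addition on a single cylinder, which is immediate from Definition \ref{def:semigroup bord}. (The case $\Sigma = \emptyset$ is vacuous since both nuclear pairs collapse to the monoidal unit.)
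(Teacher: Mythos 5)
Your argument is correct and is essentially the paper's proof: the paper likewise observes that the identity maps make the characterizing diagrams of Theorem \ref{thm:UP for L} commute (the paper checks the diagrams (\ref{eq:L check L hat}) directly, while you check the equivalent defining cocone/cone diagrams) and then invokes uniqueness. The reduction of $\oplus$ to componentwise addition on a cylinder is correct and is the only detail worth noting.
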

\begin{proof}
When $X = \Sigma \times I$ and $\zeta = 0 \in \R^{\#\pi_0(\Sigma)}$, the identity maps make the diagrams (\ref{eq:L check L hat}) commute for all $\tau^{(i)} \in (\C_{>0})^{\#\pi_0(\Sigma)}$.
\end{proof}

\begin{lemma}\label{lem:lorentzian np morphism}
Let $X: \Sigma_0 \rightsquigarrow \Sigma_1$ be a bordism and let $\zeta \in \R$. Then the square
\begin{equation}
\begin{tikzcd}
\check E_{\Sigma_0} \arrow[r,hook] \arrow[d, "\check L(X{,}i\zeta)"'] 
& \hat E_{\Sigma_0} \arrow[d, "\hat L(X{,}i\zeta)"]\\
\check E_{\Sigma_1} \arrow[r,hook] 
&\hat E_{\Sigma_1} 
\end{tikzcd}
\end{equation}
commutes.
\end{lemma}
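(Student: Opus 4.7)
The plan is to reduce the claim to the fact that each $V(X,\sigma)$ is a morphism of nuclear pairs for $\sigma$ in the open semigroup, and then transport this nuclearity through the universal properties defining $\check L$ and $\hat L$. The case where either $\Sigma_0$ or $\Sigma_1$ is empty is trivial since then the corresponding $\iota$ is the identity on $\C$, so I assume both are nonempty. I also note that we are implicitly in the $\Bord^{in\wedge out}$ setting, so every connected component of $X$ meets both $\Sigma_0$ and $\Sigma_1$; this ensures that any expression of the form $i\zeta \oplus (-\gamma^{(0)}(s)) \oplus \gamma^{(1)}(t)$ lies in $(\C_{>0})^{\#\pi_0(X)}$ whenever $s<0<t$.

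First I would fix curves $\gamma^{(0)},\gamma^{(1)}$ as in Theorem \ref{thm:UP for L} and recall the colimit/limit presentations $\check E_{\Sigma_0}\cong \colim_{\mathcal C_{\gamma^{(0)}}}\check E_s$ and $\hat E_{\Sigma_1}\cong \invlim_{\mathcal D_{\gamma^{(1)}}}\hat E_t$, with canonical structure maps $\check V_{\Sigma_0\times I}(-\gamma^{(0)}(s))$ and $\hat V_{\Sigma_1\times I}(\gamma^{(1)}(t))$. By these universal properties, to verify that two parallel maps $\check E_{\Sigma_0}\to \hat E_{\Sigma_1}$ coincide it suffices to show that they agree after pre-composition with $\check V_{\Sigma_0\times I}(-\gamma^{(0)}(s))$ and post-composition with $\hat V_{\Sigma_1\times I}(\gamma^{(1)}(t))$ for every $s<0<t$; equivalently, it suffices to show equality of the two induced maps $\check E_s \to \hat E_t$.

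The heart of the proof is then a diagram chase. For the left-hand composition $\hat V_{\Sigma_1\times I}(\gamma^{(1)}(t))\circ \iota_{\Sigma_1}\circ \check L(X,i\zeta)\circ \check V_{\Sigma_0\times I}(-\gamma^{(0)}(s))$, I would apply the left diagram of (\ref{eq:L check L hat}) to absorb $\check L$ and the cylinder cocone map into $\check V_X(i\zeta \oplus (-\gamma^{(0)}(s)))$, then use the fact that $V(X, i\zeta \oplus (-\gamma^{(0)}(s)))$ is a morphism in $\mathcal{NP}$ to interchange $\iota_{\Sigma_1}$ with $\check V_X$ producing $\hat V_X(i\zeta \oplus (-\gamma^{(0)}(s)))\circ \iota_{\Sigma_0}$, and finally use functoriality of $V$ to compose the outer cylinder: $\hat V_{\Sigma_1\times I}(\gamma^{(1)}(t))\circ \hat V_X(i\zeta \oplus (-\gamma^{(0)}(s))) = \hat V_X(i\zeta \oplus (-\gamma^{(0)}(s)) \oplus \gamma^{(1)}(t))$. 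This yields
\begin{equation*}
\hat V_X\bigl(i\zeta \oplus (-\gamma^{(0)}(s)) \oplus \gamma^{(1)}(t)\bigr)\circ \iota_{\Sigma_0}.
\end{equation*}

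A symmetric computation on the right-hand composition $\hat V_{\Sigma_1\times I}(\gamma^{(1)}(t))\circ \hat L(X,i\zeta)\circ \iota_{\Sigma_0}\circ \check V_{\Sigma_0\times I}(-\gamma^{(0)}(s))$ -- using the right diagram of (\ref{eq:L check L hat}) to absorb $\hat V_{\Sigma_1\times I}(\gamma^{(1)}(t))\circ\hat L(X,i\zeta)$ into $\hat V_X(i\zeta\oplus \gamma^{(1)}(t))$, then using nuclearity of $V(\Sigma_0\times I,-\gamma^{(0)}(s))$ to migrate $\iota_{\Sigma_0}$ past the cylinder on the $\Sigma_0$ side, and finally functoriality -- gives the same expression. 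Invoking the two universal properties to strip off the outermost cylinder maps yields $\iota_{\Sigma_1}\circ \check L(X,i\zeta)=\hat L(X,i\zeta)\circ \iota_{\Sigma_0}$, as desired. The only real obstacle is book-keeping for the $\oplus$ operation: one must verify that every intermediate argument of $V_X$ genuinely lies in $(\C_{>0})^{\#\pi_0(X)}$, which is precisely what the $in\wedge out$ hypothesis guarantees.
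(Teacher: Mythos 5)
Your argument is correct and is essentially the paper's own proof: the paper simply exhibits the single large diagram whose solid arrows are $\check V_{\Sigma_0\times I}(\tau^{(0)})$, $\check V_X(i\zeta\oplus\tau^{(0)})$, $\hat V_X(i\zeta\oplus\tau^{(1)})$, $\hat V_{\Sigma_1\times I}(\tau^{(1)})$ and the two inclusions, notes it commutes for all $\tau^{(i)}$, and concludes by the same colimit/limit uniqueness you invoke — your write-up just makes the chase and the $\oplus$-bookkeeping explicit. The one caveat worth noting is your dismissal of the empty-boundary cases as "trivial": when $\Sigma_0=\emptyset$ the map $\check L(X,i\zeta)$ is not even constructed by Theorem \ref{thm:UP for L}, so the honest reading (which you then adopt) is that the lemma presupposes both boundaries nonempty, and your componentwise interpretation of the $in\wedge out$ condition is exactly what is needed for the intermediate volumes to lie in $(\C_{>0})^{\#\pi_0(X)}$.
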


\begin{proof}
The solid arrows in the diagram
\begin{equation}
\begin{tikzcd}[column sep = huge]
\check E_s \arrow[r, "\check V_{\Sigma_0\times I}(\tau^{(0)})"] \arrow[rd, "\check V_X(i\zeta \oplus \tau^{(0)})"']& \check E_{\Sigma_0} \arrow[d, dashed, "\check L(X{,}i\zeta)"] \arrow[r, hook, "\iota_{\Sigma_0}"]
& \hat E_{\Sigma_0} \arrow[d, dashed, "\hat L(X{,}i\zeta)"'] \arrow[rd, "\hat V_X(i \zeta \oplus \tau^{(1)})"]
& \\
& \check E_{\Sigma_1} \arrow[r,hook,"\iota_{\Sigma_1}"]
& \hat E_{\Sigma_1} \arrow[r, "\hat V_{\Sigma_1 \times I}(\tau^{(1)})"'] 
& \hat E_{\Sigma_1,t}
\end{tikzcd}
\end{equation}
commute for all $\tau^{(i)} \in (\C_{>0})^{\#\pi_0(\Sigma_i)}$.
\end{proof}

Let $(i\R,+)$ be the group of purely imaginary numbers.
Let $\Bord_{n,n-1}^{in}(i\R)$, $\Bord_{n,n-1}^{out}(i\R)$, $\Bord_{n,n-1}^{in\wedge out}(i\R)$ be the subcategories of $\Bord_{n,n-1}(i\R)$ (cf. Definition \ref{def:semigroup bord}) whose morphisms consist of diffeomorphism classes of bordisms with non-empty incoming boundary, non-empty outgoing boundary, or both, respectively. 

\begin{proposition}
The maps $\check L(X,i\zeta), \hat L(X,i\zeta)$ can be assembled into symmetric monoidal functors
\begin{equation}
\begin{split}
\check L: \Bord_{n,n-1}^{in}(i\R) \raw \mathcal{NDF}\\
\hat L: \Bord_{n,n-1}^{out}(i\R) \raw \mathcal{NF}\\
L: \Bord_{n,n-1}^{in\wedge out}(i\R) \raw \mathcal{NP}
\end{split}
\end{equation}
\end{proposition}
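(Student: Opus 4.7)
On objects, set $\check L(\Sigma) := \check E_\Sigma$, $\hat L(\Sigma) := \hat E_\Sigma$, and $L(\Sigma) := (\check E_\Sigma \hookrightarrow \hat E_\Sigma)$; on a morphism $(X, i\zeta)$, use the maps $\check L(X, i\zeta)$ and $\hat L(X, i\zeta)$ supplied by Theorem \ref{thm:UP for L} and package them via Lemma \ref{lem:lorentzian np morphism} into the nuclear-pair morphism $L(X, i\zeta)$. The identity axiom for all three functors is the corollary immediately preceding Lemma \ref{lem:lorentzian np morphism}.

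The heart of the proof is preservation of composition. Fix $X \colon \Sigma_0 \rightsquigarrow \Sigma_1$ and $Y \colon \Sigma_1 \rightsquigarrow \Sigma_2$ with imaginary labels $i\zeta_X, i\zeta_Y$; I want
\begin{equation}\label{eq:plan-comp}
\check L(Y, i\zeta_Y) \circ \check L(X, i\zeta_X) = \check L(Y \circ X,\, i\zeta_Y \oplus_{Y \circ X} i\zeta_X),
\end{equation}
and similarly for $\hat L$ and hence, via Lemma \ref{lem:lorentzian np morphism}, for $L$. By the uniqueness clause of Theorem \ref{thm:UP for L}, this reduces to verifying that the left-hand side of \eqref{eq:plan-comp} satisfies the defining diagram of the right-hand side, i.e., for every $\tau^{(0)} \in (\C_{>0})^{\#\pi_0(\Sigma_0)}$,
\begin{equation}\label{eq:reduce}
\check L(Y, i\zeta_Y) \circ \check L(X, i\zeta_X) \circ \check V_{\Sigma_0 \times I}(\tau^{(0)}) = \check V_{Y \circ X}\!\bigl((i\zeta_Y \oplus i\zeta_X) \oplus \tau^{(0)}\bigr).
\end{equation}
The plan is to interpolate an auxiliary cylinder at $\Sigma_1$: choose $\tau^{(1)} \in (\C_{>0})^{\#\pi_0(\Sigma_1)}$ with real parts small enough that $i\zeta_X \oplus \tau^{(0)} - \ell_X(\tau^{(1)}) \in (\C_{>0})^{\#\pi_0(X)}$, where $\ell_X(\tau^{(1)})_k := \sum_{j \,:\, \Sigma_{1,j} \subset \partial X_k} \tau^{(1)}_j$. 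Such $\tau^{(1)}$ exists because each component of $X$ meets $\Sigma_0$ (the in-condition), and automatically $i\zeta_Y \oplus \tau^{(1)} \in (\C_{>0})^{\#\pi_0(Y)}$. Applying the characterising diagram of $\check L(X, i\zeta_X)$, inserting a cylinder at $\Sigma_1$ via functoriality of $V$ on $\Bord_{n,n-1}(\C_{>0})$, and then applying the characterising diagram of $\check L(Y, i\zeta_Y)$ gives
\begin{equation}
\check L(Y, i\zeta_Y) \circ \check L(X, i\zeta_X) \circ \check V_{\Sigma_0\times I}(\tau^{(0)}) = \check V_{Y\circ X}\!\bigl((i\zeta_Y \oplus \tau^{(1)}) \oplus_{Y\circ X} (i\zeta_X \oplus \tau^{(0)} - \ell_X(\tau^{(1)}))\bigr).
\end{equation}
Within any connected component $C$ of $Y \circ X$, each $\Sigma_1$-piece lying in $C$ is shared by exactly one $X$-piece and one $Y$-piece of $C$, so the $+\tau^{(1)}_j$ contributions from the $Y$-side and the $-\tau^{(1)}_j$ contributions from the $X$-side cancel; the surviving total volume on $C$ is precisely $\bigl((i\zeta_Y \oplus i\zeta_X) \oplus \tau^{(0)}\bigr)|_C$, establishing \eqref{eq:reduce}. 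The argument for $\hat L$ is dual, placing the cylinder on the outgoing side of $\Sigma_1$.

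Symmetric monoidality of $\check L$, $\hat L$ and $L$ descends from that of $V$ (Construction \ref{V construction}): disjoint union of bordisms corresponds to the completed tensor product of nuclear pairs, and because $\widehat\otimes$ commutes with the sequential colimits and limits used to build $\check L$ and $\hat L$ from $\check V$ and $\hat V$, the universal properties of Theorem \ref{thm:UP for L} are stable under $\widehat\otimes$ and force $\check L, \hat L, L$ to be symmetric monoidal. The main obstacle is the combinatorial bookkeeping around the auxiliary $\tau^{(1)}$ --- tracking exactly how its entries distribute across the components of $X$, $Y$ and $Y \circ X$; once this is set up correctly, the cancellation that closes the argument is forced by the fact that each $\Sigma_1$-piece borders exactly one $X$-piece and one $Y$-piece, and the remainder is a chase through universal properties.
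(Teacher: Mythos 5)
Your proposal is correct and follows essentially the same route as the paper: both reduce composition to the uniqueness clause of Theorem \ref{thm:UP for L} by inserting an auxiliary cylinder of volume $\tau^{(1)}$ at $\Sigma_1$ and checking that the resulting total volumes on $Y\circ X$ match (the paper absorbs the auxiliary volume into the incoming cylinder via a $\tau'$ with $0_X\oplus\tau'=\tau^{(1)}\oplus 0_X$, whereas you subtract $\ell_X(\tau^{(1)})$ from the $X$-label --- the same cancellation either way), with $L$ then assembled via Lemma \ref{lem:lorentzian np morphism} and the identity axiom handled by the preceding corollary. The only point where you say more than the paper is symmetric monoidality, where invoking the uniqueness of Theorem \ref{thm:UP for L} for $X\sqcup X'$ directly would be slightly cleaner than appealing to $\widehat\otimes$ commuting with the limits and colimits, but this is a minor matter.
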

\begin{proof}
For $\Sigma$ a closed $n-1$-manifold, define 
\begin{equation}
\begin{split}
\check L(\Sigma) := \check E_\Sigma\\
\hat L (\Sigma) := \hat E_\Sigma\\
L(\Sigma) := \check E_\Sigma \mono \hat E_\Sigma
\end{split}
\end{equation}
which all inherit the property from $Z$ that they send disjoint unions to topological tensor products.
Let $X: \Sigma_0 \rightsquigarrow \Sigma_1$ and $Y: \Sigma_1 \rightsquigarrow \Sigma_2$ be bordisms and let $\zeta \in \R^{\#\pi_0(X)}$, $\xi\in \R^{\#\pi_0(Y)}$. For all $\tau^{(i)} \in (\C_{>0})^{\#\pi_0(\Sigma_i)}$, the diagrams

\begin{equation}
\begin{tikzcd}[column sep = huge]
   \check E_{\Sigma_0}\arrow[rr, "\check V_{\Sigma_0\times I}(\tau^{(0)} + \tau')"] \arrow[rd, "\check V_{X}(\tau^{(0)} \oplus i\zeta)"] \arrow[rrdd, bend right, "\check V_{Y\circ X}(\tau^{(0)} \oplus \tau^{(1)} \oplus i\zeta \oplus i\xi)"']
   & & \check E_{\Sigma_0} \arrow[d, "\check L(X{,}i\zeta)"]\\
   & \check E_{\Sigma_1} \arrow[r, "\check V_{\Sigma_1\times I}(\tau^{(1)})"] \arrow[rd, "\check V_Y(\tau^{(1)} \oplus i\xi)"']
   & \check E_{\Sigma_1} \arrow[d, "\check L(Y{,}i\xi)"]\\
   & & \check E_{\Sigma_2}
\end{tikzcd}
\text{and}
\begin{tikzcd}[column sep = huge]
\hat E_{\Sigma_0} \arrow[rrdd, "\hat V_{Y\circ X}(\tau^{(1)} \oplus \tau^{(2)} \oplus i\zeta \oplus i\xi)", bend left] \arrow[d, "\hat L(X{,}i\zeta)"'] \arrow[rd, "\hat V_{X}(\tau^{(1)} \oplus i\zeta)"] &   &  \\
\hat E_{\Sigma_1} \arrow[d, "\hat L(Y{,}i\xi)"'] \arrow[r, "\hat V_{\Sigma_1 \times I}(\tau^{(1)})"'] & \hat E_{\Sigma_1} \arrow[rd, "\hat V_Y(\tau^{(2)} \oplus i\xi)"']  &  \\
\hat E_{\Sigma_2} \arrow[rr, "\hat V_{\Sigma_2 \times I}(\tau'' + \tau^{(2)})"'] &   & \hat E_{\Sigma_2}
\end{tikzcd}
\end{equation}
commute, where $\tau'$ is any element of $(\C_{>0})^{\#\pi_0(\Sigma_0)}$ satisfying
\begin{equation}
0_X \oplus \tau' = \tau^{(1)} \oplus 0_X
\end{equation}
and $\tau''$ is any element of $(\C_{>0})^{\#\pi_0(\Sigma_2)}$ satisfying
\begin{equation}
\tau'' \oplus 0_Y = 0_Y \oplus \tau^{(1)}.
\end{equation}
By Theorem \ref{thm:UP for L}, $\check L(Y\circ X, i\zeta\oplus i\xi)$ and $\hat L(Y\circ X, i\zeta \oplus i\xi)$ are the unique maps making the outer triangles commute and thus we have
\begin{equation}
\begin{split}
\check L(Y,i\xi) \circ \check L(X,i\zeta) = \check L(Y\circ X, i\zeta \oplus i\xi)\\
\hat L(Y,i\xi) \circ \hat L(X,i\zeta) = \hat L(Y\circ X, i\zeta \oplus i\xi)
\end{split}
\end{equation}
which proves functoriality of $\check L$ and $\hat L$. Functoriality of $L$ follows from applying Lemma \ref{lem:lorentzian np morphism}.
\end{proof}

\begin{definition}
We refer to $L$ as the \emph{Lorentzian limit of $V$}.
\end{definition}

Let
\begin{equation}\label{eq:short distance limit}
\begin{split}
\check L_0: \Bord_{n,n-1}^{in} \raw \mathcal{NDF}\\
\hat L_0: \Bord_{n,n-1}^{out} \raw \mathcal{NF}\\
 L_0: \Bord_{n,n-1}^{in\wedge out} \raw \mathcal{NP}
\end{split}
\end{equation}
be the symmetric monoidal functors obtained from restricting $\check L, \hat L, L$ to the subcategories 
\begin{equation}
\begin{split}
\Bord_{n,n-1}^{in} := \Bord_{n,n-1}^{in}(0) \subset \Bord_{n,n-1}^{in}(i\R)\\
\Bord_{n,n-1}^{out} := \Bord_{n,n-1}^{out}(0) \subset \Bord_{n,n-1}^{out}(i\R)\\
\Bord_{n,n-1}^{in\wedge out} := \Bord_{n,n-1}^{in\wedge out}(0) \subset \Bord_{n,n-1}^{in\wedge out}(i\R)
\end{split}
\end{equation}
The functors (\ref{eq:short distance limit}) are topological field theories partially defined on subcategories of the ordinary bordism category. 
\begin{definition}\label{def:short distance limit}
We refer to $L_0$ as the \emph{short distance topological limit}.
\end{definition}

\begin{remark}
There is a family of VFTs controlled by a parameter $\kappa \in \R_{>0}$ called the \emph{coupling constant}
\begin{equation}
V_\kappa: \Bord_{n,n-1}(\C_{>0}) \raw \mathcal{NP}
\end{equation}
defined by 
\begin{align}
V_\kappa(\Sigma):= V(\Sigma)\\
V_\kappa(X,\tau) := V(X,\kappa \tau)
\end{align}
and the short-distance topological limit is typically studied by taking a limit as $\kappa$ approaches 0. This corresponds to taking the curves (\ref{eq:gamma i curve}) to be
\begin{equation}
\gamma^{(i)}(\kappa) = \kappa\tau
\end{equation}
where $\kappa$ is now a real parameter.
\end{remark}


\subsection{Reflection Positive VFTs}\label{reflection positive vfts}

Let $Z: \Bord_{n,n-1}(\Dens_\C) \raw \mathcal{NP}$ be a reflection positive VFT and let $V: \Bord_{n,n-1}(\C_{>0}) \raw \mathcal{NP}$ be the nuclear symmetric monoidal functor that results when Construction \ref{V construction} is applied to $Z$. Recall that for $\Sigma$ a closed $n-1$-manifold, $V(\Sigma) := Z(\mathring \Sigma, \mathring\omega_\Sigma)$ where $(\mathring\Sigma,\mathring\omega_\Sigma)$ is a $\tau_{\Dens_\C}$ fixed point. Reflection positivity of $Z$ implies that $V(\Sigma)$ is Hermitian and $V$ is in fact a nuclear symmetric monoidal functor
\begin{equation}\label{eq:Vh}
V: \Bord_{n,n-1}(\C_{>0}) \raw \mathcal{NP}^{h,nuc}
\end{equation}
valued in Hermitian nuclear pairs and nuclear morphisms.
Postcomposing (\ref{eq:Vh}) with (\ref{eq:NP^h --> Hilb}) gives a symmetric monoidal functor
\begin{equation}
V^{Hilb}: \Bord_{n,n-1}(\C_{>0}) \raw \Hilb.
\end{equation}
We set 
\begin{equation}
\mathcal H_\Sigma := E^{Hilb}_\Sigma
\end{equation}
to be the Hilbert space assigned to $\Sigma$ by $V^{Hilb}$.

\begin{lemma}
$V^{Hilb}$ is holomorphic, i.e. for all bordisms $X: \Sigma_0 \rightsquigarrow \Sigma_1$ the map
\begin{equation}
\begin{split}
(\C_{>0})^{\#\pi_0(X)} &\raw \Hom(\mathcal H_{\Sigma_0}, \mathcal H_{\Sigma_1})\\
s &\mapsto V^{Hilb}(X,s)
\end{split}
\end{equation}
is holomorphic.
\end{lemma}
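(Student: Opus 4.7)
The plan is to reduce holomorphicity in operator norm to holomorphicity of $V$ itself by writing $V^{Hilb}(X,s)$ as the image of $V(X,s)$ under a fixed continuous linear map, and then invoking that continuous linear maps preserve holomorphicity.

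First I would unwind the construction of $V^{Hilb}(X,s)$. By Lemma \ref{lem:Hermitian Hilbert} applied to each Hermitian pair, we have continuous dense inclusions $\check E_{\Sigma_i} \mono \mathcal H_{\Sigma_i} \mono \hat E_{\Sigma_i}$ built from the Hilbert space completion. By Definition \ref{def:bounded nph morphism} and the definition of the functor (\ref{eq:NP^h --> Hilb}), $V^{Hilb}(X,s)$ is the induced bounded operator
\begin{equation}
V^{Hilb}(X,s): \mathcal H_{\Sigma_0} \mono \hat E_{\Sigma_0} \xrightarrow{V_X(s)} \check E_{\Sigma_1} \mono \mathcal H_{\Sigma_1},
\end{equation}
where $V_X(s): \hat E_{\Sigma_0} \raw \check E_{\Sigma_1}$ is the nuclear map through which the morphism assigned to $(X,s)$ factors (cf. Definition \ref{def:nuclear}).

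Next I would invoke Remark \ref{rem:V is holo}: the map $s \mapsto V_X(s)$ is a holomorphic map into $\Hom(\hat E_{\Sigma_0}, \check E_{\Sigma_1}) \cong \hat E_{\Sigma_0}^* \widehat\otimes \check E_{\Sigma_1}$ (cf. Proposition \ref{prop:Hom Isos}). The key step is to show that post- and pre-composition with the Hilbert space inclusions defines a \emph{continuous linear} map
\begin{equation}
\Phi : \hat E_{\Sigma_0}^* \widehat\otimes \check E_{\Sigma_1} \raw B(\mathcal H_{\Sigma_0}, \mathcal H_{\Sigma_1}), \qquad F \mapsto j_{\Sigma_1} \circ F \circ i_{\Sigma_0},
\end{equation}
where $B(\mathcal H_{\Sigma_0}, \mathcal H_{\Sigma_1})$ carries the operator norm. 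Writing $i_{\Sigma_0}^*: \hat E_{\Sigma_0}^* \raw \mathcal H_{\Sigma_0}^*$ for the transpose of the continuous inclusion $i_{\Sigma_0}$, the map $\Phi$ factors as the tensor product $i_{\Sigma_0}^* \widehat\otimes j_{\Sigma_1}$ (continuous by Corollary \ref{cor:completed tensor maps}), composed with the canonical continuous map $\mathcal H_{\Sigma_0}^* \widehat\otimes \mathcal H_{\Sigma_1} \raw B(\mathcal H_{\Sigma_0}, \mathcal H_{\Sigma_1})$ identifying the projective tensor product with the trace class operators (which continuously embed in the bounded operators in operator norm).

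To conclude, since $V^{Hilb}(X,s) = \Phi(V_X(s))$, and the composition of a holomorphic map with a continuous linear map between locally convex spaces is holomorphic, holomorphicity of $s \mapsto V_X(s)$ transfers to holomorphicity of $s \mapsto V^{Hilb}(X,s)$ as a $B(\mathcal H_{\Sigma_0}, \mathcal H_{\Sigma_1})$-valued function. The main (and essentially only) obstacle is the continuity of $\Phi$; once one unpacks the nuclearity of the spaces involved and the fact that the inclusions $i_{\Sigma_0}, j_{\Sigma_1}$ themselves are nuclear (as in the proof of Lemma \ref{lem:Hermitian Hilbert}), this reduces to standard properties of the completed projective tensor product and the continuous embedding of trace class operators into bounded operators.
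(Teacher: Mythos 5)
Your proof is correct and follows the same route as the paper, which simply asserts that the claim "follows from holomorphicity of $V$" (Remark \ref{rem:V is holo}); you have filled in the implicit step that post- and pre-composition with the continuous inclusions $\check E_{\Sigma_1}\mono\mathcal H_{\Sigma_1}$ and $\mathcal H_{\Sigma_0}\mono\hat E_{\Sigma_0}$ is a continuous linear map into $B(\mathcal H_{\Sigma_0},\mathcal H_{\Sigma_1})$, and continuous linear maps preserve holomorphicity.
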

\begin{proof}
This follows from holomorphicity of $V$ (cf. Remark \ref{rem:V is holo}).
\end{proof}

\begin{proposition}\label{prop:adjoint}
Let $X: \Sigma_0 \rightsquigarrow \Sigma_1$ be a bordism of volume $s \in (\C_{>0})^{\#\pi_0(X)}$. Then
\begin{equation}
V^{Hilb}(X^*,\bar s) = V^{Hilb}(X,s)^\dagger 
\end{equation}
where $X^*: \Sigma_1 \rightsquigarrow \Sigma_0$ is the bordism $X$ with incoming and outgoing boundaries reversed, $\bar s \in (\C_{>0})^{\#\pi_0(X)}$ is the complex conjugate of $s$, and $V^{Hilb}(X,s)^\dagger$ is the Hermitian adjoint of $V^{Hilb}(X,s): \mathcal H_{\Sigma_0} \raw \mathcal H_{\Sigma_1}$.
\end{proposition}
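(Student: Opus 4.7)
The plan is to lift the desired identity from the Hilbert-space level back to the level of $\Dens_\C$-bordisms, apply the $(\tau_{\Dens_\C},\tau_{\mathcal{NP}})$-equivariance guaranteed by reflection positivity, and push the resulting equation forward along the functor $\mathcal{NP}^{h,nuc}\raw\Hilb$, which was shown earlier to be $(\tau_{\mathcal{NP}},\tau_{\Hilb})$-equivariant.

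First, note that the boundary germs $\mathring\omega_{\Sigma_i}$ selected in Construction \ref{V construction} arise from the real, time-symmetric density $\omega_{\Sigma_i}:=\nu_{\Sigma_i}\wedge dt$ on $C_{\Sigma_i}$. Complex conjugation leaves $\omega_{\Sigma_i}$ invariant, and pullback along the co-orientation reversing map $\delta$ also leaves its germ invariant (since $\delta^*(\nu_\Sigma\wedge dt)=-\nu_\Sigma\wedge dt$ acts on the germ as the identity once the positive co-orientation is imposed). Thus each $(\mathring\Sigma^+,\mathring\omega_\Sigma)$ is a $\tau_{\Dens_\C}$-fixed point, and the Hermitian structure on $V(\Sigma)=Z(\mathring\Sigma^+,\mathring\omega_\Sigma)$ recovers precisely the Hilbert space $\mathcal H_\Sigma$.

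Next, choose any $\Dens_\C$-bordism $(\mathcal X,\mathring\omega_X):(\mathring\Sigma_0^+,\mathring\omega_{\Sigma_0})\rightsquigarrow(\mathring\Sigma_1^+,\mathring\omega_{\Sigma_1})$ of type $X$ whose component-wise total volume is $s$; by Construction \ref{V construction} (and Theorem \ref{thm:vol dependence thm}) we have $V(X,s)=Z(\mathcal X,\mathring\omega_X)$. Applying $\tau_{\Dens_\C}=\alpha_{\Dens_\C}\circ\delta_{\Dens_\C}$ produces the $\Dens_\C$-bordism $(\mathcal X^*,\overline{\mathring\omega_X}):(\mathring\Sigma_1^+,\mathring\omega_{\Sigma_1})\rightsquigarrow(\mathring\Sigma_0^+,\mathring\omega_{\Sigma_0})$; since $\int_{X_j}\overline{\mathring\omega_X}=\overline{\int_{X_j}\mathring\omega_X}$ on each component, this bordism has component-wise volume $\bar s$ and therefore represents $(X^*,\bar s)$, giving $V(X^*,\bar s)=Z(\mathcal X^*,\overline{\mathring\omega_X})$.

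Reflection positivity of $Z$ now yields
\begin{equation}
Z(\mathcal X^*,\overline{\mathring\omega_X})=\tau_{\mathcal{NP}}\bigl(Z(\mathcal X,\mathring\omega_X)\bigr),
\end{equation}
and postcomposing with the equivariant symmetric monoidal functor $\mathcal{NP}^{h,nuc}\raw\Hilb$ converts this into the identity
\begin{equation}
V^{Hilb}(X^*,\bar s)=\tau_{\Hilb}\bigl(V^{Hilb}(X,s)\bigr)=V^{Hilb}(X,s)^\dagger,
\end{equation}
which is what was claimed. The only real subtlety is bookkeeping: one must verify that the Hermitian structures $\theta_{\Sigma_i}$ carried by $V(\Sigma_i)$ are precisely those induced by $\mathring\omega_{\Sigma_i}$ being $\tau_{\Dens_\C}$-fixed, so that the canonical isomorphism $\check E_{\Sigma_i}\cong\overline{\hat E_{\Sigma_i}}^*$ used in defining $\tau_{\mathcal{NP}}$ is the same as the one identifying $\overline{\mathcal H_{\Sigma_i}}^*$ with $\mathcal H_{\Sigma_i}$ via its inner product; this is the content of the $(\tau_{\mathcal{NP}},\tau_{\Hilb})$-equivariance established at the end of Section \ref{section NP}, so no further work is required.
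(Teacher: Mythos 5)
Your proof is correct and follows essentially the same route as the paper: realize $V(X,s)$ as $Z(\mathcal X,\mathring\omega_X)$ for a representative $\Dens_\C$-bordism, observe that $\tau_{\Dens_\C}$ sends it to $(\mathcal X^*,\overline{\mathring\omega_X})$ of volume $\bar s$, invoke the $(\tau_{\Dens_\C},\tau_{\mathcal{NP}})$-equivariance supplied by reflection positivity, and push forward along the $(\tau_{\mathcal{NP}},\tau_{\Hilb})$-equivariant functor $\mathcal{NP}^{h,nuc}\raw\Hilb$. The only small wrinkle is your parenthetical claim that $\delta^*(\nu_\Sigma\wedge dt)=-\nu_\Sigma\wedge dt$ with the sign absorbed by the co-orientation: since $\omega_\Sigma$ is a density (twisted by the orientation line), its pullback along the orientation-reversing $\delta$ is already $\omega_\Sigma$ itself with no sign to cancel, but this does not affect the argument.
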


\begin{proof}
Let $(\mathcal X,\mathring\omega_X): (\mathring \Sigma_0^+, \mathring\omega_{\Sigma_0}) \rightsquigarrow (\mathring \Sigma_1^+, \mathring\omega_{\Sigma_1})$ be a $\Dens_\C$-bordism of volume $s$. Reflection positivity of $Z$ implies that 
\begin{equation}
\begin{split}
V(X^*,\bar s) &= Z(\tau_{\Dens_\C}(\mathcal X, \mathring \omega_X)) \\
&= \tau_{\mathcal{NP}} Z(\mathcal X, \mathring\omega_X)\\
&= \tau_{\mathcal{NP}} V(X,s)
\end{split}
\end{equation}
Applying the $(\tau_{\mathcal{NP}}, \tau_{\Hilb})$-equivariant functor (\ref{eq:NP^h --> Hilb}) then gives
\begin{equation}
V^{Hilb}(X^*,\bar s) = V^{Hilb}(X,s)^\dagger
\end{equation}
\end{proof}

For $\Sigma$ a closed $n-1$-manifold, set 
\begin{equation}
V_\Sigma^{Hilb}(s):= V^{Hilb}(\Sigma \times I, s)
\end{equation}
which is a trace-class operator in $\End(\mathcal H_\Sigma)$ for all $s \in (\C_{>0})^{\#\pi_0(X)}$.

\begin{proposition}\label{prop:hamiltonian}
Let $\Sigma$ be a connected closed $n-1$-manifold. There exists a self-adjoint operator $H_\Sigma \in \End(\mathcal H_\Sigma)$ with discrete spectrum unbounded from above and bounded from below with finite multiplicity such that
\begin{equation}\label{eq:heat formula}
V^{Hilb}_\Sigma(s) = \exp(-sH_\Sigma)
\end{equation}
\end{proposition}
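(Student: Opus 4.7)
The plan is to simultaneously diagonalize the commuting family $\{V^{Hilb}_\Sigma(t)\}_{t>0}$ of trace-class, self-adjoint, positive operators and read off $H_\Sigma$ from the semigroup structure on the eigenvalues.

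First I would record the structural inputs. Proposition \ref{prop:adjoint} applied to $X = \Sigma \times I$ (which satisfies $X^* = X$) gives $V^{Hilb}_\Sigma(t)^\dagger = V^{Hilb}_\Sigma(t)$ for real $t > 0$, and functoriality applied to the composition $(\Sigma \times I, s) \circ (\Sigma \times I, t) = (\Sigma \times I, s+t)$ produces the semigroup law $V^{Hilb}_\Sigma(s+t) = V^{Hilb}_\Sigma(s) V^{Hilb}_\Sigma(t)$. In particular $V^{Hilb}_\Sigma(t) = V^{Hilb}_\Sigma(t/2)^\dagger V^{Hilb}_\Sigma(t/2)$ is positive semi-definite for real $t > 0$; trace-class-ness was established before the proposition. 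The operators in the family mutually commute, so the spectral theorem for commuting compact self-adjoint operators yields an orthonormal basis $\{e_k\}_{k \geq 0}$ of $\mathcal H_\Sigma$ of joint eigenvectors with $V^{Hilb}_\Sigma(t) e_k = \mu_k(t) e_k$ and $\mu_k(t) \geq 0$. Holomorphicity of $\mu_k$ on $\C_{>0}$ combined with the scalar semigroup law $\mu_k(s+t) = \mu_k(s) \mu_k(t)$ forces each $\mu_k$ to be either identically zero or of the form $\mu_k(t) = e^{-t\lambda_k}$ for some $\lambda_k \in \R$.

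Next I would rule out the $\mu_k \equiv 0$ alternative. Suppose $e_k \in K := \ker V^{Hilb}_\Sigma(1)$; the identity $V^{Hilb}_\Sigma(t/2)^\dagger V^{Hilb}_\Sigma(t/2) = V^{Hilb}_\Sigma(t)$ and iterated halving give $V^{Hilb}_\Sigma(2^{-n}) e_k = 0$ for all $n \geq 0$, and the semigroup law together with holomorphicity then yield $V^{Hilb}_\Sigma(s) e_k = 0$ for all $s \in \C_{>0}$. For any $u \in \mathcal H_\Sigma$ and real $s > 0$ one has $\langle e_k, V^{Hilb}_\Sigma(s) u\rangle = \langle V^{Hilb}_\Sigma(s) e_k, u\rangle = 0$, so $e_k$ is orthogonal to $\bigcup_{s > 0} V^{Hilb}_\Sigma(s) \mathcal H_\Sigma$. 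By \emph{coherence} of $Z$, which identifies $\check E_\Sigma$ with $\colim_{s<0} \check E_{\mathring\omega_s}$ via the maps $\check V_{\Sigma \times I}(-s) : \check E_\Sigma \to \check E_\Sigma$, every element of $\check E_\Sigma$ lies in the image of some $\check V_{\Sigma \times I}(t)$ with $t > 0$; hence $\check E_\Sigma \subseteq \bigcup_{s > 0} V^{Hilb}_\Sigma(s) \mathcal H_\Sigma$. Since $\check E_\Sigma$ is dense in $\mathcal H_\Sigma$, this forces $e_k = 0$, contradicting orthonormality.

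Finally I would assemble the Hamiltonian. Trace-class-ness of $V^{Hilb}_\Sigma(1) = \sum_k e^{-\lambda_k} |e_k\rangle\langle e_k|$ gives $\sum_k e^{-\lambda_k} < \infty$, whence $\lambda_k \to +\infty$, so $\{\lambda_k\}$ is bounded below, unbounded above, and each value is attained finitely often. Define $H_\Sigma$ on the dense domain $\{v \in \mathcal H_\Sigma : \sum_k \lambda_k^2 |\langle e_k, v\rangle|^2 < \infty\}$ by $H_\Sigma e_k := \lambda_k e_k$; this is a self-adjoint operator with the claimed spectral properties. The functional calculus operator $e^{-sH_\Sigma} = \sum_k e^{-s\lambda_k} |e_k\rangle\langle e_k|$ is bounded and holomorphic in $s \in \C_{>0}$ (because $\lambda_k$ is bounded below) and agrees with $V^{Hilb}_\Sigma(s)$ on each $e_k$ for real $s > 0$; the identity theorem for holomorphic operator-valued maps then gives $V^{Hilb}_\Sigma(s) = e^{-sH_\Sigma}$ throughout $\C_{>0}$. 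The main obstacle is the injectivity step: it is the unique point where more than routine spectral calculus is needed, and where the specific structure of coherence and the density of $\check E_\Sigma \subset \mathcal H_\Sigma$ is essential.
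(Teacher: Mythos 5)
Your proposal is correct and follows essentially the same route as the paper: simultaneous diagonalization of the commuting normal trace-class family, the semigroup law plus holomorphicity forcing $\mu_k(s)=e^{-s\lambda_k}$, and trace-class-ness yielding the spectral properties of $H_\Sigma$. The only difference is cosmetic: where the paper rules out $\mu_k\equiv 0$ by noting a common kernel would contradict density of $\check E_\Sigma \hookrightarrow \hat E_\Sigma$, you spell out the same obstruction via coherence and the density of $\check E_\Sigma$ in $\mathcal H_\Sigma$, which is a slightly more detailed version of the same argument.
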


\begin{proof}
For all $s, s' \in \C_{>0}$ we have
\begin{equation}\label{eq:semigroup}
V^{Hilb}_\Sigma(s)V^{Hilb}_\Sigma(s') = V^{Hilb}_\Sigma(s + s') = V^{Hilb}_\Sigma(s') V^{Hilb}_\Sigma(s)
\end{equation}
and when $s' = \bar s$ this becomes
\begin{equation}
V^{Hilb}_\Sigma(s)V^{Hilb}_\Sigma(s)^\dagger = V^{Hilb}_\Sigma(s)^\dagger V^{Hilb}_\Sigma(s)
\end{equation}
by Proposition \ref{prop:adjoint}.
Thus $\{V^{Hilb}_\Sigma(s) \hspace{.1cm} | \hspace{.1cm} s \in \C_{>0} \}$ is a holomorphic family of mutually commuting compact (trace-class) normal operators on $\mathcal H_\Sigma$. By the spectral theorem, there exists an isometry
\begin{equation}
\mathcal H_\Sigma \cong \ell^2(\N)
\end{equation}
under which $V^{Hilb}_\Sigma$ becomes a family of diagonal operators
\begin{equation}
V^{Hilb}_\Sigma(s) = \sum_{i \geq 0} \mu_i(s) \Pi_i
\end{equation}
where $\mu_i: \C_{>0} \raw \C$ is a holomorphic function whose value at $s$ is the $i$th eigenvalue of $V^{Hilb}_\Sigma(s)$, and $\Pi_i^2 = \Pi_i \in \End(\mathcal H_\Sigma)$ is the orthogonal projection onto the $i$th eigenspace, which has finite dimension. The functions $\mu_i$ cannot be identically 0 for otherwise this would imply that $V^{Hilb}_\Sigma(s)$ has a kernel for all $s$ which would imply that the induced inclusion of nuclear spaces $\check E_\Sigma \mono \hat E_\Sigma$ is not dense and this would contradict our assumption that $Z$ is a field theory. 

Relation (\ref{eq:semigroup}) implies 
\begin{equation}\label{eq: mu semigroup}
\mu_i(s + s') = \mu_i(s)\mu_i(s').
\end{equation}
If $\mu_i(s_0) = 0$ for some $s \in \C_{>0}$, then (\ref{eq: mu semigroup}) implies $\mu_i(s) = 0$ for all $s \in \C_{>0}$ satisfying $\Re(s) > \Re(s_0)$. Holomorphicity then implies $\mu_i \equiv 0$, which is not allowed. Thus $\mu_i$ does not vanish anywhere. The relation (\ref{eq: mu semigroup}) also implies that the function
\begin{equation}
f_{i,s}(\tau) := \frac{\mu_i(\tau + s) - \mu_i(\tau)}{s\mu_i(\tau)}
\end{equation}
is constant on $\C_{>0}$. In particular the limit 
\begin{equation}
\lim_{s \raw 0} f_{i,s}(\tau) = \frac{\mu_i'(\tau)}{\mu_i(\tau)}
\end{equation} is a holomorphic function equal to a constant $\lambda_i$ which implies $\mu_i'(\tau) = \lambda_i\mu_i(\tau)$ for all $\tau \in \C_{>0}$. Solving the differential equation gives $\mu_i(\tau) = c_i e^{-\lambda_i \tau}$ and (\ref{eq: mu semigroup}) implies $c_i \in \{0,1\}$. Since $\mu_i$ is not identically zero, we have $c_i = 1$.

By Proposition \ref{prop:adjoint}, $\mu_i(s) \in \R$ when $s \in \R_{>0}$ which implies $\lambda_i \in \R$. The operator $V_\Sigma$ is trace-class which implies that $\{\lambda_i\}$ is unbounded above and bounded below with no accumulation points. Set
\begin{equation}
H_\Sigma := \sum_{i\geq 0} \lambda_i\Pi_i
\end{equation}
which is unbounded above and bounded below.
\end{proof}

When $\Sigma := \bigsqcup_{i=1}^k \Sigma^i$ has $k$ connected components, we set
\begin{equation}
\mathcal H_\Sigma := \bigotimes_{i=1}^k \mathcal H_{\Sigma^i}
\end{equation}
\begin{equation}
\hat H_{\Sigma^i} :=  1 \otimes \cdots \otimes H_{\Sigma^i} \otimes \cdots \otimes 1
\end{equation}
\begin{equation}
H_{\Sigma} :=  \sum_{i=1}^k \hat H_{\Sigma^i}
\end{equation}
and
\begin{equation}
\begin{split}
\exp(-sH_\Sigma) := \exp(-\sum_{i=1}^k s_i \hat H_{\Sigma_i}) \\
= \bigotimes_{i=1}^k \exp(-s_i H_{\Sigma^i})
\end{split}
\end{equation}
for $s= (s_1,...,s_k) \in (\C_{>0})^k$. We note that $H_\Sigma$ 
is a self-adjoint operator with discrete spectrum unbounded from above and bounded from below on $\mathcal H_\Sigma$. That $V^{Hilb}$ is symmetric monoidal implies the following.

\begin{corollary}
Proposition \ref{prop:hamiltonian} holds for all closed $n-1$-manifolds $\Sigma$.
\end{corollary}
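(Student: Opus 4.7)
The plan is to reduce the general case to the connected case proved in Proposition \ref{prop:hamiltonian} by exploiting the symmetric monoidal structure of $V^{Hilb}$, and then to check that the spectral conclusions are preserved under the tensor-product combination used to define $H_\Sigma$.

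First, write $\Sigma = \bigsqcup_{i=1}^k \Sigma^i$ with each $\Sigma^i$ connected. Since $V^{Hilb}$ is symmetric monoidal, $V^{Hilb}(\Sigma) = \bigotimes_{i=1}^k V^{Hilb}(\Sigma^i) = \bigotimes_{i=1}^k \mathcal H_{\Sigma^i}$, matching the definition of $\mathcal H_\Sigma$. Moreover, the cylindrical bordism $\Sigma \times I$ carrying volume $s = (s_1,\ldots,s_k) \in (\C_{>0})^k$ is literally the disjoint union of the cylindrical bordisms $(\Sigma^i \times I, s_i)$ in $\Bord_{n,n-1}(\C_{>0})$. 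Applying symmetric monoidality to this disjoint union yields
\begin{equation}
V^{Hilb}_\Sigma(s) = \bigotimes_{i=1}^k V^{Hilb}_{\Sigma^i}(s_i).
\end{equation}

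Next, apply the connected case of Proposition \ref{prop:hamiltonian} to each factor to produce self-adjoint operators $H_{\Sigma^i}$ on $\mathcal H_{\Sigma^i}$ with discrete spectra bounded below of finite multiplicity and satisfying $V^{Hilb}_{\Sigma^i}(s_i) = \exp(-s_i H_{\Sigma^i})$. Since the operators $\hat H_{\Sigma^i} = 1 \otimes \cdots \otimes H_{\Sigma^i} \otimes \cdots \otimes 1$ act on distinct tensor factors, they commute, and the tensor product of exponentials equals the exponential of the sum:
\begin{equation}
V^{Hilb}_\Sigma(s) = \bigotimes_{i=1}^k \exp(-s_i H_{\Sigma^i}) = \exp\!\Big(-\sum_{i=1}^k s_i \hat H_{\Sigma^i}\Big) = \exp(-sH_\Sigma),
\end{equation}
using the definitions of $\hat H_{\Sigma^i}$ and $H_\Sigma$ recorded just before the corollary.

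Finally, verify the spectral properties of $H_\Sigma$. Being a finite sum of pairwise commuting self-adjoint operators with discrete spectra, $H_\Sigma$ is essentially self-adjoint on the algebraic tensor product of the eigenbases, and its spectrum is exactly the set of sums $\{\sum_{i=1}^k \lambda^{(j_i)}_i\}$ indexed by tuples $(j_1,\ldots,j_k)$, where $\lambda_i^{(0)} \leq \lambda_i^{(1)} \leq \cdots$ are the eigenvalues of $H_{\Sigma^i}$. Boundedness below follows from $\sum_i \inf \mathrm{Spec}(H_{\Sigma^i}) > -\infty$. For discreteness with finite multiplicity below any cutoff $c \in \R$: each $H_{\Sigma^i}$ has only finitely many eigenvalues below $c - \sum_{j\neq i}\inf\mathrm{Spec}(H_{\Sigma^j})$, so only finitely many tuples contribute to eigenvalues $\leq c$, and each such eigenvalue has finite multiplicity by the product of finite multiplicities on each factor. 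The main (mild) subtlety is this combinatorial spectral argument; once it is in place the corollary follows.
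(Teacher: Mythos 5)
Your proof is correct and follows essentially the same route as the paper, whose entire argument is the one-line observation that $V^{Hilb}$ is symmetric monoidal together with the definitions of $\hat H_{\Sigma^i}$ and $H_\Sigma$ recorded just before the corollary. You simply make explicit the tensor factorization of $V^{Hilb}_\Sigma(s)$ and the (routine but worth recording) combinatorial check that a finite sum of commuting self-adjoint operators with discrete spectra bounded below and of finite multiplicity again has these properties.
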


Let 
\begin{equation}\label{eq:component bordism}
X: \Sigma_0 \rightsquigarrow \Sigma_1
\end{equation}
be a bordism with $\Sigma_0 = \bigsqcup_i \Sigma_0^i$ and $\Sigma_1 = \bigsqcup_j \Sigma_1^j$.

\begin{proposition}\label{prop:Hamiltonian relation}
There is an equality of maps
\begin{equation}
V^{Hilb}(X,s) \circ \hat H_{\Sigma_0^i} = \hat H_{\Sigma_1^j} \circ V^{Hilb}(X,s)
\end{equation}
for all connected components $\Sigma_0^i, \Sigma_1^j$.
\end{proposition}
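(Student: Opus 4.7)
The plan is to derive the intertwining from the semigroup/composition law satisfied by $V^{Hilb}$ together with the identification $V^{Hilb}(\Sigma \times I,\tau) = \exp\bigl(-\sum_i \tau_i \hat H_{\Sigma^i}\bigr)$ for $\tau \in \mathbb{C}_{>0}^{\#\pi_0(\Sigma)}$, which comes from Proposition \ref{prop:hamiltonian} combined with the symmetric monoidal structure of $V^{Hilb}$.

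Let $k$ denote the connected component of $X$ containing $\Sigma_0^i$ and $k'$ the one containing $\Sigma_1^j$. Functoriality of $V^{Hilb}$ applied to the composition law of Definition \ref{def:semigroup bord} yields, for all $\tau \in \mathbb{C}_{>0}^{\#\pi_0(\Sigma_0)}$ and $\sigma \in \mathbb{C}_{>0}^{\#\pi_0(\Sigma_1)}$,
\begin{equation}
V^{Hilb}(X,s)\,V^{Hilb}(\Sigma_0 \times I, \tau) = V^{Hilb}(X,\, s \oplus_X \tau),
\end{equation}
\begin{equation}
V^{Hilb}(\Sigma_1 \times I, \sigma)\,V^{Hilb}(X,s) = V^{Hilb}(X,\, \sigma \oplus_X s).
\end{equation}
When $k = k'$, I will choose one-parameter families $\tau(t),\sigma(t)$ valued in $\mathbb{C}_{>0}$ with the property that $s \oplus_X \tau(t) = \sigma(t) \oplus_X s$ for all small $t$, with $\partial_t \tau_i(0) = 1$, $\partial_t\tau_{i''}(0) = 0$ for $i'' \ne i$, and $\partial_t\sigma_j(0) = 1$, $\partial_t\sigma_{j''}(0) = 0$ for $j'' \ne j$; such families exist precisely because $i$ and $j$ both map to the component $k$ of $X$, so the two sides agree in that slot after differentiation. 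Substituting the heat-kernel expressions on the cylinders and differentiating the resulting equality in $t$ at $0$ produces
\begin{equation}
V^{Hilb}(X,s)\,V^{Hilb}(\Sigma_0 \times I, \tau(0))\,\hat H_{\Sigma_0^i} = \hat H_{\Sigma_1^j}\,V^{Hilb}(\Sigma_1 \times I, \sigma(0))\,V^{Hilb}(X,s).
\end{equation}
Sending $\tau(0),\sigma(0) \to 0$ along a $\mathbb{C}_{>0}$-path and using strong continuity of $V^{Hilb}(\Sigma \times I, \cdot)$ at the origin (afforded by Proposition \ref{prop:hamiltonian} and the lower boundedness of the spectrum) yields the claim in this case.

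When instead $k \ne k'$, the symmetric monoidal property factorizes $V^{Hilb}(X,s) = \bigotimes_\ell V^{Hilb}(X_\ell,s_\ell)$ over $\pi_0(X)$, and the claimed identity reduces to applying the $k=k'$ case to each connected bordism $X_\ell$ separately, observing that $\hat H_{\Sigma_0^i}$ and $\hat H_{\Sigma_1^j}$ act on disjoint tensor factors of $\mathcal H_{\Sigma_0}$ and $\mathcal H_{\Sigma_1}$ that are linked by different factors of $V^{Hilb}(X,s)$. The principal technical obstacle is the derivative at $t=0$, since $\hat H_{\Sigma_0^i}$ and $\hat H_{\Sigma_1^j}$ are unbounded: this is handled by restricting the identity to the algebraic sum of joint eigenvectors of the commuting family $\{\hat H_{\Sigma_0^i}\}_i$, which by Proposition \ref{prop:hamiltonian} has discrete spectrum bounded below and therefore admits such a dense analytic core; on this core both exponentials are norm-differentiable and the trace-class $V^{Hilb}(X,s)$ maps into the analytic domain of $\hat H_{\Sigma_1^j}$, justifying the termwise differentiation.
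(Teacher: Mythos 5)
Your argument is essentially the paper's: both derive from volume dependence (equivalently, the semigroup/composition law) the identity $V^{Hilb}(X,s)\exp(-\tau \hat H_{\Sigma_0^i}) = \exp(-\tau \hat H_{\Sigma_1^j})\, V^{Hilb}(X,s)$ for $\tau \in \C_{>0}$, and then differentiate and let $\tau \to 0$; your extra care with an analytic core merely makes explicit a justification the paper leaves implicit (one can equally note that $\hat H e^{-\tau \hat H}$ is bounded for $\Re\tau>0$ and pass to the limit on the domain of $\hat H_{\Sigma_0^i}$). The one genuine defect is your treatment of the case where $\Sigma_0^i$ and $\Sigma_1^j$ lie on different connected components $k \ne k'$ of $X$: there the families you require, satisfying $s \oplus_X \tau(t) = \sigma(t) \oplus_X s$ with $\partial_t$ concentrated in the $i$th and $j$th slots, do not exist, since the left side perturbs the volume label of component $k$ and the right side that of component $k'$; and indeed $V^{Hilb}(X,s)\hat H_{\Sigma_0^i}$ and $\hat H_{\Sigma_1^j}V^{Hilb}(X,s)$ then act through different tensor factors and are not equal in general, so the claimed ``reduction to each connected bordism separately'' does not produce the stated identity. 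This is, however, a shared imprecision: the paper's own proof invokes Theorem \ref{thm:vol dependence thm}, which only yields the intertwining when the two boundary components belong to the same connected component of $X$, so the statement should be read with that restriction, under which your main argument is correct.
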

\begin{proof}
Theorem \ref{thm:vol dependence thm} implies that for all $\tau \in \C_{>0}$
\begin{equation}
V^{Hilb}(X,s) \circ \exp(-\tau\hat H_{\Sigma_0^i}) = \exp(-\tau \hat H_{\Sigma_1^j}) \circ V^{Hilb}(X,s)
\end{equation}
where each side is a holomorphic family of bounded operators from $\mathcal H_{\Sigma_0}$ to $\mathcal H_{\Sigma_1}$ in the parameter $\tau \in \C_{>0}$. Taking the derivative with respect to $\tau$ on both sides followed by the limit $\tau \raw 0$ gives the result.
\end{proof}

When $\Sigma$ is a closed connected $n-1$-manifold, there is an orthogonal decomposition
\begin{equation}\label{eq:isometry}
\mathcal H_\Sigma \cong \widehat\bigoplus_{\lambda \in \Spec(H)} \mathcal H_\Sigma^\lambda
\end{equation}
where $\mathcal H_\Sigma^\lambda$ is the finite dimensional Hilbert space corresponding to the $\lambda$-eigenspace of $H_\Sigma$ and $\widehat\oplus$ denotes the Hilbert space completion of the direct sum. 

If $\Sigma := \bigsqcup_i \Sigma^i$ has multiple connected components, there is again an orthogonal decomposition of the form (\ref{eq:isometry}). There is an isomorphism 
\begin{equation}
\mathcal H_\Sigma^\lambda \cong \bigotimes_{i} \mathcal H_{\Sigma^i}^\lambda
\end{equation}
of the $\lambda$-eigenspace of $H_\Sigma$, where $\lambda$ is an eigenvalue in
\begin{equation}
\Spec(H_{\Sigma}) = \bigcap_i \Spec(H_{\Sigma^i}).
\end{equation}
For $X$ a bordism as in (\ref{eq:component bordism}), we set
\begin{equation}
\Lambda_{\Sigma_0,\Sigma_1} := \Spec(H_{\Sigma_0}) \cap \Spec(H_{\Sigma_1})
\end{equation} 
to be the set of eigenvalues shared by the incoming and outgoing Hamiltonians.
For $k=0,1$, let
\begin{equation}
\pi_{k,\lambda}: \mathcal H_{\Sigma_k} \raw \mathcal H_{\Sigma_k}^\lambda
\end{equation}
denote the orthogonal projection and
\begin{equation}
\iota_{k,\lambda} : \mathcal H_{\Sigma_k}^\lambda \mono \mathcal H_{\Sigma_k}
\end{equation}
the inclusion. 
\begin{corollary}\label{cor:VHilb decomposition}
There is a decomposition
\begin{equation}\label{eq:VHilb_lambda}
V^{Hilb}(X,s) = \sum_{\lambda \in \Lambda_{\Sigma_0,\Sigma_1}} \iota_{1,\lambda} \circ V^{Hilb}_\lambda(X,s) \circ \pi_{0,\lambda}
\end{equation}
with $V^{Hilb}_\lambda(X,s) \in \Hom(\mathcal H_{\Sigma_0}^\lambda, \mathcal H_{\Sigma_1}^\lambda)$, such that
\begin{equation}\label{eq:trace class V}
\sum_{\lambda \in \Lambda_{\Sigma_0,\Sigma_1}} \| V_\lambda^{Hilb}(X,s)\| < \infty
\end{equation}
where $\|\cdot \|$ denotes the operator norm.
\end{corollary}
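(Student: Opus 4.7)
The plan is to combine the intertwining relation in Proposition \ref{prop:Hamiltonian relation} with a semigroup factorization afforded by Theorem \ref{thm:vol dependence thm}. For the decomposition (\ref{eq:VHilb_lambda}), I would first show that $V^{Hilb}(X,s)$ carries $\mathcal H_{\Sigma_0}^\lambda$ into $\mathcal H_{\Sigma_1}^\lambda$. Under the identification $\mathcal H_{\Sigma_k}^\lambda \cong \bigotimes_i \mathcal H_{\Sigma_k^i}^\lambda$, a vector in $\mathcal H_{\Sigma_0}^\lambda$ is a tensor of joint $\lambda$-eigenvectors of the $\hat H_{\Sigma_0^i}$, and the identity $V^{Hilb}(X,s)\hat H_{\Sigma_0^i} = \hat H_{\Sigma_1^j} V^{Hilb}(X,s)$ forces its image to be a joint $\lambda$-eigenvector of every $\hat H_{\Sigma_1^j}$, and hence to lie in $\mathcal H_{\Sigma_1}^\lambda$. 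When $\lambda \notin \Spec(H_{\Sigma_1})$ this target eigenspace is zero, so $V^{Hilb}(X,s)$ annihilates $\mathcal H_{\Sigma_0}^\lambda$, and only $\lambda \in \Lambda_{\Sigma_0,\Sigma_1}$ contributes. Setting $V^{Hilb}_\lambda(X,s) := \pi_{1,\lambda} \circ V^{Hilb}(X,s) \circ \iota_{0,\lambda}$ and assembling over $\lambda$ via the orthogonal decomposition $\mathcal H_{\Sigma_k} \cong \widehat\bigoplus_\lambda \mathcal H_{\Sigma_k}^\lambda$ yields (\ref{eq:VHilb_lambda}).

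For the summability estimate (\ref{eq:trace class V}), I would factor a small Euclidean evolution off the outgoing boundary. Concretely, for $\epsilon > 0$ sufficiently small, Theorem \ref{thm:vol dependence thm} and functoriality give
\begin{equation*}
V^{Hilb}(X,s) = V^{Hilb}(\Sigma_1 \times I,(\epsilon,\dots,\epsilon)) \circ V^{Hilb}(X,s') = \exp(-\epsilon H_{\Sigma_1}) \circ V^{Hilb}(X,s'),
\end{equation*}
where $s' \in (\C_{>0})^{\#\pi_0(X)}$ differs from $s$ by subtracting the total cylinder contribution on each connected component of $X$ and remains in $(\C_{>0})^{\#\pi_0(X)}$ provided $\epsilon$ is small. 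The factorization $\exp(-\epsilon H_{\Sigma_1}) = \bigotimes_j \exp(-\epsilon \hat H_{\Sigma_1^j})$ acts on $\mathcal H_{\Sigma_1}^\lambda \cong \bigotimes_j \mathcal H_{\Sigma_1^j}^\lambda$ as the scalar $e^{-c\epsilon\lambda}$ with $c := \#\pi_0(\Sigma_1)$, so $V_\lambda^{Hilb}(X,s) = e^{-c\epsilon\lambda}\,V_\lambda^{Hilb}(X,s')$ and hence $\|V_\lambda^{Hilb}(X,s)\| \leq e^{-c\epsilon\lambda}\,\|V^{Hilb}(X,s')\|$. Since each eigenspace $\mathcal H_{\Sigma_1}^\lambda$ has dimension at least one and $\exp(-c\epsilon H_{\Sigma_1})$ is trace-class by Proposition \ref{prop:hamiltonian}, the sum $\sum_{\lambda \in \Spec(H_{\Sigma_1})} e^{-c\epsilon\lambda}$ converges, and the inclusion $\Lambda_{\Sigma_0,\Sigma_1} \subseteq \Spec(H_{\Sigma_1})$ then gives (\ref{eq:trace class V}).

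The main obstacle is the combinatorial bookkeeping for bordisms whose connected components meet several incoming or outgoing boundary components: one must track how the $\epsilon$-volume contributed by the attached outgoing cylinder distributes across the connected components of $X$ so that the shifted parameter $s'$ remains in $(\C_{>0})^{\#\pi_0(X)}$. This is handled uniformly by choosing $\epsilon$ smaller than $\min_k \Re(s_k) / \#\pi_0(\Sigma_1)$ and invoking the tensor factorization $\exp(-\epsilon H_{\Sigma_1}) = \bigotimes_j \exp(-\epsilon \hat H_{\Sigma_1^j})$ together with the identification $\mathcal H_{\Sigma_1}^\lambda \cong \bigotimes_j \mathcal H_{\Sigma_1^j}^\lambda$; the remaining edge case in which $\Sigma_1$ is empty, where the decomposition is trivial, is straightforward.
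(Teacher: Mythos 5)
Your proof of the decomposition (\ref{eq:VHilb_lambda}) is the paper's argument: both deduce from Proposition \ref{prop:Hamiltonian relation} that $V^{Hilb}(X,s)$ carries each joint eigenspace $\mathcal H_{\Sigma_0}^\lambda$ into $\mathcal H_{\Sigma_1}^\lambda$ (and annihilates it when $\lambda \notin \Spec(H_{\Sigma_1})$), and then assemble the blocks over the orthogonal decomposition (\ref{eq:isometry}). For the summability (\ref{eq:trace class V}) you genuinely diverge. The paper's proof is a one-liner: $V^{Hilb}(X,s)$ is trace-class (being the image of a nuclear morphism under (\ref{eq:NP^h --> Hilb})), and for an operator that is block-diagonal with respect to orthogonal decompositions of domain and codomain, the operator norms of the blocks are dominated by their trace norms, which sum to $\|V^{Hilb}(X,s)\|_1 < \infty$. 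You instead peel off a Euclidean cylinder of volume $\epsilon$ at the outgoing boundary, so that $V^{Hilb}_\lambda(X,s) = e^{-c\epsilon\lambda} V^{Hilb}_\lambda(X,s')$ with $c = \#\pi_0(\Sigma_1)$, and sum $\sum_\lambda e^{-c\epsilon\lambda}$, whose convergence again ultimately comes from trace-class-ness of a cylinder operator. Both arguments are valid; yours is longer and requires the $\epsilon$-bookkeeping for the shifted volumes $s'$, but in exchange it establishes the stronger statement that $\|V^{Hilb}_\lambda(X,s)\|$ decays exponentially in $\lambda$ — the same mechanism the paper only exploits later, in Proposition \ref{prop:L hilb bound}. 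One cosmetic slip: the operator you should cite as trace-class is $V^{Hilb}(\Sigma_1\times I,(\epsilon,\dots,\epsilon)) = \bigotimes_j \exp(-\epsilon \hat H_{\Sigma_1^j})$, which acts on $\mathcal H_{\Sigma_1}^\lambda$ by the scalar $e^{-c\epsilon\lambda}$ as you say, rather than ``$\exp(-c\epsilon H_{\Sigma_1})$''; the convergence of $\sum_\lambda e^{-c\epsilon\lambda}$ is unaffected.
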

\begin{proof}
The decomposition follows immediately from Proposition \ref{prop:Hamiltonian relation}. $V^{Hilb}(X,s)$ being trace-class implies (\ref{eq:trace class V}).
\end{proof}

Let $\gamma: \R \raw \C$ be any curve satisfying $\gamma'(t) \in \C_{>0}$ and $\gamma(0)=0$. For $\Sigma$ a closed $n-1$-manifold, let 
\begin{equation}
\mathcal C_\gamma^{Hilb} := \{\mathcal H_s := \mathcal H_\Sigma \st s < 0\}
\end{equation}
be the direct system consisting of maps 
\begin{equation}
\mathcal H_s \xrightarrow{V^{Hilb}_\Sigma(\gamma(s') - \gamma(s))} \mathcal H_{s'}
\end{equation}
for all $s<s'<0$
and
\begin{equation}
\mathcal D_\gamma^{Hilb} := \{\mathcal H_t := \mathcal H_\Sigma \st t >0\}
\end{equation} the inverse system consisting of maps
\begin{equation}
\mathcal H_t \xrightarrow{V^{Hilb}_\Sigma(\gamma(t') - \gamma(t))} \mathcal H_{t'}
\end{equation}
for all $0 < t < t'$.

Let $\check E_\Sigma \mono \mathcal H_{\Sigma} \mono \hat E_{\Sigma}$ be the Hermitian nuclear pair assigned to $\Sigma$ by $V$.
By coherence and Proposition \ref{prop:canonical NP isos}, $\check E_\Sigma$ and $\hat E_\Sigma$ are canonically isomorphic to the direct and inverse limits of $\mathcal C^{Hilb}_\gamma, \mathcal D^{Hilb}_\gamma$. Under the isometry (\ref{eq:isometry}), these can be expressed as
\begin{equation}\label{eq:check E explicit}
\begin{split}
\check E_\Sigma \cong \colim_{\mathcal C^{Hilb}_\gamma} \mathcal H_s 
   &= \{(v_\lambda) \in \prod_{\lambda \in \Spec(H_\Sigma)} \mathcal H_\Sigma^\lambda \st \exists s < 0 \text{ , } (e^{-\gamma(s)\lambda}v_\lambda) \in \mathcal H_\Sigma\}\\
   &= \{(v_\lambda) \in \prod_{\lambda \in \Spec(H_\Sigma)} \mathcal H_\Sigma^\lambda \st \exists \tau \in \C_{>0} \text{ , } (e^{\tau\lambda}v_\lambda) \in \mathcal H_\Sigma\}
\end{split}
\end{equation}
\begin{equation}\label{eq:hat E explicit}
\begin{split}
\hat E_\Sigma \cong \invlim_{\mathcal D^{Hilb}_\gamma} \mathcal H_t 
   &= \{(v_\lambda) \in \prod_{\lambda \in \Spec(H_\Sigma)} \mathcal H_\Sigma^\lambda \st \forall t > 0 \text{ , } (e^{-\gamma(t)\lambda}v_\lambda) \in \mathcal H_\Sigma\}\\
   &= \{(v_\lambda) \in \prod_{\lambda \in \Spec(H_\Sigma)} \mathcal H_\Sigma^\lambda \st \forall \tau \in \C_{>0} \text{ , } (e^{-\tau\lambda}v_\lambda) \in \mathcal H_\Sigma\}
\end{split}
\end{equation}
The Fr\'echet topology on $\hat E_\Sigma$ can be prescribed by a family of seminorms
\begin{equation}
\p_{i} : (v_\lambda) \mapsto \|(e^{-\tau_i\lambda}v_\lambda)\|_{\mathcal H_\Sigma}
\end{equation}
where $\{\tau_i\} \subset \C_{>0}$ is any sequence of complex numbers whose real parts are decreasing to 0. The isomorphism 
\begin{equation}\label{eq:bar check E hat E^* iso}
\overline{\check E} \cong \hat E^*
\end{equation}
of (\ref{eq:fixed point}) can be expressed via the nondegenerate pairing
\begin{equation}\label{eq:check E hat E pairing}
\begin{split}
\overline{\check E_\Sigma} \times \hat E_\Sigma &\raw \C\\
(v_\lambda), (w_\lambda) &\mapsto \langle (e^{\bar \tau\lambda}v_\lambda), (e^{-\tau\lambda} w_\lambda) \rangle
\end{split}
\end{equation}
where $\tau \in \C_{>0}$ is such that $(e^{\bar\tau\lambda}v_\lambda), (e^{-\tau\lambda} w_\lambda) \in \mathcal H_\Sigma$ and $\langle, \rangle$ is the sesquilinear inner product on $\mathcal H_\Sigma$.

We now record two lemmas for future use.
\begin{lemma}\label{lem:incoming and outgoing bound}
Let $\Sigma_0, \Sigma_1$ be nonempty closed $n-1$-manifolds and let $f_\lambda : \mathcal H_{\Sigma_0}^\lambda \raw \mathcal H_{\Sigma_1}^\lambda$ be a collection of linear maps between finite dimensional Hilbert spaces for each $\lambda \in \Lambda_{\Sigma_0,\Sigma_1}$. Then the not necessarily bounded operator
\begin{equation}
f^{Hilb} := \sum_{\lambda \in \Lambda_{\Sigma_0,\Sigma_1}} \iota_{1,\lambda} \circ f_\lambda \circ \pi_{0,\lambda}
\end{equation}
from $\mathcal H_{\Sigma_0}$ to $\mathcal H_{\Sigma_1}$ is induced from a morphism of nuclear pairs
\begin{equation}\label{eq:flambda np morphism}
\begin{tikzcd}
\check E_{\Sigma_0} \arrow[r, hook] \arrow[d, "\check f"] &\mathcal H_{\Sigma_0} \arrow[r,hook] &\hat E_{\Sigma_1} \arrow[d, "\hat f"] \\
\check E_{\Sigma_1} \arrow[r,hook] &\mathcal H_{\Sigma_1} \arrow[r,hook] &\hat E_{\Sigma_1}
\end{tikzcd}
\end{equation}
if and only if for all $t>0$ there exists $C>0$ such that
\begin{equation}\label{eq:growth condition}
\|f_\lambda\| < Ce^{t\lambda}
\end{equation}
for every $\lambda \in \Lambda_{\Sigma_0, \Sigma_1}$.
\end{lemma}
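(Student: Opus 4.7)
The strategy is to exploit the explicit component descriptions (\ref{eq:check E explicit}) and (\ref{eq:hat E explicit}) of $\check E_\Sigma$ and $\hat E_\Sigma$ together with the spectral-weighted seminorms of the Fr\'echet and inductive-limit topologies. Because the eigenspaces $\mathcal H^\lambda_\Sigma$ are finite-dimensional and diagonalize $H_\Sigma$, any morphism of nuclear pairs as in (\ref{eq:flambda np morphism}) is forced to act diagonally by some family $(f_\lambda)$; the content of the lemma is that continuity in the weighted topologies is exactly the stated sub-exponential growth bound.

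For the $(\Leftarrow)$ direction, I would define $\hat f$ and $\check f$ by the component-wise prescription $(v_\lambda) \mapsto (f_\lambda v_\lambda)$, with $f_\lambda := 0$ outside $\Lambda_{\Sigma_0,\Sigma_1}$, and verify continuity directly. Given any target seminorm $\p_{\tau'}$ on $\hat E_{\Sigma_1}$, choose $\tau \in \C_{>0}$ with $\Re(\tau) < \Re(\tau')$, set $t := \Re(\tau') - \Re(\tau) > 0$, and invoke $\|f_\lambda\| \leq C e^{t\lambda}$ to obtain
\begin{equation}
\p_{\tau'}(\hat f v)^2 = \sum_\lambda e^{-2\Re(\tau')\lambda} \|f_\lambda(v_\lambda)\|^2 \leq C^2 \sum_\lambda e^{-2\Re(\tau)\lambda}\|v_\lambda\|^2 = C^2 \p_\tau(v)^2.
\end{equation}
For $\check f$, I would regard $\check E_{\Sigma_0}$ as the inductive limit across $\tau_0 \in \C_{>0}$ of the Hilbert spaces $\check E_{\Sigma_0}^{(\tau_0)} := \{v : (e^{\tau_0 \lambda} v_\lambda) \in \mathcal H_{\Sigma_0}\}$; the identical inequality, now with $t < \Re(\tau_0)$ and $\tau_1 := \tau_0 - t$, shows that $\check f$ restricts to a bounded operator $\check E_{\Sigma_0}^{(\tau_0)} \to \check E_{\Sigma_1}^{(\tau_1)}$, whence continuity on the inductive limit. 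Commutativity of (\ref{eq:flambda np morphism}) and the identification of the induced Hilbert-space map with $f^{Hilb}$ are immediate from the component formulas.

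For the $(\Rightarrow)$ direction, fix $t > 0$ and test continuity of $\hat f$ against eigenvectors. Continuity supplies, for the target seminorm $\p_{\tau'}$ with $\Re(\tau') = t$, some $\tau \in \C_{>0}$ and $C > 0$ such that $\p_{\tau'}(\hat f v) \leq C \p_\tau(v)$. Evaluating on a unit vector $e_\lambda \in \mathcal H^\lambda_{\Sigma_0}$ (for $\lambda \in \Lambda_{\Sigma_0,\Sigma_1}$), which lies in $\hat E_{\Sigma_0}$ via the inclusion $\mathcal H_{\Sigma_0} \mono \hat E_{\Sigma_0}$, yields $\|f_\lambda(e_\lambda)\| \leq C e^{(t - \Re(\tau))\lambda}$, hence $\|f_\lambda\| \leq C e^{(t - \Re(\tau))\lambda}$. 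Since $\Spec(H_{\Sigma_0})$ is bounded below by some $\lambda_0$ and $\Re(\tau) > 0$, the factor $e^{-\Re(\tau)\lambda}$ is uniformly bounded on $[\lambda_0, \infty)$, so after enlarging $C$ we obtain $\|f_\lambda\| \leq C' e^{t\lambda}$ for all $\lambda \in \Lambda_{\Sigma_0,\Sigma_1}$, as desired.

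The only mild obstacle is that in the forward direction continuity furnishes only \emph{some} source seminorm $\p_\tau$, not one with $\Re(\tau)$ freely chosen; I expect to rely on the spectral lower bound to absorb the $\lambda$-bounded tail into the constant, which is the reason the hypothesis only asks for growth modulo a $t$-dependent constant. Once this is handled, both directions reduce to the elementary estimate $\|f_\lambda(v_\lambda)\| \leq \|f_\lambda\|\|v_\lambda\|$ summed against the spectral weights.
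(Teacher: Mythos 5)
Your proof is correct, and its core coincides with the paper's: both define $\hat f$ and $\check f$ componentwise and prove continuity of $\hat f$ by the weighted-seminorm estimate $\p_{s}(\hat f x) \le C\,\q_t(x)$ with $\|f_\lambda\| < Ce^{(s-t)\lambda}$. You diverge in two places. For the forward implication, the paper extracts the growth bound from the requirement that $\check f$ carry $\check E_{\Sigma_0}$ into $\check E_{\Sigma_1}$ (stated rather tersely as an equivalence), whereas you test continuity of $\hat f$ on unit eigenvectors and absorb the residual factor $e^{-\Re(\tau)\lambda}$ into the constant via the lower bound on the spectrum; your version is more explicit and makes clear why the constant may depend on $t$. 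For the continuity of $\check f$ in the reverse implication, the paper does not work on $\check E$ directly: it notes that the adjoints $f_\lambda^\dagger$ obey the same bound (\ref{eq:growth condition}), hence assemble into a continuous map $\hat E_{\Sigma_1} \raw \hat E_{\Sigma_0}$ by the estimate already proved, identifies this with the conjugate transpose of $\check f$ through (\ref{eq:bar check E hat E^* iso}), and concludes by reflexivity (Proposition \ref{prop: reflexive transpose}). You instead verify continuity step-by-step on the inductive-limit presentation of $\check E_{\Sigma_0}$ coming from (\ref{eq:check E explicit}). Both routes are valid: the duality argument recycles the $\hat f$ estimate verbatim and sidesteps the colimit topology entirely, while yours is self-contained and avoids reflexivity, at the cost of needing that continuity out of the (cofinally countable) inductive limit can be tested on its steps.
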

\begin{proof}
Assume the identifications (\ref{eq:check E explicit}) and (\ref{eq:hat E explicit}).
In order for $\check f$ to have image in $\check E_{\Sigma_1}$ we need that for all $v = (v_\lambda)$ in the closed subspace $\widehat \oplus_{\lambda \in \Lambda_{\Sigma_0,\Sigma_1}} \mathcal H_{\Sigma_0}^\lambda \subset \mathcal H_{\Sigma_0}$ and $s \in \C_{>0}$, there exists $s' \in \C_{>0}$ such that
\begin{equation}
e^{s'\lambda}\sum_{\lambda \in \Lambda_{\Sigma_0,\Sigma_1}} f_\lambda(e^{-s\lambda} v_\lambda) \in \mathcal H_{\Sigma_1}
\end{equation}
which is true if and only if (\ref{eq:growth condition}) is satisfied. Conversely, if (\ref{eq:growth condition}) holds, the maps
\begin{equation}\label{check f}
\begin{split}
\check f : \check E_{\Sigma_0} &\raw \check E_{\Sigma_1}\\
(x_\lambda) &\mapsto (f_\lambda(x_\lambda))
\end{split}
\end{equation}
and
\begin{equation}\label{eq:hat f}
\begin{split}
\hat f : \hat E_{\Sigma_0} &\raw \hat E_{\Sigma_1}\\
(x_\lambda) &\mapsto (f_\lambda(x_\lambda))
\end{split}
\end{equation}
are well-defined and fit into the commutative diagram (\ref{eq:flambda np morphism}). Let $s>0$ and let
\begin{equation}
\begin{split}
\p_s: \hat E_{\Sigma_1} &\raw \R\\
y &\mapsto \|e^{-sH_{\Sigma_1}}y\|_{\mathcal H_{\Sigma_1}}
\end{split}
\end{equation} 
be a seminorm on $\hat E_{\Sigma_1}$ and let
\begin{equation}
\begin{split}
\q_t: \hat E_{\Sigma_0} &\raw \R\\
x &\mapsto \|e^{-tH_{\Sigma_0}}x\|_{\mathcal H_{\Sigma_0}}
\end{split}
\end{equation}
be a seminorm on $\hat E_{\Sigma_0}$ with $0<t<s$. By assumption, there exists $C>0$ with 
\begin{equation}
\|f_\lambda\| < Ce^{(s-t)\lambda}
\end{equation}
for all $\lambda \in \Lambda_{\Sigma_0,\Sigma_1}$.
If $\q_t(x) < \delta$, then 
\begin{equation}
\begin{split}
\p_s(f_\lambda x_\lambda) &= \|(e^{-s\lambda}f_\lambda x_\lambda)\|_{\mathcal H_{\Sigma_1}} \\
&\leq  C\|e^{-t\lambda} x_\lambda\|\\
& < C\delta
\end{split}
\end{equation}
which shows continuity of (\ref{eq:hat f}).

The adjoint maps 
\begin{equation}
f_\lambda^\dagger: \mathcal H^\lambda_{\Sigma_1} \raw \mathcal H^\lambda_{\Sigma_0}
\end{equation}
satisfy the same bound (\ref{eq:growth condition}) and thus fit together into a continuous map
\begin{equation}
\check f^\dagger : \hat E_{\Sigma_1} \raw \hat E_{\Sigma_0}
\end{equation}
which can be identified with the conjugate transpose $\overline{\check f^*}: \overline{\check E_{\Sigma_1}^*}\raw \overline{\check E_{\Sigma_0}^*}$ via the isomorphism (\ref{eq:bar check E hat E^* iso}). Proposition (\ref{prop: reflexive transpose}) then implies (\ref{check f}) is continuous.
\end{proof}

\begin{lemma}\label{lem:incoming bound}
Let $\Sigma$ be a nonempty closed $n-1$-manifold and let $g_\lambda: \mathcal H^\lambda_{\Sigma} \raw \C$ be a collection of linear maps for each $\lambda \in \Spec(H_{\Sigma})$. If
\begin{equation}
g := \sum_{\lambda \in \Spec(H_\Sigma)} g_\lambda \circ \pi_\lambda
\end{equation}
defines a continuous map $\check E_\Sigma \raw \C$ then for all $t > 0$ there exists $C>0$ such that
\begin{equation}\label{eq:incoming bound}
   \|g_\lambda\| < Ce^{t\lambda}
\end{equation}
for all $\lambda \in \Spec(H_\Sigma)$.
\end{lemma}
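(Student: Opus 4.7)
The plan is to exploit the explicit description of $\check E_\Sigma$ as an inductive limit of Hilbert spaces coming from (\ref{eq:check E explicit}) and then extract the desired growth bound on $\|g_\lambda\|$ by restricting $g$ to a well-chosen Hilbert space in the system. For simplicity I would take the curve $\gamma(s) = s$, so that the direct system $\mathcal C_\gamma^{Hilb}$ consists of copies $\mathcal H_s = \mathcal H_\Sigma$ with connecting maps $e^{-(s'-s)H_\Sigma}$ for $s < s' < 0$, and the canonical embedding $\iota_s : \mathcal H_s \raw \check E_\Sigma$ takes a vector with components $(w_\lambda)$ to the tuple $(e^{s\lambda}w_\lambda)$. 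One checks directly that this tuple lies in $\check E_\Sigma$ (take $\tau = -s \in \C_{>0}$) and is compatible with the connecting maps, so it is indeed the map into the colimit.

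Since $\check E_\Sigma$ carries the inductive-limit topology, continuity of $g$ is equivalent to boundedness of each composition $g \circ \iota_s$ as a functional on the Hilbert space $\mathcal H_s$. For $w = (w_\lambda) \in \mathcal H_s$ we have
\begin{equation}
g(\iota_s(w)) \;=\; \sum_{\lambda \in \Spec(H_\Sigma)} e^{s\lambda}\, g_\lambda(w_\lambda).
\end{equation}
Finite-dimensional Riesz representation gives vectors $\xi_\lambda \in \mathcal H_\Sigma^\lambda$ with $\|\xi_\lambda\| = \|g_\lambda\|$ and $g_\lambda(\cdot) = \langle \xi_\lambda, \cdot\rangle$, so $g\circ\iota_s$ is the functional paired with the vector having components $e^{s\lambda}\xi_\lambda$ (up to a harmless conjugation since $\lambda \in \R$). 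By orthogonality of the eigenspace decomposition its norm is
\begin{equation}
M_s \;:=\; \|g\circ \iota_s\|_{\mathcal H_s^*} \;=\; \Bigl(\sum_{\lambda} e^{2s\lambda} \|g_\lambda\|^2\Bigr)^{1/2},
\end{equation}
and this must be finite by the assumed continuity of $g$.

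Extracting a single term from the convergent sum yields $e^{2s\lambda}\|g_\lambda\|^2 \leq M_s^2$ for every eigenvalue $\lambda$, hence $\|g_\lambda\| \leq M_s\, e^{-s\lambda}$. Given any $t > 0$, setting $s = -t < 0$ produces $\|g_\lambda\| \leq M_{-t}\, e^{t\lambda}$, and choosing any constant $C > M_{-t}$ then gives the strict inequality (\ref{eq:incoming bound}) for all $\lambda \in \Spec(H_\Sigma)$ simultaneously.

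The only real point of care is the dictionary between the abstract colimit presentation of $\check E_\Sigma$ as a nuclear DF space and its concrete realization as a subspace of $\prod_\lambda \mathcal H_\Sigma^\lambda$ given by (\ref{eq:check E explicit}); once the embedding $\iota_s$ is identified correctly, the estimate is an immediate application of Parseval on one fiber of the inductive system and the rest of the argument is routine.
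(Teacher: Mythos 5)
Your proof is correct, and it is a direct, quantitative version of what the paper does by contradiction. Both arguments rest on the same fact: by coherence, $\check E_\Sigma$ is the inductive limit of the Hilbert spaces $\mathcal H_s$ with connecting maps $e^{-(s'-s)H_\Sigma}$, so it contains every tuple of the form $(e^{s\lambda}w_\lambda)$ with $(w_\lambda)\in\mathcal H_\Sigma$ and $s<0$, and continuity of $g$ controls its values on such tuples. The paper runs this contrapositively: assuming $\|g_\lambda\|\geq e^{t\lambda}$ for infinitely many $\lambda$, it exhibits $y=(e^{-s\lambda}x_\lambda)\in\check E_\Sigma$ with $s<t$ on which the defining series for $g$ fails to converge (to make that rigorous one must choose the unit vectors $x_\lambda$ so that $g_\lambda(x_\lambda)=\|g_\lambda\|$, which the paper leaves implicit). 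You instead use only the trivial direction of the inductive-limit universal property --- continuity of $g$ forces boundedness of $g\circ\iota_s$ on each $\mathcal H_s$ --- and then compute that norm exactly via Riesz representation and Parseval as $M_s=\bigl(\sum_\lambda e^{2s\lambda}\|g_\lambda\|^2\bigr)^{1/2}$, from which the bound $\|g_\lambda\|\leq M_{-t}e^{t\lambda}$ falls out term by term. What your route buys is an explicit constant $C$ and a cleaner logical structure (no need to argue divergence of a series); what it costs is the identification of the canonical maps $\iota_s$ with the concrete formula $(w_\lambda)\mapsto(e^{s\lambda}w_\lambda)$, which you verify correctly against (\ref{eq:check E explicit}). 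The argument is sound.
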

\begin{proof}
Let $(x_\lambda) \in \prod_{\lambda \in \Spec(H_\Sigma)} \mathcal H_\Sigma^\lambda$ with $\|x_\lambda\| = 1$.
If $g$ does not satisfy (\ref{eq:incoming bound}), then there exists $t > 0$ such that 
\begin{equation}
\|g_\lambda\| \geq e^{t\lambda}.
\end{equation}
for infinitely many $\lambda \in \Spec(H_\Sigma)$.
By the characterization given in (\ref{eq:check E explicit}), the element $y := (e^{-s\lambda}x_\lambda)$ lies in $\check E_\Sigma$ for all $s > 0$ since $(e^{-\frac{s}{2}\lambda}x_\lambda) \in \mathcal H_\Sigma$. If $s < t$, then $g(y)$ is unbounded which contradicts the assumption that $g: \check E_\Sigma \raw \C$ is well defined and continuous.
\end{proof}

When the VFT is reflection positive, the Lorentzian limit of $V$ is a functor
\begin{equation}\label{eq:rp lor lim}
L: \Bord_{n,n-1}^{in\wedge out}(i\R) \raw \mathcal{NP}^h
\end{equation}
taking values in Hermitian nuclear pairs.
If $X$ is a bordism having both incoming and outgoing boundary, $L(X,i\zeta)$ is the morphism
\begin{equation}\label{eq:lorentzian morphism}
\begin{tikzcd}
\check E_{\Sigma_0} \arrow[r,hook] \arrow[d, "\check L(X{,}i\zeta)"'] 
   & \mathcal H_{\Sigma_0} \arrow[r,hook] 
   & \hat E_{\Sigma_0} \arrow[d, "\hat L(X{,}i\zeta)"]\\
\check E_{\Sigma_1} \arrow[r,hook, "j"]
   & \mathcal H_{\Sigma_1} \arrow[r,hook]
   & \hat E_{\Sigma_1}
\end{tikzcd}
\end{equation}
If $X = \Sigma \times I$ is a cylinder, $L(X,i\zeta)$ is a unitary morphism (cf. Definition \ref{def:bounded nph morphism}) with
\begin{equation}
L^{Hilb}(\Sigma \times I ,i\zeta) = e^{-i\zeta H_\Sigma}.
\end{equation}
However for a general bordism $X: \Sigma_0 \rightsquigarrow \Sigma_1$ equipped with imaginary volume $i\zeta \in (i\R)^{\#\pi_0(X)}$, there is no bounded map of Hilbert spaces that fits into the diagram. Nevertheless, we can define an unbounded operator 
\begin{equation}
L^{Hilb}(X,i\zeta) := j \circ \check L(X,i\zeta)
\end{equation}
whose domain of definition is $\check E_{\Sigma_0}$ which we identify with its dense image in $\mathcal H_{\Sigma_0}$.

\begin{proposition}\label{prop:L hilb bound}
The unbounded operator $L^{Hilb}(X,i\zeta)$ has a decomposition
\begin{equation}\label{eq:Lhilb decomp}
L^{Hilb}(X,i\zeta) = \sum_{\lambda \in \Lambda_{\Sigma_0,\Sigma_1}} \iota_{1,\lambda} \circ L^{Hilb}_\lambda(X,i\zeta) \circ \pi_{0,\lambda}
\end{equation}
with $L^{Hilb}_\lambda(X,i\zeta) \in \Hom(\mathcal H_{\Sigma_0}^\lambda, \mathcal H_{\Sigma_1}^\lambda)$, such that for all $t > 0$, there exists $C>0$ with 
\begin{equation}\label{eq:unbounded growth condition}
\| L_\lambda^{Hilb}(X,i\zeta) \| < Ce^{t\lambda}
\end{equation}
for all $\lambda \in \Lambda_{\Sigma_0,\Sigma_1}$.
\end{proposition}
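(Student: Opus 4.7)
The plan is to derive both the decomposition and the norm bound from the intertwining identities in Theorem \ref{thm:UP for L}. Applying the symmetric monoidal functor $\mathcal{NP}^{h,nuc} \raw \Hilb$ to the left-hand diagram of (\ref{eq:L check L hat}) yields, for every $\sigma \in (\C_{>0})^{\#\pi_0(\Sigma_0)}$ such that $i\zeta \oplus \sigma \in (\C_{>0})^{\#\pi_0(X)}$, the identity
\begin{equation}\label{eq:intertwining plan}
L^{Hilb}(X,i\zeta) \circ e^{-\sigma H_{\Sigma_0}} \;=\; V^{Hilb}_X(i\zeta \oplus \sigma)
\end{equation}
of bounded operators $\mathcal H_{\Sigma_0} \raw \mathcal H_{\Sigma_1}$, the right-hand side being trace-class by Corollary \ref{cor:VHilb decomposition}. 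A symmetric identity with $\tau \in (\C_{>0})^{\#\pi_0(\Sigma_1)}$ and $e^{-\tau H_{\Sigma_1}}$ applied on the outside follows from the right-hand diagram.

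Fixing $t > 0$ and specializing to $\sigma := t \cdot \mathbf 1$ (admissible provided every component of $X$ meets $\Sigma_0$), the operator $e^{-\sigma H_{\Sigma_0}} = e^{-tH_{\Sigma_0}}$ acts as scalar multiplication by $e^{-t\lambda}$ on $\mathcal H_{\Sigma_0}^\lambda$. Thus for $v \in \mathcal H_{\Sigma_0}^\lambda$, identity (\ref{eq:intertwining plan}) rewrites as $L^{Hilb}(X,i\zeta)\, v = e^{t\lambda}\, V^{Hilb}_X(i\zeta + t\mathbf 1)\, v$. Pairing with $w \in \mathcal H_{\Sigma_1}^{\lambda'}$ and invoking the spectral decomposition (\ref{eq:VHilb_lambda}) of $V^{Hilb}_X$ forces this matrix element to vanish unless $\lambda = \lambda'$, producing the decomposition (\ref{eq:Lhilb decomp}) with the explicit representation
\begin{equation}
L^{Hilb}_\lambda(X,i\zeta) \;=\; e^{t\lambda}\, V^{Hilb}_\lambda(X, i\zeta + t\mathbf 1).
\end{equation}
Since $\|V^{Hilb}_\lambda(X,s)\| \leq \|V^{Hilb}_X(s)\|$, this immediately gives $\|L^{Hilb}_\lambda(X,i\zeta)\| \leq e^{t\lambda}\, \|V^{Hilb}_X(i\zeta + t\mathbf 1)\|$, which is (\ref{eq:unbounded growth condition}) with $C := \|V^{Hilb}_X(i\zeta + t\mathbf 1)\| + 1$.

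The main obstacle is the admissibility condition on $\sigma$, which fails when some component of $X$ lacks incoming boundary in $\Sigma_0$. I would handle this by decomposing $X = \bigsqcup_j X^j$ into connected components via the symmetric monoidal structure and treating each factor separately: components meeting $\Sigma_0$ are handled as above, components meeting only $\Sigma_1$ are handled by the dual argument with $\tau = t\cdot \mathbf 1$ and the right-hand intertwining identity (invoking Lemma \ref{lem:incoming bound} in place of Lemma \ref{lem:incoming and outgoing bound} when $\Sigma_1^j$ is empty), and the per-component decompositions and bounds combine multiplicatively to yield the global decomposition (\ref{eq:Lhilb decomp}) and bound (\ref{eq:unbounded growth condition}). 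The $\Bord_{n,n-1}^{in\wedge out}$ hypothesis ensures that every component of $X$ has at least one of incoming or outgoing boundary, so at least one of the two arguments applies to each $X^j$. This component-by-component bookkeeping is where the bulk of the care is needed.
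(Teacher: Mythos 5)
Your argument is correct, and the decomposition step is essentially the paper's: both proofs use the defining intertwining identity $L^{Hilb}(X,i\zeta)\circ\exp(-\tau H_{\Sigma_0})=V^{Hilb}(X,\tau\oplus i\zeta)$ from Theorem \ref{thm:UP for L} together with the block decomposition of $V^{Hilb}$ from Corollary \ref{cor:VHilb decomposition} (the paper phrases it as a contradiction, you phrase it via matrix elements, which is the same computation). Where you genuinely diverge is the growth bound: the paper obtains (\ref{eq:unbounded growth condition}) indirectly by citing Lemma \ref{lem:incoming and outgoing bound}, which requires first knowing that $L(X,i\zeta)$ is a morphism of nuclear pairs (Lemma \ref{lem:lorentzian np morphism}); you instead extract the explicit formula $L^{Hilb}_\lambda(X,i\zeta)=e^{t\lambda}V^{Hilb}_\lambda(X,i\zeta\oplus t\mathbf 1)$ and use $\|V^{Hilb}_\lambda\|\le\|V^{Hilb}\|$, which is more self-contained and makes the constant $C$ explicit. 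Two small caveats. First, with the paper's conventions for disconnected $\Sigma_0$ (where $\mathcal H_{\Sigma_0}^\lambda\cong\bigotimes_i\mathcal H_{\Sigma_0^i}^\lambda$ and $H_{\Sigma_0}=\sum_i\hat H_{\Sigma_0^i}$), the operator $e^{-t\mathbf 1\cdot H_{\Sigma_0}}$ acts on $\mathcal H_{\Sigma_0}^\lambda$ by $e^{-t\lambda\,\#\pi_0(\Sigma_0)}$ rather than $e^{-t\lambda}$; this only rescales $t$ and does not affect the quantifier structure of the bound. Second, the admissibility issue you flag (components of $X$ not meeting $\Sigma_0$) is real but is equally present in the paper's own proof, which quantifies over $\tau\in(\C_{>0})^{\#\pi_0(\Sigma_0)}$ and tacitly assumes $\tau\oplus i\zeta$ has positive real part on every component; your per-component treatment via the monoidal structure is a legitimate way to close that gap and is, if anything, more careful than the source.
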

\begin{proof}
The map $L^{Hilb}(X,i\zeta)$ is the unique map making the diagram
\begin{equation}
\begin{tikzcd}[column sep = huge]
\mathcal H_{\Sigma_0} \arrow[r, "\exp(-\tau H_{\Sigma_0})"] \arrow[rd, "V^{Hilb}(X{,} \tau \oplus i\zeta)"']
   &\check E_{\Sigma_0} \arrow[d, dashed, "L^{Hilb}(X{,}i\zeta)"]\\
 & \mathcal H_{\Sigma_1}
\end{tikzcd}
\end{equation}
commute for all $\tau \in (\C_{>0})^{\#\pi_0(\Sigma_0)}$. If $L^{Hilb}(X,i\zeta)$ did not satisfy (\ref{eq:Lhilb decomp}), there would be some $v_\lambda \in \mathcal H_{\Sigma_0}^\lambda$ such that $\pi_{1,\mu}(L^{Hilb}(X,i\zeta)v_\lambda) \neq 0$ for $\mu \neq \lambda$ in $\Spec(H_{\Sigma_1})$. Since $\mathcal H_{\Sigma_0}^\lambda$ is an eigenspace of $\exp(-\tau H_{\Sigma_0})$, this would imply that $\pi_{1,\mu}(V^{Hilb}(X,\tau\oplus i\zeta)v_\lambda) \neq 0$ which contradicts Corollary \ref{cor:VHilb decomposition}.
The growth condition (\ref{eq:unbounded growth condition}) is implied by Lemma \ref{lem:incoming and outgoing bound}.
\end{proof}

Thus (\ref{eq:rp lor lim}) sends cylindrical bordisms to unitary morphisms and general bordisms to unbounded morphisms satisfying the sub-exponential growth condition (\ref{eq:unbounded growth condition}). 

By Proposition \ref{prop:sheaf induced functor}, the sheaf morphism (\ref{eq: metlor to densir}) induces a symmetric monoidal functor 
\begin{equation}
\sqrt{\det}_*: \Bord_{n,n-1}^{in\wedge out}(\Met_{\Lor}) \raw \Bord_{n,n-1}^{in\wedge out}(\Dens_{i\R}).
\end{equation} 
The integration functor (\ref{eq:int functor}) extends to a symmetric monoidal functor
\begin{equation}
\int: \Bord_{n,n-1}^{in\wedge out}(\Dens_{i\R}) \raw \Bord_{n,n-1}^{in\wedge out}(i\R)
\end{equation}
which sends a bordism equipped with a purely imaginary density with nonempty incoming and outgoing boundary to the same bordism labeled by its purely imaginary total volume. We set
\begin{equation}
\mathscr L: \Bord_{n,n-1}^{in\wedge out}(\Met_{\Lor}) \xrightarrow{\sqrt{\det}_*} \Bord_{n,n-1}^{in\wedge out}(\Dens_{i\R}) \xrightarrow{\int} \Bord_{n,n-1}^{in\wedge out}(i\R) \xrightarrow{L} \mathcal{NP}^h
\end{equation}
to be the composition. We summarize the above in the following:

\begin{theorem}\label{thm:LHilb2}
A reflection positive volume-dependent field theory induces a symmetric monoidal functor
\begin{equation}\label{eq:lor lim}
\mathscr L : \Bord_{n,n-1}^{in\wedge out}(\Met_{\Lor}) \raw \mathcal{NP}^h
\end{equation}
out of the category of bordisms equipped with smooth, possibly degenerate Lorentzian metrics with nonempty incoming and outgoing boundary. 
It sends cylindrical bordisms to unitary morphisms and general bordisms to unbounded morphisms satisfying the sub-exponential growth condition (\ref{eq:unbounded growth condition}). 
\end{theorem}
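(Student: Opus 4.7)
The plan is to realize $\mathscr L$ as the composition
\begin{equation*}
\Bord_{n,n-1}^{in\wedge out}(\Met_{\Lor}) \xrightarrow{\sqrt{\det}_*} \Bord_{n,n-1}^{in\wedge out}(\Dens_{i\R}) \xrightarrow{\int} \Bord_{n,n-1}^{in\wedge out}(i\R) \xrightarrow{L} \mathcal{NP}^h
\end{equation*}
as indicated in the text just before the statement. Symmetric monoidality of the first arrow follows from Proposition \ref{prop:sheaf induced functor} applied to the sheaf morphism (\ref{eq: metlor to densir}); the second arrow is symmetric monoidal by construction (total volume is additive under disjoint union and composition); and $L$ is symmetric monoidal by the proposition proved earlier in Section \ref{section lorentzian limit}. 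So the only substantive verifications are (i) that $L$ lifts to the subcategory $\mathcal{NP}^h$ when $Z$ is reflection positive, (ii) that cylindrical bordisms go to unitary morphisms, and (iii) that general bordisms induce unbounded operators satisfying the growth bound (\ref{eq:unbounded growth condition}).

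For (i), recall that on objects $L(\Sigma) = \check E_\Sigma \mono \hat E_\Sigma$ is the nuclear pair assigned by $V$ to $\Sigma$, which by Construction \ref{V construction} is $Z(\mathring\Sigma^+, \mathring\omega_\Sigma)$ for the time-symmetric germ $\mathring\omega_\Sigma$. Since $(\mathring\Sigma^+,\mathring\omega_\Sigma)$ is a $\tau_{\Dens_\C}$-fixed point and $Z$ is $(\tau_{\Dens_\C},\tau_{\mathcal{NP}})$-equivariant with Hermitian target on fixed points, the value $L(\Sigma)$ is a Hermitian nuclear pair in the sense of Definition \ref{def:hermitian np}. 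Hence $L$ factors through the full subcategory $\mathcal{NP}^h$.

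For (ii), when $X = \Sigma \times I$ is cylindrical and $i\zeta \in (i\R)^{\#\pi_0(\Sigma)}$, the uniqueness clause in Theorem \ref{thm:UP for L} together with the formula $V^{Hilb}_\Sigma(s) = e^{-sH_\Sigma}$ from Proposition \ref{prop:hamiltonian} identifies $L^{Hilb}(\Sigma\times I, i\zeta) = e^{-i\zeta H_\Sigma}$. Since $H_\Sigma$ is self-adjoint, this is a unitary operator on $\mathcal H_\Sigma$ making diagram (\ref{eq:lorentzian morphism}) commute, so $L(\Sigma\times I, i\zeta)$ is a unitary morphism in $\mathcal{NP}^h$ in the sense of Definition \ref{def:bounded nph morphism}.

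For (iii), which is the heart of the theorem, the decomposition
\begin{equation*}
L^{Hilb}(X,i\zeta) = \sum_{\lambda \in \Lambda_{\Sigma_0,\Sigma_1}} \iota_{1,\lambda} \circ L_\lambda^{Hilb}(X,i\zeta) \circ \pi_{0,\lambda}
\end{equation*}
together with the sub-exponential bound $\|L_\lambda^{Hilb}(X,i\zeta)\| < Ce^{t\lambda}$ is precisely the content of Proposition \ref{prop:L hilb bound}. I expect this to be the main obstacle in practice, since one must exhibit $L^{Hilb}(X,i\zeta) := j \circ \check L(X,i\zeta)$ as a genuinely densely defined operator $\mathcal H_{\Sigma_0} \supset \check E_{\Sigma_0} \to \mathcal H_{\Sigma_1}$ and verify its spectral compatibility with the incoming and outgoing Hamiltonians. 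The key inputs are Proposition \ref{prop:Hamiltonian relation} (which forces the block-diagonal form of $V^{Hilb}(X,\tau \oplus i\zeta)$ and hence of its $\tau \to 0$ limit) and Lemma \ref{lem:incoming and outgoing bound} (which translates the requirement that the fiberwise maps $L_\lambda^{Hilb}$ assemble into a morphism of nuclear pairs into the growth bound (\ref{eq:unbounded growth condition})). Once these three pieces are collected, functoriality and compatibility with composition follow from the corresponding statements for $L$ together with preservation of total volume under gluing, which completes the construction of $\mathscr L$.
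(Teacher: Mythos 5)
Your proposal is correct and follows essentially the same route as the paper: the theorem is proved there by assembling $\mathscr L$ as the composition $L \circ \int \circ \sqrt{\det}_*$ and citing exactly the ingredients you list — reflection positivity forcing Hermitian values on the time-symmetric germs, Theorem \ref{thm:UP for L} plus Proposition \ref{prop:hamiltonian} for unitarity on cylinders, and Proposition \ref{prop:L hilb bound} (via Proposition \ref{prop:Hamiltonian relation} and Lemma \ref{lem:incoming and outgoing bound}) for the sub-exponential bound on general bordisms.
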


\begin{definition}
We call (\ref{eq:lor lim}) the \emph{Lorentzian limit of $Z$}.
\end{definition}

\begin{remark}
Theorem 5.2 of \cite{KS} constructs the Lorentzian limit of a reflection positive $\Met_\C$-field theory on the subcategory of real-analytic and globally hyperbolic bordisms valued in Hilbert spaces and unitary operators. Theorem \ref{thm:LHilb2} extends the domain of this functor to smooth, possibly degenerate Lorentzian bordisms with nonempty incoming/outgoing boundary when the reflection positive field theory is volume-dependent, provided we also enlarge the codomain category to allow unbounded morphisms of Hermitian nuclear pairs.
\end{remark}

When the field theory is reflection positive, the short distance limit (cf. Definition \ref{def:short distance limit}) is a functor
\begin{equation}
L_0: \Bord_{n,n-1}^{in\wedge out} \raw \mathcal{NP}^h.
\end{equation}
We record a lemma which can be checked by unraveling definitions.

\begin{lemma}\label{lem:VFT from L0}
Let $X: \Sigma_0 \rightsquigarrow \Sigma_1$ be a bordism with volume $s \in (\C_{>0})^{\#\pi_0(X)}$ and let $s_i \in (\C_{>0})^{\#\pi_0(\Sigma_i)}$ for $i=0,1$ such that $s_0 \oplus 0_X \oplus s_1 = s$. Then
\begin{equation}
V^{Hilb}(X,s) =
\begin{cases}
\exp(-s_1H_{\Sigma_1}) \circ \hat L_0(X) & \Sigma_0 = \emptyset, \Sigma_1 \neq \emptyset\\
\check L_0(X) \circ \exp(-s_0H_{\Sigma_0}) & \Sigma_0 \neq \emptyset, \Sigma_1 = \emptyset\\
\exp(-s_1H_{\Sigma_1}) \circ L_0^{Hilb}(X) \circ \exp(-s_0H_{\Sigma_0}) & \Sigma_0,\Sigma_1 \neq \emptyset
\end{cases}
\end{equation}
where in our notation we regard $\check E_{\Sigma_i} \subset \mathcal H_{\Sigma_i} \subset \hat E_{\Sigma_i}$ as subspace inclusions and $L_0^{Hilb}(X):\mathcal H_{\Sigma_0} \raw \mathcal H_{\Sigma_1}$ is the unbounded operator induced by $L_0(X)$.
\end{lemma}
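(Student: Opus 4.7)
The plan is to reduce each case to the defining universal property of $L_0$ given by Theorem \ref{thm:UP for L} (with $\zeta = 0$), combined with Proposition \ref{prop:hamiltonian} identifying $V^{Hilb}_\Sigma(s) = \exp(-s H_\Sigma)$ and functoriality of $V$ on cylindrical compositions. Morally, we wish to interpret
\begin{equation}
V^{Hilb}(X,s) = V^{Hilb}(\Sigma_1 \times I, s_1) \circ V^{Hilb}(X, 0_X) \circ V^{Hilb}(\Sigma_0 \times I, s_0);
\end{equation}
since $0_X \notin \C_{>0}$ the middle factor is not literally defined, and its role is played by the Lorentzian limit $L_0^{Hilb}(X)$.

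Consider first the case $\Sigma_0, \Sigma_1 \neq \emptyset$. Taking $\zeta = 0$ and $\tau^{(0)} = s_0$ in the left diagram of Theorem \ref{thm:UP for L} yields $\check L_0(X) \circ \check V_{\Sigma_0 \times I}(s_0) = \check V_X(s_0 \oplus 0_X)$. Functoriality of $V$ applied to the composition $(\Sigma_1 \times I, s_1) \circ (X, s_0 \oplus 0_X) = (X, s)$ then gives
\begin{equation}
\check V_X(s) = \check V_{\Sigma_1 \times I}(s_1) \circ \check L_0(X) \circ \check V_{\Sigma_0 \times I}(s_0).
\end{equation}
Post-composing with the inclusion $j: \check E_{\Sigma_1} \hookrightarrow \mathcal H_{\Sigma_1}$, using that $j \circ \check V_{\Sigma_i \times I}(s_i)$ is the restriction of $\exp(-s_i H_{\Sigma_i})$ to $\check E_{\Sigma_i}$ (from Proposition \ref{prop:hamiltonian}) and the definition $L_0^{Hilb}(X) = j \circ \check L_0(X)$, gives the desired identity on the dense subspace $\check E_{\Sigma_0} \subset \mathcal H_{\Sigma_0}$. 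The right-hand side extends to a bounded operator on $\mathcal H_{\Sigma_0}$ because $\exp(-s_0 H_{\Sigma_0})$ maps $\mathcal H_{\Sigma_0}$ into $\check E_{\Sigma_0}$ (by the characterization (\ref{eq:check E explicit})) with exponential decay in the eigenvalues of $H_{\Sigma_0}$, which offsets the sub-exponential growth of $L_0^{Hilb}(X)$ from Proposition \ref{prop:L hilb bound}; density of $\check E_{\Sigma_0}$ in $\mathcal H_{\Sigma_0}$ then yields equality on all of $\mathcal H_{\Sigma_0}$.

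When $\Sigma_0 = \emptyset$, apply the right diagram of Theorem \ref{thm:UP for L} with $\tau^{(1)} = s_1$ to obtain $\hat V_X(s) = \hat V_{\Sigma_1 \times I}(s_1) \circ \hat L_0(X)$. The operator $\hat V_{\Sigma_1 \times I}(s_1)$ acts as $\exp(-s_1 H_{\Sigma_1})$ on $\hat E_{\Sigma_1}$ and factors through $\mathcal H_{\Sigma_1}$ by the defining growth condition (\ref{eq:hat E explicit}); the identity $V^{Hilb}(X,s) = \exp(-s_1 H_{\Sigma_1}) \circ \hat L_0(X)$ then follows from the uniqueness of the Hilbert-space factorization of the nuclear pair morphism $V(X,s)$. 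The case $\Sigma_1 = \emptyset$ is symmetric, using the left diagram to obtain $\check V_X(s) = \check L_0(X) \circ \check V_{\Sigma_0 \times I}(s_0)$ and extending by density after observing that $\exp(-s_0 H_{\Sigma_0})$ maps $\mathcal H_{\Sigma_0}$ into $\check E_{\Sigma_0}$.

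The main obstacle is bookkeeping: tracking the triples $\check E_{\Sigma_i} \hookrightarrow \mathcal H_{\Sigma_i} \hookrightarrow \hat E_{\Sigma_i}$, verifying that the formally unbounded composition $\exp(-s_1 H_{\Sigma_1}) \circ L_0^{Hilb}(X) \circ \exp(-s_0 H_{\Sigma_0})$ is bounded and agrees with $V^{Hilb}(X,s)$ on the dense subspace, and correctly interpreting the smoothing role of the heat operators $\exp(-s_i H_{\Sigma_i})$. Once these are in place, all three identities are direct consequences of Theorem \ref{thm:UP for L} and the semigroup law for $V$.
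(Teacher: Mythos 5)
Your proof is correct and is precisely the ``unraveling of definitions'' that the paper alludes to without writing out: the paper offers no proof of this lemma, and your argument supplies the intended one, reducing each case to the universal-property diagrams of Theorem \ref{thm:UP for L} with $\zeta=0$, the semigroup law for $V$, and density of $\check E_{\Sigma_0}$ in $\mathcal H_{\Sigma_0}$. The attention you pay to why the composition $\exp(-s_1H_{\Sigma_1})\circ L_0^{Hilb}(X)\circ\exp(-s_0H_{\Sigma_0})$ is a well-defined bounded operator (heat-kernel smoothing into $\check E_{\Sigma_0}$ offsetting the sub-exponential growth from Proposition \ref{prop:L hilb bound}) is exactly the point the paper leaves implicit.
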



\subsection{The Long-Distance Topological Limit}\label{section long distance limit}
In this section we define the long-distance topological limit of a reflection positive VFT.
Let $Z$ be a reflection positive volume-dependent theory with induced functor
\begin{equation}
V^{Hilb}: \Bord_{n,n-1}(\C_{>0}) \raw \Hilb.
\end{equation}
Recall that by Proposition \ref{prop:hamiltonian} $V^{Hilb}$ evaluates on a cylindrical bordism $\Sigma \times I$ of volume $s \in \C_{>0}$ to the operator
\begin{equation}
V^{Hilb}_\Sigma(s) := \exp(-sH_\Sigma)
\end{equation} 
where $H_\Sigma$ are self-adjoint unbounded operators on $\mathcal H_\Sigma$ with discrete spectrum bounded below with finite multiplicity. In this section we assume that all $H_\Sigma$ have non-negative spectrum with lowest eigenvalue 0.

Using isomorphisms (\ref{eq:check E explicit}) and (\ref{eq:hat E explicit}), we identify $\ker(H_\Sigma)$ as a finite dimensional subspace in the sequence of inclusions
\begin{equation}
\ker(H_\Sigma) \subset \check E_{\Sigma} \subset \mathcal H_{\Sigma} \subset \hat E_{\Sigma}.
\end{equation}
Let $\pi_\Sigma \in \End(\mathcal H_\Sigma)$ be the orthogonal projection onto $\ker(H_\Sigma)$ and let $\iota_\Sigma : \ker(\Sigma) \mono \mathcal H_\Sigma$ be the inclusion. If $\Sigma = \emptyset$ is the empty manifold we set $\pi_\Sigma = \iota_\Sigma = id$ to be the identity  map on $\C$.

As $\Re(s) \raw +\infty$, the operator $V_\Sigma^{Hilb}(s)$ converges to $\pi_\Sigma$ uniformly. In light of Lemma \ref{lem:VFT from L0}, we make the following definition.

\begin{definition}
The \emph{long-distance topological limit} of $Z$ is the functor
\begin{equation}
L_\infty: \Bord_{n,n-1} \raw \Vect
\end{equation}
which sends $\Sigma$ to $\ker(H_\Sigma)$ and a bordism $X: \Sigma_0 \rightsquigarrow \Sigma_1$ with at least one nonempty boundary to
\begin{equation}
L_\infty(X) :=
\pi_{\Sigma_1} \circ \tilde L_0(X) \circ \iota_{\Sigma_0} 
\end{equation}
where
\begin{equation}
\tilde L_0(X) :=
\begin{cases}
\hat L_0(X) & \Sigma_1 \neq \emptyset\\
\check L_0(X) & \Sigma_0 \neq \emptyset
\end{cases}
\end{equation}
\end{definition}

\begin{remark}
If both $\Sigma_0 = \Sigma_1 = \emptyset$, we can express $X$ as the composition of bordisms with at least one nonempty boundary. If neither $\Sigma_0$ nor $\Sigma_1$ are empty, $\pi_{\Sigma_1} \circ \check L_0(X) \circ \iota_{\Sigma_0} = \pi_{\Sigma_1} \circ \hat L_0(X) \circ \iota_{\Sigma_0}$. $L_\infty$ is symmetric monoidal because $\ker(H_{\Sigma \sqcup \Sigma'}) = \ker(H_{\Sigma}) \otimes \ker(H_{\Sigma'})$ and thus defines a topological field theory.
\end{remark}


\subsection{Classification of 2d Reflection Positive VFTs}\label{section 2d classification}

We first recall the definition of a Frobenius algebra.
\begin{definition}
Let $k$ be a field.
A \emph{Frobenius algebra  $(A,\theta)$ over $k$} is a finite dimensional unital associative algebra $A$ over $k$ equipped with a \emph{trace}
\begin{equation}
\theta: A \raw k
\end{equation}
such that the \emph{Frobenius pairing}
\begin{equation}
\begin{split}
A \times A &\raw k\\
(x,y) & \mapsto \theta(xy)
\end{split}
\end{equation}
is non-degenerate. 
\end{definition}

\begin{remark}
We will also sometimes denote a Frobenius algebra by a triple $(A,m,\theta)$ where $m: A\times A \raw k$ is the multiplication on $A$ which we suppressed in the definition above.
\end{remark}

The Frobenius pairing induces an isomorphism
\begin{equation}\label{eq:theta iso}
A^* \cong A
\end{equation}
which, when composed with the transpose $\theta^*: k \raw A^*$, gives the \emph{unit map}
\begin{equation}
\begin{split}
u: k &\raw A\\
1 &\mapsto 1_A
\end{split}
\end{equation}
where $1_A \in A$ is the unit in $A$.
Similarly, the transpose $m^*$ of the multiplication map
\begin{equation}
m: A \otimes A \raw A
\end{equation}
becomes the coproduct 
\begin{equation}
w: A \raw A \otimes A
\end{equation}
upon applying the isomorphism (\ref{eq:theta iso}).

\begin{definition}
A \emph{Hermitian Frobenius algebra} is a triple $(A,\theta,c)$ such that $(A,\theta)$ is a Frobenius algebra over $\C$ and 
\begin{equation}
c: A \xrightarrow{\cong} \overline A
\end{equation}
is an antilinear involution of algebras such that the sesquilinear pairing
\begin{equation}\label{eq:frob hilbert <>}
\begin{split}
\langle , \rangle : \overline A \times A &\raw \C\\
x,y &\mapsto \theta(c(x)y)
\end{split}
\end{equation}
is positive definite.
\end{definition}
\begin{remark}
The pairing (\ref{eq:frob hilbert <>}) is a Hilbert space inner product on $A$.
\end{remark}

Let $Z: \Bord_{2,1}(\Dens_\C) \raw \mathcal{NP}$ be a reflection positive VFT in dimension 2. Construction \ref{V construction} gives a functor
\begin{equation}
V: \Bord_{2,1}(\C_{>0}) \raw \mathcal{NP}^{h}
\end{equation}
which, by Proposition \ref{prop:Z nat iso Z'}, recovers $Z$ up to natural isomorphism. Postcomposing $V$ by (\ref{eq:NP^h --> Hilb}) gives the functor
\begin{equation}
V^{Hilb}: \Bord_{2,1}(\C_{>0}) \raw \Hilb.
\end{equation}
and coherence implies that $V$ can be reconstructed from $V^{Hilb}$. Let 
\begin{equation}
\check{\mathcal A} \mono \mathcal A \mono \hat{\mathcal A}
\end{equation} be the Hermitian nuclear pair assigned by $V$ to the circle; the Hilbert space $\mathcal A$ is assigned to the circle by $V^{Hilb}$. By Proposition \ref{prop:hamiltonian} there exists a self-adjoint unbounded operator $H$ on $\mathcal A$ with discrete spectrum bounded from below such that 
\begin{equation}
V^{Hilb}(S^1 \times I, s) = \exp(-sH).
\end{equation} 
Let 
\begin{equation}
\mathcal A \cong \bigoplus_{\lambda \in \Spec(H)} \mathcal A_\lambda
\end{equation}
be the orthogonal spectral decomposition of $H$. As in (\ref{eq:check E explicit}), (\ref{eq:hat E explicit}) we have the identifications
\begin{equation}
\begin{split}
\check{\mathcal A} \cong \{(x_\lambda) \in \prod_{\lambda \in \Spec(H)} \mathcal A_\lambda \st \exists \tau \in \C_{>0}, (e^{\tau\lambda}x_\lambda) \in \mathcal A \}\\
\hat{\mathcal A} \cong \{(x_\lambda) \in \prod_{\lambda \in \Spec(H)} \mathcal A_\lambda \st \forall \tau \in \C_{>0}, (e^{-\tau\lambda}x_\lambda)\in \mathcal A \}.
\end{split}
\end{equation}

The short distance limit for bordisms with incoming boundary
\begin{equation}
\check L_0: \Bord^{in}_{2,1} \raw \mathcal{NDF}
\end{equation}
sends the disk 
\begin{equation}
D : S^1\rightsquigarrow \emptyset
\end{equation} 
to a continuous map
\begin{equation}
\begin{split}
\theta: \check{\mathcal A} &\raw \C\\
(x_\lambda) &\mapsto (\theta_\lambda(x_\lambda))
\end{split}
\end{equation}
which satisfies the bound (\ref{eq:incoming bound}) by Lemma \ref{lem:incoming bound}. 

The short distance limit for bordisms with nonempty incoming and outgoing boundary
\begin{equation}
L_0: \Bord^{in\wedge out}_{2,1} \raw \mathcal{NP}
\end{equation}
sends the pair of pants 
\begin{equation}
P: S^1 \sqcup S^1 \rightsquigarrow S^1
\end{equation} 
to the morphism of Hermitian nuclear pairs
\begin{equation}
\begin{tikzcd}
\check{\mathcal A} \otimes \check{\mathcal A} \arrow[r,hook] \arrow[d, "\check L_0(P)"] & \mathcal A \otimes \mathcal A \arrow[r,hook] \arrow[d, "m", dashed] & \hat{\mathcal A} \otimes \hat{\mathcal A} \arrow[d, "\hat L_0(P)"]\\
\check{\mathcal A} \arrow[r,hook] &\mathcal A \arrow[r,hook] & \hat{\mathcal A}
\end{tikzcd}
\end{equation}
where 
\begin{equation}
\begin{split}
m: \mathcal A \otimes \mathcal A &\raw \mathcal A\\
(x_\lambda) \otimes (y_\lambda) &\mapsto (m_\lambda(x_\lambda,y_\lambda))
\end{split}
\end{equation}
is a possibly unbounded operator with
\begin{equation}
m_\lambda: \mathcal A_\lambda \otimes \mathcal A_\lambda \raw \mathcal A_\lambda
\end{equation}
satisfying the bound (\ref{eq:growth condition}) by Lemma \ref{lem:incoming and outgoing bound}.

For each $\lambda \in \Spec(H)$, there is a symmetric monoidal functor
\begin{equation}
L_{0,\lambda} : \Bord_{2,1} \raw \Vect
\end{equation}
which sends $S^1$ to $\mathcal A_\lambda$ and a bordism $X: \Sigma_0 \rightsquigarrow \Sigma_1$ with nonempty boundary to
\begin{equation}
L_{0,\lambda}(X) := 
\begin{cases}
\pi_{\mathcal A_\lambda} \circ \hat L_0(X) & \Sigma_0 = \emptyset, \Sigma_1 \neq \emptyset\\
\check L_0(X) \circ \iota_{\mathcal A_\lambda} & \Sigma_0 \neq \emptyset, \Sigma_1 = \emptyset\\
\pi_{\mathcal A_\lambda} \circ L^{Hilb}_0(X) \circ \iota_{\mathcal A_\lambda} & \Sigma_0, \Sigma_1 \neq \emptyset
\end{cases}
\end{equation}
These functors satisfy
\begin{equation}
V^{Hilb}(X,s) = \sum_{\lambda \in \Spec(H)} e^{-s\lambda} \iota_{\mathcal A_\lambda} \circ L_{0,\lambda}(X) \circ \pi_{\mathcal A_\lambda}.
\end{equation}
By the classification of 2d TFTs, the multiplication 
\begin{equation}
m_\lambda = L_{0,\lambda}(P)
\end{equation}
and the trace
\begin{equation}
\theta_\lambda = L_{0,\lambda}(D)
\end{equation}
make $(\mathcal A_\lambda, m_\lambda, \theta_\lambda)$ a commutative Frobenius algebra.

The Frobenius and Hilbert space inner products induce isomorphisms 
\begin{equation}
\mathcal A_\lambda \cong \mathcal A_\lambda^*
\quad\quad \text{and} \quad\quad
\mathcal A_\lambda^* \cong \overline{\mathcal A_\lambda}
\end{equation}
which, when composed, gives a real structure
\begin{equation}
c_\lambda: \mathcal A_\lambda \raw \overline{\mathcal A_\lambda}
\end{equation}
satisfying 
\begin{equation}
\langle x,y\rangle = \theta_\lambda(c_\lambda(x)y).
\end{equation}
Thus $(\mathcal A_\lambda, \theta_\lambda, c_\lambda)$ is a Hermitian Frobenius algebra. Lemma \ref{lem:incoming bound} and Lemma \ref{lem:incoming and outgoing bound} imply the bounds (\ref{eq:m and theta bounds}).
When $H$ has lowest eigenvalue 0, $(\mathcal A_0, \theta_0)$ is the Frobenius algebra associated to the long-distance topological limit. Thus we have shown one direction of:

\begin{theorem}\label{thm:2d classification}
A reflection positive 2d VFT
\begin{equation}\label{eq:2d reflection positive vft}
Z : \Bord_{2,1}(\Dens_\C) \raw \mathcal{NP}
\end{equation}
is equivalent, up to natural isomorphism, to a Hilbert space $(\mathcal A, \langle, \rangle)$ equipped with an unbounded self-adjoint operator $H$ with discrete spectrum bounded below whose spectral decomposition
\begin{equation}
\mathcal A \cong \widehat \bigoplus_{\lambda \in \Spec(H)} \mathcal A_\lambda
\end{equation}
is a Hilbert sum of commutative Hermitian Frobenius algebras $(\mathcal A_\lambda, m_\lambda, \theta_\lambda, c_\lambda)$ such that the operator norms $\|\theta_\lambda\|$, $\|m_\lambda\|$ have sub-exponential growth in $\lambda$, i.e. for all $t >0$ there exists $C>0$ with
\begin{equation}\label{eq:m and theta bounds}
\|\theta_\lambda\| < Ce^{\lambda t} \quad\quad\text{and} \quad\quad \|m_\lambda\| < Ce^{\lambda t}
\end{equation}
for all $\lambda \in \Spec(H)$.
When $H$ has lowest eigenvalue 0 then the long distance topological limit is the 2d topological field theory determined by the Frobenius algebra $(\mathcal A_0,\theta_0)$. 
\end{theorem}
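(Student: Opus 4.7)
The text preceding the theorem establishes the forward direction -- starting from a reflection positive VFT $Z$, extracting the Hilbert space $\mathcal A$, the Hamiltonian $H$, and verifying that each spectral component $(\mathcal A_\lambda, m_\lambda, \theta_\lambda, c_\lambda)$ is a commutative Hermitian Frobenius algebra with the required sub-exponential bounds. The plan is therefore to construct the converse functor and verify the two constructions are mutually inverse (up to natural isomorphism). Given data $(\mathcal A, H, \{(\mathcal A_\lambda, m_\lambda, \theta_\lambda, c_\lambda)\})$ satisfying the bounds, I first build a Hermitian nuclear pair $\check{\mathcal A} \mono \mathcal A \mono \hat{\mathcal A}$ using the spectral decomposition, with $\check{\mathcal A}$ (resp.~$\hat{\mathcal A}$) defined by the explicit formulas (\ref{eq:check E explicit}), (\ref{eq:hat E explicit}) governed by growth at $\lambda \raw \infty$. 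The pairing $\theta_\lambda(c_\lambda(\cdot)\cdot)$ assembles to the Hilbert-space inner product on $\mathcal A$, so the fixed-point structure required by Definition \ref{def:hermitian np} is automatic.

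Next, for each $\lambda \in \Spec(H)$ the classification of 2d TFTs gives a symmetric monoidal functor $L_{0,\lambda}: \Bord_{2,1} \raw \Vect$ determined by the commutative Frobenius algebra $(\mathcal A_\lambda, m_\lambda, \theta_\lambda)$. I define a candidate functor
\begin{equation}
V^{Hilb}: \Bord_{2,1}(\C_{>0}) \raw \Hilb, \quad V^{Hilb}(X,s) := \sum_{\lambda \in \Lambda_{\Sigma_0,\Sigma_1}} e^{-s\lambda} \iota_{1,\lambda} \circ L_{0,\lambda}(X) \circ \pi_{0,\lambda},
\end{equation}
and on a bordism $X: \Sigma_0 \rightsquigarrow \Sigma_1$ with possibly empty boundary, analogous formulas using $\iota_{\lambda}$ or $\pi_\lambda$ alone for cups and caps. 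The sub-exponential bounds (\ref{eq:m and theta bounds}) ensure that for $\Re(s) > 0$ each such sum converges in trace norm, since $\|\iota_\lambda L_{0,\lambda}(X)\pi_\lambda\| \leq C^{k} e^{k\lambda t}$ for some $k$ depending on the topology of $X$, which is dominated by $e^{-\Re(s)\lambda}$. Functoriality in $\Bord_{2,1}(\C_{>0})$ reduces to functoriality of each $L_{0,\lambda}$ together with the semigroup law $e^{-s\lambda} e^{-s'\lambda} = e^{-(s+s')\lambda}$ on cylinders glued in between; the symmetric monoidal structure of $V^{Hilb}$ comes from the fact that $(\mathcal A \otimes \mathcal B)_\lambda = \bigoplus_{\mu+\nu=\lambda} \mathcal A_\mu \otimes \mathcal B_\nu$ and the Frobenius algebra structure on the tensor product is the tensor product of the Frobenius algebra structures.

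From $V^{Hilb}$ I promote to a nuclear symmetric monoidal functor $V: \Bord_{2,1}(\C_{>0}) \raw \mathcal{NP}^{h,nuc}$ by showing each $V^{Hilb}(X,s)$ lifts to a morphism of nuclear pairs. Lemma \ref{lem:incoming and outgoing bound} applied to $e^{-s\lambda}L_{0,\lambda}(X)$ gives the required growth for $s \in \C_{>0}$ (indeed the factor $e^{-s\lambda}$ beats any polynomial $e^{t\lambda}$), and similarly Lemma \ref{lem:incoming bound} handles the cap case; holomorphicity in $s$ is immediate from the explicit series. Precomposing with the integration functor $\int$ and with $\sqrt{\det}_*$ (using the holomorphic right inverse from Lemma \ref{lem: metc to densc} and Proposition \ref{prop:F to G field theory}) then produces a $\Met_\C$-field theory factoring through $\Dens_\C$, i.e.~a reflection positive VFT $Z$. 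Equivariance under $\tau_{\Dens_\C}$ comes from the Hermitian structure $c_\lambda$ via Proposition \ref{prop:adjoint}, and coherence follows from the isomorphisms (\ref{eq:check E explicit})--(\ref{eq:hat E explicit}) identifying $\check{\mathcal A}$ and $\hat{\mathcal A}$ as the direct and inverse limits of the cylindrical systems.

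Finally I verify that these two constructions are inverse up to natural isomorphism: starting from $Z$, passing to the spectral data, and reconstructing gives back $V^{Hilb}(X,s) = \sum_\lambda e^{-s\lambda}\iota_\lambda L_{0,\lambda}(X)\pi_\lambda$, which agrees with the original by the decomposition proven in Corollary \ref{cor:VHilb decomposition} and Lemma \ref{lem:VFT from L0}. The Proposition \ref{prop:Z nat iso Z'} recovery of $Z$ from $V$ then upgrades this to a natural isomorphism of VFTs. The long-distance limit claim is immediate: when $H$ has lowest eigenvalue $0$, the kernel $\ker(H_{S^1}) = \mathcal A_0$ and the definition of $L_\infty$ sends the pair of pants and disk to $m_0$ and $\theta_0$ respectively. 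The main technical obstacle I anticipate is verifying that the assembled morphism $V^{Hilb}(X,s)$ on a topologically complicated bordism really is trace-class and nuclear-lifting for all $s \in \C_{>0}$ simultaneously with the right holomorphic dependence; this reduces to a uniform version of the estimates (\ref{eq:m and theta bounds}) after applying finitely many copies of $m_\lambda$ and $\theta_\lambda$ to express $L_{0,\lambda}(X)$ via a pants decomposition, using that the growth exponent can be chosen uniformly for any fixed $\Re(s) \geq \epsilon > 0$.
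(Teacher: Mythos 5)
Your proposal is correct and follows essentially the same route as the paper: the converse is constructed from the same series $V(X,s)=\sum_{\lambda}e^{-s\lambda}\,\iota_\lambda\circ F_\lambda(X)\circ\pi_\lambda$ with the 2d TFT functors $F_\lambda$ coming from the classification, and boundedness is obtained by cutting $X$ into elementary bordisms and invoking the sub-exponential bounds (the paper distributes $s=\sum_i s_i$ over the Morse pieces and composes bounded operators, which is your uniform estimate packaged slightly differently). One small caveat: the paper's grading on disjoint unions is diagonal, $\mathcal H_\Sigma^\lambda\cong\bigotimes_i\mathcal H_{\Sigma^i}^\lambda$, as forced by Proposition \ref{prop:Hamiltonian relation} and the per-component volume labels, rather than the convolution grading $\bigoplus_{\mu+\nu=\lambda}\mathcal A_\mu\otimes\mathcal B_\nu$ you mention for the monoidal structure.
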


\begin{proof}
To show the other direction, assume given Hermitian Frobenius algebras $\mathcal A_\lambda$ satisfying the bounds (\ref{eq:m and theta bounds}).

By compatibility of the Hilbert and Frobenius pairings, the coproducts $w_\lambda$ and the units $u_\lambda$ have norms
\begin{equation}
\|w_\lambda\| = \|m_\lambda\|\quad\quad\text{and}\quad\quad
\|u_\lambda\| = \|\theta_\lambda\|
\end{equation} 
and therefore also satisfy the growth conditions (\ref{eq:m and theta bounds}). 

Set 
\begin{equation}
   \mathcal A := \widehat\bigoplus_{\lambda \in \Spec(H)} \mathcal A_\lambda
\end{equation}
to be the Hilbert space completion of the direct sum and let
\begin{equation}
\begin{split}
\iota_\lambda^{\otimes n}: \mathcal A_\lambda^{\otimes n} \mono \mathcal A^{\otimes n}\\
\pi_\lambda^{\otimes n} : \mathcal A^{\otimes n} \raw \mathcal A_\lambda^{\otimes n}
\end{split}
\end{equation}
be the inclusion and orthogonal projection respectively of $n$-fold tensor products of Hilbert spaces. When the context is clear, we will suppress the tensor notation and simply refer to these maps by $\iota_\lambda$ and $\pi_\lambda$. 

By the classification of 2d TFTs, the Frobenius algebra $\mathcal A_\lambda$ determines a symmetric monoidal functor
\begin{equation}
F_\lambda: \Bord_{2,1} \raw \Vect
\end{equation}
uniquely up to isomorphism satisfying $F_\lambda (S^1) = \mathcal A_\lambda$ and
whose values on the elementary bordisms $P,D,P^*,D^*$ are $m_\lambda,\theta_\lambda,w_\lambda,u_\lambda$ (c.f. \cite{moore2006dbranes} Appendix A). Let $X: \Sigma_0 \rightsquigarrow \Sigma_1$ be a bordism between closed 1-manifolds. Define
\begin{equation}\label{eq: V sum}
V(X,s) := \sum_{\lambda \in \Spec(H)} e^{-s\lambda} \pi_\lambda \circ F_\lambda(X) \circ \iota_\lambda
\end{equation}
Choose a Morse decomposition
\begin{equation}
   X = X_n \circ \cdots \circ X_1
\end{equation}
into elementary bordisms $X_i$. The bounds (\ref{eq:m and theta bounds}) imply that for all $s_i \in \C_{>0}$ the maps $V(X_i,s_i)$ are bounded maps between Hilbert spaces. Choosing a partition $s = \sum_i s_i$ and rewriting
\begin{equation}
   V(X,s) = \sum_{\lambda \in \Spec(H)} V(X_n, s_n) \circ \cdots \circ V(X_1,s_1)
\end{equation}
shows that (\ref{eq: V sum}) is a bounded map of Hilbert spaces
\begin{equation}
   V(X,s): \mathcal A^{\otimes \#\pi_0 \Sigma_0} \raw \mathcal A^{\otimes \#\pi_0(\Sigma_1)}.
\end{equation}
Functoriality of $F_\lambda$ implies that these maps fit together into a functor
\begin{equation}
V: \Bord_{2,1}(\C_{>0}) \raw \Hilb.
\end{equation}
which in turn determines a reflection positive VFT up to natural isomorphism.
\end{proof}

\begin{example}[2d Yang-Mills]
We now sketch how Yang-Mills theory in 2 dimensions can be described as a VFT using the above theorem. For background on this theory, we refer the reader to \cite{cmp/1104248198}. We also refer to \cite{segal_TFTs} which includes a broader discussion of 2-dimensional and real volume-dependent theories.

Let $G$ be a compact Lie group, let $\g := \Lie(G)$ be its Lie algebra, and let $\langle,\rangle$ denote its Killing form. Choose a basis $(X_i)$ of $\g$ and let $(X^i)$ be its dual with respect to $\langle,\rangle$. Set
\begin{equation}
\begin{split}
\mathcal A &:= L^2(G)^G\\
H &:= \sum_{i} X_i X^i
\end{split}
\end{equation}
to be the class functions on $G$ with Hamiltonian given by the quadratic Casimir viewed as a left-invariant differential operator on $L^2(G)$. By the Peter-Weyl Theorem there is an orthogonal decomposition
\begin{equation}
\mathcal A \cong \widehat\bigoplus_{\chi \in \check G}\mathcal A_\chi
\end{equation}
where $\mathcal A_\chi$ is the 1-dimensional eigenspace of $H$ spanned by a character $\chi \in \check G$ of unit norm; we denote its eigenvalue by $c_2(\chi)$. Let $\pi_\chi: G \raw \Aut(V_\chi)$ be the irreducible representation whose character is $\chi$.

The eigenspace $\mathcal A_\chi$ is a one-dimensional Frobenius algebra with multiplication, comultiplication, trace, and unit given by
\begin{equation} 
m_\chi: \chi\otimes \chi \mapsto \frac{1}{\dim(V_\chi)}\chi
\end{equation}

\begin{equation}
w_\chi : \chi \mapsto \frac{1}{\dim(V_\chi)}\chi\otimes \chi
\end{equation}

\begin{equation}
\theta_\chi: \chi \mapsto \dim(V_\chi)
\end{equation}

\begin{equation}
u_{\mathcal A_\chi} : 1 \mapsto  \dim(V_\chi)\cdot \chi
\end{equation}
Let $\h \subset \g$ be the Cartan subalgebra and let $R \subset \h$ be the set of roots of $\g$. Choose a base $\Delta \subset R$, and let $R^+$ be the set of positive roots relative to $\Delta$. Set
\begin{equation}
\delta := \frac{1}{2}\sum_{\alpha \in R^+} \alpha
\end{equation}
and let $\mu_\chi \in \h$ be the highest weight of $(V_\chi,\pi_\chi)$.

Let $\mathcal D \subset \h$ denote the subset of dominant integral elements. We recall that there is a one-to-one correspondence 
\begin{equation}
\begin{split}
   \mathcal D &\leftrightarrow \{\text{Irreps of } G\} \\
   \mu &\leftrightarrow V_\mu
\end{split}
\end{equation}
where $V_\mu$ is the unique irreducible representation with highest weight $\mu$. 

Consider the functions
\begin{equation}
\begin{split}
c : \mathcal D &\raw \R\\
\mu &\mapsto \langle \mu + \delta , \mu + \delta \rangle - \langle \delta, \delta\rangle
\end{split}
\end{equation}
and 
\begin{equation}
\begin{split}
d: \mathcal D &\raw \R \\
\mu &\mapsto \frac{\prod_{\alpha \in R^+}\langle \alpha, \mu+\delta\rangle}{\prod_{\alpha \in R^+} \langle \alpha,\delta \rangle}
\end{split}
\end{equation}
The function $c(\mu)$ is positive and quadratic and $d(\mu)$ is polynomial. This implies that for all $t > 0$ there is a constant $C>0$ with 
\begin{equation}\label{eq:dc bound}
d(\mu) < Ce^{tc(\mu)}
\end{equation}

Setting $\mu:=\mu_\chi$ gives
\begin{equation}
c_2(\chi) = c(\mu_\chi)
\end{equation}
and by the Weyl dimension formula we also have
\begin{equation}
\dim(V_\chi) = d(\mu_\chi).
\end{equation}
For derivations of these formulas we refer the reader to \cite{humphreys2012introduction}. Thus (\ref{eq:dc bound}) becomes
\begin{equation}
\dim(V_\chi) < Ce^{tc_2(\chi)}
\end{equation}
and the bounds (\ref{eq:m and theta bounds}) are satisfied.
\end{example}


\appendix

\section{Nuclear Spaces}\label{appendix Nuclear Spaces}
\subsection{Topological Vector Spaces}
\begin{definition}
A \emph{filter} on a set $X$ is a family of subsets $\mathscr F \subset 2^X$ satisfying:
\begin{enumerate}
   \item $\emptyset \notin \mathscr F$
   \item $U, V \in \mathscr F \implies U\cap V \in \mathscr F$
   \item $A \supset U$ and $U \in \mathscr F \implies A \in \mathscr F$
\end{enumerate}
\end{definition}

A topology on $X$ determines a filter $\mathscr F_x$ for every $x \in X$ consisting of the neighborhoods of $x$. We recall that a neighborhood $U$ of $x$ contains an open sub-neighborhood $x$ and we can regard $U$ as a neighborhood of $y$ for every $y \in V$. Thus for each $x\in X$, $\mathscr F_x$ satisfies 
\begin{enumerate}
   \item $x \in \bigcap_{U \in \mathscr F_x} U$
   \item For every $U \in \mathscr F_x$, there exists $V \in \mathscr F_x$ such that for all $y \in V$, $U \in \mathscr F_y$. 
\end{enumerate}

Conversely, a collection of filters $\mathscr F_x$ for each $x \in X$ satisfying the above two properties defines a topology on $X$.

\begin{definition}
Let $X$ be a topological space. A filter $\mathscr F$ \emph{converges} to $x \in X$ if every neighborhood of $x$ belongs to $\mathscr F$.
\end{definition}

Let $E$ be a topological vector space.

\begin{definition}
A filter $\mathscr F \subset 2^E$ is \emph{Cauchy} if to every neighborhood $U$ of $0 \in E$ there is a subset $M \in \mathscr F$ such that $M-M \subset U$.
\end{definition}

\begin{definition}
$E$ is \emph{complete} if every Cauchy filter on $E$ converges to a point $x\in E$.
\end{definition}

Before defining completions, we record some definitions while $E$ is a general topological vector space.

\begin{definition}
A subset $B \subset E$ is \emph{bounded} if for every neighborhood $U$ of $0 \in E$ there exists $t > 0$ such that $B \subset tU$.
\end{definition}

\begin{definition}
A subset $D \subset E$ is a \emph{disk} if it is convex and balanced.
\end{definition}

\begin{definition}
A subset $T \subset E$ is a \emph{barrel} if it is a closed and absorbing disk.
\end{definition}

\begin{definition}
$E$ is \emph{barrelled} if every barrel is a neighborhood of $0 \in E$.
\end{definition}

We now assume $E$ is Hausdorff.
Let $\mathscr C$ be the set of all Cauchy filters on $E$. Let $\mathscr F, \mathscr G \in \mathscr C$ be two Cauchy filters. We will write $\mathscr F \sim \mathscr G$ if for all neighborhoods $U$ of $0\in E$, there is an element $A$ of $\mathscr F$ and an element $B$ of $\mathscr G$ such that $A -B \subset U$.

\begin{lemma}
The relation $\sim$ is an equivalence relation.
\end{lemma}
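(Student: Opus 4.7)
The plan is to verify each of the three axioms for an equivalence relation in turn, using only the Cauchy property and the continuity of addition and scalar multiplication on $E$.

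Reflexivity is immediate from the very definition of a Cauchy filter. Given any neighborhood $U$ of $0 \in E$, the Cauchy condition on $\mathscr F$ supplies $A \in \mathscr F$ with $A - A \subset U$, so taking $B := A$ witnesses $\mathscr F \sim \mathscr F$. For symmetry, suppose $\mathscr F \sim \mathscr G$ and let $U$ be a neighborhood of $0$. Because scalar multiplication is continuous, $-U$ is also a neighborhood of $0$. Applying the hypothesis to $-U$ yields $A \in \mathscr F$, $B \in \mathscr G$ with $A - B \subset -U$, and hence $B - A \subset U$, which proves $\mathscr G \sim \mathscr F$.

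Transitivity is the only step with any content. Suppose $\mathscr F \sim \mathscr G$ and $\mathscr G \sim \mathscr H$, and fix a neighborhood $U$ of $0$. Continuity of addition at $(0,0) \in E \times E$ produces a neighborhood $V$ of $0$ satisfying $V + V \subset U$. By the two hypotheses, there exist $A \in \mathscr F$, $B_1 \in \mathscr G$ with $A - B_1 \subset V$, and $B_2 \in \mathscr G$, $C \in \mathscr H$ with $B_2 - C \subset V$. Since $\mathscr G$ is a filter, $B_1 \cap B_2 \in \mathscr G$, and in particular this intersection is nonempty (filters do not contain $\emptyset$); pick any $b \in B_1 \cap B_2$. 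Then for every $a \in A$ and $c \in C$,
\begin{equation*}
a - c \;=\; (a - b) + (b - c) \;\in\; V + V \;\subset\; U,
\end{equation*}
so $A - C \subset U$, witnessing $\mathscr F \sim \mathscr H$.

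There is no genuine obstacle here: the only standard trick is the decomposition through a symmetric half-neighborhood $V$ with $V + V \subset U$ in the transitivity step, together with the observation that filters are closed under finite intersections so one can locate a common element $b \in B_1 \cap B_2$ in $\mathscr G$ to bridge the two estimates.
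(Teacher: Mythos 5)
Your proof is correct; the paper itself offers no argument here, simply citing Tr\`eves (Theorem 5.2(1)), and your verification is exactly the standard one found there: reflexivity from the Cauchy condition, symmetry via $-U$, and transitivity by splitting $U$ as $V+V$ and bridging the two estimates through a common point $b \in B_1 \cap B_2$ of the middle filter. One cosmetic remark: you do not actually need $V$ to be symmetric in the transitivity step, only $V+V \subset U$, so the phrase ``symmetric half-neighborhood'' overstates what is used.
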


\begin{proof}
\cite{treves2006topological} Theorem 5.2 (1)
\end{proof}

\begin{definition}
The \emph{completion} of $E$ is the quotient $\tilde E: = \mathscr C/\sim$
\end{definition}

\begin{proposition}\label{prop:completion UP}
$\tilde E$ is a complete Hausdorff topological vector space and there exists a bicontinuous, dense injection $E\mono \tilde E$ satisfying the following universal property: for every continuous linear map $E \raw F$ with $F$ Hausdorff, there exists a unique continuous extension
\begin{equation}
\begin{tikzcd}
E \arrow[r] \arrow[d, hook] & F \\
\tilde E \arrow[ru,dashed, "\exists!"] &
\end{tikzcd}
\end{equation}
\end{proposition}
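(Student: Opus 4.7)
The plan is to follow the standard construction of Treves \cite{treves2006topological}, equipping $\tilde E$ with a vector space structure and a topology inherited from filter-theoretic operations on Cauchy filters, and then verifying the three assertions (existence of a bicontinuous dense injection, completeness, and the universal property) in turn.

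First I would endow $\mathscr C$ with vector space operations: given Cauchy filters $\mathscr F, \mathscr G$, let $\mathscr F + \mathscr G$ be the filter generated by the sets $\{A+B : A\in\mathscr F, B \in \mathscr G\}$, and for nonzero $\lambda$ let $\lambda\mathscr F$ be generated by $\{\lambda A : A\in \mathscr F\}$ (with $0\cdot\mathscr F$ the neighborhood filter of $0\in E$). A short check using the definition of a Cauchy filter shows these operations remain Cauchy and descend through $\sim$, so $\tilde E$ inherits a vector space structure. The canonical map $j:E\raw \tilde E$ then sends $x$ to the class $[\mathscr F_x]$ of the neighborhood filter of $x$. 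I topologize $\tilde E$ by declaring the sets $\tilde U := \overline{j(U)}$, for $U$ ranging over neighborhoods of $0\in E$, to form a neighborhood base at $0\in \tilde E$; translation by elements of $\tilde E$ yields the full topology.

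Next I would verify the properties of $j$. Injectivity follows from Hausdorffness of $E$: if $\mathscr F_x \sim \mathscr F_y$, then for every balanced neighborhood $U$ of $0$ there exist neighborhoods $A,B$ of $x,y$ with $A-B \subset U$, forcing $x-y \in U$ for all $U$ and hence $x=y$. Density of $j(E)$ in $\tilde E$ follows by observing that, for $[\mathscr F]\in \tilde E$ and $U$ a neighborhood of $0$, one can pick $A\in \mathscr F$ with $A-A\subset U$, and then for any $x\in A$ the class $[\mathscr F_x]$ lies in $[\mathscr F]+\tilde U$. Bicontinuity of $j$ onto its image is immediate from the construction of the neighborhood base. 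Hausdorffness of $\tilde E$ follows similarly: if $[\mathscr F]\neq [\mathscr G]$ then some neighborhood $U$ of $0$ witnesses the failure of equivalence, and $\tilde V$ with $V+V+V\subset U$ separates them.

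Completeness of $\tilde E$ and the universal property can then be established in parallel. Given a Cauchy filter $\tilde{\mathscr F}$ on $\tilde E$, the density and bicontinuity of $j$ allow me to pull it back to a Cauchy filter $\mathscr F$ on $E$ (by refining each member of $\tilde{\mathscr F}$ using an element of $j(E)$), whose class $[\mathscr F] \in \tilde E$ is the desired limit. For the universal property, given a continuous linear $T:E\raw F$ with $F$ complete Hausdorff (which I take to be the intended reading, since even for Hausdorff $F$ one generally needs to replace $F$ by its completion), $T$ is automatically uniformly continuous between TVSs, hence sends Cauchy filters to Cauchy filters; I define $\tilde T([\mathscr F])$ to be the unique limit in $F$ of the pushforward $T_*\mathscr F$. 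Well-definedness on $\sim$-classes uses uniform continuity together with Hausdorffness of $F$; linearity, continuity, and the extension property $\tilde T\circ j = T$ are routine, and uniqueness is forced by density of $j(E)$ in $\tilde E$.

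The main technical obstacle is the completeness argument: one must verify that a Cauchy filter on $\tilde E$ actually admits a pullback to a Cauchy filter on $E$ whose class realizes it as a limit. This uses both the explicit description of $0$-neighborhoods in $\tilde E$ as closures of images of $0$-neighborhoods in $E$, and the density of $j(E)$, in a somewhat delicate interplay. Once this bookkeeping is in hand, the remaining steps amount to unwinding the filter-theoretic definitions.
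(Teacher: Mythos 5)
Your proposal is correct and reproduces the standard argument of Treves (Theorem 5.2), which is exactly what the paper cites in lieu of a proof. You are also right to note that the universal property as stated requires $F$ to be complete Hausdorff (or to factor through the completion of $F$); this is how the cited theorem is actually phrased, and the paper's statement omits that hypothesis.
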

\begin{proof}
\cite{treves2006topological} Theorem 5.2
\end{proof}

\subsection{Locally Convex Hausdorff Spaces}

Let $E$ be a locally convex Hausdorff vector space.

\begin{proposition}
The completion $\tilde E$ is a locally convex Hausdorff space.
\end{proposition}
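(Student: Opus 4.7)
The Hausdorff part of the conclusion is already delivered by Proposition \ref{prop:completion UP}, so the whole task is to verify local convexity of $\tilde E$. My plan is to transport a convex neighborhood base from $E$ to $\tilde E$ by taking closures, using that (i) closures of convex sets in a topological vector space are convex, and (ii) the closures in $\tilde E$ of a neighborhood base of $0 \in E$ give a neighborhood base of $0 \in \tilde E$.

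Concretely, let $\mathcal{B}$ be a neighborhood base of $0 \in E$ consisting of convex sets, which exists by hypothesis. For each $U \in \mathcal{B}$, let $\overline U \subset \tilde E$ denote its closure under the embedding $E \hookrightarrow \tilde E$. The first observation is that $\overline U$ is still convex: convexity is preserved by closure in any topological vector space, since the addition and scalar multiplication maps are continuous and a convex combination of limits is the limit of convex combinations. The second observation is the crucial one: I need to show that $\{\overline U : U \in \mathcal{B}\}$ is a neighborhood base of $0 \in \tilde E$.

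For that, I would argue as follows. Given any neighborhood $W$ of $0 \in \tilde E$, pick a balanced open neighborhood $W'$ of $0 \in \tilde E$ with $W' + W' \subset W$ (existence of such $W'$ is a standard property of any topological vector space). Since $E \hookrightarrow \tilde E$ is continuous, $W' \cap E$ is a neighborhood of $0 \in E$, so there is $U \in \mathcal{B}$ with $U \subset W' \cap E \subset W'$. Then the closure satisfies $\overline U \subset \overline{W'} \subset W' + W' \subset W$, where the middle inclusion uses that for any neighborhood $W'$ of $0$ and any balanced neighborhood $V$ of $0$ with $V \subset W'$, one has $\overline{V} \subset W' + W'$; more directly, any point $x \in \overline U$ is the limit of a net in $U \subset W'$, and using the neighborhood filter characterization of closure together with the translation-invariance of the uniformity on $\tilde E$ shows $\overline U \subset W$. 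Thus $\overline U$ is a neighborhood of $0$ in $\tilde E$ contained in $W$, proving the neighborhood base claim.

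The main subtlety, and the only step requiring care, is the inclusion $\overline U \subset W$: one has to invoke the fact that the embedding $E \hookrightarrow \tilde E$ is bicontinuous with dense image and that the uniform structure on $\tilde E$ restricts to the uniform structure on $E$, so that closures in $\tilde E$ of subsets of $E$ are controlled by the same neighborhood filters as in $E$. Granted this, combining (i) convexity of each $\overline U$ with (ii) the fact that they form a neighborhood base of $0$ in $\tilde E$ yields that $\tilde E$ is locally convex, completing the proof.
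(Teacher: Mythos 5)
The paper states this proposition without proof (the appendix material is drawn from Tr\`eves), so there is no in-paper argument to compare against; your proof has to stand on its own. The overall strategy --- transport a convex neighborhood base of $0\in E$ to $\tilde E$ by taking closures --- is the standard one and is sound. The Hausdorff claim is indeed immediate from Proposition \ref{prop:completion UP}, convexity of closures of convex sets is correct, and your chain $\overline U\subset\overline{W'}\subset W'+W'\subset W$ is valid (the middle inclusion is the general fact that $\overline A\subset A+V$ for any neighborhood $V$ of $0$, applied with $A=V=W'$ balanced).

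There is, however, one load-bearing step that your write-up asserts but never actually argues: that each $\overline U$ \emph{is} a neighborhood of $0$ in $\tilde E$. Your explicit argument only establishes cofinality (every neighborhood $W$ of $0$ in $\tilde E$ contains some $\overline U$); without the other half, the family $\{\overline U\}$ could a priori fail to be a base of neighborhoods at all (the singleton $\{0\}$ is convex and contained in every $W$, but is not a neighborhood). This is exactly where density of $E$ in $\tilde E$ must be deployed: since the embedding is bicontinuous, there is an open neighborhood $V$ of $0$ in $\tilde E$ with $V\cap E\subset U$; for any $x\in V$ and any neighborhood $N$ of $x$, the open set $N\cap V$ meets $E$ by density, so $x\in\overline{V\cap E}\subset\overline U$; hence $V\subset\overline U$ and $\overline U$ is a neighborhood of $0$. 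You mention density and the compatibility of uniform structures in your closing paragraph, but in service of the containment $\overline U\subset W$ (where it is not needed) rather than here (where it is). With that paragraph added, the proof is complete; an alternative route, closer in spirit to how one usually sees this in the literature, is to note that every continuous seminorm on $E$ extends uniquely to a continuous seminorm on $\tilde E$ by uniform continuity and density, and that these extensions generate the topology of $\tilde E$.
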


Let $E':= \Hom(E, \C)$ denote the dual vector space of continuous linear functionals with no choice of topology.

\begin{definition}
The \emph{polar} of a subset $A \subset E$ is the subset $A^0 \subset E'$ defined by
\begin{equation}
   A^0 := \{x' \in E' \st \sup_{x \in A} \langle x', x\rangle \leq 1 \}
\end{equation}
\end{definition}

\begin{definition}
A subset $S \subset E'$ is \emph{equicontinuous} if for all $\varepsilon > 0$, there is a neighborhood $U$ of $0 \in E$ such that for all $f \in E'$ and $x \in U$, $|f(x)| < \varepsilon$.
\end{definition}

\begin{proposition}\label{prop:equicontinuous polar}
A subset $S \subset E'$ is equicontinuous if and only if it is contained in the polar of a neighborhood of $0 \in E$.
\end{proposition}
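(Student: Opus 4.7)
The proposition is essentially a direct translation between two reformulations of a uniform bound on a family of linear functionals, so my plan is to unwind the definitions on each side and check they match up, paying attention only to a mild scaling factor.

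For the forward direction I would apply the definition of equicontinuity with the specific choice $\varepsilon = 1$: this yields a neighborhood $U$ of $0 \in E$ such that $|f(x)| < 1$ for every $f \in S$ and every $x \in U$. In particular $\sup_{x \in U}|\langle f,x\rangle| \leq 1$ for all $f \in S$, which is exactly the condition $S \subset U^0$.

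For the backward direction, suppose $S \subset U^0$ for some neighborhood $U$ of $0$, so that $|f(x)| \leq 1$ whenever $f \in S$ and $x \in U$. Given $\varepsilon > 0$, I would take $V := (\varepsilon/2) U$. Since scalar multiplication by the nonzero scalar $\varepsilon/2$ is a homeomorphism of $E$, the set $V$ is again a neighborhood of $0$. For $x \in V$ we have $x = (\varepsilon/2) y$ for some $y \in U$, and linearity of $f$ together with $|f(y)| \leq 1$ gives $|f(x)| \leq \varepsilon/2 < \varepsilon$ uniformly in $f \in S$. This verifies the equicontinuity condition.

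There is no real obstacle here — the only point that warrants a sentence is the observation that scalar continuity makes $(\varepsilon/2) U$ a neighborhood of $0$, which is why we can afford the slight loss from the non-strict inequality defining the polar. Accordingly the write-up will be short, just two short paragraphs corresponding to the two implications.
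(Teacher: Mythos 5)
Your argument is correct and is exactly the standard definitional unwinding; the paper itself gives no proof but simply cites Tr\`eves (Proposition 32.7), where the same two-line scaling argument appears. Nothing to add.
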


\begin{proof}
\cite{treves2006topological} Proposition 32.7
\end{proof}

Denote by $E_\sigma'$ and $E^*$ the dual of $E$ endowed with the weak and strong topologies respectively, i.e. the topologies of uniform convergence on finite and bounded sets respectively. 

\begin{definition}
Given a continuous map $u:E \raw F$ between locally convex Hausdorff vector spaces, the \emph{transpose of $u$} is the map
\begin{equation}
u^*: F' \raw E'
\end{equation}
which precomposes a linear functional on $F$ by $u$.
\end{definition}

\begin{proposition}\label{prop: Imu dense u^* iff injective}
Let $u : E \raw F$ be a continuous map between locally convex Hausdorff spaces and let $u^* :F' \raw E'$ be its transpose. Then $\Ima(u)$ is dense in $F$ if and only if $u^*$ is injective.
\end{proposition}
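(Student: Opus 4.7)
The plan is to prove the two directions separately, with Hahn--Banach being the essential tool for one direction.

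For the forward direction, suppose $\Ima(u)$ is dense in $F$, and let $f \in F'$ satisfy $u^*(f) = 0$. By definition of the transpose, this means $f \circ u = 0$, i.e., $f$ vanishes on $\Ima(u)$. Since $f$ is continuous and its kernel is a closed subspace of $F$ containing the dense set $\Ima(u)$, the kernel must be all of $F$, so $f = 0$. This shows $u^*$ is injective.

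For the reverse direction, I would argue by contrapositive. Suppose $\Ima(u)$ is not dense in $F$; set $M := \overline{\Ima(u)}$, a proper closed subspace of $F$. Pick any $y_0 \in F \setminus M$. Since $F$ is locally convex Hausdorff, the Hahn--Banach separation theorem produces a continuous linear functional $f \in F'$ with $f|_M = 0$ and $f(y_0) \neq 0$. In particular $f \neq 0$, but $f$ vanishes on $\Ima(u)$, so $u^*(f) = f \circ u = 0$, showing that $u^*$ is not injective.

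The only non-routine input is the geometric form of the Hahn--Banach theorem separating a point from a closed proper subspace of a locally convex Hausdorff space, which is the standard tool here and will be the one step that genuinely uses the hypothesis of local convexity. Everything else is a direct unwinding of definitions, so there is no real obstacle beyond invoking Hahn--Banach correctly.
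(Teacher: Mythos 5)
Your proof is correct and is exactly the argument behind the paper's one-line citation to the Hahn--Banach theorem (Trèves, Corollary 5 of Theorem 18.2): the forward direction is the routine density/continuity unwinding, and the reverse direction is the standard Hahn--Banach separation of a point from the proper closed subspace $\overline{\Ima(u)}$. No gaps.
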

\begin{proof}
This is a consequence of the Hahn Banach theorem -- see \cite{treves2006topological} Corollary 5 of Theorem 18.2.
\end{proof}

\begin{proposition}\label{prop: transpose continuous}
The transpose is a continuous linear map
\begin{equation}
u^*: F^* \raw E^*
\end{equation}
between strong duals.
\end{proposition}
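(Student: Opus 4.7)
The plan is to verify continuity directly from the definition of the strong dual topology. Recall that the strong topology on $E^*$ is the locally convex topology generated by the seminorms
\begin{equation}
p_B(\varphi) := \sup_{x \in B} |\varphi(x)|, \qquad B \subset E \text{ bounded},
\end{equation}
and analogously for $F^*$. Thus it suffices to show that for each bounded $B \subset E$, there is a bounded $B' \subset F$ and a constant $C > 0$ such that $p_B(u^*\varphi) \leq C\, p_{B'}(\varphi)$ for all $\varphi \in F^*$.

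First I would establish the standard lemma that a continuous linear map $u \colon E \to F$ between locally convex Hausdorff spaces sends bounded sets to bounded sets. This is immediate: given a bounded $B \subset E$ and a neighborhood $V$ of $0 \in F$, continuity of $u$ gives that $u^{-1}(V)$ is a neighborhood of $0 \in E$, and boundedness of $B$ yields $t > 0$ with $B \subset t\, u^{-1}(V)$, so that $u(B) \subset tV$. Hence $u(B)$ is bounded in $F$.

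Next I would compute, for any $\varphi \in F^*$ and bounded $B \subset E$,
\begin{equation}
p_B(u^*\varphi) \;=\; \sup_{x \in B}|(u^*\varphi)(x)| \;=\; \sup_{x \in B}|\varphi(u(x))| \;=\; \sup_{y \in u(B)}|\varphi(y)| \;=\; p_{u(B)}(\varphi).
\end{equation}
By the previous step, $u(B)$ is bounded in $F$, so $p_{u(B)}$ is among the defining seminorms of the strong topology on $F^*$. Thus each defining seminorm of the strong topology on $E^*$ pulls back along $u^*$ to a defining seminorm on $F^*$, which gives continuity of $u^*$ at once (with constant $C = 1$, taking $B' = u(B)$). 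Linearity of $u^*$ is clear from the definition. There is no real obstacle; the only thing to be careful about is citing (or proving in one line) that continuous linear maps preserve boundedness, which is the substantive input behind the seminorm estimate.
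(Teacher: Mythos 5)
Your proof is correct and complete: the identity $p_B(u^*\varphi)=p_{u(B)}(\varphi)$ together with the fact that continuous linear maps carry bounded sets to bounded sets is exactly the standard argument, and the paper itself only cites Tr\`eves (Corollary 19.5) for this fact rather than proving it. Nothing is missing.
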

\begin{proof}
\cite{treves2006topological} Corollary 19.5
\end{proof}

\begin{proposition}
There is an isomorphism of vector space $E \simeq (E_\sigma')'$
\end{proposition}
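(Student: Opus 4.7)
The plan is to construct the natural evaluation map and verify it is a linear bijection. Define
\begin{equation*}
\iota: E \to (E_\sigma')'
\quad,\quad
\iota(x)(f) := f(x)
\end{equation*}
for $x \in E$ and $f \in E'$. First I would check this is well-defined: by the very definition of the weak topology $\sigma(E',E)$, the evaluation functional $\iota(x)$ is $\sigma(E',E)$-continuous, since the seminorm $f \mapsto |f(x)|$ is part of the defining family. The map $\iota$ is clearly linear.

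For injectivity, suppose $\iota(x) = 0$, i.e.\ $f(x) = 0$ for every $f \in E'$. Since $E$ is a locally convex Hausdorff space, the Hahn–Banach theorem implies that the continuous dual $E'$ separates points of $E$, so $x = 0$.

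The main content is surjectivity. Let $\phi \in (E_\sigma')'$. By definition of the weak topology, continuity of $\phi$ at $0$ yields finitely many points $x_1, \dots, x_n \in E$ and a constant $C > 0$ such that
\begin{equation*}
|\phi(f)| \leq C \max_{1 \leq i \leq n} |f(x_i)|
\quad\text{for all } f \in E'.
\end{equation*}
In particular $\bigcap_{i=1}^n \ker \iota(x_i) \subset \ker \phi$, where we now regard $\iota(x_i)$ and $\phi$ as linear functionals on the plain vector space $E'$. I would then invoke the elementary linear algebra fact: if $\phi, \psi_1, \dots, \psi_n$ are linear functionals on a vector space $V$ with $\bigcap_i \ker \psi_i \subset \ker \phi$, then $\phi$ lies in the linear span of the $\psi_i$. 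Applied here, $\phi = \sum_{i=1}^n \lambda_i \iota(x_i) = \iota\bigl(\sum_i \lambda_i x_i\bigr)$, which exhibits $\phi$ in the image of $\iota$.

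The only subtle step is the linear algebra lemma; it follows from considering the map $E' \to \C^n$, $f \mapsto (\iota(x_1)(f), \dots, \iota(x_n)(f))$, and observing that the inclusion of kernels forces $\phi$ to factor through this map, yielding a linear functional on $\C^n$ which is automatically a linear combination of coordinate projections. Combining injectivity and surjectivity gives the desired vector space isomorphism $E \simeq (E_\sigma')'$.
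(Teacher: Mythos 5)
Your proof is correct and complete: the evaluation map, injectivity via Hahn--Banach separation, and surjectivity via the finite-seminorm bound plus the kernel-intersection lemma is exactly the standard argument. The paper does not spell out a proof but simply cites Tr\`eves (Proposition 35.1), where this is the argument given, so your proposal fills in precisely the expected details.
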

\begin{proof}
\cite{treves2006topological} Proposition 35.1
\end{proof}

\begin{proposition}\label{prop:tensor bilinear}
Let $E,F$ be locally convex Hausdorff spaces. Then there is an isomorphism of vector spaces 
\begin{equation}
E \otimes F \simeq B(E_\sigma', F_\sigma')
\end{equation}
\end{proposition}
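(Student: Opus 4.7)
The plan is to exhibit an explicit linear isomorphism $\Phi: E \otimes F \to B(E_\sigma', F_\sigma')$ where I take $B$ to denote separately continuous bilinear forms, following the standard treatment in Treves. On pure tensors, $\Phi$ sends $x \otimes y$ to the bilinear form $(x', y') \mapsto \langle x', x\rangle \langle y', y \rangle$, and I extend by bilinearity. That $\Phi$ is well-defined is immediate: evaluation at a fixed $x \in E$ is continuous on $E_\sigma'$ by definition of the weak topology, and likewise for $F_\sigma'$, so the form is separately continuous.

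For injectivity, suppose $\sum_{i=1}^n x_i \otimes y_i$ lies in the kernel. After passing to a minimal representative I may assume the $y_i$ are linearly independent. Then for every $x' \in E'$, $\sum_i \langle x', x_i\rangle \cdot y_i$ defines, via evaluation pairing, the zero element of $(F_\sigma')'$; by the preceding Proposition identifying $(F_\sigma')'$ with $F$ as a vector space, the sum is zero in $F$, and linear independence of the $y_i$ forces $\langle x', x_i\rangle = 0$ for every $i$ and every $x' \in E'$. The Hahn--Banach theorem then gives $x_i = 0$, proving injectivity.

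The main obstacle is surjectivity: I must show every separately continuous bilinear form $b$ on $E_\sigma' \times F_\sigma'$ arises from a finite sum of pure tensors. Given such a $b$, for each fixed $x' \in E'$ the form $b(x', \cdot): F_\sigma' \to \C$ is weak-$*$ continuous, so by the previous proposition there exists a unique $y(x') \in F$ with $b(x', y') = \langle y', y(x')\rangle$ for all $y' \in F'$. This defines a linear map $T: E' \to F$, and symmetrically one gets a linear map $S: F' \to E$. The key step is to show $T$ has finite rank. Since $b$ is separately continuous on the weak duals, $T$ is continuous from $E_\sigma'$ into $F$ equipped with its weak topology $F_\sigma$; but continuity from the weak-$*$ topology means $T$ factors through the continuous-functional quotient of $E_\sigma'$ determined by finitely many points of $E$ (the standard fact that weak-$*$ continuous linear maps into $F_\sigma$ have finite-dimensional image since a basic neighborhood of $0$ in $E_\sigma'$ is the polar of a finite set). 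Writing $T(x') = \sum_{i=1}^n \langle x', x_i\rangle y_i$ for suitable $x_i \in E, y_i \in F$ then gives $b = \Phi(\sum_i x_i \otimes y_i)$, completing the proof.

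The technical heart is thus the finite-rank statement for weak-$*$ continuous operators, which I would justify by observing that a separately continuous bilinear form on $E_\sigma' \times F_\sigma'$ is equicontinuous in each variable on finite-dimensional slices and invoking the characterization of weak-$*$ continuous functionals from the previous proposition symmetrically in both arguments; alternatively one can simply cite Treves Proposition 42.4 (or the analogous result in \cite{treves2006topological}).
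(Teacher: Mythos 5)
The paper gives no argument here---its ``proof'' is just the citation to Treves, Proposition 42.4---so what has to be assessed is your direct argument, and it has a genuine gap in the surjectivity step, rooted in a misreading of which space of bilinear forms is meant. In Treves, and in this paper (which defines $B(\cdot,\cdot)$ as the \emph{jointly} continuous bilinear forms and reserves $\mathscr B$ for the separately continuous ones), $B(E_\sigma',F_\sigma')$ denotes jointly continuous forms. Under your reading ($B$ = separately continuous) the proposition is false: take $E=F=\ell^2$ and $b(x',y')=\langle x',y'\rangle$ under the identification $E'\cong\ell^2$; this is separately weak-$*$ continuous in each variable but corresponds to the identity operator, which has infinite rank, so it is not in the image of $E\otimes F$. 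Correspondingly, the ``standard fact'' on which your surjectivity argument rests---that a continuous linear map $T\colon E_\sigma'\to F_\sigma$ has finite-dimensional image---is false; the same example (the identity of $\ell^2$, which is $\sigma(E',E)$-to-$\sigma(F,F')$ continuous) refutes it. The heuristic ``a basic neighborhood of $0$ in $E_\sigma'$ is the polar of a finite set'' yields finite rank only when the target has a \emph{bounded} neighborhood of $0$ (e.g.\ a normed space in its norm topology); $F_\sigma$ has no bounded neighborhood of $0$ unless $F$ is finite dimensional.

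The repair must use joint continuity, which your argument never invokes. If $b$ is jointly continuous on $E_\sigma'\times F_\sigma'$, there are finite sets $A\subset E$, $B\subset F$ with $|b|\le 1$ on $A^0\times B^0$. Writing $A^\perp:=\{x'\in E'\st \langle x',x\rangle=0 \text{ for all } x\in A\}$, note that $x'\in A^\perp$ implies $\lambda x'\in A^0$ for all $\lambda>0$, whence $|b(x',y')|\le\lambda^{-1}$ for every $y'\in B^0$; since $B^0$ is absorbing in $F'$, $b(x',\cdot)\equiv 0$. Symmetrically $b(\cdot,y')\equiv 0$ for $y'\in B^\perp$. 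Thus $b$ descends to a bilinear form on the finite-dimensional quotients $E'/A^\perp$ and $F'/B^\perp$, whose dual spaces are spanned by the evaluations at the elements of $A$ and of $B$ respectively; expressing the descended form in these coordinates exhibits $b=\Phi\bigl(\sum_{i,j}c_{ij}\,x_i\otimes y_j\bigr)$. Your well-definedness and injectivity arguments are correct as stated (and one should add the easy check that each $\Phi(x\otimes y)$ is in fact jointly continuous on $E_\sigma'\times F_\sigma'$, so that $\Phi$ really lands in $B(E_\sigma',F_\sigma')$ and not merely in $\mathscr B(E_\sigma',F_\sigma')$).
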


\begin{proof}
\cite{treves2006topological} Proposition 42.4
\end{proof}

\begin{definition}
Let $A \subset E$ and $A^0 \subset E'$ its polar. The \emph{bipolar} $A^{00}$ of $A$ is the polar of $A^0$ regarded as a subset $E \simeq (E_\sigma')'$
\end{definition}

\begin{proposition}\label{prop:bipolar thm}
Let $A \subset E$ be a subset. The bipolar $A^{00}$ is the closed convex balanced hull of $A$.
\end{proposition}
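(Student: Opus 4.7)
The plan is to prove the two inclusions $\overline{\mathrm{cbh}}(A) \subseteq A^{00}$ and $A^{00} \subseteq \overline{\mathrm{cbh}}(A)$, where $\overline{\mathrm{cbh}}(A)$ denotes the closed convex balanced hull of $A$.

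For the inclusion $\overline{\mathrm{cbh}}(A) \subseteq A^{00}$, I would first observe that $A \subseteq A^{00}$ directly from the definitions: for every $x \in A$ and every $x' \in A^0$ we have $|\langle x', x\rangle| \leq 1$, hence $x \in A^{00}$. Then I would verify that $A^{00}$ is itself closed, convex, and balanced. Convexity and balancedness follow immediately from linearity of the pairing: if $x,y \in A^{00}$ and $|\lambda|+|\mu| \leq 1$, then for any $x' \in A^0$, $|\langle x', \lambda x + \mu y\rangle| \leq |\lambda||\langle x', x\rangle| + |\mu||\langle x',y\rangle| \leq 1$. For closedness in $E \cong (E'_\sigma)'$, I would note $A^{00} = \bigcap_{x' \in A^0} \{x \in E : |\langle x', x\rangle| \leq 1\}$, which is an intersection of closed sets (since each linear functional $x' \in E'$ is continuous on $E$). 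Hence $A^{00}$ is a closed convex balanced set containing $A$, so it contains $\overline{\mathrm{cbh}}(A)$.

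For the harder inclusion $A^{00} \subseteq \overline{\mathrm{cbh}}(A)$, I would argue by contrapositive using Hahn--Banach separation. Suppose $x_0 \notin \overline{\mathrm{cbh}}(A)$. Since $E$ is locally convex Hausdorff and $\overline{\mathrm{cbh}}(A)$ is a closed convex set not containing $x_0$, the Hahn--Banach separation theorem produces a continuous linear functional $x' \in E'$ and a real constant $c$ such that $\Re\langle x', y\rangle \leq c < \Re\langle x', x_0\rangle$ for all $y \in \overline{\mathrm{cbh}}(A)$. Since $\overline{\mathrm{cbh}}(A)$ is balanced (in particular contains $0$), we have $c \geq 0$; and multiplying any $y$ by a scalar of modulus $1$ keeps it inside $\overline{\mathrm{cbh}}(A)$, so $|\langle x', y\rangle| \leq c$ for all $y$ in this set, while $|\langle x', x_0\rangle| > c$. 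Rescaling $x'$ by $1/c$ (or by any value in $(c,|\langle x',x_0\rangle|)$ if $c=0$) yields a functional with $|\langle x', y\rangle| \leq 1$ for $y \in A$ and $|\langle x', x_0\rangle| > 1$. Thus $x' \in A^0$ but $x_0 \notin A^{00}$.

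The main obstacle is the small but crucial case analysis in the separation step, particularly handling $c = 0$ and passing from the real-part separation inequality to the modulus bound needed for the polar; this requires invoking balancedness of $\overline{\mathrm{cbh}}(A)$ to rotate arbitrary complex phases. Everything else is routine verification from the definitions and the identification $E \cong (E'_\sigma)'$ that underlies the notion of bipolar.
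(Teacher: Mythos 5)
Your proof is correct and is exactly the standard argument: the paper itself gives no proof here, only a citation to Tr\`eves (Proposition 35.3), and the proof in that reference is the same two-step Hahn--Banach separation argument you give, including the phase-rotation via balancedness and the rescaling step. One small point worth noting: the paper's Definition of the polar writes $\sup_{x\in A}\langle x',x\rangle\le 1$ without a modulus; your proof (correctly) uses the absolute polar $|\langle x',x\rangle|\le 1$, which is the convention actually needed for the bipolar to be \emph{balanced} as the proposition asserts, so you have silently repaired a typo in the paper rather than introduced an error.
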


\begin{proof}
\cite{treves2006topological} Proposition 35.3
\end{proof}

\begin{definition}
$E$ is \emph{Montel} if $E$ is barrelled and if every closed bounded subset is compact.
\end{definition}

\begin{proposition}\label{prop:dual montel}
Suppose $E$ is Montel. Then every closed, bounded subset of $E^*$ is compact. Moreover, the strong and weak topologies coincide on bounded subsets of $E^*$.
\end{proposition}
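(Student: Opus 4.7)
The plan is to reduce everything to the Alaoglu--Bourbaki theorem and an identification of topologies on equicontinuous sets. The first key observation is that since $E$ is Montel, it is in particular barrelled, which by the Banach--Steinhaus theorem implies that every strongly bounded subset $B \subset E^*$ is equicontinuous; equivalently, by Proposition \ref{prop:equicontinuous polar}, $B$ is contained in the polar $U^0$ of some neighborhood $U$ of $0 \in E$. I will first establish this equicontinuity and then exploit it throughout.

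Next I would invoke the Alaoglu--Bourbaki theorem: the polar $U^0$ of a neighborhood of $0$ is weakly compact in $E'_\sigma$. Consequently the weak closure $\overline{B}^\sigma$ is weakly compact. To convert this into a statement about the strong topology, I would prove separately that on any equicontinuous subset $S \subset E^*$, the strong topology agrees with the weak topology. This is the main technical step, and it is where the Montel hypothesis really enters.

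The argument for coincidence of topologies goes as follows. For any equicontinuous $S$, the topology induced by pointwise convergence (i.e.\ the weak topology $\sigma$) automatically agrees with the topology of uniform convergence on any precompact subset of $E$: given a precompact $K \subset E$ and $\varepsilon > 0$, the equicontinuity of $S$ furnishes a neighborhood $U$ of $0$ on which all $f \in S - S$ satisfy $|f(x)| < \varepsilon/2$, cover $K$ by finitely many translates $x_i + U$, and control pointwise convergence at the $x_i$. Now since $E$ is Montel, every bounded subset of $E$ is relatively compact, hence precompact. Therefore uniform convergence on bounded subsets (the strong topology on $E^*$) coincides on $S$ with uniform convergence on precompact subsets, which we have just identified with the weak topology.

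Combining these ingredients finishes both claims. For the first, let $B \subset E^*$ be strongly closed and strongly bounded; then $B$ is equicontinuous, its weak closure $\overline{B}^\sigma$ is weakly compact by Alaoglu--Bourbaki, and by coincidence of topologies on the equicontinuous set $\overline{B}^\sigma$ the weak closure agrees with the strong closure and weak compactness agrees with strong compactness. Since $B$ is strongly closed, $B = \overline{B}^\sigma$ and hence $B$ is strongly compact. The second claim is precisely the coincidence of topologies established above. The main obstacle is the careful verification that pointwise convergence is uniform on precompact sets for equicontinuous families; everything else is assembly of standard results already cited in the appendix.
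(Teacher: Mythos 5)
Your sketch is correct, and it is the standard argument: barrelledness of the Montel space $E$ gives equicontinuity of strongly bounded subsets of $E^*$ (via Proposition \ref{prop:equicontinuous polar} and Banach--Steinhaus), Alaoglu--Bourbaki gives weak compactness of the containing polar, and on equicontinuous sets the weak topology coincides with the topology of uniform convergence on precompact sets, which equals the strong topology precisely because bounded subsets of a Montel space are precompact. The paper gives no proof of its own here --- it simply cites \cite{treves2006topological}, Proposition 34.6 --- and your argument is essentially the one found there, so you have in effect supplied the proof behind the citation.
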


\begin{proof}
\cite{treves2006topological} Proposition 34.6
\end{proof}

\begin{proposition}\label{prop:dual montel weak=strong}
Let $E$ be a Montel space. Then every weakly convergent sequence in $E'$ is strongly convergent.
\end{proposition}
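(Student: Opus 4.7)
The plan is to show that a weakly convergent sequence $\{f_n\}$ in $E'$ is automatically equicontinuous, hence lies in a strongly bounded subset of $E^*$, and then to apply Proposition \ref{prop:dual montel}, which asserts that the strong and weak topologies coincide on bounded subsets of $E^*$.

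First I would extract pointwise boundedness: suppose $f_n \to f$ weakly in $E'$. Then for each $x \in E$, $f_n(x) \to f(x)$ in $\C$, so the sequence $\{f_n(x)\}$ is bounded. Thus $\{f_n\}$ is a pointwise bounded family of continuous linear functionals on $E$. Next, I would invoke the Banach--Steinhaus (uniform boundedness) theorem: since $E$ is Montel, $E$ is barrelled by definition, and pointwise bounded families of continuous linear functionals on a barrelled space are equicontinuous. So $\{f_n\}$ is equicontinuous.

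By Proposition \ref{prop:equicontinuous polar}, the equicontinuous set $\{f_n\}$ is contained in the polar $U^0$ of some neighborhood $U$ of $0 \in E$. I claim $U^0$ is bounded in the strong topology on $E'$: given any bounded $B \subset E$, there exists $\lambda > 0$ with $B \subset \lambda U$ (since $U$ is a neighborhood of $0$ and $B$ is bounded), so for every $g \in U^0$ and $x \in B$ we have $|g(x)| = \lambda |g(x/\lambda)| \leq \lambda$, showing that $U^0$ is absorbed by each strong zero-neighborhood of the form $\{g : \sup_{x \in B}|g(x)| \leq 1\}$. Hence the set $C := U^0 \cup \{f\}$ is a strongly bounded subset of $E^*$ containing both the sequence $\{f_n\}$ and its weak limit $f$. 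By Proposition \ref{prop:dual montel}, the strong and weak topologies coincide on bounded subsets of $E^*$, and so on $C$ in particular. Since $f_n \to f$ weakly, we conclude that $f_n \to f$ strongly.

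The main obstacle is the step bridging weak (i.e.\ pointwise) convergence to equicontinuity, which rests on the Banach--Steinhaus theorem applied to the barrelled space $E$. Once equicontinuity --- and hence, via the polar characterization, strong boundedness --- is established, the Montel hypothesis funnels the problem directly into Proposition \ref{prop:dual montel}, which finishes the proof with no further work.
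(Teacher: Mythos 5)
Your proof is correct and is essentially the standard argument: the paper itself gives no proof, only a citation to Tr\`eves (Corollary 34.6), and the proof there proceeds exactly as you do --- Banach--Steinhaus on the barrelled space $E$ to get equicontinuity, the polar characterization to get strong boundedness, and then the coincidence of the weak and strong topologies on bounded sets. The only point left implicit is that the weak limit $f$ lies in $E'$, which follows from the same equicontinuity step (or is built into the hypothesis that the sequence converges \emph{in} $E'$), so nothing is missing.
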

\begin{proof}
\cite{treves2006topological} Corollary 34.6
\end{proof}

\begin{definition}\label{def: reflexive}
A locally convex Hausdorff vector space $E$ is \emph{reflexive} if $(E^*)^*$.
\end{definition}

\begin{proposition}\label{prop: reflexive transpose}
A linear map $u: E \raw F$ between reflexive spaces is continuous if and only if $u^*$ is continuous.
\end{proposition}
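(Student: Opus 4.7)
The plan is to deduce both directions from Proposition \ref{prop: transpose continuous} combined with the reflexivity hypothesis. The forward implication is immediate: if $u$ is continuous, then Proposition \ref{prop: transpose continuous} directly states that $u^* : F^* \raw E^*$ is continuous.

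For the converse, suppose $u^*$ is continuous. Applying Proposition \ref{prop: transpose continuous} once more yields a continuous linear map $(u^*)^* : (E^*)^* \raw (F^*)^*$. By reflexivity of $E$ and $F$ (Definition \ref{def: reflexive}), the evaluation maps $E \xrightarrow{\sim} (E^*)^*$ and $F \xrightarrow{\sim} (F^*)^*$ defined by $x \mapsto \mathrm{ev}_x$, where $\mathrm{ev}_x(f) := f(x)$, are isomorphisms of topological vector spaces. Under these identifications, $u$ is carried to $(u^*)^*$: for any $x \in E$ and $g \in F^*$,
\begin{equation}
(u^*)^*(\mathrm{ev}_x)(g) \;=\; \mathrm{ev}_x(u^*(g)) \;=\; (u^*(g))(x) \;=\; g(u(x)) \;=\; \mathrm{ev}_{u(x)}(g),
\end{equation}
so $(u^*)^*(\mathrm{ev}_x) = \mathrm{ev}_{u(x)}$. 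Since $(u^*)^*$ is continuous and the evaluation maps are topological isomorphisms, it follows that $u$ is continuous.

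There is no real obstacle here; the only delicate point worth flagging is that reflexivity must be understood as a \emph{topological} isomorphism $E \cong (E^*)^*$ (not merely an algebraic bijection), so that continuity can be transported across the double-dual identification. This is built into the standard meaning of Definition \ref{def: reflexive}, where both strong duals are endowed with their canonical strong topologies.
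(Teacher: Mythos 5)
Your proof is correct and follows the same route the paper intends: the paper's proof is just the one-line citation ``This follows from Proposition \ref{prop: transpose continuous},'' and your argument is the fully spelled-out version of that — apply the cited proposition once for the forward direction, and apply it to $u^*$ together with the topological identifications $E \cong (E^*)^*$, $F \cong (F^*)^*$ for the converse. The only (minor) point both you and the paper leave implicit is that in the converse direction the transpose $u^*$ must already be well defined as a map $F' \raw E'$, i.e.\ $g \circ u$ is continuous for every $g \in F'$.
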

\begin{proof}
This follows from Proposition \ref{prop: transpose continuous}.
\end{proof}

\begin{proposition}\label{prop:montel is reflexive}
Let $E$ be a Montel space. Then $E$ is reflexive and $E^*$ is also Montel.
\end{proposition}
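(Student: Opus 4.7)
The plan is to prove the two claims in sequence: first that $E$ is reflexive, and then to bootstrap this to show $E^*$ is Montel by combining reflexivity of $E^*$ (which follows formally) with Proposition \ref{prop:dual montel}. The main content lies in upgrading the Montel hypothesis into reflexivity via semi-reflexivity plus barrelledness.

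For reflexivity of $E$, I would proceed in two steps. First, I would check semi-reflexivity, i.e.\ that the canonical evaluation map $J_E : E \to (E^*)^*$ is a bijection of vector spaces. Injectivity is immediate from Hahn--Banach applied to the locally convex Hausdorff space $E$. For surjectivity, I would use the standard criterion that $E$ is semi-reflexive if and only if every bounded subset of $E$ is relatively compact in the weak topology $\sigma(E, E')$: given $B \subset E$ bounded, its closure is bounded and hence compact in the original topology by the Montel hypothesis, and since $\sigma(E,E')$ is coarser than the original topology this compactness descends to weak compactness. Second, I would promote $J_E$ to a topological isomorphism. The strong topology on $(E^*)^*$ has a neighborhood base at $0$ given by polars of strongly bounded subsets of $E^*$, while the original topology on $E$ is generated (using Propositions \ref{prop:equicontinuous polar} and \ref{prop:bipolar thm}) by polars of equicontinuous subsets of $E^*$. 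Since $E$ is barrelled, every strongly bounded subset of $E^*$ is equicontinuous (this is the Banach--Steinhaus-type statement for barrelled spaces), so the two neighborhood bases match under $J_E$, and $J_E$ is a homeomorphism onto its image.

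For the second claim, closed bounded subsets of $E^*$ are compact by Proposition \ref{prop:dual montel}, so it remains to verify that $E^*$ is barrelled. This follows formally from the reflexivity of $E^*$, which in turn is a direct consequence of the reflexivity of $E$: under the canonical identifications, $(E^*)^* \cong E$ as topological vector spaces, so taking strong duals once more yields $(E^*)^{**} \cong E^*$ via the canonical evaluation map, which is exactly reflexivity of $E^*$. Any reflexive locally convex space is barrelled (the polar of a barrel is bounded in $E^*$, hence equicontinuous via the reflexive identification, hence the original barrel is a neighborhood of $0$), finishing the argument.

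I expect the main obstacle to be the topology-matching step in proving reflexivity of $E$: showing $J_E$ is surjective is easy from Montel, but showing it is a homeomorphism is where barrelledness must be invoked carefully, since one must argue that the defining neighborhood base of $(E^*)^*$ in its strong topology --- polars of strongly bounded subsets of $E^*$ --- actually coincides with polars of equicontinuous subsets of $E^*$. Once this identification is in hand, the rest of the argument is essentially formal manipulation of duals together with one citation to Proposition \ref{prop:dual montel}.
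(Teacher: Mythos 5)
Your argument is correct and is the standard textbook proof (semi-reflexivity from weak compactness of bounded sets, upgraded to reflexivity via barrelledness, then dualized); the paper gives no proof of its own here, deferring entirely to Tr\`eves, Proposition 36.10. The only point you gloss is that ``the polar of a barrel is bounded in $E^*$'' a priori means $\sigma(E^*,E)$-bounded, and passing from this to strong boundedness uses semi-reflexivity (boundedness being a duality invariant by Mackey's theorem); this is standard and does not affect correctness.
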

\begin{proof}
\cite{treves2006topological} Proposition 36.10
\end{proof}

\subsection{Topological Tensor Products}
Let $E,F$ be locally convex Hausdorff spaces. 
\begin{definition}
The $\pi$-topology on $E\otimes F$ is the finest locally convex topology such that the map
\begin{equation}
\begin{split}
   E\times F &\raw E \otimes F\\
   (x,y) &\mapsto x \otimes y
\end{split}
\end{equation}
is continuous. We will call the tensor product endowed with this topology the \emph{projective tensor product} and denote it by $E \otimes_\pi F$.
\end{definition}

Let $B(E,F)$ denote the vector space of continuous bilinear forms on $E\times F$ and $\mathscr B(E,F)$ the vector space of separately continuous bilinear forms on $E\times F$. 
By Proposition \ref{prop:tensor bilinear} we can regard $E \otimes F$ as a vector subspace of $\mathscr B(E_\sigma', F_\sigma')$. 

\begin{definition}
The $\varepsilon$-topology on $E \otimes F$ is the subspace topology when $\mathscr B(E_\sigma', F_\sigma')$ is equipped with the topology of uniform convergence on products of equicontinuous sets. This topological tensor product is called the \emph{injective tensor product} and will be denoted by $E\otimes_\varepsilon F$.
\end{definition}

\begin{notation}
We will denote the completions of these two topological tensor products by $E \widehat \otimes_\pi F$ and $E \widehat \otimes_\varepsilon F$.
\end{notation}

We now examine some properties of the projective tensor product. The first is a universal property.

\begin{proposition}\label{prop:UP}
Let $E,F,G$ be locally convex Hausdorff spaces and $E \times F \raw G$ a continuous bilinear map. Then there exists a unique continuous extension
\begin{equation}\label{eq:extension}
\begin{tikzcd}
   E \times F \arrow[r]\arrow[d] & G\\
   E \otimes_\pi F \arrow[ur, dashed, "\exists !"] &
\end{tikzcd}
\end{equation} 
\end{proposition}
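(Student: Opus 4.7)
The plan is to separate the algebraic content from the topological content. Algebraically, the ordinary universal property of the tensor product of vector spaces already produces a unique linear map $\tilde u: E \otimes F \to G$ making the triangle commute, since every element of $E \otimes F$ is a finite sum of elementary tensors and linearity forces $\tilde u(x \otimes y) = u(x,y)$. Uniqueness of the continuous extension is then automatic: any two continuous linear extensions must agree on the image of the canonical bilinear map, which spans $E \otimes F$, and hence agree everywhere by linearity alone (continuity is not needed for this half). So the substantive point is to show that this $\tilde u$ is continuous for the $\pi$-topology on the domain.

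For continuity, I would exploit the defining property of $\otimes_\pi$ as the \emph{finest} locally convex topology making the canonical map $\otimes : E \times F \to E \otimes F$ continuous. Endow $E \otimes F$ with the initial topology $\mathcal T$ induced by $\tilde u$, i.e., the coarsest topology making $\tilde u : E \otimes F \to G$ continuous. Then:
\begin{enumerate}
\item $\mathcal T$ is a vector space topology, since $\tilde u$ is linear and pullbacks of vector space topologies under linear maps are vector space topologies.
\item $\mathcal T$ is locally convex: $G$ has a neighborhood base of $0$ consisting of convex balanced sets, and preimages under the linear map $\tilde u$ of convex balanced sets are convex balanced, so $\mathcal T$ has a convex balanced neighborhood base of $0$.
\item The composition $E \times F \xrightarrow{\otimes} (E \otimes F, \mathcal T) \xrightarrow{\tilde u} G$ equals $u$, which is continuous by hypothesis. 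Since the projection $\tilde u$ is by construction $\mathcal T$-continuous and $G$ carries the relevant topology, continuity of the composition forces the canonical map $\otimes : E \times F \to (E \otimes F, \mathcal T)$ to be continuous (this is the defining property of an initial topology).
\end{enumerate}
Because $\otimes_\pi$ is the finest locally convex topology with this property, we conclude $\mathcal T \subseteq \otimes_\pi$, hence the identity $(E \otimes F, \otimes_\pi) \to (E \otimes F, \mathcal T)$ is continuous, and composing with the $\mathcal T$-continuous $\tilde u$ shows that $\tilde u : E \otimes_\pi F \to G$ is continuous.

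I do not expect any serious obstacle here; the only conceptual step worth double-checking is that the initial topology $\mathcal T$ really is locally convex, which I would verify by exhibiting the convex balanced neighborhood base above rather than by invoking a general lemma. Everything else is a direct unwinding of definitions, so the proof should be quite short in final form.
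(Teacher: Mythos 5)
Your proof is correct. The paper offers no argument of its own here — it simply cites Tr\`eves, Proposition 43.4 — and your argument is essentially the standard one given there, phrased via the initial (pullback) topology rather than via the explicit neighborhood base of the $\pi$-topology (convex balanced absorbing sets whose preimage under $\otimes$ is a neighborhood of $(0,0)$); the two formulations are interchangeable. The only point worth flagging is that uniqueness, as you note, presupposes the extension is \emph{linear} (the set of elementary tensors need not be dense in $E\otimes_\pi F$, only its span is everything), which is the intended reading of the statement.
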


\begin{proof}
\cite{treves2006topological} Proposition 43.4
\end{proof}

\begin{corollary}
The map in (\ref{eq:extension}) extends uniquely to a map $E \widehat\otimes_\pi F \raw G$.
\end{corollary}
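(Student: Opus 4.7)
The plan is to combine the two universal properties already established in this appendix, using them in sequence. Proposition \ref{prop:UP} provides a unique continuous linear map $\varphi: E \otimes_\pi F \to G$ whose composition with the canonical bilinear map $E \times F \to E \otimes_\pi F$ recovers the given bilinear map $E \times F \to G$. This is the first step, and it is already handled by the cited proposition.

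The second step is to extend $\varphi$ across the dense, continuous injection $E \otimes_\pi F \hookrightarrow E \widehat\otimes_\pi F$. Since $E \otimes_\pi F$ is a locally convex Hausdorff space (its projective topology is separated because $E$ and $F$ are), its completion is the standard one from Proposition \ref{prop:completion UP}. That proposition asserts exactly that any continuous linear map out of a (Hausdorff) locally convex space into a Hausdorff target extends uniquely along the dense injection into the completion. Applied to $\varphi$, this yields a unique continuous linear $\widetilde\varphi: E \widehat\otimes_\pi F \to G$ whose restriction to $E \otimes_\pi F$ is $\varphi$, and in particular whose value on pure tensors $x \otimes y$ is the image of $(x,y)$ under the original bilinear map.

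Uniqueness across the two steps is then a formality. Suppose $\psi: E \widehat\otimes_\pi F \to G$ is any continuous linear extension of the bilinear map, in the sense that $\psi(x \otimes y)$ equals the image of $(x,y)$. Restricting $\psi$ to $E \otimes_\pi F$ produces a continuous linear map whose composition with $E \times F \to E \otimes_\pi F$ is the given bilinear map, so by the uniqueness clause of Proposition \ref{prop:UP} it must agree with $\varphi$. Since $E \otimes_\pi F$ is dense in $E \widehat\otimes_\pi F$ and $G$ is Hausdorff, $\psi$ is then determined on all of $E \widehat\otimes_\pi F$ by its restriction, forcing $\psi = \widetilde\varphi$.

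There is no genuine obstacle; the corollary is a one-line chain of two cited universal properties. The only subtlety worth noting is that the extension into $G$ relies on the form of Proposition \ref{prop:completion UP} stated in this appendix, which requires only Hausdorffness of the target — in applications elsewhere in the paper where $G$ is taken to be complete (e.g.\ a Fr\'echet or nuclear space), no further hypothesis is needed.
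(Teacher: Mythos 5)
Your argument is exactly the paper's: the paper's entire proof is ``Apply Proposition \ref{prop:completion UP},'' i.e.\ first invoke Proposition \ref{prop:UP} to get the map on $E \otimes_\pi F$ and then extend along the dense injection into the completion. Your additional remarks on uniqueness and on the Hausdorffness hypothesis for $G$ are correct and consistent with the conventions of the appendix.
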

\begin{proof}
Apply Proposition \ref{prop:completion UP}.
\end{proof}

We note that by definition of the injective tensor product,
the map 
\begin{equation}\label{eq:ExF to EeF}
\begin{split}
E\times F &\raw E \otimes_\varepsilon F\\
(x,y) &\mapsto x \otimes y
\end{split}
\end{equation}
is continuous.

\begin{corollary}\label{cor:ExF to EeFcomplete}
The map (\ref{eq:ExF to EeF}) extends to a continuous map
\begin{equation}\label{eq:EpiF to EeF}
E \widehat\otimes_\pi F \raw E \widehat\otimes_\varepsilon F.
\end{equation}
Thus the projective topology is finer than the injective topology.
\end{corollary}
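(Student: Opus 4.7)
The plan is to chain together the two universal properties already established in the appendix: first promote the canonical bilinear map $E \times F \to E \otimes_\varepsilon F$ to a continuous linear map out of $E \otimes_\pi F$ via Proposition \ref{prop:UP}, then push it further to the completions using Proposition \ref{prop:completion UP}.

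First I would recall that by the definition of the injective topology, the bilinear map (\ref{eq:ExF to EeF}), namely $(x,y) \mapsto x \otimes y$ as a map $E \times F \to E \otimes_\varepsilon F$, is continuous; this is precisely the observation made just before the corollary. Composing with the canonical dense inclusion $E \otimes_\varepsilon F \hookrightarrow E \widehat{\otimes}_\varepsilon F$ (which is continuous by Proposition \ref{prop:completion UP}) yields a continuous bilinear map $E \times F \to E \widehat{\otimes}_\varepsilon F$.

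Next I would apply the universal property of the projective tensor product, Proposition \ref{prop:UP}, to this bilinear map, with $G := E \widehat{\otimes}_\varepsilon F$. This produces a unique continuous linear map $E \otimes_\pi F \to E \widehat{\otimes}_\varepsilon F$. Since $E \widehat{\otimes}_\varepsilon F$ is the completion of a locally convex Hausdorff space and is therefore itself complete and Hausdorff, I would now invoke the universal property of the completion (Proposition \ref{prop:completion UP}) for the source $E \otimes_\pi F$: the continuous linear map just constructed extends uniquely to a continuous linear map
\begin{equation*}
E \widehat{\otimes}_\pi F \longrightarrow E \widehat{\otimes}_\varepsilon F,
\end{equation*}
which is the desired extension.

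For the second assertion, the continuous linear map $E \otimes_\pi F \to E \otimes_\varepsilon F$ obtained by factoring the bilinear map through $E \otimes_\pi F$ is the identity on the underlying vector space, so the projective topology on $E \otimes F$ is finer than the injective topology. No step here involves any real obstacle — the content is entirely a diagram chase through the two universal properties — so I would expect the proof to be a one-line appeal to Propositions \ref{prop:UP} and \ref{prop:completion UP} applied in succession.
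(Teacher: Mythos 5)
Your proof is correct and is exactly the argument the paper intends: the continuity of the canonical bilinear map into $E\otimes_\varepsilon F$ plus Proposition \ref{prop:UP} gives the identity-on-vectors continuous map $E\otimes_\pi F \to E\otimes_\varepsilon F$ (whence the topology comparison), and Proposition \ref{prop:completion UP} extends it to the completions.
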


\begin{lemma} Let $E_1 \xrightarrow{u} E_2, F_1\xrightarrow{v} F_2$ be continuous maps between locally convex Hausdorff spaces. Then $u\otimes v: E_1 \otimes_\tau F_1 \raw E_2 \otimes_\tau F_2$ is continuous for $\tau = \pi, \varepsilon$.
\end{lemma}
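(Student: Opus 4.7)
The plan is to handle the two topologies separately. The case $\tau = \pi$ is essentially a formal consequence of the universal property, while the $\tau = \varepsilon$ case requires tracing through how continuous maps interact with equicontinuous sets via the transpose.

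For the projective case, I would consider the bilinear composition
\begin{equation}
E_1 \times F_1 \xrightarrow{u \times v} E_2 \times F_2 \to E_2 \otimes_\pi F_2,
\end{equation}
where the second map is the canonical one sending $(x,y) \mapsto x\otimes y$. This is jointly continuous because $u,v$ are continuous and $(x,y)\mapsto x\otimes y$ is continuous by definition of the $\pi$-topology. Proposition \ref{prop:UP} then yields a unique continuous linear extension $E_1 \otimes_\pi F_1 \to E_2 \otimes_\pi F_2$, and on elementary tensors this must coincide with $u \otimes v$ by uniqueness of linear extensions.

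For the injective case, the key observation is that transposes send equicontinuous sets to equicontinuous sets. Given an equicontinuous set $S_2 \subset E_2'$, Proposition \ref{prop:equicontinuous polar} gives a neighborhood $U$ of $0 \in E_2$ with $S_2 \subset U^0$. Continuity of $u$ makes $u^{-1}(U)$ a neighborhood of $0 \in E_1$, and for $\phi \in S_2$ and $x \in u^{-1}(U)$ we have $|\langle u^*\phi, x\rangle| = |\langle \phi, u(x)\rangle| \leq 1$, so $u^*(S_2) \subset (u^{-1}(U))^0$ is equicontinuous. Similarly $v^*$ sends equicontinuous sets in $F_2'$ to equicontinuous sets in $F_1'$.

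Using the identification $E_i \otimes F_i \hookrightarrow \mathscr B((E_i)_\sigma', (F_i)_\sigma')$ from Proposition \ref{prop:tensor bilinear}, the map $u \otimes v$ is realized as $t \mapsto t\circ(u^* \times v^*)$. The $\varepsilon$-topology is defined by the seminorms
\begin{equation}
\p_{S,T}(t) := \sup_{(\phi,\psi)\in S \times T} |t(\phi,\psi)|
\end{equation}
indexed by equicontinuous sets $S \subset E_i'$, $T \subset F_i'$. Thus for any equicontinuous $S_2 \subset E_2'$, $T_2 \subset F_2'$,
\begin{equation}
\p_{S_2, T_2}((u\otimes v)(t)) = \sup_{(\phi,\psi) \in S_2 \times T_2} |t(u^*\phi, v^*\psi)| = \p_{u^*(S_2), v^*(T_2)}(t),
\end{equation}
and since $u^*(S_2), v^*(T_2)$ are equicontinuous by the preceding paragraph, this establishes continuity of $u \otimes v$ with respect to the $\varepsilon$-topology. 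No step looks like a real obstacle; the only thing to be careful about is the direction of the implication --- it is precisely because $u^*,v^*$ preserve equicontinuity that the tensor operation is continuous in $\varepsilon$, even though the transpose operation in general may fail to be continuous on the strong duals.
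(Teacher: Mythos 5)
Your proof is correct: the universal-property argument for the $\pi$-topology and the observation that transposes of continuous maps carry equicontinuous sets to equicontinuous sets (so that $(u\otimes v)(t)=t\circ(u^*\times v^*)$ is controlled by the defining seminorms of the $\varepsilon$-topology) are exactly the standard arguments. The paper itself gives no proof, only a citation to Tr\`eves (Proposition 43.6), and your reasoning is essentially the proof found there.
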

\begin{proof}
\cite{treves2006topological} Proposition 43.6
\end{proof}

\begin{corollary}\label{cor:completed tensor maps}
The previous maps extend uniquely to maps on completions
\begin{equation}
u \widehat\otimes_\pi v : E_1\widehat\otimes_\pi F_1 \raw E_2 \widehat\otimes_\pi F_2
\end{equation}
\begin{equation}
u \widehat\otimes_\varepsilon v : E_1 \widehat\otimes_\varepsilon F_1 \raw E_2 \widehat\otimes_\varepsilon F_2
\end{equation} 
\end{corollary}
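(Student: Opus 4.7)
The plan is to reduce the statement to a direct application of the universal property of the completion (Proposition \ref{prop:completion UP}). Both extensions follow by the same argument, run once for $\tau = \pi$ and once for $\tau = \varepsilon$.

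First, the preceding lemma supplies a continuous linear map
\begin{equation*}
u \otimes v : E_1 \otimes_\tau F_1 \raw E_2 \otimes_\tau F_2.
\end{equation*}
Postcomposing with the canonical dense bicontinuous injection $E_2 \otimes_\tau F_2 \mono E_2 \widehat\otimes_\tau F_2$ given by Proposition \ref{prop:completion UP} yields a continuous linear map
\begin{equation*}
E_1 \otimes_\tau F_1 \raw E_2 \widehat\otimes_\tau F_2
\end{equation*}
whose target is complete and Hausdorff (as the completion of a locally convex Hausdorff space).

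Second, I would apply Proposition \ref{prop:completion UP} to this composition: since the codomain is Hausdorff, there exists a unique continuous linear extension
\begin{equation*}
u \widehat\otimes_\tau v : E_1 \widehat\otimes_\tau F_1 \raw E_2 \widehat\otimes_\tau F_2
\end{equation*}
to the completion of the source. Uniqueness of the extension is automatic from the universal property, and is reflected in the density of the image of $E_1 \otimes_\tau F_1$ in $E_1 \widehat\otimes_\tau F_1$ together with the Hausdorff property of the target.

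There is no real obstacle; the content of the statement is entirely formal, and no property specific to either the projective or injective topology is used beyond what has already been established (namely, that in both cases the uncompleted tensor product is a locally convex Hausdorff space and that $u \otimes v$ is continuous there). The only thing worth writing down carefully is that the two extensions are notationally written $u \widehat\otimes_\pi v$ and $u \widehat\otimes_\varepsilon v$ respectively, and that by construction each restricts on the image of $E_1 \otimes_\tau F_1$ to $u \otimes v$ composed with the inclusion into the completion.
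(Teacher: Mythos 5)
Your proposal is correct and is exactly the paper's argument: the paper's proof consists of the single line ``Apply Proposition \ref{prop:completion UP},'' and you have simply spelled out the intended application (postcompose $u\otimes v$ with the canonical injection into the complete Hausdorff target, then extend uniquely by the universal property of the completion). Nothing further is needed.
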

\begin{proof}
Apply Proposition \ref{prop:completion UP}.
\end{proof}

\begin{proposition}\label{prop:injective and dense maps}
Let $u,v$ be as above. If $u,v$ are injective, then $u \widehat \otimes_\varepsilon v$ is injective. Likewise, if $u,v$ have dense images, then $u\widehat\otimes_\pi v$ has dense image.
\end{proposition}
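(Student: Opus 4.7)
My plan is to treat the two claims separately, since they rely on different structural features of the projective and injective tensor topologies.

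\textbf{Density in the projective topology.} The argument here is short. The image of $u \otimes v$ on the algebraic tensor product contains all elementary tensors of the form $u(x) \otimes v(y)$ with $x \in E_1$, $y \in F_1$. By the very definition of the $\pi$-topology, the canonical bilinear map $E_2 \times F_2 \to E_2 \otimes_\pi F_2$ is continuous. Combining this continuity with the density of $u(E_1)$ in $E_2$ and of $v(F_1)$ in $F_2$ shows that $\{u(x) \otimes v(y) : x \in E_1, y \in F_1\}$ is dense in $\{a \otimes b : a \in E_2, b \in F_2\}$ in the $\pi$-topology. Taking linear spans preserves density, so the image of $u \otimes v$ is dense in $E_2 \otimes_\pi F_2$, and hence in its completion $E_2 \widehat\otimes_\pi F_2$ (by Proposition \ref{prop:completion UP} the inclusion $E_2 \otimes_\pi F_2 \hookrightarrow E_2 \widehat\otimes_\pi F_2$ has dense image). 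Since $u \widehat\otimes_\pi v$ extends $u \otimes v$ continuously to the completion, its image contains this dense set and is therefore dense.

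\textbf{Injectivity in the injective topology.} I would use the identification (Proposition \ref{prop:tensor bilinear}) of $E_i \otimes_\varepsilon F_i$ as a subspace of the space of bilinear forms on $(E_i)'_\sigma \times (F_i)'_\sigma$. Under this identification, for an elementary tensor $x \otimes y \in E_1 \otimes F_1$ and $(\xi, \eta) \in E_2' \times F_2'$ one computes
$$((u \otimes v)(x \otimes y))(\xi, \eta) = \xi(u(x))\cdot \eta(v(y)) = (x \otimes y)(u^*\xi, v^*\eta).$$
Extending by linearity and continuity, for $t \in E_1 \widehat\otimes_\varepsilon F_1$ the condition $(u \widehat\otimes_\varepsilon v)(t) = 0$ becomes the vanishing of $t$, as a bilinear form, on all pairs in $\Ima(u^*) \times \Ima(v^*)$. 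Now injectivity of $u$ and $v$ plus Hahn--Banach applied in $E_2$ and $F_2$ (taking $x \in E_1\setminus\{0\}$, separate $u(x)$ by some $\eta \in E_2'$, giving $(u^*\eta)(x) \neq 0$) shows that $\Ima(u^*)$ and $\Ima(v^*)$ are weak-$*$ dense in $E_1'$ and $F_1'$ respectively (cf.\ Proposition \ref{prop: Imu dense u^* iff injective}). The $\varepsilon$-topology is defined as uniform convergence on products of equicontinuous subsets of the duals, so $t$ is jointly continuous on such products in the weak-$*$ topology. Invoking Banach--Alaoglu-type compactness of equicontinuous sets in the weak-$*$ topology, one approximates any $(\xi', \eta') \in E_1' \times F_1'$ (restricted to an equicontinuous set containing it) by nets in $\Ima(u^*) \times \Ima(v^*)$ and concludes that $t$ vanishes identically, hence $t = 0$.

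\textbf{Main obstacle.} The hardest step is the last one in the injectivity proof: promoting vanishing on $\Ima(u^*) \times \Ima(v^*)$ to vanishing identically on $E_1' \times F_1'$. This requires carefully tracking the continuity behavior of $t$ on equicontinuous sets and invoking compactness of such sets in the weak-$*$ topology to approximate arbitrary functionals by elements coming from $E_2', F_2'$ via the transposes, all while preserving the continuity properties encoded by the $\varepsilon$-topology. The density statement, by contrast, should require only one short paragraph.
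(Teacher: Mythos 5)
Your proof of the density half is correct and is the standard argument: joint continuity of $(a,b)\mapsto a\otimes b$ for the $\pi$-topology converts density of $\Ima(u)$ and $\Ima(v)$ into density of the image of $u\otimes v$ in the set of elementary tensors of $E_2\otimes_\pi F_2$, and the closed span of those is all of $E_2\widehat\otimes_\pi F_2$. (The paper itself only cites Tr\`eves for both halves, so there is no in-paper argument to compare against.)

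The injectivity half has a genuine gap, sitting exactly where you flagged it, and it cannot be closed in the stated generality because the claim is false for arbitrary locally convex Hausdorff spaces. Weak-$*$ density of $\Ima(u^*)$ in $E_1'$ produces an approximating net $u^*\xi_\alpha\to\xi'$, but nothing confines that net to a fixed equicontinuous set, and uniform continuity of $t$ on products of equicontinuous sets is all the $\varepsilon$-topology gives you; weak-$*$ compactness of equicontinuous sets is irrelevant, since the obstruction is that $\Ima(u^*)\cap A$ need not be weak-$*$ dense in $A$ for $A$ equicontinuous. Concretely, take $E_1=E_2=\C$ with $u=\mathrm{id}$, let $F_1$ be the polynomials with the sup norm on $[0,2]$, $F_2=C([0,1])$ with the sup norm, and $v$ the restriction map, which is injective and continuous. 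Then $u\widehat\otimes_\varepsilon v$ is the extension of $v$ to completions, i.e.\ the restriction $C([0,2])\raw C([0,1])$, which is not injective; here $\Ima(v^*)$ is weak-$*$ dense in $F_1'$, but approximating the moment functional of a point mass at $3/2$ by measures supported on $[0,1]$ forces unbounded, hence non-equicontinuous, nets. The statement is rescued by assuming $E_1$ and $F_1$ complete, which holds in every application in the paper (nuclear Fr\'echet and nuclear dual Fr\'echet spaces are complete). Under completeness, for $t$ in the completion and fixed $\eta'$, the partial functionals $\xi'\mapsto t_\alpha(\xi',\eta')$ of an approximating Cauchy net from $E_1\otimes F_1$ are elements of $E_1$ and converge in $E_1$ (uniform convergence on equicontinuous sets of $E_1'$ is the original topology of $E_1$, by the bipolar theorem), so $t$ is a separately weak-$*$ continuous bilinear form on $E_1'\times F_1'$; two successive applications of separate continuity together with weak-$*$ density of $\Ima(u^*)$ and $\Ima(v^*)$ then give $t=0$, with no compactness argument needed. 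You should either add the completeness hypothesis to the statement or restrict the injectivity claim to the uncompleted tensor product.
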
 
\begin{proof}
\cite{treves2006topological} Exercise 39.3 and Proposition 43.9
\end{proof}

We will also need the following description of elements of the projective tensor product of Fr\'echet spaces.

\begin{proposition}\label{prop:projective Frechet}
Let $E,F$ be Fr\'echet spaces. Then every $\theta \in E \widehat\otimes_\pi F$ can be expressed as an absolutely convergent series
\begin{equation}
   \theta = \sum_{n} \lambda_n x_n \otimes y_n
\end{equation}
where $\{\lambda_n\} \in \ell^1$ and $\{x_n\}, \{y_n\}$ are sequences converging to 0 in $E, F$.
\end{proposition}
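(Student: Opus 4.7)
The plan is to exploit two facts: first, the projective topology on $E \otimes F$ is generated by the seminorms $p \otimes q$ defined by $(p \otimes q)(\theta) = \inf \sum_i p(x_i)q(y_i)$, where the infimum runs over finite representations of $\theta$, and $p,q$ range over continuous seminorms on $E,F$; second, since $E,F$ are Fr\'echet, their topologies are determined by countable increasing fundamental sequences of seminorms $\{p_k\}_{k\geq 1}$ and $\{q_k\}_{k\geq 1}$.

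First I would fix such fundamental sequences and use Cauchy-ness in $E \widehat\otimes_\pi F$ to select a sequence $\theta_k \in E \otimes F$ of finite-rank tensors converging to $\theta$, arranged so that with $\theta_0 := 0$ one has $(p_k \otimes q_k)(\theta_k - \theta_{k-1}) < 4^{-k}$ for every $k$. This yields a telescoping decomposition $\theta = \sum_{k\geq 1}(\theta_k - \theta_{k-1})$ converging in $E \widehat\otimes_\pi F$. Next, for each $k$ I would choose a finite representation $\theta_k - \theta_{k-1} = \sum_{j=1}^{N_k} u_{k,j} \otimes v_{k,j}$ with $\sum_j p_k(u_{k,j})\, q_k(v_{k,j}) < 4^{-k}$; using the homogeneity of tensor products I may rescale within each term so that $p_k(u_{k,j}) = q_k(v_{k,j})$, and then set $\mu_{k,j} := p_k(u_{k,j})\, q_k(v_{k,j})$ and $x_{k,j} := u_{k,j}/p_k(u_{k,j})^{1/2}\cdot 2^{-k/2}$ with a symmetric adjustment for $v_{k,j}$ (tossing zero terms away), absorbing a further factor of $2^{-k}$ into the coefficients.

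The final step is a diagonal rescaling to enforce the convergence-to-zero condition. Concatenate the double sequence $\{(x_{k,j}, y_{k,j}, \lambda_{k,j})\}_{k,j}$ into a single sequence $(x_n, y_n, \lambda_n)$. By construction $\sum_n |\lambda_n| \leq \sum_k 2^{-k} \cdot 4^{-k} < \infty$, so $(\lambda_n) \in \ell^1$. For fixed $\ell$, every term with $k\geq \ell$ satisfies $p_\ell(x_{k,j}) \leq p_k(x_{k,j}) \leq 2^{-k/2}$ and similarly for $y_{k,j}$, so $x_n \to 0$ in $E$ and $y_n \to 0$ in $F$ as $n \to \infty$. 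The absolutely convergent series $\sum_n \lambda_n x_n \otimes y_n$ then sums to $\theta$ by continuity of the tensor map $E \times F \to E \widehat\otimes_\pi F$ together with Proposition \ref{prop:UP}.

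The main obstacle is arranging the rescaling so that both (i) the coefficients are globally $\ell^1$ and (ii) both factor sequences tend to zero in \emph{every} seminorm of $E$ respectively $F$, not just the one used at stage $k$. The trick is the two stage split: the summability margin $4^{-k}$ absorbs the $\ell^1$ bound, while the surplus $2^{-k/2}$ extracted into the vectors provides the decay that, combined with the nesting $p_\ell \leq p_k$ for $\ell \leq k$, yields convergence to $0$ in every seminorm. Once the construction is set up carefully, each verification is routine.
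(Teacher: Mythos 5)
Your overall architecture is the standard one --- it is essentially the proof of Theorem 45.1 in Tr\`eves, which is all the paper itself supplies as justification for this proposition: telescope $\theta$ against a rapidly convergent sequence of finite tensors, take near-optimal finite representations of the increments, and split the smallness between an $\ell^1$ coefficient and a geometric factor pushed into the vectors. That architecture is correct, but the normalization as written does not deliver the $\ell^1$ bound. After balancing so that $p_k(u_{k,j})=q_k(v_{k,j})=:a_{k,j}$, your constraint reads $\sum_j a_{k,j}^2<4^{-k}$, and dividing each vector by $p_k(u_{k,j})^{1/2}$ (i.e.\ by a power of $a_{k,j}$ strictly less than $1$) leaves a coefficient of the form $\lambda_{k,j}=2^{k}a_{k,j}$ up to harmless factors. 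Since only $\sum_j a_{k,j}^2$ is controlled and the number of terms $N_k$ in the $k$-th representation is not, Cauchy--Schwarz gives merely $\sum_j|\lambda_{k,j}|\le \sqrt{N_k}$, which need not be summable; your claimed estimate $\sum_n|\lambda_n|\le\sum_k 2^{-k}4^{-k}$ does not follow from the construction. The repair is one line: divide each vector by its \emph{full} seminorm, setting $x_{k,j}:=2^{-k/2}u_{k,j}/p_k(u_{k,j})$ and $y_{k,j}:=2^{-k/2}v_{k,j}/q_k(v_{k,j})$, so that $\lambda_{k,j}=2^{k}p_k(u_{k,j})q_k(v_{k,j})=2^{k}\mu_{k,j}$ and $\sum_j|\lambda_{k,j}|<2^{k}\cdot 4^{-k}=2^{-k}$, while still $p_\ell(x_{k,j})\le p_k(x_{k,j})=2^{-k/2}$ for all $\ell\le k$, which gives both the summability and the convergence to zero.

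Two smaller points. First, with $\theta_0=0$ the requirement $(p_1\otimes q_1)(\theta_1-\theta_0)<4^{-1}$ would force $(p_1\otimes q_1)(\theta)<1/3$, which is false in general; either rescale $\theta$ at the outset or exempt the first increment, which is a finite sum of elementary tensors and so contributes nothing to the tail conditions. Second, $p_k$ and $q_k$ are only seminorms, so $p_k(u_{k,j})$ may vanish on a nonzero term; ``tossing zero terms away'' does not cover this case, and you should instead divide by $\max(p_k(u_{k,j}),\delta_k)$ for a sufficiently small $\delta_k>0$, which is possible because each representation is finite. With these adjustments the argument is complete.
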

\begin{proof}
\cite{treves2006topological} Theorem 45.1
\end{proof}

\subsection{Nuclear Maps}
Let $E,F$ be Banach spaces. There is a continuous map 
\begin{equation}
E^* \times F \raw \Hom(E,F)
\end{equation}
where the codomain is equipped with the topology induced by the operator norm. By Proposition \ref{prop:UP}, there is a unique continuous map $E^* \widehat\otimes_\pi F \raw \Hom(E,F)$ and we denote its image by $L^1(E,F) \subset \Hom(E,F)$.

\begin{definition}
Let $E,F$ be Banach spaces. A continuous map $u: E \raw F$ is \emph{nuclear} if $u \in L^1(E,F)$.
\end{definition}

We now describe nuclear maps when $E,F$ are general locally convex Hausdorff spaces.
Let $D \subset E$ be a disk and let $E_D := \text{span}(D) \subset E$ be the subspace spanned by $D$. There is a seminorm $\p_D: E_D \raw \R$ given by
\begin{equation}
   \p_D(x) := \inf_{\{\rho \in \R_{>0}\st x \in \rho U\}} \rho.
\end{equation}
Note that if $B$ is a bounded disk, $\p_B$ is a norm on $E_B$.

\begin{definition}
We say a subset $B\subset E$ is a \emph{Banach disk} if $B$ is a bounded disk and $(E_B, \p_B)$ is a Banach space.
\end{definition}

\begin{lemma}
Let $B \subset E$ be a complete, bounded disk. Then $B$ is a Banach disk.
\end{lemma}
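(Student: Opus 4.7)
The plan is to show directly that $(E_B,\p_B)$ is complete by starting with a $\p_B$-Cauchy sequence $\{x_n\}$ in $E_B$, producing a candidate limit using the hypothesis that $B$ is complete (as a uniform subspace of $E$), and then upgrading convergence in the $E$-topology to convergence in the norm $\p_B$.

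First I would record two preliminary observations. Since $B$ is bounded in $E$, the identity map $(E_B,\p_B)\hookrightarrow E$ is continuous, so every $\p_B$-Cauchy sequence is Cauchy in $E$. Moreover, since $E$ is locally convex Hausdorff and $B$ is complete (as a uniform subspace of $E$), $B$ is closed in $E$; since dilations are homeomorphisms, each $\rho B$ with $\rho>0$ is closed in $E$ as well. This closedness is the key technical fact that lets the argument close.

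Next, given a $\p_B$-Cauchy sequence $\{x_n\}\subset E_B$, I would use the standard fact that Cauchy sequences in a seminormed space are bounded to find $C>0$ and $n_0$ with $x_n-x_{n_0}\in CB$ for all $n\geq n_0$. Setting $y_n := C^{-1}(x_n-x_{n_0})\in B$, the sequence $\{y_n\}$ is Cauchy in the uniform structure of $E$ restricted to $B$ (because $B$ contains each rescaled difference $\varepsilon B$ appearing in the Cauchy condition). By completeness of $B$, there is $y\in B$ with $y_n\to y$ in $E$, and hence $x_n\to x_\infty := x_{n_0}+Cy$ in $E$, where $x_\infty\in E_B$.

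Finally, I would promote this $E$-limit to a $\p_B$-limit. Fix $\varepsilon>0$ and choose $N$ so that $x_n-x_m\in\varepsilon B$ for all $n,m\geq N$. For fixed $n\geq N$, $x_n-x_m\to x_n-x_\infty$ in $E$ as $m\to\infty$, and since $\varepsilon B$ is closed in $E$, the limit lies in $\varepsilon B$, i.e.\ $\p_B(x_n-x_\infty)\leq \varepsilon$. Thus $x_n\to x_\infty$ with respect to $\p_B$, which establishes completeness of $(E_B,\p_B)$ and hence that $B$ is a Banach disk.

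The main potential obstacle is the last step: a priori a $\p_B$-Cauchy sequence might have an $E$-limit that escapes $E_B$ or at which $\p_B$-convergence fails. Both pitfalls are averted precisely because $B$ (and each rescaling $\varepsilon B$) is closed in $E$, which is where the completeness hypothesis on $B$ enters decisively via the Hausdorff property of $E$.
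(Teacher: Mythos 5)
Your proof is correct, and it is essentially the standard argument behind the paper's citation (the paper itself gives no proof, deferring to Tr\`eves, Lemma 36.1): a $\p_B$-Cauchy sequence is Cauchy in $E$ by boundedness of $B$, completeness of $B$ supplies an $E$-limit inside $E_B$, and closedness of each dilate $\varepsilon B$ (which follows from completeness of $B$ in the Hausdorff space $E$) upgrades the $E$-limit to a limit in the gauge norm. All the delicate points --- that the open unit ball of $\p_B$ sits inside $B$ so the inclusion $E_B \hookrightarrow E$ is continuous, and that the limit could a priori escape $E_B$ or fail to be a $\p_B$-limit without closedness --- are handled correctly.
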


\begin{proof}
\cite{treves2006topological} Lemma 36.1
\end{proof}

\begin{corollary}\label{cor:compact disk}
Let $B \subset E$ be a compact disk. Then $B$ is a Banach disk.
\end{corollary}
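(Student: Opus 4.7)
The plan is to reduce this corollary to the preceding lemma, which states that every complete, bounded disk is a Banach disk. Thus it suffices to verify that a compact disk $B \subset E$ is both complete and bounded as a subset of $E$; once this is established, the conclusion $(E_B, \mathfrak{p}_B)$ being a Banach space is immediate from the lemma.

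First I would show $B$ is bounded. This is a standard fact about locally convex Hausdorff spaces: if $U$ is any neighborhood of $0$ in $E$, then by continuity of scalar multiplication the sets $\{nU\}_{n\in\mathbb{N}}$ form an open cover of $B$ (since every vector in $E$ is absorbed by $U$), and by compactness finitely many suffice, so $B \subset nU$ for some $n$. Hence $B$ is bounded in the sense of the definition given earlier in the appendix.

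Next I would show $B$ is complete. Since $E$ is Hausdorff (being locally convex Hausdorff throughout this subsection), $B$ is a compact Hausdorff subspace. Given any Cauchy filter $\mathscr{F}$ on $B$, compactness of $B$ ensures that $\mathscr{F}$ has a cluster point $x \in B$, i.e.\ every neighborhood of $x$ meets every member of $\mathscr{F}$. The Cauchy condition then forces $\mathscr{F}$ to converge to $x$: given a neighborhood $U$ of $0$, pick $M \in \mathscr{F}$ with $M - M \subset U$; any element $m_0 \in M$ close enough to $x$ shows $M \subset x + U + U$, and since $U$ was arbitrary this gives convergence of $\mathscr{F}$ to $x$. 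So $B$ is a complete subset.

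With $B$ a complete bounded disk, the preceding lemma immediately yields that $B$ is a Banach disk. I do not anticipate any real obstacle here; both steps are standard TVS manipulations, and the role of the corollary is essentially to package the frequently used fact that compact disks (which arise naturally from compact operators and nuclear factorizations used earlier in the paper) provide Banach spaces $E_B$ on which one can perform subsequent constructions.
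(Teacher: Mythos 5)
Your proof is correct and follows exactly the route the paper intends: the paper's ``proof'' is only a citation to Tr\`eves (Corollary 36.1), which derives the statement from the preceding lemma by observing that a compact disk is complete and bounded, precisely as you do. The two standard facts you verify (compact $\Rightarrow$ bounded via a balanced neighborhood of $0$, and compact $\Rightarrow$ complete via cluster points of Cauchy filters) are exactly what is needed.
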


\begin{proof}
\cite{treves2006topological} Corollary 36.1
\end{proof}

\begin{definition}
A subset $U \subset E$ is a \emph{dual Banach disk} if it is a closed disk that is a neighborhood of $0 \in E$.
\end{definition}

A dual Banach disk $U$ is absorbing which implies $\p_U$ is a seminorm on all of $E$. We remark that $U$ is the closed unit semiball of $\p_U$.

\begin{construction}
Let $\p: E \raw \R$ be a seminorm and $V$ its closed unit semiball. There is a norm induced by $\p$ on the quotient $E/\ker(\p)$. We will use $E_\p$ and $E^V$ interchangeably to denote the Banach space completion $\overline{E/\ker(\p)}$.
\end{construction}
Applying the construction to $\p_U$ yields the Banach space $E^U$.

Let $\mathfrak U$ be the collection of all dual Banach disks on $E$ and $\mathfrak B$ the collection of all Banach disks on $F$. Let $U \in \mathfrak U$, $B \in \mathfrak B$ and suppose $u\in L^1(E^U, F_B)$. There are continuous maps $E \xrightarrow{q_U} E^U$ and $F_B \xrightarrow{i_B} F$ induced by the vector space quotient $E \raw E/\ker(\p_U)$ and inclusion $F_B \mono F$ respectively. Thus there is a map
\begin{equation}
\begin{split}
L^1(E^U, F_B) &\raw \Hom(E,F)\\
u \hspace{.5cm} &\mapsto \hspace{.2cm} i_B \circ u \circ q_U
\end{split}
\end{equation}
and we denote its image by $L^1_{U,B}(E,F)$.
Set
\begin{equation}
   L^1(E,F) := \bigcup_{U \in \mathfrak U, B \in \mathfrak B} L^1_{U,B}(E, F)
\end{equation}

\begin{definition}
A map $u: E\raw F$ is \emph{nuclear} if $u \in L^1(E,F)$.
\end{definition}

\begin{proposition}\label{prop:nuclear composition}
Let $f: G \raw E \xrightarrow{u} F \raw H$ be a composition of continuous maps between locally convex Hausdorff spaces and suppose $u$ is nuclear. Then $f$ is nuclear.
\end{proposition}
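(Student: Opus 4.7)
The plan is to factor the nuclear map $u$ through its defining Banach-space data and reduce to the fact that nuclear maps between Banach spaces are stable under composition with continuous linear maps on either side. By definition of nuclearity, write $u = i_B \circ u_0 \circ q_U$ where $U \subset E$ is a dual Banach disk, $B \subset F$ is a Banach disk, and $u_0 \in L^1(E^U, F_B)$ is nuclear between Banach spaces. I will handle the precomposition with $g: G \raw E$ and the postcomposition with $h: F \raw H$ in turn.

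\textbf{Precomposition.} Set $V := g^{-1}(U) \subset G$. Since $g$ is continuous linear and $U$ is a closed disk which is a neighborhood of $0 \in E$, the set $V$ is a dual Banach disk on $G$ with Minkowski functional $\p_V = \p_U \circ g$. Hence $g$ descends to a continuous linear map $\bar g : G^V \raw E^U$ of Banach spaces satisfying $q_U \circ g = \bar g \circ q_V$. Using Proposition \ref{prop:projective Frechet} to write $u_0 = \sum_n \lambda_n \xi_n \otimes y_n$ as an absolutely convergent series in $(E^U)^* \widehat\otimes_\pi F_B$, the formula $u_0 \circ \bar g = \sum_n \lambda_n (\xi_n \circ \bar g) \otimes y_n$ converges in the projective completion and exhibits $u_0 \circ \bar g$ as an element of $L^1(G^V, F_B)$. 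Thus $u \circ g = i_B \circ (u_0 \bar g) \circ q_V$ lies in $L^1_{V, B}(G, F)$.

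\textbf{Postcomposition.} Set $w := u \circ g = i_B \circ w_0 \circ q_V$ with $w_0 = u_0 \bar g$ nuclear. The continuous linear map $h \circ i_B : F_B \raw H$ has image $H_D := (h \circ i_B)(F_B)$, and I take $D := (h \circ i_B)(B_0)$ where $B_0$ is the closed unit ball of $F_B$. A direct computation shows that the Minkowski functional $\p_D$ on $H_D$ coincides, under the induced linear isomorphism $F_B/\ker(h \circ i_B) \xrightarrow{\sim} H_D$, with the quotient norm on $F_B/\ker(h \circ i_B)$. Since $\ker(h \circ i_B)$ is closed in $F_B$ (by continuity of $h \circ i_B$ together with Hausdorffness of $H$), this quotient is a Banach space, so $D$ is a Banach disk on $H$, and $h \circ i_B$ factors as $i_D \circ \bar h$ with $\bar h: F_B \raw H_D$ continuous. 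The same series argument as above now gives $\bar h \circ w_0 \in L^1(G^V, H_D)$, and
\begin{equation*}
f = h \circ u \circ g = i_D \circ (\bar h \circ w_0) \circ q_V
\end{equation*}
exhibits $f$ as an element of $L^1_{V,D}(G, H)$.

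\textbf{The main obstacle} is the identification of $D$ as an honest Banach disk on $H$ without any completeness assumption on $H$. The key point is that one need not take the closure of $D$ in $H$ or pass to a completion: the relevant Banach space structure is already present internally as the Banach quotient $F_B/\ker(h \circ i_B)$, whose norm coincides with the Minkowski functional of $D$. Once this identification is in hand, every other step reduces to the straightforward Banach-space operator-ideal property of nuclear maps manifested via the projective-tensor series representation.
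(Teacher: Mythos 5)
Your argument is correct. The paper itself gives no proof here — it only cites Tr\`eves, Proposition 47.1 — and your proof is essentially the standard operator-ideal argument reconstructed from the paper's definitions: factor $u = i_B \circ u_0 \circ q_U$, pull the dual Banach disk back along $g$ (where $\p_{g^{-1}(U)} = \p_U \circ g$ makes $\bar g: G^V \raw E^U$ well defined on completions), and push the Banach disk forward along $h$. The one genuinely delicate step is the one you flag: showing $D = (h\circ i_B)(B_0)$ is a Banach disk in $H$ without any completeness hypothesis on $H$. Your identification of $(H_D,\p_D)$ with the quotient Banach space $F_B/\ker(h\circ i_B)$ (using that the kernel is closed because $H$ is Hausdorff) is exactly the right fix, and it matches the paper's definition of Banach disk, which requires only that the span with its Minkowski norm be complete, not that the disk be closed in $H$. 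The series manipulations in $(\cdot)^*\widehat\otimes_\pi(\cdot)$ are justified by Proposition \ref{prop:projective Frechet} and Corollary \ref{cor:completed tensor maps}, so no gaps remain.
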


\begin{proof}
\cite{treves2006topological} Proposition 47.1
\end{proof}

\begin{proposition}\label{prop:nuclear map characterization}
Let $u: E \raw F$ be continuous. The following are equivalent:
\begin{enumerate}
   \item $u$ is nuclear
   \item $u$ is a composition
      \begin{equation}
      u: E \raw V \xrightarrow{v} W \raw F\nonumber
      \end{equation}
      with $V,W$ Banach spaces and $v$ nuclear
   \item $u$ can be expressed as a map
      \begin{equation}\label{eq:nuclear sum}
      u: x \mapsto \sum_k \lambda_k\langle x_k', x \rangle y_k
      \end{equation}
   where $\{x_k'\} \subset E'$ is an equicontinuous sequence, $\{y_k\} \subset B$ is a sequence contained in a  Banach disk $B \subset F$, $\{\lambda_k\} \subset \C$ is absolutely summable.
\end{enumerate}
\end{proposition}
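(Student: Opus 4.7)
The plan is to prove the cyclic chain $(1) \Rightarrow (2) \Rightarrow (3) \Rightarrow (1)$, leveraging the definitions already in place and the tensor-product machinery established earlier in the appendix.

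The implication $(1) \Rightarrow (2)$ is essentially tautological from the definition. If $u \in L^1(E,F)$, then there exist a dual Banach disk $U \subset E$ and a Banach disk $B \subset F$ together with a nuclear map $v \in L^1(E^U, F_B)$ such that $u = i_B \circ v \circ q_U$, where $q_U : E \to E^U$ is the canonical map into the Banach completion of $E/\ker(\p_U)$ and $i_B : F_B \to F$ is the inclusion. Taking $V := E^U$ and $W := F_B$ gives the desired factorization through Banach spaces. The reverse direction $(2) \Rightarrow (1)$ is also immediate: if $u$ factors through a nuclear map $v$ between Banach spaces, then $u$ is a composition of continuous maps with a nuclear map, and Proposition \ref{prop:nuclear composition} yields that $u$ is nuclear. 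So $(1) \Leftrightarrow (2)$ requires no real work beyond unpacking definitions.

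For $(2) \Rightarrow (3)$, suppose $u = j \circ v \circ p$ with $p : E \to V$ and $j : W \to F$ continuous and $v : V \to W$ nuclear between Banach spaces. By the definition of nuclearity for Banach spaces, $v$ lies in the image of the canonical map $V^* \widehat{\otimes}_\pi W \to \Hom(V,W)$. Applying Proposition \ref{prop:projective Frechet} (since Banach spaces are Fr\'echet), $v$ corresponds to an element of the form $\sum_k \lambda_k v_k' \otimes w_k$ with $\{\lambda_k\} \in \ell^1$ and $\{v_k'\} \subset V^*$, $\{w_k\} \subset W$ sequences converging to $0$ and hence bounded. Define $x_k' := p^*(v_k') \in E'$ and $y_k := j(w_k) \in F$. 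The sequence $\{x_k'\}$ is equicontinuous: $\{v_k'\}$ is contained in a multiple of the unit ball of $V^*$, and by Proposition \ref{prop:equicontinuous polar} this pulls back along the continuous map $p$ to an equicontinuous subset of $E'$. The sequence $\{y_k\}$ is bounded in $F$; by Corollary \ref{cor:compact disk} (applied to the closed convex balanced hull of a compact or relatively compact set, or directly to the image of the bounded sequence in $W$ transported via $j$), it is contained in a Banach disk $B \subset F$. Then $u(x) = \sum_k \lambda_k \langle x_k', x\rangle y_k$ as required.

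For $(3) \Rightarrow (1)$, the main task is to assemble the series data into a factorization through Banach spaces. By Proposition \ref{prop:equicontinuous polar} the equicontinuous set $\{x_k'\}$ is contained in the polar $U^0$ of a closed disk neighborhood $U$ of $0 \in E$, which is a dual Banach disk. Each $x_k'$ extends to a bounded linear functional $\tilde{x}_k' \in (E^U)^*$ of norm $\leq 1$ satisfying $\tilde{x}_k' \circ q_U = x_k'$. On the target side, the Banach disk $B$ containing $\{y_k\}$ gives a Banach space $F_B = (E_B, \p_B)$ with $\p_B(y_k) \leq 1$. The formal expression $\sum_k \lambda_k \tilde{x}_k' \otimes y_k$ converges absolutely in the projective tensor product $(E^U)^* \widehat{\otimes}_\pi F_B$ since $\sum_k |\lambda_k| \cdot \|\tilde{x}_k'\| \cdot \p_B(y_k) \leq \sum_k |\lambda_k| < \infty$. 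This defines a nuclear map $v : E^U \to F_B$, and $u = i_B \circ v \circ q_U$, placing $u \in L^1_{U,B}(E,F) \subset L^1(E,F)$.

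The main obstacle is the step $(3) \Rightarrow (1)$, where one must translate the abstract series representation into a genuine nuclear map between Banach spaces by locating the correct dual Banach disk and Banach disk and verifying absolute convergence in the projective tensor product; the other implications are either definitional or routine applications of previously established results.
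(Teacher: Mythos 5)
The paper does not actually prove this proposition --- it is quoted verbatim from Tr\`eves (Proposition 47.2) with a citation in place of an argument --- so your proposal supplies a proof the paper omits, and it is essentially the standard textbook argument, correctly executed. The cycle $(1)\Rightarrow(2)\Rightarrow(3)\Rightarrow(1)$ is the right structure: $(1)\Leftrightarrow(2)$ is definitional plus Proposition \ref{prop:nuclear composition}; $(2)\Rightarrow(3)$ correctly combines Proposition \ref{prop:projective Frechet} with Proposition \ref{prop:equicontinuous polar}; and your $(3)\Rightarrow(1)$ assembly of $u=i_B\circ v\circ q_U$ with $v=\sum_k\lambda_k\tilde x_k'\otimes y_k\in (E^U)^*\widehat\otimes_\pi F_B$ is exactly right, including the absolute convergence estimate $\sum_k|\lambda_k|\,\|\tilde x_k'\|\,\p_B(y_k)\leq\sum_k|\lambda_k|<\infty$. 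The one spot to tighten is the parenthetical in $(2)\Rightarrow(3)$ locating the Banach disk $B$ containing $\{y_k\}$: taking the closed convex balanced hull of $\{y_k\}$ directly in $F$ is dangerous, since $F$ is only assumed locally convex Hausdorff and such a hull need not be compact (or give a Banach disk) without some completeness. The clean route is to take the closed convex balanced hull $K$ of $\{w_k\}\cup\{0\}$ inside the Banach space $W$, where it is compact; then $j(K)$ is a compact disk in $F$, hence a Banach disk by Corollary \ref{cor:compact disk}, and it contains all $y_k=j(w_k)$. With that adjustment the proof is complete.
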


\begin{proof}
\cite{treves2006topological} Proposition 47.2
\end{proof}

\begin{corollary}
$L^1(E,F) \subset \Hom(E,F)$ is a subspace.
\end{corollary}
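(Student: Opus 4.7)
The plan is to verify directly that $L^1(E,F)$ is closed under the two vector space operations of $\Hom(E,F)$. Both will be read off from the explicit series representation given in Proposition \ref{prop:nuclear map characterization}(3).

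Closure under scalar multiplication is immediate. If $u \in L^1(E,F)$ with representation $u(x) = \sum_k \lambda_k \langle x_k',x\rangle y_k$, then for any $c \in \C$ the series $\sum_k (c\lambda_k)\langle x_k', x \rangle y_k$ exhibits $cu$ in the same form, with the same equicontinuous sequence $\{x_k'\}$, the same Banach disk containing $\{y_k\}$, and scalars $\{c\lambda_k\}$ still absolutely summable.

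For closure under addition, given $u_1, u_2 \in L^1(E,F)$ with representations $u_i(x) = \sum_{k} \lambda_k^{(i)}\langle x_k'^{(i)}, x\rangle y_k^{(i)}$ for $i=1,2$, I would form the interleaved series and verify the three conditions of Proposition \ref{prop:nuclear map characterization}(3). Absolute summability of the interleaved scalars is clear. Equicontinuity of the interleaved dual sequence follows from Proposition \ref{prop:equicontinuous polar}: each $\{x_k'^{(i)}\}$ is contained in some polar $U_i^0$, so the interleaved sequence lies in $U_1^0 \cup U_2^0 \subset (U_1\cap U_2)^0$, the polar of a neighborhood of the origin.

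The main obstacle is showing that the interleaved range sequence lies in a single Banach disk $B \subset F$. My preferred route is to bypass this directly by instead invoking characterization (2) of Proposition \ref{prop:nuclear map characterization}: write each $u_i$ as $E \to V_i \xrightarrow{v_i} W_i \to F$ with $V_i, W_i$ Banach spaces and $v_i \in L^1(V_i,W_i)$ in the Banach sense. Then $u_1+u_2$ factors as
\begin{equation}
E \xrightarrow{(a_1, a_2)} V_1 \oplus V_2 \xrightarrow{v_1 \oplus v_2} W_1 \oplus W_2 \xrightarrow{b_1 + b_2} F,
\end{equation}
so it suffices to note that the direct sum of two Banach-nuclear maps is Banach-nuclear; this last reduction is the same interleaving argument, but carried out in Banach spaces where the image sequence automatically sits in the unit ball of $W_1 \oplus W_2$, so no topological complication about convex hulls arises. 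Applying characterization (2) of Proposition \ref{prop:nuclear map characterization} in the reverse direction then shows $u_1+u_2 \in L^1(E,F)$.
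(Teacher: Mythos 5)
Your proof is correct. The paper itself offers no argument---the statement is left as an immediate consequence of Proposition \ref{prop:nuclear map characterization}---so the implicit intended proof is exactly the interleaving of series representations that you sketch, and you have correctly identified the one point that is not automatic: the two range sequences must be placed in a \emph{single} Banach disk of $F$. Your detour through characterization (2) handles this cleanly and without circularity, since for Banach spaces $V,W$ the set $L^1(V,W)$ is by definition the image of the linear map $V^*\widehat\otimes_\pi W \raw \Hom(V,W)$ and is therefore a subspace for free; writing $v_1\oplus v_2 = (\iota_1\circ v_1\circ \pi_1) + (\iota_2\circ v_2\circ \pi_2)$ and applying Proposition \ref{prop:nuclear composition} to the outer continuous maps then finishes the argument. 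For comparison, the direct route stays with characterization (3): if $B_1,B_2\subset F$ are Banach disks, then $B:=B_1+B_2$ is again a bounded disk containing $B_1\cup B_2$, and $(F_B,\p_B)$ is Banach because it is a quotient of the Banach space $F_{B_1}\oplus F_{B_2}$ by a closed subspace; this is the standard lemma your factorization is silently replacing. Either way the equicontinuity step via $U_1^0\cup U_2^0\subset (U_1\cap U_2)^0$ and Proposition \ref{prop:equicontinuous polar} is exactly right. The two approaches buy essentially the same thing; yours trades an auxiliary lemma about sums of Banach disks for a slightly longer factorization diagram, which is arguably more in the spirit of how nuclearity is used elsewhere in the paper (cf. the proof of Proposition \ref{prop:inverse limit nuclear}).
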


We have the following refinement of Proposition \ref{prop:nuclear map characterization} for maps between Fr\'echet spaces. 

\begin{proposition}\label{prop:nuclear map between Frechet}
Let $u: E\raw F$ be a continuous map between Fr\'echet spaces. Then $u$ is nuclear if and only if $u$ has a representation (\ref{eq:nuclear sum}) with $\{x_k'\}, \{y_k\}$ bounded sequences in $E^*, F$.
\end{proposition}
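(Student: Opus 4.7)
The plan is to observe that the forward implication is essentially immediate from the general characterization in Proposition \ref{prop:nuclear map characterization}, and to prove the converse by modifying a representation with merely bounded sequences into one with an equicontinuous family in $E^*$ and a Banach-disk-valued sequence in $F$.

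For the forward direction, if $u$ is nuclear then by Proposition \ref{prop:nuclear map characterization}(3) it admits an expansion $u(x) = \sum_k \lambda_k \langle x_k', x\rangle y_k$ with $\{x_k'\}$ equicontinuous, $\{y_k\}$ contained in a Banach disk $B \subset F$, and $\{\lambda_k\} \in \ell^1$. By Proposition \ref{prop:equicontinuous polar}, every equicontinuous subset of $E^*$ is contained in the polar of a neighborhood of $0$, which is bounded in $E^*$; and every Banach disk is bounded by definition, so $\{y_k\}$ is a bounded sequence in $F$.

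For the converse, suppose $u(x) = \sum_k \lambda_k \langle x_k', x\rangle y_k$ with $\{x_k'\}$ bounded in $E^*$, $\{y_k\}$ bounded in $F$, and $\{\lambda_k\} \in \ell^1$. I would verify each of the three hypotheses of Proposition \ref{prop:nuclear map characterization}(3) after a slight redistribution of coefficients. First, since $E$ is Fr\'echet it is barrelled, so by the Banach--Steinhaus theorem every bounded subset of $E^*$ is equicontinuous; in particular $\{x_k'\}$ is equicontinuous. Second, I would use the following elementary $\ell^1$ trick: because $\{\lambda_k\} \in \ell^1$, there exist positive reals $\mu_k \to \infty$ with $\{\mu_k \lambda_k\} \in \ell^1$ (take for instance $\mu_k := 1/\sqrt{s_k}$ where $s_k := \sum_{j \geq k} |\lambda_j|$; a telescoping estimate yields $\sum_k \mu_k |\lambda_k| \leq 2\sqrt{s_1}$). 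Setting $\tilde\lambda_k := \mu_k \lambda_k$ and $\tilde y_k := \mu_k^{-1} y_k$ rewrites the series as
\begin{equation}
u(x) = \sum_k \tilde\lambda_k \, \langle x_k', x\rangle \, \tilde y_k,
\end{equation}
with $\tilde y_k \to 0$ in $F$ since $\{y_k\}$ is bounded and $\mu_k^{-1} \to 0$. Third, the set $K := \{\tilde y_k\}_{k\geq 1} \cup \{0\}$ is compact in $F$, and its closed absolutely convex hull $B := \overline{\Gamma(K)}$ is itself compact by the classical theorem that the closed absolutely convex hull of a compact subset of a Fr\'echet space is compact. By Corollary \ref{cor:compact disk}, $B$ is a Banach disk containing every $\tilde y_k$. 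Combining these, the redistributed expansion satisfies the hypotheses of Proposition \ref{prop:nuclear map characterization}(3), so $u$ is nuclear.

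The main obstacle is the appeal to the fact that the closed absolutely convex hull of a compact subset of a Fr\'echet space is compact; this is a classical result but is not explicitly recorded in the appendix, so it would need to be cited (or inserted as a preliminary lemma). The rest of the argument is bookkeeping: verifying that the rearranged series still converges pointwise to $u(x)$ (which follows from absolute convergence in each continuous seminorm of $F$) and checking that the barrelledness of Fr\'echet spaces is indeed the reason bounded subsets of the strong dual are equicontinuous.
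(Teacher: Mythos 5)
Your proof is correct. The paper itself offers no argument for this proposition — it simply cites Tr\`eves, Proposition 47.2, Corollary 2 — so your write-up supplies a proof where the paper has only a reference, and both directions check out: the forward implication does follow from Proposition \ref{prop:nuclear map characterization}(3) together with Proposition \ref{prop:equicontinuous polar} (polars of neighborhoods of $0$ are strongly bounded) and the fact that Banach disks are bounded by definition; and in the converse the Banach--Steinhaus step, the telescoping $\ell^1$ redistribution, and the convergence $\tilde y_k \to 0$ are all sound.

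That said, the one step you flag as an obstacle — the compactness of the closed absolutely convex hull of a compact set in a Fr\'echet space — can be avoided entirely, along with the coefficient redistribution. Since $F$ is Fr\'echet, take $B$ to be the closed convex balanced hull of the bounded set $\{y_k\}$: this is again bounded (the disked hull of a bounded set is bounded in any locally convex space, by Proposition \ref{prop:bipolar thm} it is the bipolar), and being a closed subset of a complete metrizable space it is complete, hence a Banach disk by the Lemma immediately preceding Corollary \ref{cor:compact disk}. The original expansion $u(x) = \sum_k \lambda_k \langle x_k', x\rangle y_k$ then already satisfies all three hypotheses of Proposition \ref{prop:nuclear map characterization}(3) once equicontinuity of $\{x_k'\}$ is known from barrelledness. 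Your route through compact disks buys nothing extra here and imports an external theorem; the completeness route stays entirely within the paper's appendix.
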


\begin{proof}
\cite{treves2006topological} Proposition 47.2 Corollary 2
\end{proof}

We now record some properties of the transpose of a nuclear map. Recall that by Proposition \ref{prop: transpose continuous} the transpose of a continuous map is continuous.

\begin{proposition}\label{prop:nuclear transpose}
Let $E\xrightarrow{u} F$ be a nuclear map between two locally convex Hausdorff spaces. Then $u^*: F^* \raw E^*$ is nuclear.
(Here, $E^*$ denotes the dual vector space equipped with the strong topology.)
\end{proposition}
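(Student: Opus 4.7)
My plan is to use the factorization characterization of nuclearity — condition (2) of Proposition \ref{prop:nuclear map characterization} — to reduce the statement to the case of operators between Banach spaces, and then exhibit an explicit series representation for the transpose in that setting. Since $u: E \raw F$ is nuclear, I can factor it as
\[
E \xrightarrow{q} V \xrightarrow{v} W \xrightarrow{j} F,
\]
with $V, W$ Banach spaces, $q, j$ continuous, and $v$ a nuclear map between Banach spaces. Transposing and invoking Proposition \ref{prop: transpose continuous} to guarantee continuity of $q^*$ and $j^*$ yields
\[
u^* = q^* \circ v^* \circ j^*, \qquad F^* \xrightarrow{j^*} W^* \xrightarrow{v^*} V^* \xrightarrow{q^*} E^*.
\]
By Proposition \ref{prop:nuclear composition}, it will suffice to show that $v^*: W^* \raw V^*$ is nuclear; the desired nuclearity of $u^*$ is then immediate from that proposition applied to the displayed composition.

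For the Banach step, I would apply Proposition \ref{prop:nuclear map between Frechet} to write
\[
v(x) = \sum_k \lambda_k \langle \phi_k, x\rangle\, w_k, \qquad \{\lambda_k\} \in \ell^1,\; \{\phi_k\} \subset V^*,\; \{w_k\} \subset W \text{ bounded},
\]
and then formally compute
\[
v^*(\psi) = \sum_k \lambda_k \langle \psi, w_k\rangle\, \phi_k, \qquad \psi \in W^*.
\]
To identify this as a nuclear representation of $v^*$ in the sense of condition (3) of Proposition \ref{prop:nuclear map characterization}, I would observe: the bounded sequence $\{\phi_k\}$ lies in some scalar multiple of the closed unit ball of $V^*$, which is tautologically a Banach disk; and the canonical images of the $w_k$ in $W^{**}$ are bounded, hence by Proposition \ref{prop:equicontinuous polar} form an equicontinuous family of functionals on $W^*$, since the polar of any ball in $W$ is a neighborhood of $0$ in the strong dual $W^*$.

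The only genuine verification needed is that the candidate series defining $v^*(\psi)$ converges in $V^*$ to the actual transpose and defines a continuous map — this follows from $\ell^1$-summability of $\{\lambda_k\}$ together with the uniform bounds on $|\langle \psi, w_k\rangle|$ over equicontinuous subsets of $W^*$ and on $\|\phi_k\|_{V^*}$. I do not anticipate an analytic obstacle: the argument is essentially a formal dualization of the Banach space case, with the factorization characterization absorbing the reduction from arbitrary locally convex Hausdorff spaces. The slight bookkeeping point to keep in mind is that ``nuclear'' for $v^*: W^* \raw V^*$ must be interpreted with the strong topologies, which is exactly what enters both the equicontinuity criterion and the definition of a Banach disk in $V^*$.
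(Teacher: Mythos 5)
Your proof is correct. The paper does not argue this itself --- it simply cites \cite{treves2006topological}, Proposition 47.4 --- and your argument is essentially the standard one behind that reference: reduce to the Banach case via the factorization characterization, dualize the nuclear series representation there, and propagate nuclearity back through the continuous maps $j^*$ and $q^*$ using Proposition \ref{prop:nuclear composition}. (The textbook proof dualizes the general representation $u=\sum_k\lambda_k\langle x_k',\cdot\rangle y_k$ directly, which requires knowing that the polar of a neighborhood of $0$ is a Banach disk in the strong dual via weak-$*$ compactness; your Banach factorization sidesteps that point, since in a Banach dual the closed unit ball is trivially a Banach disk. Either route is fine.)
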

\begin{proof}
\cite{treves2006topological} Proposition 47.4
\end{proof}

\subsection{Nuclear Spaces}
Let $E$ be a locally convex Hausdorff space. Suppose $\p,\q$ are continuous seminorms on $E$ with $\q \geq \p$. Then the topology defined by $\q$ on $E$ is finer than the topology defined by $\p$ and $\ker(\q) \subset \ker(\p)$. Thus there is a continuous map $E/\ker(\q) \raw E/\ker(\p)$ which extends to a continuous map of completions $E_\q \raw E_\p$.

\begin{definition}
   A locally convex Hausdorff vector space $E$ is \emph{nuclear} if for every seminorm $\p: E \raw \R$ there exists a seminorm $\q \geq \p$ such that
   \begin{equation}
      E_\q \raw E_\p
   \end{equation}
   is a nuclear map of Banach spaces.
\end{definition}

\begin{proposition}\label{prop:nuclear space char}
Let $E$ be a locally convex Hausdorff space. The following are equivalent:
\begin{enumerate}
\item $E$ is nuclear
\item The map (\ref{eq:EpiF to EeF}) induces an isomorphism 
\begin{equation}\label{eq:nuclear iso}
E \widehat\otimes_\pi F \simeq E \widehat\otimes_\varepsilon F
\end{equation}
for every $F$ locally convex Hausdorff.
\item Every continuous map $E \raw B$ to a Banach space $B$ is nuclear. 
\end{enumerate}
\end{proposition}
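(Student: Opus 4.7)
The plan is to establish the cycle of implications $(1)\Rightarrow(3)\Rightarrow(1)$ and then treat $(1)\Leftrightarrow(2)$ separately.

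For $(1)\Rightarrow(3)$, given a continuous linear map $u: E \to B$ to a Banach space, I would consider the continuous seminorm $\p(x) := \|u(x)\|_B$ on $E$. Since $\ker(\p) \subset \ker(u)$, the map $u$ factors continuously through $E_\p$ as $u = \tilde u \circ q_\p$ where $q_\p: E \to E_\p$ is the canonical map and $\tilde u: E_\p \to B$ is a bounded linear map. Nuclearity of $E$ furnishes a continuous seminorm $\q \geq \p$ such that the induced map $\phi: E_\q \to E_\p$ between Banach spaces is nuclear. Then $u$ admits the factorization $E \xrightarrow{q_\q} E_\q \xrightarrow{\phi} E_\p \xrightarrow{\tilde u} B$ passing through the nuclear map $\phi$, so Proposition \ref{prop:nuclear composition} gives that $u$ is nuclear.

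For $(3)\Rightarrow(1)$, take any continuous seminorm $\p$. The canonical map $q_\p: E \to E_\p$ is continuous into a Banach space, hence nuclear by $(3)$. Proposition \ref{prop:nuclear map characterization} yields a representation $q_\p(x) = \sum_k \lambda_k \langle x_k',x\rangle y_k$ with $\{x_k'\}$ equicontinuous in $E^*$, $\{y_k\}$ contained in some Banach disk $B \subset E_\p$, and $\{\lambda_k\}\in \ell^1$. By Proposition \ref{prop:equicontinuous polar} the set $\{x_k'\}$ lies in the polar $U^0$ of some neighborhood $U$ of $0$, so setting $\q := \p_U + \p$ (which is continuous and satisfies $\q \geq \p$), the estimate $\sup_k |\langle x_k',x\rangle| \leq \q(x)$ holds. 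This allows the $x_k'$ to factor through $E_\q^*$, and one obtains the same series representation for the induced map $E_\q \to E_\p$, exhibiting it as nuclear via Proposition \ref{prop:nuclear map between Frechet}.

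For $(1)\Rightarrow(2)$, Corollary \ref{cor:ExF to EeFcomplete} supplies the continuous surjection $E\widehat\otimes_\pi F \to E\widehat\otimes_\varepsilon F$; I must show it is a bijection with continuous inverse, equivalently that each continuous seminorm defining the $\pi$-topology on $E\otimes F$ is dominated by a continuous seminorm for the $\varepsilon$-topology. A basic $\pi$-continuous seminorm has the form $\p\otimes \r$ built from continuous seminorms $\p$ on $E$ and $\r$ on $F$. Use nuclearity of $E$ to choose $\q\geq \p$ with $E_\q\to E_\p$ nuclear, and write its nuclear representation. Substituting this representation into the definition of the projective tensor seminorm on pure tensors and then extending by Proposition \ref{prop:projective Frechet}, the $\pi$-seminorm of $\theta \in E\otimes F$ can be bounded in terms of an expression involving $\{\langle x_k',x\rangle\}$ evaluated on equicontinuous families, which is precisely the form of an $\varepsilon$-continuous seminorm (uniform convergence on products of equicontinuous sets). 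For $(2)\Rightarrow(1)$, specialize $F := \ell^1$ or use the identification of $L^1(E,B)$ with $E^*\widehat\otimes_\pi B$ versus $E^*\widehat\otimes_\varepsilon B \cong$ continuous maps, so that the isomorphism forces every continuous $E^*\to B$ (equivalently, by duality arguments in a reflexive context, every continuous $E\to B$) to be nuclear; then invoke $(3)\Rightarrow(1)$.

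The main obstacle will be the $(1)\Rightarrow(2)$ direction: controlling arbitrary $\pi$-continuous seminorms on the tensor product by $\varepsilon$-continuous ones requires delicate bookkeeping with the nuclear representation, in particular ensuring that the summability constant $\sum|\lambda_k|$ from the nuclear factorization of $E_\q\to E_\p$ combines with the bounded images in $B$ to produce an equicontinuous family witnessing the $\varepsilon$-continuity. The $(2)\Rightarrow(1)$ step is comparatively soft once one identifies the right duality pairing to test the tensor product isomorphism against, reducing it to condition $(3)$.
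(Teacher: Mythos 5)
The paper itself gives no proof here — the proposition is quoted verbatim from Trèves, Theorem 50.1 — so you are supplying an argument where the paper supplies only a citation. Your treatment of $(1)\Leftrightarrow(3)$ is correct and standard: factoring $u: E \raw B$ through $E_\p$ with $\p := \|u(\cdot)\|_B$ for the forward direction, and, conversely, promoting the nuclear representation of $q_\p : E \raw E_\p$ to the link map $E_\q \raw E_\p$ by trapping the equicontinuous family $\{x_k'\}$ in the polar of a neighborhood. Your sketch of $(1)\Rightarrow(2)$ also has the right shape (dominate each projective seminorm built from $\p$ and a seminorm on $F$ by an injective seminorm, using the nuclear representation of $E_\q \raw E_\p$), though it remains a sketch.

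The genuine gap is in $(2)\Rightarrow(1)$, and your difficulty assessment is inverted: that, not $(1)\Rightarrow(2)$, is the hard direction. Your second proposed route fails outright: condition (2) is a hypothesis about $E\widehat\otimes_\pi F$ versus $E\widehat\otimes_\varepsilon F$, not about tensor products with $E^*$, so nothing controls $E^*\widehat\otimes B$; no reflexivity of $E$ is assumed; and the identification of $E^*\widehat\otimes_\varepsilon B$ with the space of all continuous maps $E\raw B$ is false in general (even for Banach $E$ the $\varepsilon$-completion of $E^*\otimes B$ lands inside the approximable operators, typically a proper subspace of $\Hom(E,B)$). Your first route, specializing to $F=\ell^1$, is the correct one, but it is not ``comparatively soft'': the equality $E\widehat\otimes_\pi \ell^1 = E\widehat\otimes_\varepsilon \ell^1$ says that every unconditionally summable sequence in $E$ is absolutely summable, and deducing nuclearity (or condition (3)) from that statement is the Grothendieck--Pietsch theorem — the substantive content of Trèves' Theorem 50.1, requiring a genuinely new domination argument on the polar of a neighborhood — not a formal reduction to $(3)\Rightarrow(1)$. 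As written, your cycle of implications is not closed.
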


\begin{proof}
\cite{treves2006topological} Theorem 50.1
\end{proof}

\begin{notation}
In light of the isomorphism (\ref{eq:nuclear iso}), we will denote $E \widehat\otimes F := E \widehat\otimes_\pi F\simeq E\widehat\otimes_\varepsilon F$ when either $E$ or $F$ is nuclear.
\end{notation}

\begin{proposition}
The following are true:
\begin{enumerate}
   \item $E$ is nuclear if and only if $\tilde E$ is nuclear.
   \item A linear subspace of a nuclear space is nuclear.
   \item Let $F \subset E$ be a closed subspace. Then $E/F$ is nuclear.
   \item Arbitrary colimits of nuclear spaces are nuclear.
   \item Countable limits of nuclear spaces are nuclear.
   \item If $E,F$ are nuclear, then $E \widehat\otimes F$ is nuclear.
\end{enumerate}
\end{proposition}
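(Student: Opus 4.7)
My plan is to treat each item in sequence, leveraging the two characterizations of nuclearity already established: the original seminorm definition and the tensor product characterization of Proposition \ref{prop:nuclear space char} (in particular, that a continuous map from a nuclear space to any Banach space is automatically nuclear). The core idea throughout is that nuclearity is a property detected by continuous seminorms, and most of the listed operations transfer seminorms in a controlled way.

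For (1), I would observe that the continuous seminorms on $E$ are precisely the restrictions of continuous seminorms on the completion $\tilde E$, and that for any such seminorm $\p$ with extension $\tilde\p$, the canonical map identifies the Banach space completions $E_{\p} \cong \tilde E_{\tilde\p}$. Thus the nuclearity condition on pairs $\p \leq \q$ matches bijectively. For (2), given a subspace $F \subset E$ with induced topology and a continuous seminorm $\p$ on $F$, I would extend $\p$ by Hahn--Banach to a continuous seminorm $\p'$ on $E$ (using a dominating continuous seminorm on $E$ whose restriction controls $\p$). By nuclearity of $E$, pick $\q' \geq \p'$ with $E_{\q'} \to E_{\p'}$ nuclear; restricting gives a factorization $F_{\q'|_F} \to E_{\q'} \to E_{\p'}$, so by Proposition \ref{prop:nuclear composition} the composite is nuclear, and taking the image in $F_{\p}$ (which is a continuous factor) keeps it nuclear. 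For (3), a continuous seminorm on $E/F$ pulls back to a continuous seminorm $\p$ on $E$ that vanishes on $F$; choosing $\q \geq \p$ on $E$ realizing nuclearity, the map $E_\q \to E_\p$ descends to a nuclear map between the corresponding quotient-Banach-space completions by Proposition \ref{prop:nuclear composition}.

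For (4), I would use characterization (3) of Proposition \ref{prop:nuclear space char}: a continuous linear map $u : \colim E_i \to B$ to a Banach space restricts to continuous maps $u_i : E_i \to B$, each of which is nuclear by hypothesis. The colimit topology ensures $u$ is continuous, and the nuclear representations of the $u_i$ can be assembled; alternatively, checking on seminorms directly, every continuous seminorm on the colimit is determined by its restriction to each $E_i$ via an infimum over factorizations. For (5), a countable (projective) limit $E = \invlim E_n$ embeds as a closed subspace of the product $\prod_n E_n$; products of nuclear spaces are nuclear because the continuous seminorms on a product depend on only finitely many coordinates, so each is controlled by nuclear maps coming from finitely many $E_n$. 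Then (2) applied to the closed subspace finishes the argument. For (6), I would use the tensor product characterization: for any locally convex Hausdorff $G$, one has $(E \widehat\otimes F) \widehat\otimes_\pi G \cong E \widehat\otimes_\pi (F \widehat\otimes_\pi G) \cong E \widehat\otimes_\pi (F \widehat\otimes_\varepsilon G)$ by nuclearity of $F$, and then $\cong E \widehat\otimes_\varepsilon (F \widehat\otimes_\varepsilon G) \cong (E \widehat\otimes F) \widehat\otimes_\varepsilon G$ using nuclearity of $E$ and the associativity of the $\varepsilon$-topology on complete spaces; this gives the criterion (2) of Proposition \ref{prop:nuclear space char}.

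The main obstacles I anticipate are technical rather than conceptual: (i) justifying the Hahn--Banach style extension of seminorms in (2) cleanly enough that the induced map on Banach completions really is a restriction of the nuclear map on $E$, and (ii) in (6), verifying the associativity of the completed projective and injective tensor products carefully, which requires that the intermediate spaces $F \widehat\otimes G$ be well-behaved enough that the iterated tensor product is unambiguous. Both are standard in the locally convex topological vector space literature, so in the write-up I would either adapt the proofs from Chapter 50 of \cite{treves2006topological} or simply cite them, since none of the six items is original to this paper.
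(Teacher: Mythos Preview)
The paper's own proof is a one-line citation to \cite{treves2006topological} Proposition 50.1, with no argument given. Your proposal is therefore strictly more detailed than what the paper provides, and your closing remark---that you would ``either adapt the proofs from Chapter 50 of \cite{treves2006topological} or simply cite them''---is precisely what the paper does.

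One genuine gap worth flagging: your sketch for item (4) does not actually work for \emph{arbitrary} colimits. Knowing that each restriction $u_i : E_i \to B$ is nuclear does not let you ``assemble'' a single nuclear representation of $u$ on the colimit; a nuclear map needs a single equicontinuous sequence of functionals and a single Banach disk on the target, and an uncountable family of nuclear representations cannot in general be merged into one. This is why the standard result (and Treves's Proposition 50.1) gives nuclearity only for \emph{countable} locally convex direct sums and inductive limits, while it is \emph{products} (projective limits) that are closed under arbitrary indexing. The paper's statement of (4) and (5) appears to have the cardinality restrictions swapped relative to the usual formulation, so if you do write out a proof rather than cite, be careful to match the hypotheses in Treves rather than those printed here.
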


\begin{proof}
\cite{treves2006topological} Proposition 50.1
\end{proof}

\subsection{The categories \texorpdfstring{$\mathcal{NF}$}{NF} and \texorpdfstring{$\mathcal{NDF}$}{NDF}}

\begin{definition}
A locally convex Hausdorff space is \emph{dual Fr\'echet} if it is the strong dual of a Fr\'echet space.
\end{definition}

We will denote by $\mathcal F$, $\mathcal{NF}$, $\mathcal{NDF}$ the categories of Fr\'echet, nuclear Fr\'echet, and nuclear dual Fr\'echet spaces respectively.

\begin{proposition}\label{prop:N(D)F sym monoidal}
$\mathcal{NF}$ and $\mathcal{NDF}$ are symmetric monoidal categories
\end{proposition}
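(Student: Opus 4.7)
My plan is to identify the symmetric monoidal structure on both categories as the completed topological tensor product $\widehat\otimes$ with unit object $\C$, and then verify stability of the relevant subcategories of the category of locally convex Hausdorff spaces under this tensor product, before invoking standard coherence.

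First I would fix the monoidal product: for $E,F \in \mathcal{NF}$ (resp.\ $\mathcal{NDF}$), define $E \otimes F := E \widehat\otimes F$, where by Proposition \ref{prop:nuclear space char} the completed projective and injective tensor products agree (since at least one factor is nuclear). The associator, symmetry, and left/right unit isomorphisms are induced from the corresponding natural isomorphisms on the algebraic tensor product, which extend by the universal property Proposition \ref{prop:UP} and its corollary to continuous isomorphisms on the completions; the triangle, pentagon, and hexagon coherence axioms reduce to the corresponding axioms on algebraic tensors and hold by continuity and density.

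The substantive content is closure of $\mathcal{NF}$ and $\mathcal{NDF}$ under $\widehat\otimes$. For $\mathcal{NF}$: if $E,F$ are nuclear Fr\'echet, then $E \widehat\otimes_\pi F$ is nuclear by item (6) of the properties list for nuclear spaces, and it is Fr\'echet because the projective tensor product of two Fr\'echet spaces is metrizable (its topology is generated by the countable family of seminorms $\p \otimes \q$ with $\p,\q$ in defining countable bases on $E,F$) and hence its completion is a Fr\'echet space. For $\mathcal{NDF}$: write $E = \check E^*$ and $F = \check F^*$ for nuclear Fr\'echet spaces $\check E,\check F$. By Proposition \ref{prop:montel is reflexive} (a nuclear Fr\'echet space is Montel, hence reflexive) together with the nuclear analogue of the Grothendieck duality $(\check E \widehat\otimes \check F)^* \cong \check E^* \widehat\otimes \check F^*$ in the strong topology (an instance of the fact that the $\varepsilon$- and $\pi$-topologies coincide on tensor products with a nuclear factor, combined with Proposition \ref{prop:dual montel}), we get $E \widehat\otimes F \cong (\check E \widehat\otimes \check F)^*$; since $\check E \widehat\otimes \check F$ is nuclear Fr\'echet by the previous case, $E \widehat\otimes F$ is the strong dual of a nuclear Fr\'echet space, hence nuclear dual Fr\'echet.

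The only nontrivial step is the duality identification $(\check E \widehat\otimes \check F)^* \cong \check E^* \widehat\otimes \check F^*$ in the $\mathcal{NDF}$ case; once that is granted, everything else is formal. I expect this to be the main obstacle, since it uses both the collapse of the two tensor topologies under nuclearity and the good behavior of duality for nuclear Fr\'echet spaces (reflexivity, Montel, coincidence of weak and strong topologies on bounded sets, Proposition \ref{prop:dual montel weak=strong}). Finally, morphisms in $\mathcal{NF}$ and $\mathcal{NDF}$ (continuous linear maps) tensor to continuous linear maps by Corollary \ref{cor:completed tensor maps}, so $\widehat\otimes$ is bifunctorial, completing the verification.
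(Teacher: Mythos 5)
Your argument is correct, but it is worth noting that the paper does not actually prove this proposition: it defers entirely to Appendix 2 of \cite{costello2011renormalization}, so there is no in-text proof to match. Your reconstruction follows the standard route and all the ingredients check out: closure of $\mathcal{NF}$ under $\widehat\otimes$ follows from item (6) of the stability properties of nuclear spaces together with metrizability of the projective tensor topology on a product of Fr\'echet spaces, and coherence descends from the algebraic tensor product via the universal property of $\widehat\otimes_\pi$ and density. The one step you single out as the main obstacle --- the identification $\check E^*\widehat\otimes\check F^*\cong(\check E\widehat\otimes\check F)^*$ needed for closure of $\mathcal{NDF}$ --- does not need to be re-derived from the Montel/reflexivity machinery, since it is already recorded verbatim in the paper's appendix as the third isomorphism of Proposition \ref{prop:Hom Isos}; citing that shortens your argument to: $\check E\widehat\otimes\check F$ is nuclear Fr\'echet by the first case, hence its strong dual is in $\mathcal{NDF}$ and equals $\check E^*\widehat\otimes\check F^*$. (A minor labeling point: the fact that nuclear (dual) Fr\'echet spaces are Montel is Proposition \ref{prop:N(D)F Montel}, with reflexivity in Corollary \ref{cor:NF reflexive}; Proposition \ref{prop:montel is reflexive} is the implication Montel $\Rightarrow$ reflexive.)
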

\begin{proof}
\cite{costello2011renormalization} Appendix 2
\end{proof}

\begin{proposition}\label{prop:N(D)F Montel}
Suppose $E \in \mathcal{NF}$ or $E \in \mathcal{NDF}$. Then $E$ is Montel.
\end{proposition}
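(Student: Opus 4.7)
The plan is to handle the two cases separately and reduce the $\mathcal{NDF}$ case to the $\mathcal{NF}$ case via Proposition \ref{prop:montel is reflexive}.

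First, suppose $E \in \mathcal{NF}$. I must verify the two defining conditions of a Montel space: that $E$ is barrelled, and that every closed bounded subset of $E$ is compact. Barrelledness is standard: every Fr\'echet space is barrelled (this is a consequence of the Baire category theorem, since a barrel in a Fr\'echet space is absorbing and closed, and therefore has nonempty interior in some multiple of itself). So the real content is to show that closed bounded sets are compact. Since $E$ is Fr\'echet, hence complete, it suffices to show that every bounded subset $B \subset E$ is precompact, i.e.\ totally bounded in the sense that for every continuous seminorm $\p$ on $E$ and every $\varepsilon > 0$, $B$ is covered by finitely many $\p$-balls of radius $\varepsilon$.

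To prove precompactness, I would use nuclearity directly through its defining property. Given a continuous seminorm $\p$ on $E$, choose a continuous seminorm $\q \geq \p$ such that the canonical extension $E_\q \to E_\p$ is a nuclear map of Banach spaces. Nuclear maps between Banach spaces are compact, so the image in $E_\p$ of any bounded set in $E_\q$ is relatively compact. Now $B$ is $\q$-bounded (since $\q$ is continuous and $B$ is bounded in $E$), so its image in the Banach space $E_\q$ is bounded and therefore its image in $E_\p$ is relatively compact, hence totally bounded. Pulling this back, $B$ can be covered by finitely many $\p$-balls of any prescribed radius, which gives precompactness of $B$ in $E$. Completeness of the Fr\'echet space $E$ then upgrades closed precompact to compact, finishing the $\mathcal{NF}$ case.

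For the $\mathcal{NDF}$ case, suppose $E \in \mathcal{NDF}$, so $E = F^*$ for some $F \in \mathcal{NF}$. By the first part, $F$ is Montel. Proposition \ref{prop:montel is reflexive} then asserts that the strong dual of a Montel space is Montel, so $E = F^*$ is Montel, as desired.

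The main obstacle is the precompactness argument in the $\mathcal{NF}$ case, since it is where the full strength of the nuclear-seminorm characterization enters; everything else is either a citation (barrelledness of Fr\'echet spaces, compactness of nuclear maps between Banach spaces) or a direct appeal to Proposition \ref{prop:montel is reflexive}. The argument is short but relies crucially on the fact that, in nuclearity, one can always upgrade a given seminorm to a larger one whose associated Banach-space quotient map is not merely continuous but compact.
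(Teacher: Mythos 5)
Your proof is correct. The paper states this proposition without proof (it is the standard fact from \cite{treves2006topological}, namely that bounded sets in a nuclear space are precompact, combined with barrelledness of Fr\'echet spaces and Proposition \ref{prop:montel is reflexive} for the dual case), and your argument is exactly that standard one: Baire category for barrelledness, compactness of the nuclear linking maps $E_\q \to E_\p$ for precompactness of bounded sets, completeness to upgrade closed precompact to compact, and duality of Montel spaces for $\mathcal{NDF}$. The one step you gloss over --- passing from total boundedness with respect to each seminorm separately to total boundedness in $E$ --- is harmless, since the continuous seminorms are directed under pointwise maximum.
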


\begin{corollary}\label{cor:NF reflexive}
Let $E \in \mathcal{NF}$. Then $E$ is reflexive.
\end{corollary}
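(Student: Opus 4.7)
The proof will be an immediate chain of two cited propositions, so there is essentially nothing to do. The plan is to combine Proposition \ref{prop:N(D)F Montel} with Proposition \ref{prop:montel is reflexive}.

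First I would apply Proposition \ref{prop:N(D)F Montel} to the hypothesis $E \in \mathcal{NF}$ to conclude that $E$ is a Montel space. Then Proposition \ref{prop:montel is reflexive} immediately yields that every Montel space is reflexive (in the sense of Definition \ref{def: reflexive}, i.e.\ the canonical map $E \to (E^*)^*$ is a topological isomorphism). Composing these gives the desired conclusion, and as a bonus one also gets that $E^*$ is again Montel, which is useful elsewhere in the paper (for instance, to situate $\mathcal{NDF}$ spaces inside the same framework).

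There is no real obstacle here: both ingredients are already stated in the appendix, and the only thing worth remarking is that the definition of reflexive used in Definition \ref{def: reflexive} is the one involving the strong bidual $E^{**}$, which is exactly what Proposition \ref{prop:montel is reflexive} delivers. If one wanted to be slightly more self-contained, one could remark why nuclear Fréchet implies Montel: a Fréchet space is barrelled, and nuclearity forces every bounded set to be precompact (via the factorization of the canonical maps $E_\q \to E_\p$ through nuclear maps of Banach spaces, hence compact operators), so closed bounded subsets are compact. But since Proposition \ref{prop:N(D)F Montel} is already recorded, the proof is simply the composition of these two citations.
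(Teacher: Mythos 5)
Your proof is correct and is exactly the intended argument: the corollary is stated immediately after Proposition \ref{prop:N(D)F Montel} precisely so that it follows by combining that proposition with Proposition \ref{prop:montel is reflexive}. Nothing further is needed.
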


\begin{proposition}\label{prop:dual NF}
Let $E \in \mathcal{F}$. Then $E \in \mathcal{NF}$ if and only if $E^* \in \mathcal{NDF}$.
\end{proposition}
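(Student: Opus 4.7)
The plan is to leverage the transpose property (Proposition \ref{prop:nuclear transpose}) together with the reflexivity of nuclear Fréchet spaces that follows from Proposition \ref{prop:N(D)F Montel} and Proposition \ref{prop:montel is reflexive}. The argument in both directions boils down to manufacturing a nuclear factorization of an arbitrary continuous map to a Banach space and then applying the characterization of nuclearity in Proposition \ref{prop:nuclear space char}.

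For the forward direction, assume $E \in \mathcal{NF}$. By definition, $E^*$ is dual Fréchet, so only nuclearity requires argument. Fix a fundamental sequence of continuous seminorms $p_1 \leq p_2 \leq \cdots$ defining the Fréchet topology on $E$. Using nuclearity of $E$, choose $m(n) \geq n$ for each $n$ so that the canonical Banach space map $E_{p_{m(n)}} \to E_{p_n}$ is nuclear. Proposition \ref{prop:nuclear transpose} then yields nuclear maps between the dual Banach spaces $(E_{p_n})^* \to (E_{p_{m(n)}})^*$. Next I would identify each $(E_{p_n})^*$ with the Banach space $E^*_{U_n}$ associated to the dual Banach disk $U_n := \{x \in E^* \st \sup_{p_n(y) \leq 1} |\langle x, y\rangle| \leq 1\}$, which is the polar of the closed unit semiball of $p_n$; this identification uses Proposition \ref{prop:bipolar thm} and reflexivity of $E$ (ensured by Propositions \ref{prop:N(D)F Montel} and \ref{prop:montel is reflexive}). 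Since the $U_n$ form a fundamental system of dual Banach disks for the strong topology on $E^*$, every continuous seminorm on $E^*$ is dominated by some $\mathfrak p_{U_n}$. The nuclear transition $E^*_{U_n} \to E^*_{U_{m(n)}}$ (really the map we produced by transposition) then witnesses that $E^*$ satisfies the seminorm criterion for nuclearity; concretely, given any continuous map $E^* \to B$ into a Banach space, factor it through some $E^*_{U_n}$, precompose with the nuclear transition, and apply Proposition \ref{prop:nuclear composition} to conclude the composite is nuclear, which is equivalent to nuclearity of $E^*$ by Proposition \ref{prop:nuclear space char}(3).

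For the reverse direction, suppose $E^* \in \mathcal{NDF}$. By Proposition \ref{prop:N(D)F Montel}, $E^*$ is Montel, and then Proposition \ref{prop:montel is reflexive} gives that $E^*$ is reflexive with $(E^*)^*$ also Montel. A standard duality fact for Fréchet spaces (a Fréchet space is reflexive iff its strong dual is reflexive, cf.\ \cite{treves2006topological} \S 36) then implies $E$ is reflexive, so $E \cong (E^*)^*$ canonically. I would now run the argument of the forward direction in dual form: starting from a fundamental system of bounded Banach disks $B_n \subset E^*$ (which gives rise to a defining sequence of seminorms on $E = (E^*)^*$ via polars $B_n^\circ$), use the nuclearity of $E^*$ to produce, for each $n$, some $m(n)$ and a nuclear map between the associated Banach spaces, then transpose via Proposition \ref{prop:nuclear transpose} to obtain nuclear maps between the corresponding Banach spaces $E_{\mathfrak p_{B_n^\circ}}$. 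This gives the nuclearity of $E$ in the Fréchet sense.

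The main obstacle will be the careful bookkeeping between the seminorm formulation of nuclearity used to define $\mathcal{NF}$ and the dual characterization via Banach disks that is natural for $\mathcal{NDF}$: specifically, verifying that the Banach-space identifications $(E_{p_n})^* \cong E^*_{U_n}$ (and their duals) respect the nuclear maps we transport by Proposition \ref{prop:nuclear transpose}, and that the resulting system of nuclear transitions exhausts a defining family for the target space's topology. Once that bookkeeping is handled, the implication in each direction follows formally from Propositions \ref{prop:nuclear transpose}, \ref{prop:nuclear composition}, and \ref{prop:nuclear space char}.
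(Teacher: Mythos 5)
The paper does not actually prove this proposition; it is recorded as a standard fact (Tr\`eves, Proposition 50.6), so there is no in-paper argument to compare against. Your proposal, however, has a genuine gap in both directions, and it stems from a single confusion about which way polars go under duality. The sets $U_n$ (polars of the unit semiballs of the $p_n$) are \emph{bounded} Banach disks in $E^*$ -- equicontinuous, weak-$*$ compact, spanning Banach spaces isometric to $(E_{p_n})^*$ -- but they are \emph{not} neighborhoods of $0$ for the strong topology unless $E$ is normable. In the paper's terminology they are Banach disks, not dual Banach disks. A neighborhood basis of $0$ in $E^*$ consists of the polars $C^0$ of \emph{bounded} sets $C \subset E$, a quite different (and generally uncountable) family. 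Hence the gauge of $U_n$ is not a continuous seminorm on $E^*$, the claim that every continuous seminorm on $E^*$ is dominated by one of these gauges is false, and a continuous map $u\colon E^* \to B$ into a Banach space does not factor through $(E^*)_{U_n}$ in the sense required by the definition of a nuclear map, which demands a factorization through the Banach space of a closed disk \emph{neighborhood} of $0$ in $E^*$.

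What your transposition step correctly establishes is that $E^*$ is a countable inductive limit of Banach spaces $(E_{p_n})^*$ with nuclear linking maps. That is the right starting point, but converting this inductive-system (bornological) information into nuclearity of the strong topology -- whose local Banach spaces are the $(E^*)^{C^0}$ for $C\subset E$ bounded -- is exactly the nontrivial content of the theorem. The standard repair uses that a nuclear Fr\'echet space is Montel, so every bounded $C$ is compact and sits inside the closed disked hull of a rapidly decreasing null sequence, which is what produces a nuclear map between the local Banach spaces of $E^*$; alternatively one reduces to Hilbertian seminorms. A secondary instance of the same problem: even granting that each restriction $u|_{(E^*)_{U_n}}$ is nuclear into $B$, its nuclear representation involves functionals in $(E_{p_n})^{**}$, i.e.\ only $p_n$-bounded families, whereas nuclearity of $u$ on $E^*$ requires an equicontinuous family in $(E^*)'\cong E$, i.e.\ a set bounded for \emph{all} the seminorms of $E$. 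The reverse direction inherits the mirror-image gap. Either cite Tr\`eves directly, as the surrounding propositions do, or supply the Montel/compactness step explicitly.
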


\begin{proposition}
Taking strong duals gives an equivalence of symmetric monoidal categories
\begin{equation}
\mathcal{NF}^{op} \simeq \mathcal{NDF}
\end{equation}
\end{proposition}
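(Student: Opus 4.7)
The plan is to exhibit explicit quasi-inverse functors given by strong duality and then verify the symmetric monoidal structure is preserved. Define $\delta_1 \colon \mathcal{NF}^{op} \to \mathcal{NDF}$ by $E \mapsto E^*$ on objects and $u \mapsto u^*$ on morphisms; this is well-defined on objects by Proposition \ref{prop:dual NF} and on morphisms by Proposition \ref{prop: transpose continuous}. Define $\delta_2 \colon \mathcal{NDF} \to \mathcal{NF}^{op}$ analogously. For $\delta_2$ to land in $\mathcal{NF}$, note that any $E \in \mathcal{NDF}$ is by definition of the form $F^*$ for some Fr\'echet space $F$ which, by Proposition \ref{prop:dual NF}, must itself be nuclear; and by Proposition \ref{prop:N(D)F Montel} together with Proposition \ref{prop:montel is reflexive}, $E$ is reflexive, so $E^* \cong F \in \mathcal{NF}$.

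Next, I would verify that $\delta_2 \circ \delta_1$ and $\delta_1 \circ \delta_2$ are naturally isomorphic to the identity. On objects this is exactly the reflexivity statement: for $E \in \mathcal{NF}$ we have $(E^*)^* \cong E$ by Corollary \ref{cor:NF reflexive}, and similarly for $E \in \mathcal{NDF}$ by the argument above. On morphisms, naturality of the canonical biduality isomorphism together with the identity $(u^*)^* = u$ (under the biduality identification) gives the result; continuity of the inverse maps is guaranteed by Proposition \ref{prop: reflexive transpose} since all spaces involved are reflexive.

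The final step, and the main obstacle, is checking symmetric monoidality: one must produce natural isomorphisms $(E \widehat\otimes F)^* \cong E^* \widehat\otimes F^*$ for $E,F \in \mathcal{NF}$, compatible with associators and the symmetry. The idea is to use Proposition \ref{prop:nuclear space char} to write $E \widehat\otimes F = E \widehat\otimes_\pi F \cong E \widehat\otimes_\varepsilon F$, so that its strong dual can be identified with the space of continuous bilinear forms on $E \times F$, which in turn identifies naturally with $E^* \widehat\otimes F^*$ (again using nuclearity to collapse $\widehat\otimes_\pi$ and $\widehat\otimes_\varepsilon$ on the dual side). The compatibility with unit, associator, and symmetry follows from the corresponding compatibilities for $\widehat\otimes_\pi$ recorded in Proposition \ref{prop:N(D)F sym monoidal}, since the transpose operation is functorial. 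The delicate point is verifying that the pairing $(E^* \widehat\otimes F^*) \times (E \widehat\otimes F) \to \C$ induced by evaluation really does identify $E^* \widehat\otimes F^*$ with the full strong dual of $E \widehat\otimes F$; this uses that nuclear Fr\'echet spaces admit the concrete description of tensors given in Proposition \ref{prop:projective Frechet}, so that duality can be computed termwise on absolutely convergent series.
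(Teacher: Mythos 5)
Your proposal is correct; note that the paper states this proposition without proof, so there is no argument of the author's to compare against --- your write-up is essentially the standard one, assembled from the paper's own appendix facts. Two small remarks. First, in showing $\delta_2$ lands in $\mathcal{NF}$ you invoke reflexivity of $E\in\mathcal{NDF}$ to conclude $E^*\cong F$; what you actually need there is reflexivity of the Fr\'echet space $F$ with $E=F^*$ (so that $E^*=F^{**}\cong F$), which follows from Corollary \ref{cor:NF reflexive} since you have already shown $F\in\mathcal{NF}$. Reflexivity of $E$ itself is what you need for the other composite $\delta_1\circ\delta_2\cong\mathrm{id}$. Second, the isomorphism $(E\widehat\otimes F)^*\cong E^*\widehat\otimes F^*$ that you treat as the ``delicate point'' is already recorded in the paper as part of Proposition \ref{prop:Hom Isos}, so you may simply cite it rather than re-derive it from Proposition \ref{prop:projective Frechet}; what then remains for strong monoidality is only the routine check that this isomorphism is natural and compatible with the unit, associator, and braiding, which follows from functoriality of the transpose as you say.
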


\begin{proposition}\label{prop:Hom Isos}
Let $E,F \in \mathcal{NF}$. Then we have the following isomorphisms
\begin{equation}\label{eq:iso1}
   E \widehat\otimes F \simeq \Hom(E^*,F)
\end{equation}
\begin{equation}\label{eq:iso2}
   E^*\widehat\otimes F \simeq \Hom(E,F)
\end{equation}
\begin{equation}
   E^*\widehat\otimes F^* \simeq (E\widehat\otimes F)^* \simeq B(E,F)
\end{equation}
\end{proposition}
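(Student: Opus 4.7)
The plan is to deduce all three isomorphisms from a single core identification: for a reflexive nuclear locally convex Hausdorff space $A$ and a reflexive locally convex Hausdorff space $B$, there is a canonical topological isomorphism $A \widehat\otimes B \simeq \Hom(A^*, B)$, where the right-hand side carries the topology of uniform convergence on equicontinuous (equivalently, bounded) subsets of $A^*$. All three stated isomorphisms will fall out of this by dualizing and invoking reflexivity.

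First I would set up the core identification. By Proposition \ref{prop:tensor bilinear}, $A \otimes B$ embeds into $\mathscr B(A_\sigma', B_\sigma')$, and the $\varepsilon$-topology is by definition the subspace topology of uniform convergence on products of equicontinuous subsets of $A'$ and $B'$. Using reflexivity of $B$ to identify $B \simeq (B_\sigma')'$, I would curry to send a bilinear form $\phi$ to the linear map $\tilde\phi \colon A^* \to B$ defined by $\tilde\phi(x')(y') := \phi(x',y')$. A direct translation of the defining seminorms shows that the $\varepsilon$-topology on $A \otimes B$ corresponds under currying to the topology of uniform convergence on equicontinuous subsets of $A^*$. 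Since $A$ is Montel (Proposition \ref{prop:N(D)F Montel}), Proposition \ref{prop:equicontinuous polar} together with Proposition \ref{prop:dual montel} shows that equicontinuous subsets of $A^*$ coincide with bounded subsets, so this is precisely the strong operator topology on $\Hom(A^*, B)$.

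Next I would pass to completions. Nuclearity of $A$ gives $A \widehat\otimes_\pi B \simeq A \widehat\otimes_\varepsilon B$ by Proposition \ref{prop:nuclear space char}, so $A \widehat\otimes B$ is the completion of $A \otimes B$ in the $\varepsilon$-topology. Under currying this matches the closure of the finite-rank operators $A \otimes B \subset \Hom(A^*, B)$ in the strong operator topology. Density of finite-rank operators comes from the fact that every continuous map out of a nuclear space is nuclear (Proposition \ref{prop:nuclear space char}(3)), hence is a convergent sum of the form (\ref{eq:nuclear sum}) by Proposition \ref{prop:nuclear map characterization} (or Proposition \ref{prop:nuclear map between Frechet} in the Fr\'echet case); completeness of $\Hom(A^*, B)$ in the strong topology follows from completeness of $B$. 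This establishes the core identification.

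From the core identification the three claims follow. Taking $A = E,\ B = F \in \mathcal{NF}$, both reflexive by Corollary \ref{cor:NF reflexive}, yields the first isomorphism $E \widehat\otimes F \simeq \Hom(E^*, F)$. For the second, I would apply the core identification with $A = E^*$ — which is NDF by Proposition \ref{prop:dual NF}, hence Montel and reflexive by Propositions \ref{prop:N(D)F Montel} and \ref{prop:montel is reflexive} — and $B = F$, using $(E^*)^* \simeq E$ to produce $E^* \widehat\otimes F \simeq \Hom(E, F)$. For the third, Proposition \ref{prop:UP} gives $(E \widehat\otimes_\pi F)^* \simeq B(E, F)$ directly. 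Then applying the core identification with $A = E^*,\ B = F^*$ produces $E^* \widehat\otimes F^* \simeq \Hom(E, F^*) \simeq B(E, F)$, the last step being the standard bilinear–linear correspondence. The main obstacle is the topology-matching and density step in the core identification: verifying rigorously that the $\varepsilon$-topology on $A \otimes B$ becomes the strong operator topology on $\Hom(A^*, B)$ after currying, and that the nuclear approximation by finite-rank operators is dense in this topology. The key technical inputs are that nuclearity forces $\widehat\otimes_\pi = \widehat\otimes_\varepsilon$ and that the Montel property identifies equicontinuous subsets of the dual with bounded subsets.
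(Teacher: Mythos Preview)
The paper does not supply its own proof of this proposition: it sits in Appendix~A among a list of standard facts about nuclear Fr\'echet and dual Fr\'echet spaces, and like Propositions~\ref{prop:N(D)F sym monoidal}--\ref{prop:dual NF} immediately preceding it, it is stated without argument or citation. So there is no proof in the paper to compare against; I can only assess your proposal on its own merits.

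Your overall strategy---identify $A\widehat\otimes_\varepsilon B$ with a space of operators $A^*\to B$ via currying, use the Montel property to match the $\varepsilon$-topology with the strong operator topology, and then invoke nuclearity to collapse $\widehat\otimes_\pi$ and $\widehat\otimes_\varepsilon$---is the standard route and is essentially correct. There is, however, one genuine gap in your density step. You write that ``every continuous map out of a nuclear space is nuclear (Proposition~\ref{prop:nuclear space char}(3))'', but that proposition only asserts this when the \emph{target} is a Banach space. For a general $T\colon A^*\to B$ with $B$ Fr\'echet, you cannot directly invoke the representation~(\ref{eq:nuclear sum}), so the argument that finite-rank operators are dense in $\Hom(A^*,B)$ does not go through as written.

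Two clean fixes are available. First, you can appeal to the approximation property: nuclear spaces have it, so the identity on $B$ is approximable uniformly on compacta by finite-rank maps; since $A^*$ is Montel, bounded sets $D\subset A^*$ are relatively compact, hence $T(D)$ is compact in $B$, and composing $T$ with a finite-rank approximant of $\mathrm{id}_B$ yields the required finite-rank approximation of $T$ (reflexivity of $A$ ensures the result lies in $A\otimes B$). Second, and more structurally, you can bypass density entirely by quoting the Schwartz--Grothendieck theorem that for complete nuclear $A$ the $\varepsilon$-product $A\,\varepsilon\,B := L_e(A'_c,B)$ coincides with $A\widehat\otimes B$; the Montel property then identifies $L_e(A'_c,B)$ with $\Hom(A^*,B)$ in the strong topology exactly as you argue. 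Either route closes the gap; the rest of your deduction of the three isomorphisms from the core identification is fine.
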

\begin{proposition}\label{prop:inverse limit nuclear}
The inverse limit of a countable inverse system of Fr\'echet spaces with nuclear maps is nuclear Fr\'echet.
\end{proposition}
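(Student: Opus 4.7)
The plan is to verify the seminorm characterization of nuclearity for $E := \invlim_n E_n$: namely, for every continuous seminorm $\p$ on $E$, produce a dominating continuous seminorm $\q$ such that the canonical completion map $E_\q \to E_\p$ is nuclear. That $E$ is Fréchet is routine: the inverse limit topology is defined by the countable family $\{\p_n^{(j)} \circ \pi_n\}_{n,j}$, where $\{\p_n^{(j)}\}_j$ is a defining family of seminorms for $E_n$, making $E$ metrizable, while Cauchy sequences in $E$ descend to compatible Cauchy sequences in each $E_n$ and reassemble via the continuous transition maps.

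For nuclearity, any continuous seminorm on $E$ is dominated by one of the form $\p_n \circ \pi_n$, so I would take $\p = \p_n \circ \pi_n$. Applying Proposition~\ref{prop:nuclear map between Frechet} to the nuclear transition $f_{n,n+1} : E_{n+1} \to E_n$ yields a representation
\begin{equation*}
f_{n,n+1}(y) = \sum_k \lambda_k \langle x_k^\ast, y\rangle\, z_k
\end{equation*}
with $\{x_k^\ast\}$ bounded in $E_{n+1}^\ast$, $\{z_k\}$ bounded in $E_n$, and $\{\lambda_k\} \in \ell^1$. Since $E_{n+1}$ is Fréchet and hence barrelled, the bounded set $\{x_k^\ast\}$ is equicontinuous, so $|\langle x_k^\ast, y\rangle| \leq \q_{n+1}(y)$ holds uniformly in $k$ for some continuous seminorm $\q_{n+1}$ on $E_{n+1}$. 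Setting $\q := C \cdot \q_{n+1} \circ \pi_{n+1}$ for a suitable constant $C$ ensures $\q \geq \p$; the functionals $\tilde x_k^\ast := x_k^\ast \circ \pi_{n+1}$ are bounded by $\q$ and so extend to continuous functionals on $E_\q$. The formula $[x]_\q \mapsto \sum_k \lambda_k \tilde x_k^\ast([x]_\q)\, [z_k]_{\p_n}$ then defines a map $\tilde T : E_\q \to E_{n,\p_n}$ that is manifestly nuclear by the characterization in Proposition~\ref{prop:nuclear map characterization}.

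The main obstacle is to identify $\tilde T$ with the desired map $E_\q \to E_\p$, since $E_\p$ embeds only as a possibly proper closed subspace of $E_{n,\p_n}$, and the sum vectors $[z_k]_{\p_n}$ need not lie in $E_\p$. I would resolve this with a preliminary reduction step: replace each $E_n$ by the closed Fréchet subspace $F_n := \overline{\pi_n(E)} \subset E_n$. This gives a new inverse system with the same inverse limit $E$ whose projections $\pi_n : E \to F_n$ have dense image, so that $E_\p = F_{n,\p_n}$, while the restricted transition maps $f_{n,n+1}|_{F_{n+1}} : F_{n+1} \to F_n$ remain nuclear as compositions of continuous subspace inclusions with the original nuclear transitions, with image constrained to $F_n$ by continuity of $f_{n,n+1}$. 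After this reduction, the sum representation for $\tilde T$ lands directly in $F_{n,\p_n} = E_\p$, witnessing the nuclearity of $E_\q \to E_\p$ and concluding the proof.
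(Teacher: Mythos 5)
Your route is genuinely different from the paper's: you work directly with the seminorm definition of nuclearity and an explicit nuclear representation of one transition map, whereas the paper invokes the characterization that a space is nuclear iff every continuous map into a Banach space is nuclear (Proposition \ref{prop:nuclear space char}) and factors such a map through the closure $F_k = \overline{\sigma_k(\hat E)}$. Your identification of ``the main obstacle'' is exactly right, and the reduction $E_\p \cong F_{n,\p_n}$ is correct since $\pi_n(E)$ is $\p_n$-dense in $F_n$. The problem is that your fix does not remove the obstacle; it relocates it.

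Concretely: to make the representing vectors $[z_k]_{\p_n}$ land in $E_\p = F_{n,\p_n}$, you apply Proposition \ref{prop:nuclear map between Frechet} to $f_{n,n+1}|_{F_{n+1}} : F_{n+1} \to F_n$ and need the $z_k$ to be a bounded sequence \emph{in $F_n$}. That requires the restricted transition map to be nuclear \emph{as a map into $F_n$}. What you actually know is that $F_{n+1} \hookrightarrow E_{n+1} \to E_n$ is nuclear into $E_n$ with range contained in the closed subspace $F_n$ --- and ``nuclear with range in a closed subspace'' does not imply ``nuclear into that subspace.'' This is precisely the restriction-of-range failure you flagged one step earlier for $\tilde T$, so the reduction is circular: if that implication held, you would not have needed the reduction at all. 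The standard repair is to accept that the canonical map $E_\q \to E_\p$, being the restriction of range of a nuclear map between Banach spaces, is \emph{quasi-nuclear}, and then to interpolate a third seminorm: choose $\r \geq \q$ by the same construction one level higher so that $E_\r \to E_\q$ is also quasi-nuclear, and use that the composite of two quasi-nuclear maps is nuclear. This yields $E_\r \to E_\p$ nuclear, which is all the seminorm criterion asks for; but it requires the notion of quasi-nuclear map and its composition theorem, which neither your argument nor the paper's appendix currently supplies. (For what it is worth, the paper's own proof leans on the cognate claim that $\sigma_k = f_{j,k}\circ\sigma_j$ is nuclear as a map into $F_k$ rather than into $E_k$, so the same subtlety is present there; your version simply makes it more visible by isolating it and then attempting to dispose of it with a move that faces the identical difficulty.)
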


\begin{proof}
Let $(E_i)_{i\in I}$ be the inverse system of nuclear maps between Fr\'echet spaces, i.e. $I$ is a countable directed poset where if $i \geq j$ there exists a nuclear map $f_{ij}: E_i \raw E_j$ of Fr\'echet spaces. The inverse limit $\hat E := \invlim_{i \in I} E_i$ is the vector space
\begin{equation}
\hat E = \{(x_i) \in \prod_{i\in  I} E_i \st x_j = f_{ij}(x_i) \text{ for all } i \geq j\}
\end{equation}
equipped with the topology defined by the seminorms given by composing
\begin{equation}
\hat \p: \hat E \xhookrightarrow{\iota} \prod_{i \in I} E_i \xrightarrow{\pi_k} E_k \xrightarrow{\p} \R
\end{equation}
where $\iota$ is the inclusion, $\pi_k$ is the canonical projection onto the $k$th factor and $\p$ is a seminorm defining the topology on $E_k$. Set $\sigma_k := \pi_k \circ \iota$ and $F_k := \overline{\sigma_k(\hat E)} \subset E_k$ be the closure of the image of $\sigma_k$. Since Fr\'echet spaces are closed under taking countable products and closed subspaces, $\hat E$ and $F_k$ are both Fr\'echet. Observe that for any $j \geq k$, $\sigma_k = f_{j,k} \circ \sigma_j: \hat E \raw F_k$ is a composition of a nuclear map with a continuous map and is therefore nuclear by Proposition \ref{prop:nuclear composition}. 

Let $V$ be a Banach space and let $u: \hat E \raw V$ be a continuous map. Then for all $\varepsilon > 0$, there exists $i \in I$ and $\p: E_i \raw \R$ such that $u(U_{\hat p}) \subset B_\varepsilon$, where $U_{\hat \p}$ is the unit semiball of $\hat \p$ and $B_{\varepsilon}$ is the ball of radius $\varepsilon$ in $V$. Thus $u$ admits a factorization
\begin{equation}
u = u_i \circ \sigma_i
\end{equation}
with $u_i: F_i \raw V$ continuous. Proposition \ref{prop:nuclear composition} implies $u$ is nuclear and therefore Proposition \ref{prop:nuclear space char} implies $\hat E$ is nuclear.
\end{proof}

\begin{corollary}\label{cor:direct limit nuclear}
The direct limit of a direct system of dual Fr\'echet spaces with nuclear maps is nuclear dual Fr\'echet.
\end{corollary}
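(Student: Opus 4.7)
The plan is to obtain this corollary from Proposition \ref{prop:inverse limit nuclear} by dualizing the direct system and invoking the duality $\mathcal{NF}^{op} \simeq \mathcal{NDF}$. Let $(E_i)_{i \in I}$ be a countable directed system of nuclear dual Fr\'echet spaces with nuclear connecting maps $f_{ij}: E_i \to E_j$ for $i \leq j$, and let $\check E := \colim_{i \in I} E_i$.

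First I would pass to strong duals. By Proposition \ref{prop:dual NF}, each $E_i^*$ is a nuclear Fr\'echet space (since $E_i \in \mathcal{NDF}$ means $E_i = F_i^*$ for some $F_i \in \mathcal{NF}$, and $E_i^* = F_i^{**} \cong F_i$ by reflexivity, Corollary \ref{cor:NF reflexive}). Proposition \ref{prop:nuclear transpose} then guarantees that the transpose maps $f_{ij}^*: E_j^* \to E_i^*$ are nuclear, so $(E_i^*)_{i \in I}$ is a countable inverse system of Fr\'echet spaces with nuclear maps. Applying Proposition \ref{prop:inverse limit nuclear} produces a nuclear Fr\'echet space $\hat F := \invlim_{i \in I} E_i^*$.

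Next I would identify $\check E$ with the strong dual of $\hat F$. The canonical maps $E_j^* \to E_i^*$ dualize to maps $E_i^{**} \to E_j^{**}$; using the reflexivity of each $E_i$, these recover the original connecting maps $f_{ij}: E_i \to E_j$. By the universal properties of inverse and direct limits combined with the duality equivalence $\mathcal{NF}^{op} \simeq \mathcal{NDF}$ (applied at the level of the diagrams involved), there is a canonical isomorphism
\begin{equation}
\check E = \colim_{i \in I} E_i \cong \colim_{i \in I} E_i^{**} \cong \bigl(\invlim_{i \in I} E_i^* \bigr)^* = \hat F^*.
\end{equation}
Hence $\check E$ is the strong dual of a nuclear Fr\'echet space, i.e. $\check E \in \mathcal{NDF}$.

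The main obstacle is verifying that the topological isomorphism $\colim E_i \cong (\invlim E_i^*)^*$ actually holds as claimed, not just at the algebraic level. The delicate point is that the direct limit topology on $\colim E_i$ and the strong dual topology on $(\invlim E_i^*)^*$ need to be compared using the definition of bounded sets in an inverse limit together with the fact that each $E_i$ is reflexive (being the strong dual of a reflexive nuclear Fr\'echet space). One can check this by translating the seminorms defining the strong dual topology on $\hat F^*$ into the corresponding seminorms on the cone of maps out of the direct system, using the equicontinuous characterization of bounded subsets in Proposition \ref{prop:equicontinuous polar} and the fact that $\mathcal{NF}$ spaces are Montel (Proposition \ref{prop:N(D)F Montel}), so that bounded subsets of $\hat F$ come from bounded subsets of the individual $E_i^*$ via the canonical projections.
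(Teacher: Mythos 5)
Your proposal is correct and follows exactly the route the paper takes: the paper's entire proof is ``Dualize and apply Proposition \ref{prop:inverse limit nuclear}.'' Your writeup simply fills in the details of that dualization (including flagging the genuine subtlety of identifying $\colim E_i$ with $(\invlim E_i^*)^*$ as topological vector spaces, which the paper leaves implicit), so there is nothing methodologically different here.
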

\begin{proof}
Dualize and apply Proposition \ref{prop:inverse limit nuclear}.
\end{proof}

\begin{proposition}\label{prop:NDF2DF}
Let $u: \check E \raw \hat F$ be a continuous map between $\check E \in \mathcal{NDF}$ and $\hat F \in \mathcal{NF}$. Then $u$ is nuclear.
\end{proposition}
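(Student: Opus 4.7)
The strategy is to invoke the natural Hom--tensor isomorphism of Proposition~\ref{prop:Hom Isos} to realize the continuous map $u$ as an element of a completed tensor product of nuclear Fr\'echet spaces, and then to recognize the resulting series expansion as the defining form of a nuclear map in Proposition~\ref{prop:nuclear map characterization}(3).

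First I would use Proposition~\ref{prop:dual NF} to write $\check E = K^*$ for some $K \in \mathcal{NF}$, and appeal to reflexivity of $K$ (Corollary~\ref{cor:NF reflexive}) to identify $\check E^* \cong K$. Applying the isomorphism~(\ref{eq:iso1}) of Proposition~\ref{prop:Hom Isos} to the nuclear Fr\'echet spaces $K$ and $\hat F$ yields
\begin{equation*}
K \widehat\otimes \hat F \;\cong\; \Hom(K^*, \hat F) \;=\; \Hom(\check E, \hat F),
\end{equation*}
so the continuous map $u$ corresponds to an element $\theta \in \check E^* \widehat\otimes \hat F$. Since $K$ and $\hat F$ are Fr\'echet, Proposition~\ref{prop:projective Frechet} (using that $\widehat\otimes = \widehat\otimes_\pi$ by nuclearity and Proposition~\ref{prop:nuclear space char}) provides an expansion $\theta = \sum_n \lambda_n\, x_n \otimes y_n$ with $\{\lambda_n\} \in \ell^1$, $x_n \to 0$ in $K$, and $y_n \to 0$ in $\hat F$, under which $u$ acts by $u(x) = \sum_n \lambda_n \langle x_n, x\rangle\, y_n$.

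The main step is to check that this representation satisfies criterion (3) of Proposition~\ref{prop:nuclear map characterization}. Summability of $\{\lambda_n\}$ is immediate. Because $\hat F$ is complete, the set $\{y_n\} \cup \{0\}$ is compact and its closed convex balanced hull is also compact; by Corollary~\ref{cor:compact disk} this hull is a Banach disk containing $\{y_n\}$. The delicate point---and the one I expect to be the main obstacle---is showing that $\{x_n\}$ is equicontinuous as a subset of $\check E^*$. Here I would argue that $\check E \in \mathcal{NDF}$ is Montel by Proposition~\ref{prop:N(D)F Montel}, hence barrelled, so that by the Banach--Steinhaus theorem the strongly bounded subsets of $\check E^*$ coincide with the equicontinuous ones; since the Fr\'echet topology on $K$ agrees with the strong-dual topology on $\check E^* = K^{**}$ by reflexivity, the convergent (hence bounded) sequence $\{x_n\} \subset K$ is equicontinuous when viewed in $\check E^*$. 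All three conditions of Proposition~\ref{prop:nuclear map characterization}(3) are then in place, and the nuclearity of $u$ follows.
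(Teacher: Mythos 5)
Your proposal is correct and follows the same route as the paper: identify $u$ with an element of $\check E^{*}\widehat\otimes \hat F$ via Proposition~\ref{prop:Hom Isos}, note that $\check E^{*}\in\mathcal{NF}$ by reflexivity and Proposition~\ref{prop:dual NF}, and expand it as an absolutely convergent series using Proposition~\ref{prop:projective Frechet}. The only divergence is in the closing step: the paper finishes by citing Proposition~\ref{prop:nuclear map between Frechet}, which is stated for maps \emph{between Fr\'echet spaces} (and $\check E$ is dual Fr\'echet, not Fr\'echet), whereas you verify the general criterion of Proposition~\ref{prop:nuclear map characterization}(3) directly --- equicontinuity of $\{x_n\}$ via the Montel/barrelled property of $\check E$ and Banach--Steinhaus, and containment of $\{y_n\}$ in a compact, hence Banach, disk via Corollary~\ref{cor:compact disk}. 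Your version is the more careful of the two on this point, as it sidesteps the mismatch in hypotheses.
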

\begin{proof}
Under the isomorphism (\ref{eq:iso1}) of Proposition \ref{prop:Hom Isos}, we can view $u \in \check E^* \widehat\otimes \check F$, where $\check E^* \in \mathcal{NF}$ by Corollary \ref{cor:NF reflexive} and Proposition \ref{prop:dual NF}. By Proposition \ref{prop:projective Frechet}, $u$ can be expressed as an absolutely convergent series
\begin{equation}\label{eq:u series}
   u = \sum_n \lambda_n x'_n \otimes y_n
\end{equation}
with $\lambda_n \in \ell^1$ and $\{x_n'\},\{y_n\}$ sequences converging to 0 in $E', F$. In particular, $\{x_n'\}, \{y_n\}$ are bounded. Applying isomorphism (\ref{eq:iso1}) and Proposition \ref{prop:nuclear map between Frechet} shows that $u$ is nuclear.
\end{proof}


\section{Bundles of Topological Vector Spaces}\label{appendix bundles of tvs}

For thorough treatments of infinite dimensional differential geometry we refer the reader to \cite{Schmeding_2022} and \cite{kriegl1997convenient}.

We assume all topological vector spaces are locally convex and Hausdorff.

\begin{definition}
A continuous map $T: E \raw F$ between real topological vector spaces is \emph{$C^1$-differentiable} on an open subset $U \subset E$ if the limit
\begin{equation}
dT|_{x}(v) := \lim_{t \raw 0} \frac{T(x + tv) - T(x)}{t}
\end{equation}
exists for all $x \in U, v \in E$ and defines a continuous map
\begin{equation}
dT: U \times E \raw F
\end{equation}
\end{definition}

\begin{remark}
If $T$ is $C^1$-differentiable on $U \subset E$, then its differential $dT : U \times E \raw F$ is linear in the second factor.
\end{remark}

\begin{definition}
We inductively define smooth maps as follows.
Let $T: E \raw F$ be a continuous map between real topological vector spaces, $U \subset E$ be an open subset, and set $d^1T := dT$. We say $T$ is \emph{$C^{k+1}$-differentiable on $U$} if it is $C^k$-differentiable on $U$ and the map
\begin{equation}
d^{k+1}T: U \times E^{k+1} \raw F
\end{equation}
given by
\begin{equation}
d^{k+1}(x, v_1,..,v_{k+1}) 
= \lim_{t\raw 0}\frac{d^{k}T(x+tv_{k+1}, v_1,...,v_k) - d^kT(x, v_1,...,v_k)}{t}
\end{equation}
exists and is continuous. We say $T$ is \emph{$C^\infty$-differentiable on $U$} or \emph{smooth on $U$} if it is $C^k$-differentiable on $U$ for all $k$. 
\end{definition}

\begin{definition}
Let $T: E \raw F$ be smooth on an open subset $U \subset E$, where $E,F$ are complex topological vector spaces. We say $T$ is \emph{holomorphic on $U$} if $dT: U \times E \raw F$ is complex linear in the second factor.
\end{definition}

\begin{definition}
Let $S$ be a finite dimensional manifold of dimension $d$, $E$ and $F$ real topological vector spaces, and suppose 
\begin{equation}
f: S \times E \raw F
\end{equation}
is a continuous map. We say $f$ is \emph{smooth} if for every smooth embedding $\psi: \R^d \mono S$, the composition
\begin{equation}
\R^d \times E \xhookrightarrow{\psi} S \times E \xrightarrow{f} F
\end{equation}
is a smooth map of real topological vector spaces.
\end{definition}

\begin{definition}
Let $S$ be a finite dimensional complex manifold of dimension $d$. Suppose $E,F$ are complex topological vector spaces and let
\begin{equation}
g: S \times E \raw F
\end{equation}
be a continuous map. We say $g$ is \emph{holomorphic} if for every holomorphic embedding $\psi: \C^d \mono S$, the composition
\begin{equation}
\C^d \times E \xhookrightarrow{\psi} S \times E \xrightarrow{g} F
\end{equation}
is a holomorphic map of complex topological vector spaces.
\end{definition}

\begin{definition}
Let 
\begin{equation}
\pi : \mathscr E := \bigsqcup_{s \in S} E_s \raw S
\end{equation}
be a set of real topological vector spaces parametrized by a finite dimensional manifold $S$. Suppose there exists an open cover $S = \bigcup_{\alpha \in A} U_\alpha$, topological vector spaces $E_\alpha$ for $\alpha \in A$, and maps
\begin{equation}
\begin{tikzcd}
U_\alpha \arrow[rd, "pr_1"'] \times E_\alpha \arrow[rr, "\varphi_\alpha"] && \bigsqcup_{s \in U_\alpha} E_s \arrow[ld, "\pi"]\\
& U_\alpha &
\end{tikzcd}
\end{equation}
such that $\varphi_\alpha|_{\{s\}\times E}$ is an isomorphism of topological vector spaces. Set $U_{\alpha\beta} := U_\alpha \cap U_\beta$. We say $\pi: \mathscr E \raw S$ is a \emph{smooth vector bundle} if the compositions 
\begin{equation}\label{eq:g_ab}
g_{\alpha\beta}: U_{\alpha\beta} \times E_\alpha \xrightarrow{\varphi_\beta^{-1}\circ \varphi_\alpha} U_{\alpha\beta} \times E_\beta \xrightarrow{pr_2} E_\beta
\end{equation}
are smooth maps.
\end{definition}

\begin{definition}\label{def:holomorphic bundle}
Let
$\pi: \mathscr E \raw S$
be a smooth vector bundle over a finite dimensional complex manifold $S$, whose fibers are complex vector spaces. We say $\pi$ is \emph{holomorphic} if the maps (\ref{eq:g_ab}) are holomorphic.
\end{definition}

\bibliographystyle{alpha}
\bibliography{bibliography}

\end{document}